\renewcommand{\cref}{\Cref}
\newcommand{\algorithmicbreak}{\textbf{break}}
\newcommand{\Break}{\State \algorithmicbreak}
\theoremstyle{plain}
\newtheorem{thm}{Theorem}[section]
\newtheorem{cor}[thm]{Corollary}
\newtheorem{lem}[thm]{Lemma}
\newtheorem{obs}[thm]{Observation}
\newenvironment{customthm}[1]
  {\innercustomthm}
  {\endinnercustomthm}
\crefname{thm}{Theorem}{Theorems}
\Crefname{thm}{Theorem}{Theorems}
\theoremstyle{definition}
\newtheorem{defn}[thm]{Definition}
\newtheorem{prob}[thm]{Problem}
\newcommand{\bigO}{\mathcal{O}}
\newcommand{\calI}{\mathcal{I}}
\newcommand{\calJ}{\mathcal{J}}
\newcommand{\calN}{\mathcal{N}}
\newcommand{\calS}{\mathcal{S}}
\newcommand{\calL}{\mathcal{L}}
\newcommand{\calR}{\mathcal{R}}
\newcommand{\calF}{\mathcal{F}}
\newcommand{\calA}{\mathcal{A}}
\newcommand{\calB}{\mathcal{B}}
\newcommand{\calM}{\mathcal{M}}
\newcommand{\boldx}{\mathbf{x}}
\newcommand{\boldd}{\mathbf{d}}
\newcommand{\boldz}{\mathbf{z}}
\newcommand{\boldp}{\mathbf{p}}
\newcommand{\rom}[1]{%
	\textup{\uppercase\expandafter{\romannumeral#1}}%
}
\newcommand{\NEQ}{\mathrm{NEQ}}
\newcommand{\PP}{\textsf{P}}
\newcommand{\NP}{\textsf{NP}}
\newcommand{\coloringVD}[1]{\ensuremath{#1\textsc{-ColoringVD}}\xspace}
\newcommand{\coloringED}[1]{\ensuremath{#1\textsc{-ColoringED}}\xspace}
\newcommand{\LHomVD}{\textsc{LHomVD}}
\newcommand{\LHomED}{\textsc{LHomED}}
\newcommand{\Hom}{\textsc{Hom}}
\newcommand{\LHom}{\textsc{LHom}}
\newcommand{\ED}{\mathrm{LHomED}}
\newcommand{\id}{\mathsf{id}}
\mathchardef\hyph="2D
\newcommand{\abs}[1]{|#1|}
\newcommand{\eps}{\varepsilon}
\renewcommand{\epsilon}{\varepsilon}
\renewcommand{\phi}{\varphi}
\renewcommand{\geq}{\geqslant}
\renewcommand{\leq}{\leqslant}
\renewcommand{\ge}{\geqslant}
\renewcommand{\le}{\leqslant}
\renewcommand{\tilde}{\widetilde}
\newcommand{\from}{\colon}
\newenvironment{myitemize}
{ \begin{itemize}
		\setlength{\itemsep}{0pt}
		\setlength{\parskip}{0pt}
		\setlength{\parsep}{0pt}     }
	{ \end{itemize}                  }
\newenvironment{myenumerate}[1][1.]
{  \begin{enumerate}[#1]
		\setlength{\itemsep}{0pt}
		\setlength{\parskip}{0pt}
		\setlength{\parsep}{0pt}     }
{ \end{enumerate} 				 }
\newcommand{\executeiffilenewer}[3]{%
\ifnum\pdfstrcmp{\pdffilemoddate{#1}}%
{\pdffilemoddate{#2}}>0%
{\immediate\write18{#3}}\fi%
} 
\newcommand{%
\executeiffilenewer{figures/.svg}{figures/.pdf}%
{inkscape -z -D --file=figures/.svg %
--export-pdf=figures/.pdf --export-latex}%
{\input{figures/.pdf_tex}}}[1]{%
\executeiffilenewer{figures/#1.svg}{figures/#1.pdf}%
{inkscape -z -D --file=figures/#1.svg %
--export-pdf=figures/#1.pdf --export-latex}%
{\input{figures/#1.pdf_tex}}}%
\newcommand{\del}{\bm{\times}}
\newcommand{\precarc}{\prec_{\text{arc}}}
\newcommand{\aux}{\mathrm{Aux}}
\newcommand{\nh}{\Gamma}
\newcommand{\edcount}{\mathsf{cost}_\text{ed}}
\newcommand{\vdcount}{\mathsf{cost}_\text{vd}}
\DeclareMathOperator*{\trans}{transition}
\newcommand{\param}{i^\bullet}
\newcommand{\core}[2]{$(#1,#2)$-hub}
\newcommand{\cpar}{p}
\newcommand{\coreword}{hub}
\begin{document}
\title{List homomorphisms by deleting edges and vertices:\\ tight complexity bounds for bounded-treewidth graphs}

\author{
Bar\i\c{s} Can Esmer\thanks{CISPA Helmholtz Center for Information Security. The first author is also affiliated with the Saarbr\"ucken Graduate School of Computer Science, Saarland Informatics Campus, Germany. Research of the third author is supported by the European Research Council (ERC) consolidator grant No.~725978 SYSTEMATICGRAPH.} \and 
Jacob Focke$^\ast$ 
\and 
D\'{a}niel Marx$^\ast$ \and
Paweł Rzążewski\thanks{Warsaw University of Technology, Faculty of Mathematics and Information Science and University of Warsaw, Institute of Informatics, \texttt{pawel.rzazewski@pw.edu.pl}. Supported by the Polish National Science Centre grant no. 2018/31/D/ST6/00062.}
}

\date{}

\maketitle

\begin{abstract}
The goal of this paper is to investigate a family of optimization problems arising from list homomorphisms, and to understand what the best possible algorithms are if we restrict the problem to bounded-treewidth graphs. Given graphs $G$, $H$, and lists $L(v)\subseteq V(H)$ for every $v\in V(G)$, a {\em list homomorphism} from $(G,L)$ to $H$ is a function $f:V(G)\to V(H)$ that preserves the edges (i.e., $uv\in E(G)$ implies $f(u)f(v)\in E(H)$) and respects the lists (i.e., $f(v)\in L(v)$). The graph $H$ may have loops. For a fixed $H$, the input of the optimization problem $\textsc{LHomVD}(H)$ is a graph $G$ with lists $L(v)$, and the task is to find a set $X$ of vertices having minimum size such that $(G-X,L)$ has a list homomorphism to $H$. We define analogously the edge-deletion variant $\textsc{LHomED}(H)$, where we have to delete as few edges as possible from $G$ to obtain a graph that has a list homomorphism. This expressive family of problems includes members that are essentially equivalent to fundamental problems such as \textsc{Vertex Cover}, \textsc{Max Cut}, \textsc{Odd Cycle Transversal}, and \textsc{Edge/Vertex Multiway Cut}.

  For both variants, we first characterize those graphs $H$ that make the problem polynomial-time solvable and show that the problem is \textsf{NP}-hard for every other fixed $H$. Second, as our main result, we determine for every graph $H$ for which the problem is \textsf{NP}-hard, the smallest possible constant $c_H$ such that the problem can be solved in time $c^t_H\cdot n^{\mathcal{O}(1)}$ if a tree decomposition of $G$ having width $t$ is given in the input. Let $i(H)$ be the maximum size of a set of vertices in $H$ that have pairwise incomparable neighborhoods. For the vertex-deletion variant $\textsc{LHomVD}(H)$, we show that the smallest possible constant is $i(H)+1$ for every $H$:
\begin{itemize}
\item Given a tree decomposition of width $t$ of $G$, $\textsc{LHomVD}(H)$ can be solved in time $(i(H)+1)^t\cdot n^{\mathcal{O}(1)}$.
\item For any $\epsilon>0$ and $H$, an $(i(H)+1-\epsilon)^t\cdot n^{\mathcal{O}(1)}$ algorithm would violate the Strong Exponential-Time Hypothesis (SETH).
\end{itemize}

The situation is more complex for the edge-deletion version. For every $H$, one can solve $\textsc{LHomED}(H)$ in time $i(H)^t\cdot n^{\mathcal{O}(1)}$ if a tree decomposition of width $t$ is given. However, the existence of a specific type of decomposition of $H$ shows that there are graphs $H$ where $\textsc{LHomED}(H)$ can be solved significantly more efficiently and the best possible constant can be arbitrarily smaller than $i(H)$. Nevertheless, we determine this best possible constant and (assuming the SETH) prove tight bounds for every fixed $H$.

\end{abstract}

\newpage
\pagestyle{empty}
\tableofcontents
\newpage
\setcounter{page}{1}
\pagestyle{plain}

\section{Introduction}
Typical \NP-hard graph problems are known to be solvable in polynomial time when the input graph is restricted to be of bounded treewidth. In many cases, the problem is actually fixed-parameter tractable (FPT) parameterized by treewidth: given a tree decomposition of width $t$, the problem can be solved in time $f(t)\cdot n^{\bigO(1)}$ for some function $f$ \cite{DBLP:books/sp/CyganFKLMPPS15,DBLP:journals/tcs/DiazST02,DBLP:conf/dimacs/DiazST01}. While early work focused on just establishing this form of running time, more recently there is increased interest in obtaining algorithms where the function $f$ is growing as slowly as possible. New techniques such as representative sets, cut-and-count, subset convolution, and generalized convolution were developed to optimize the function $f(t)$.

On the complexity side, a line of work started by Lokshtanov, Marx, and Saurabh \cite{DBLP:journals/talg/LokshtanovMS18} provides tight lower bounds for many problems where $c^t\cdot n^{\bigO(1)}$-time algorithms were known. These type of complexity results typically show the optimality  of the base $c$ of the exponent in the best known $c^t\cdot n^{\bigO(1)}$-time algorithm, by proving that the existence of a $(c-\epsilon)^t\cdot n^{\bigO(1)}$-algorithm for any $\epsilon>0$ would violate the Strong Exponential-Time Hypothesis (SETH)
\cite{DBLP:journals/siamcomp/OkrasaR21,DBLP:conf/esa/OkrasaPR20,DBLP:conf/stacs/EgriMR18,DBLP:conf/soda/CurticapeanLN18,
DBLP:conf/iwpec/BorradaileL16,DBLP:journals/dam/KatsikarelisLP19, DBLP:conf/icalp/MarxSS21, DBLP:conf/soda/CurticapeanM16,DBLP:conf/soda/FockeMR22,DBLP:conf/iwpec/MarxSS22,DBLP:conf/soda/FockeMINSSW23}.
The goal of this paper is to unify some of these lower bounds under the umbrella of \emph{list homomorphism with deletion} problems, obtaining tight lower bounds for an expressive family of optimization problems that include members that are essentially equivalent to fundamental problems such a \textsc{Vertex Cover}, \textsc{Max Cut}, \textsc{Odd Cycle Transversal}, and \textsc{Edge/Vertex Multiway Cut}.

\paragraph{Graph homomorphisms.} Given graphs $G$ and $H$, a {\em homorphism from $G$ to $H$} is a (not necessarily injective) mapping $f: V(G)\to V(H)$ that preserves the edges of $G$, that is, if $uv\in E(G)$, then $f(u)f(v)\in E(H)$. For example, if $H$ is the complete graph $K_c$ on $c$ vertices, then the homomorphisms from $G$ to $H$ correspond to the proper vertex $c$-colorings of $G$: adjacent vertices have to be mapped to distinct vertices of $H$. For a fixed graph $H$, the problem $\textsc{Hom}(H)$ asks if the given graph $G$ has a homomorphism to $H$. Motivated by the connection to $c$-coloring when $H=K_c$, the problem is also called \textsc{$H$-coloring} \cite{DBLP:journals/jct/HellN90,DBLP:journals/csr/HellN21,DBLP:journals/csr/HellN08,DBLP:books/daglib/0013017,DBLP:journals/dam/MaurerSW81,DBLP:journals/iandc/MaurerSW81b,Hahn1997}.

The list version of $\textsc{Hom}(H)$ is the generalization of the problem where the possible image of each $v\in V(G)$ is restricted \cite{FEDER1998236,DBLP:conf/dimacs/HellN01,DBLP:journals/jgt/FederHH03,DBLP:journals/combinatorica/FederHH99,DBLP:conf/soda/HellR11,DBLP:conf/soda/EgriHLR14,DBLP:conf/lics/DalmauEHLR15,DBLP:conf/stacs/EgriMR18,DBLP:conf/mfcs/BokBFHJ20,DBLP:conf/esa/OkrasaPR20,DBLP:conf/soda/FockeMR22}. This generalization allows us to express a wider range of problems and it makes complexity results more robust. Formally, for a fixed undirected graph $H$ (possibly with selfloops), the input of the $\textsc{LHom}(H)$ problem consists of a graph $G$ and a {\em list assignment} $L:V(G)\to 2^{V(H)}$, the task is to decide if there is a \emph{list homomorphism} $f$ from $(G,L)$ to $H$, that is, a homomorphism $f$ from $G$ to $H$ that satisfies $f(v)\in L(v)$ for every $v\in V(G)$. Note that $\textsc{Hom}(H)$ is trivial if $H$ has a vertex with a loop, but loops may have a non-trivial role in the $\textsc{LHom}(H)$ problem as not every list may contain the same looped vertex. In fact, it is already non-trivial to consider the special case where $H$ is \emph{reflexive} \cite{FEDER1998236}, that is, every vertex of $H$ has a loop.

The main topic of the current paper is a further generalization of $\textsc{LHom}(H)$ to an optimization problem where we are allowed to delete some edges/vertices of $G$. The edge-deletion variant $\LHomED(H)$ is defined the following way: given a graph $G$ and a list assignment $L:V(G)\to 2^{V(H)}$, the task is to find a minimum set $X\subseteq E(G)$ of edges such that $(G\setminus X,L)$ has a list homomorphism to $H$. In other words, we want to find a mapping $f:V(G)\to V(H)$ that satisfies $f(v)\in L(v)$ for every $v\in V(G)$ and satisfies $f(u)f(v)\in E(H)$ for the maximum number of edges $uv$ of $G$. The vertex-deletion variant $\LHomVD(H)$ is defined analogously: here the task is to find a minimum size set $X$ of vertices such that $(G-X,L)$ has a list homomorphism to $H$. The $\LHomVD(H)$ problem was considered from the viewpoint of FPT algorithms parameterized by the number of removed vertices \cite{DBLP:journals/algorithmica/ChitnisEM17,DBLP:conf/stoc/0002KPW22}.

While the $\textsc{Hom}(H)$ and $\textsc{LHom}(H)$ problems can be seen as generalizations of vertex coloring, the framework of deletion problems we consider here can express a wide range of fundamental optimization problems. We show below how certain problems can be reduced to $\LHomED(H)$ or $\LHomVD(H)$ for some fixed $H$. The reductions mostly work in the other direction as well (we elaborate on that in Section~\ref{sec:overview}), showing that this framework contains problems that are essentially equivalent to well-studied basic problems.
\begin{itemize}
\item \textsc{Vertex Cover}: Let $H=K_1$ be a single vertex $x$ without a loop. Then \textsc{Vertex Cover} can be expressed by $\LHomVD(K_1)$ with single-element lists: as vertex $x$ is not adjacent to itself, it follows that for every edge $uv$ of $G$, at least one of $u$ and $v$ has to be deleted.
\item \textsc{Independent Set}: As $G$ has an independent set of size $k$ if and only if it has a vertex cover of size $|V(G)|-k$, the same reduction can be used.
\item \textsc{Max Cut}: Let $H=K_2$ be two adjacent vertices without loops. Then \textsc{Max Cut} can be expressed by $\LHomED(H)$ with the list $V(H)$ at every vertex: the task is to delete the minimum number of edges to obtain a bipartition $(X,Y)$, that is, to maximize the number of edges between $X$ and $Y$.
\item \textsc{Odd Cycle Transversal}: Let $H=K_2$ be two adjacent vertices without loops. Then \textsc{Odd Cycle Transversal} can be expressed by $\LHomVD(H)$ with the list $V(H)$ at every vertex: the task is to delete the minimum number of vertices to obtain a bipartite graph.
\item \textsc{$s$-$t$ Min Cut}: Let $H$ contain two independent vertices $v_s$ and $v_t$ with selfloops. Then \textsc{$s$-$t$ Min Cut} can be expressed as $\LHomED(H)$ where $L(s)=\{v_s\}$, $L(t)=\{v_t\}$, and the list is $\{v_s,v_t\}$ for all remaining vertices. It is clear that $s$ and $t$ cannot be in the same component after removing the solution $X$ from $G$.

\item \textsc{Edge Multiway Cut} with $c$ terminals $t_1$, $\dots$, $t_c$: Let $H$ be $c$ independent vertices $v_1$, $\dots$, $v_c$ with selfloops. Then the problem can be expressed as $\LHomED(H)$ where $L(t_i)=\{v_i\}$ and non-terminals have list $\{v_1,\dots, v_c\}$. It is clear that if $X$ is a solution of $\LHomED(H)$, then each component of $G\setminus X$ can contain at most one terminal.
  
\item \textsc{Vertex Multiway Cut} with $c$ (undeletable) terminals $t_1$, $\dots$, $t_c$: Let $H$ be $c$ independent vertices $v_1$, $\dots$, $v_c$ selfloops. First we modify the graph: if terminal $t_i$ is adjacent to a vertex $w$, then we replace $t_i$ by $n=|V(G)|$ degree-1 copies adjacent to $w$. Then the problem can be expressed as $\LHomVD(H)$ where the list is $\{v_i\}$ for each copy of $t_i$ and $\{v_1,\dots, v_c\}$ for each non-terminal vertex. Observe that it does not make sense to delete a copy of any terminal. Therefore, the optimal solution to $\LHomVD(H)$ is a set of vertices, disjoint from the terminals, that separates the original terminals.
\end{itemize}

For every fixed $H$, our results give tight lower bounds for $\LHomVD(H)$ and $\LHomED(H)$, parameterized by the width of the given tree decomposition  problems. This comprehensive set of results reprove earlier lower bounds on basic problems in a uniform way, extend them to new problems that have not been considered before (e.g., \textsc{Multiway Cut}), and in fact fully investigate a large, well-defined family of problems. Earlier results in this area typically focused on specific problems or relatively minor variants of a specific problem. Compared to that, our results focus on a family of problems that include a diverse set of optimization problems interesting on their own right. The tight characterization also includes an algorithmic surprise: for some $\LHomED(H)$ problems, the obvious brute force algorithm is not optimal on its own, one needs to consider a specific form of decomposition into subproblems to achieve the best possible algorithm.

  \paragraph{Polynomial-time cases.} The seminal work of Ne\v{s}et\v{r}il and Hell \cite{DBLP:journals/jct/HellN90} characterized the polynomial-time solvable cases of $\Hom(H)$: it can be solved in polynomial time if $H$ is bipartite or has a loop, and it is \NP-hard for every other fixed $H$. For the more general list version, we need more restrictions: Feder, Hell, and Huang \cite{DBLP:journals/jgt/FederHH03} showed that the problem is polynomial-time solvable if $H$ is a bi-arc graph. Given the amount of attention this type of problems received in the literature \cite{DBLP:journals/jct/HellN90,FEDER1998236,DBLP:journals/combinatorica/FederHH99,DBLP:conf/dimacs/HellN01,DBLP:journals/jgt/FederHH03,FederHH07,DBLP:conf/soda/HellR11,DBLP:conf/soda/EgriHLR14,DBLP:conf/lics/DalmauEHLR15,DBLP:conf/mfcs/BokBFHJ20,DBLP:journals/dam/FederH20,DBLP:journals/siamdm/HellR12a,DBLP:conf/esa/HellMNR12,DBLP:journals/dam/FederHSS11}, it is somewhat surprising that the polynomial-time solvability of the deletion versions have not been systematically studied. Therefore, our first contribution is a polynomial-time versus \NP-hard dichotomy for $\LHomED(H)$ and  $\LHomVD(H)$. As expected, these more general problems remain polynomial-time solvable only for an even more restricted class of graphs. In particular, the reduction above from \textsc{Vertex Cover} shows that $\LHomVD(H)$ becomes \NP-hard already when there is a single loopless vertex in $H$ and hence we can expect polynomial-time algorithms only for reflexive graphs $H$.

\begin{restatable}{thm}{thmvddicho}
\label{thm:vd-dicho-intro}
The $\LHomVD(H)$ problem is polynomial-time solvable if $H$ is reflexive and does not contain any of the following:
\begin{myenumerate}
\item three pairwise non-adjacent vertices,
\item an induced four-cycle, or
\item an induced five-cycle.
\end{myenumerate}
Otherwise, $\LHomVD(H)$ is \NP-hard.
\end{restatable}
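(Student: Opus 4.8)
\emph{Setup and overview.} The plan is to note first that items (i)--(iii) are a convenient repackaging of a single structural condition: any induced cycle of length at least $6$ already contains three pairwise non-adjacent vertices, so ``$H$ contains none of (i)--(iii)'' is equivalent to ``(the underlying simple graph of) $H$ is chordal and $\alpha(H)\le 2$''. Thus I would split the statement into (A) if $H$ is reflexive, chordal, and $\alpha(H)\le 2$ then $\LHomVD(H)\in\PP$, and (B) otherwise $\LHomVD(H)$ is \NP-hard, and prove the two parts separately.

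\emph{Hardness (B).} I would distinguish three cases according to which condition fails. \emph{(i)~$H$ has a loopless vertex $x$.} Put the list $\{x\}$ on every vertex of the input graph $G$; since $xx\notin E(H)$, a subgraph $G-X$ has a list homomorphism to $H$ if and only if $G-X$ is edgeless, so $\LHomVD(H)$ is exactly \textsc{Vertex Cover}. \emph{(ii)~$H$ is reflexive but has three pairwise non-adjacent vertices $x_1,x_2,x_3$.} Restricting all lists to subsets of $\{x_1,x_2,x_3\}$, the target effectively becomes three isolated looped vertices; using the pendant-copy construction from the introduction (so terminals are never deleted) and the fact that a list homomorphism into three pairwise non-adjacent looped vertices is constant on every connected component, a deletion set is feasible iff it pairwise separates the three terminal sets, i.e.\ $\LHomVD(H)$ is \textsc{Vertex Multiway Cut}, which is \NP-hard already with three terminals. \emph{(iii)~$H$ is reflexive with $\alpha(H)\le 2$ but contains an induced $C_4$ or $C_5$.} Then $H$ is not chordal, hence not an interval graph, hence not a bi-arc graph, so $\LHom(H)$ is \NP-hard by the dichotomy of Feder, Hell, and Huang~\cite{DBLP:journals/jgt/FederHH03}; since deciding whether the optimum of $\LHomVD(H)$ equals $0$ is precisely the $\LHom(H)$ question, $\LHomVD(H)$ is \NP-hard too. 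These three cases exhaust all $H$ outside the class in (A).

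\emph{Algorithm (A): structure and reformulation.} Now assume $H$ is reflexive, chordal, and $\alpha(H)\le 2$. Being chordal, $H$ is perfect, so $V(H)$ is the disjoint union of two cliques $A$ and $B$ (take $B=\emptyset$ if $\alpha(H)=1$). Absence of an induced $C_4$ forbids, for $a,a'\in A$ and $b,b'\in B$, having $ab,a'b'\in E(H)$ while $ab',a'b\notin E(H)$; equivalently the bipartite graph of $A$--$B$ edges has no induced $2K_2$, so $\{N(a)\cap B : a\in A\}$ is a chain under inclusion (and symmetrically on the $B$-side). Using this chain property, for each $v\in V(G)$ with $L(v)\cap A\neq\emptyset$ there is an image in $L(v)\cap A$ whose $B$-neighbourhood is inclusion-maximal, and a short exchange argument shows an optimal solution may be assumed to use only these ``top'' images (on either side). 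Consequently $\LHomVD(H)$ is equivalent to the following labelling problem: label each $v$ by $\delta$ (``deleted''), by $A$ if $L(v)\cap A\neq\emptyset$, or by $B$ if $L(v)\cap B\neq\emptyset$; for every edge $uv$ a fixed set $F_{uv}\subseteq\{(A,B),(B,A)\}$, computed from $H$ and the two lists, records the forbidden ordered label pairs, and crucially $F_{uv}$ never forbids equal labels since within a reflexive clique every edge is satisfied; the task is to minimise the number of $\delta$-labels.

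\emph{Algorithm (A): solving by a cut, and the obstacle.} Vertices with empty list must be deleted, so remove them; every remaining vertex is then $A$-capable or $B$-capable. I would build an auxiliary digraph on $V(G)$ replacing each edge $uv$ by the arc $u\to v$ if $(A,B)\in F_{uv}$ and by $v\to u$ if $(B,A)\in F_{uv}$ (an edge with $F_{uv}=\{(A,B),(B,A)\}$ becomes a digon). Call $v$ \emph{$A$-forced} if it is not $B$-capable and \emph{$B$-forced} if it is not $A$-capable. The claim to prove is that the labelling optimum equals the minimum number of vertices whose removal destroys all directed paths from an $A$-forced to a $B$-forced vertex: along any such path the label $A$ is forced to propagate from the $A$-forced end, contradicting the $B$-forced end, so every feasible $\delta$-set is such a vertex cut; conversely, given a minimum cut $X$, labelling each surviving vertex $B$ if it can still reach a $B$-forced vertex and $A$ otherwise satisfies all constraints (here one uses that $X$ being a cut prevents an $A$-forced surviving vertex from reaching a $B$-forced one). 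A minimum directed vertex cut between two prescribed vertex sets is computed by one max-flow computation, giving a polynomial-time algorithm. I expect the hardness direction to be routine (it only recombines \textsc{Vertex Cover}, \textsc{Vertex Multiway Cut}, and the known $\LHom$ dichotomy); the main obstacle is the algorithmic direction, namely pinning down the two-clique/chain structure precisely enough to justify restricting to the ``top'' images and to see that the residual labelling problem is \emph{exactly} a directed minimum vertex cut — in particular that a minimum cut can always be completed to a valid labelling.
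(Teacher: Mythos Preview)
Your proposal is correct and follows essentially the same approach as the paper. Both the hardness argument (irreflexive vertex $\Rightarrow$ \textsc{Vertex Cover}; three non-adjacent reflexive vertices $\Rightarrow$ \textsc{Vertex Multiway Cut}; reflexive $C_4$/$C_5$ $\Rightarrow$ $\LHom(H)$ already \NP-hard) and the algorithm (cover $V(H)$ by two reflexive cliques, use the chain structure to reduce each list to at most one candidate per side, then solve a directed minimum vertex cut between the ``left-forced'' and ``right-forced'' vertices) coincide with the paper's proof. The only cosmetic difference is that the paper passes through an interval-graph characterization (its Lemma~\ref{lem:vd-alternative-intro}) to obtain the two-clique cover and the nested-neighborhood property, whereas you derive the two cliques from perfectness and the chain property directly from $C_4$-freeness; both routes yield the same structural facts.
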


Edge-deletion problems are typically easier than their vertex-deletion counterparts, but the boundary line between the easier and hard cases is more difficult to characterize. This is also true in our case: for $\LHomED(H)$, the graph $H$ does not have to be reflexive to make the problem polynomial-time solvable, hence the proof of the classification result becomes significantly more complicated as graphs with both reflexive (i.e., looped) and irreflexive (i.e., non-looped) vertices must be handled as well. We need the following definition to state the dichotomy result. We say that the three vertices $v_1$, $v_2$, $v_3$ have \emph{private neighbors} if there are vertices $v'_1$, $v'_2$, $v'_3$ (not necessarily disjoint from $\{v_1,v_2,v_3\}$) such that $v_i$ and $v'_j$ are adjacent if and only if $i=j$. In particular, if $\{v_1,v_2,v_3\}$ are independent reflexive vertices then they have private neighbors. \emph{Co-private neighbors} are defined similarly, but $i=j$ is replaced by $i\neq j$. In particular, if $\{v_1,v_2,v_3\}$ are pairwise adjacent irreflexive vertices, then they have co-private neighbors. Finally, we say an edge is \emph{irreflexive} if both of its endpoints are irreflexive vertices. 

\begin{restatable}{thm}{thmeddicho}
\label{thm:ed-dicho-intro}
The $\LHomED(H)$ problem is polynomial time solvable if $H$ does not contain any of the following:
	\begin{myenumerate}
		\item an irreflexive edge,
		\item a $3$-vertex set $S$ with private neighbors, or
		\item a $3$-vertex set $S$ with co-private neighbors.
	\end{myenumerate}
Otherwise, $\LHomED(H)$ is \NP-hard.
\end{restatable}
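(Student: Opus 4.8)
The plan is to prove the two directions of the dichotomy separately: first that $\LHomED(H)$ is \NP-hard whenever $H$ contains one of the three listed structures, and then that it is polynomial-time solvable whenever $H$ avoids all three.

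For the hardness direction I would give one reduction per structure, each of which assigns to every vertex of the constructed graph a list consisting only of the relevant handful of vertices of $H$, so that the reduction is oblivious to the rest of $H$. If $H$ has an irreflexive edge $ab$, give every vertex of $G$ the list $\{a,b\}$: since $aa,bb\notin E(H)$ while $ab\in E(H)$, an edge of $G$ is satisfied exactly when its endpoints are mapped to distinct vertices, so the optimum equals $|E(G)|$ minus the maximum cut, and \maxcut is \NP-hard. If $H$ has a $3$-vertex set $\{v_1,v_2,v_3\}$ with private neighbors $v_1',v_2',v_3'$ (so $v_i$ is adjacent to $v_j'$ exactly when $i=j$), reduce from \textsc{Edge Multiway Cut} with three terminals $t_1,t_2,t_3$, which is \NP-hard: subdivide every edge once, give each original non-terminal the list $\{v_1,v_2,v_3\}$, give $t_i$ the list $\{v_i\}$, and give each subdivision vertex the list $\{v_1',v_2',v_3'\}$. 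The new graph is bipartite with original and subdivision vertices on the two sides, and an original vertex mapped to $v_i$ is compatible with an adjacent subdivision vertex mapped to $v_j'$ precisely when $i=j$; hence the two edges replacing an original edge $xy$ cost nothing when $x$ and $y$ get the same color (color the subdivision vertex likewise) and cost exactly one otherwise, so the optimum equals the multiway-cut value. If $H$ has a $3$-vertex set with co-private neighbors (so $v_i$ is adjacent to $v_j'$ exactly when $i\neq j$), reduce from \textsc{List $3$-Coloring} on bipartite graphs, which is \NP-hard (this is closely tied to the \NP-hardness of $\LHom(C_6)$, as the relation $i\neq j$ between the two triples is exactly a $6$-cycle): keep the bipartite graph, realize color $i$ by $v_i$ on one side and by $v_i'$ on the other, translate lists accordingly, and observe that an edge is satisfied exactly when its endpoints get different colors, so the optimum is $0$ precisely when the coloring instance is a yes-instance.

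For the tractability direction, suppose $H$ contains none of the three structures. The first step is a structural analysis of such $H$. Absence of an irreflexive edge forces the irreflexive vertices to form an independent set all of whose neighbors are reflexive. Absence of a private triple rules out, in particular, three pairwise non-adjacent reflexive vertices, so the reflexive part of $H$ has independence number at most $2$; combining this with the absence of a co-private triple, I would push towards a ``linear''/interval-like description of $H$ — roughly, the vertices of $H$ (with the irreflexive ones attached in a controlled, interval manner) can be ordered along a line so that adjacency becomes a proximity relation, in the same spirit in which reflexive interval graphs are characterized by the absence of asteroidal triples. I expect this classification — and in particular disentangling how the irreflexive vertices interact with the reflexive part — to be the main obstacle of the whole proof; this is exactly the point the paper flags as making the edge-deletion classification harder than the vertex-deletion one, where only reflexive $H$ need to be handled.

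Given such a structural description, the second step is the algorithm. Using the linear arrangement of $V(H)$, ``edge $uv$ is satisfied'' becomes a convex, monotone condition on the positions of $f(u)$ and $f(v)$, so minimizing the number of deleted edges is a labeling problem with linearly ordered labels and convex pairwise penalties; such problems are expressible as a minimum cut (equivalently, solved by submodular function minimization) and hence lie in \PP\ — the same phenomenon that places \textsc{$s$-$t$ Min Cut} in the family. The irreflexive vertices of $H$, and lists mixing reflexive and irreflexive vertices, require extra bookkeeping — they behave like special labels admissible only in a restricted range — but can be folded into the cut instance. Together, the two directions give the stated dichotomy.
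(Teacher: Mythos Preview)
Your hardness direction is essentially the paper's proof: the irreflexive-edge and private-triple cases match verbatim (including the subdivision trick for multiway cut), and your co-private case is the same content stated more directly --- the paper goes through \cref{lem:biarc-intro} to say ``co-private triple $\Rightarrow$ $H$ is not bi-arc $\Rightarrow$ $\LHom(H)$ is \NP-hard'', whereas you observe that the co-private structure \emph{is} an induced $C_6$ in the associated bipartite graph and reduce from $\LHom(C_6)$ explicitly. Both are fine.

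The tractability direction has the right shape (structure theorem $\to$ ordered labels $\to$ min-cut) but a genuine gap at the structural step. The representation you need is not ``interval-like'' on a line; the paper shows via \cref{lem:biarc-intro} that the three exclusions force $H$ to be a \emph{bi-arc} graph, and it is the two-arc geometric representation on a circle that induces the total order $\prec$ on each incomparable list (\cref{lem:linorderedVertices}). Your observation that the reflexive part has independence number $\le 2$ is true but insufficient: the irreflexive vertices can form arbitrarily large incomparable sets (cf.\ the graph $H[A]$ in \cref{fig:dec-intro}), so a simple interval picture will not materialize. The second gap is the ``convex/monotone'' claim: the $0/1$ interaction matrix between two ordered lists is \emph{not} convex in general --- what the paper actually proves (\cref{cor:nicematrix}, after a page of case analysis on the bi-arc arcs) is that its zeros decompose into at most three axis-aligned rectangles of a very specific kind (one touching the top-right corner, one the bottom-left, one spanning full rows). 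Each rectangle is then encoded by a single unit arc in the min-cut instance on the path-of-labels gadget, and the disjointness of the rectangles guarantees that at most one such arc is cut per edge of $G$. Without the bi-arc representation and this rectangle decomposition, the min-cut encoding does not go through; this is precisely the ``main obstacle'' you flagged, and it needs the bi-arc machinery rather than a generic convexity argument.
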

The proof of Theorem~\ref{thm:ed-dicho-intro} exploits a delicate interplay between the geometric bi-arc representation (in the algorithm) and the characterization by forbidden subgraphs (for hardness). While the proofs of these dichotomy results are non-trivial, we do not consider them to be the main results of the paper.
Clearly, understanding the easy and hard cases of the problem is a necessary prerequisite for the lower bounds we are aiming at, hence we needed to prove these dichotomy results as they were not present in the literature in this form. We remark that $\LHomED(H)$ can be formulated as a Valued Constraint Satisfaction Problem (VCSP) with a single binary relation, hence the existence of a polynomial-time versus \NP-hard dichotomy should follow from known results on the complexity of VCSP \cite{DBLP:journals/siamcomp/KolmogorovKR17,DBLP:journals/jacm/ThapperZ16,DBLP:journals/jacm/KolmogorovZ13}. 
However, we obtain in a self-contained way a compact statement of an easily checkable classification property with purely graph-theoretic proofs and algorithms.

\paragraph{Bounded-treewidth graphs, vertex  deletion.}
Let us first consider the vertex-deletion version $\LHomVD(H)$ and determine how exactly the complexity of the problem depends on treewidth. We assume that the input contains a tree decomposition of $G$ having width $t$, we try to determine the smallest $c$ such that the problem can be solved in time $c^t\cdot n^{O(1)}$. This question has been investigated for $\Hom(H)$ \cite{DBLP:journals/siamcomp/OkrasaR21}, $\LHom(H)$ \cite{DBLP:conf/stacs/EgriMR18,DBLP:conf/esa/OkrasaPR20}, and the counting version of $\LHom(H)$ \cite{DBLP:conf/soda/FockeMR22}.

Standard dynamic programming techniques show that $\LHomVD(H)$ can be solved in time $(|V(H)|+1)^t\cdot n^{\bigO(1)}$ if a tree decomposition of width $t$ is given: each vertex has $|V(H)|$ possible ``states'' corresponding to where it is mapped to, plus one more state corresponding to deleting the vertex.
For some $H$, this naive algorithm can be improved the following way. First, if every vertex in $G$ has a list of size at most $\ell$, then $(|V(H)|+1)$ can be improved to $\ell+1$: each vertex has only $\ell$ states corresponding to the possible images, plus the state representing deletion. Second, we say that a set $S\subseteq V(H)$ is \emph{incomparable} if the neighborhoods of any two vertices in $S$ are incomparable, that is, for any $u,v\in S$, there is $u'\in \nh(u)\setminus \nh(v)$ and $v'\in \nh(v)\setminus \nh(u)$ (we denote by $\nh(v)$ the neighborhood of a vertex $v$, which includes $v$ itself if it has a loop). Let $i(H)$ be the size the of the largest incomparable set in $H$. The main observation (already made in \cite{DBLP:conf/stacs/EgriMR18,DBLP:conf/esa/OkrasaPR20}) is that we can assume that every list $L(v)$ is an incomparable set: if $\nh(v)\subseteq \nh(v')$ for $v,v'\in L(v)$, then we can always use $v'$ in place of $v$ in a solution. Therefore, we can assume that every list has size at most $i(H)$, resulting in running time $(i(H)+1)^t\cdot n^{\bigO(1)}$. Our main result for the vertex-deletion version shows the optimality of this running time.

  \begin{thm}\textbf{\rm \bf (Main result for treewidth, vertex deletion)}\label{thm:vd-main-intro-tw}
  Let $H$ be a fixed graph which contains either an irreflexive vertex or three pairwise non-adjacent reflexive vertices or an induced reflexive cycle on four or five vertices. Then $\LHomVD(H)$ on $n$-vertex instances given with a tree decomposition of width~$t$
  \begin{myenumerate}[(a)]
  \item can be solved in time $(i(H)+1)^t \cdot n^{\bigO(1)}$ and
    \item cannot be solved in time $(i(H)+1-\epsilon)^t \cdot n^{\bigO(1)}$ for any $\epsilon >0$, unless the SETH fails.
    \end{myenumerate}
  \end{thm}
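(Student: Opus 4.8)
The plan is to treat the two halves of the theorem separately; part~(a) is essentially the dynamic programming already indicated in the text, so the substance lies in the lower bound~(b). For~(a): first normalize the instance so that every list is an incomparable set. Whenever $a,b\in L(v)$ with $\nh_H(a)\subseteq\nh_H(b)$, delete $a$ from $L(v)$; this changes no optimum, since in any solution the image $a$ at an undeleted vertex $v$ may be replaced by $b$ (every surviving neighbour of $v$ is mapped into $\nh_H(a)\subseteq\nh_H(b)$). After exhaustively applying this rule $\abs{L(v)}\le i(H)$ for all $v$. Then run the standard treewidth dynamic programming on a nice tree decomposition: the state of a vertex $v$ in a bag is an element of $L(v)$ or a symbol $\bot$ (``deleted''); a partial solution assigns such a state to every vertex of the bag, consistently with the edges of $G$ inside the bag between vertices not assigned $\bot$, and the value stored is the minimum number of vertices assigned $\bot$ over the part of $G$ processed so far that is compatible with this assignment. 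Introduce, forget, and join nodes are handled in the usual way. Each bag has at most $(i(H)+1)^{t+1}$ states, so the running time is $(i(H)+1)^t\cdot n^{\bigO(1)}$; this uses no assumption on $H$.

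For~(b) I would reduce from $q$-CSP-$B$: constraint satisfaction with $N$ variables over a domain of size $B:=i(H)+1$ and constraints of arity at most $q$. A standard block-encoding argument shows that the SETH implies that for every $B\ge 2$ and every $\epsilon>0$ there is a $q$ such that $q$-CSP-$B$ cannot be solved in time $(B-\epsilon)^N\cdot\mathrm{poly}$: partition the variables of a $q'$-SAT instance into blocks of $b$ variables and encode each block jointly by $\lceil b/\log_2 B\rceil$ CSP variables (feasible since $B^{\lceil b/\log_2 B\rceil}\ge 2^{b}$); as $b\to\infty$ the number of CSP variables approaches $N'/\log_2 B$, every SAT clause still touches only $\bigO(1)$ CSP variables, and $\log_2(B-\epsilon)/\log_2 B<1$ yields the contradiction. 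Consequently it suffices to construct, in polynomial time and for every $H$ satisfying the hypotheses of the theorem, from a $q$-CSP-$B$ instance on $N$ variables an $\LHomVD(H)$ instance $(G,L)$ together with a path decomposition of $G$ of width $t=N+\bigO_{H,q}(1)$, whose optimum determines whether the CSP is satisfiable; a hypothetical $(B-\epsilon)^t\cdot n^{\bigO(1)}$ algorithm would then decide $q$-CSP-$B$ in time $(B-\epsilon)^{N+\bigO(1)}\cdot\mathrm{poly}=(B-\epsilon)^N\cdot\mathrm{poly}$.

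The construction follows the Lokshtanov--Marx--Saurabh scheme. Fix an incomparable set $S=\{s_1,\dots,s_{i(H)}\}\subseteq V(H)$; a vertex with list $S$ has $i(H)$ admissible images, and, counting the option of being deleted, can be in any one of $B$ ``states''. Each CSP variable becomes a long \emph{chain} of such state-vertices, one per column and one column per constraint, in which consecutive state-vertices are tied by a \emph{propagation gadget} forcing them into the same state (both deleted, or both mapped to the same $s_j$). The $N$ chains are laid out in parallel, so a path decomposition sweeping through the columns keeps one state-vertex per chain in a bag at any time, together with the constantly many internal vertices of the gadget currently being processed, giving width $N+\bigO_{H,q}(1)$. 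For each constraint, at its column, a \emph{constraint gadget} is attached to the state-vertices of the at most $q$ chains it involves; it must be feasible exactly for the satisfying tuples and have the same deletion cost on all of them, and it is assembled from atomic gadgets each forbidding a single tuple, which reduce in turn to a gadget that separates two prescribed states.

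The main obstacle is realizing the propagation gadget and the state-separating gadget inside $H$ with cost uniform over all $B$ states, and this is where the three hypotheses on $H$ are used. Incomparability of $S$ gives, for each pair $s_j\ne s_k$, a vertex of $H$ that distinguishes them, and the extra structure guaranteed by an irreflexive vertex, by three independent reflexive vertices, or by an induced reflexive $C_4$ or $C_5$ should let these distinguishers be glued into a working equality gadget --- I expect this to require a case analysis, probably after first reducing to a canonical form of $H$. A subtlety specific to the deletion setting is that a deleted state-vertex becomes invisible to its gadget, all incident edges vanishing, so the gadgets must be designed so that ``$\bot$'' behaves as just another of the $B$ states; arranging uniform costs and keeping the path-width at $N+\bigO(1)$ rather than $\bigO(N)$ is where I expect the real work to lie. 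By comparison the SETH-to-$q$-CSP-$B$ reduction and the dynamic programming of part~(a) are routine.
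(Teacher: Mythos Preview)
Part~(a) is fine and matches the paper.

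For part~(b), your plan diverges from the paper and, as written, has a genuine gap. You propose to reduce from $q$-CSP-$B$ with $B=i(H)+1$ and to treat ``deleted'' as one of the $B$ states that must propagate along a chain of state-vertices. You correctly identify that this is the crux, but you do not actually build the propagation gadget, and building it in the way you envision is problematic: once a portal vertex $x$ is deleted, \emph{all} edges incident to $x$ disappear, so the gadget attached to $x$ sees nothing of $x$ whatsoever. Concretely, the cost of the gadget when $x$ is deleted is a function only of the state of the other portal $y$; to get propagation you would need this cost to be strictly larger for every $y\in S$ than for $y=\bot$, \emph{and simultaneously} equal to the base cost when both portals receive the same colour from $S$. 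Arranging these inequalities uniformly over all of $S$, for every $H$ in the theorem, is not routine and you give no construction. Saying ``this is where the real work lies'' is accurate but is not a proof.

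The paper sidesteps this difficulty entirely by reducing not from CSP but from \coloringVD{q} with $q=i(H)$, whose own lower bound already carries the factor $(q+1)^t$ coming from deletion. The edge-replacement gadget (the $S$-prohibitor) then only has to distinguish the $i(H)$ colours of $S$: it has base cost $\alpha$ for every pair in $(S\cup\{\del\})^2$ except the diagonal pairs $(v,v)$ with $v\in S$, which cost strictly more. In particular, deletion of a portal is simply \emph{permitted} at base cost; there is no need to force deletion to propagate, because that is handled on the \coloringVD{q} side. This makes the gadget construction tractable via a short case analysis (splitter/translator/matcher depending on which obstruction is present in $H$). If you want to salvage your approach you would need either to exhibit the equality gadget over $S\cup\{\del\}$ explicitly, or---much simpler---switch the source problem to \coloringVD{i(H)} and build only a not-equal gadget over $S$ that tolerates deletion at base cost.
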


  Theorem~\ref{thm:vd-main-intro} refines the \NP-hardness of Theorem~\ref{thm:vd-dicho-intro} by obtaining a lower bound that precisely matches the algorithm described above. This shows that for $\LHomVD(H)$, restricting the lists to incomparable sets is the \emph{only} algorithmic idea that can improve the running time of the naive algorithm. In particular, we cannot consider the connected components of $H$ separately (as was possible in the earlier results \cite{DBLP:journals/siamcomp/OkrasaR21,DBLP:conf/esa/OkrasaPR20,DBLP:conf/stacs/EgriMR18,DBLP:conf/soda/FockeMR22}). It is an essential feature of the deletion problem that hardness can stem from disconnected structures (see, for example, the reduction above from \textsc{Multiway Cut}).

\paragraph{Bounded-treewidth graphs, edge  deletion.}
For the edge-deletion version $\LHomED(H)$, the natural expectation is that $i(H)^t\cdot n^{\bigO(1)}$ is the best possible running time: as vertices cannot be deleted, each vertex $v$ has only $|L(v)|\le i(H)$ states in the dynamic programming. While this running time can be achieved using the idea of incomparable sets, it turns out that, somewhat surprisingly, this is \emph{not} the optimal running time for every $H$. There are graphs $H$ for which $\LHomED(H)$ can be solved significantly faster, thanks to a new algorithmic idea, the use of a specific form of decomposition. We need the following definition. 

\begin{restatable}[Decomposition]{defn}{defdecomp}
\label{def:decomp}
Given a graph $H$ with vertex set $V$ and a partition of $V$ into three possibly empty sets $A$, $B$, and $C$, we say that $(A,B,C)$ is a \emph{decomposition} of $H$ if the following hold:
	\begin{myitemize}
		\item $B$ is a reflexive clique with a full set of edges between $A$ and $B$,
		\item $C$ is an (irreflexive) independent set with no edge between $A$ and $C$,
		\item $A\neq \emptyset$ and $B\cup C\neq \emptyset$.
	\end{myitemize}
\end{restatable}
The crucial property of this definition is that if $S$ is an incomparable set, then it is fully contained in one of $A$, $B$, or $C$. Indeed, for any $a\in A$, $b\in B$, $c\in C$, we have $\nh(c)\subseteq \nh(a)\subseteq \nh(b)$. Therefore, if we assume that each list $L(v)$ is an incomparable set, then each $L(v)$ is a subset of one of these three sets. Let $V_A$, $V_B$, $V_C$  be the sets of vertices of $G$ whose lists are a subset of $A$, $B$, and $C$, respectively. Observe that if $u\in V_A$ and $v\in V_B$ are adjacent in $G$, then whenever assignment $f:V(G)\to V(H)$ respects the lists of $u$ and $v$, then $f(u)f(v)$ is \emph{always} an edge of $H$ (as $A$ and $B$ are fully connected). Therefore, the edge $uv$ of $G$ does not play any role in the problem. Similarly, if $u\in V_A$ and $v\in V_C$, then $f(u)f(v)$ is \emph{never} an edge of $H$ (as $A$ and $C$ are independent), hence $uv$ always has to be deleted in the solution. This means that the edges between $V_A$ and $V_B\cup V_C$ can be ignored and the problem falls apart into two independent instances $G[V_A]$ of $\LHomED(H[A])$ and $G[V_B\cup V_C]$ of $\LHomED(H[V_B\cup V_C])$.\label{page:ed-alg}

\begin{figure}
	\centering
	\includegraphics[scale=0.83,page=1]{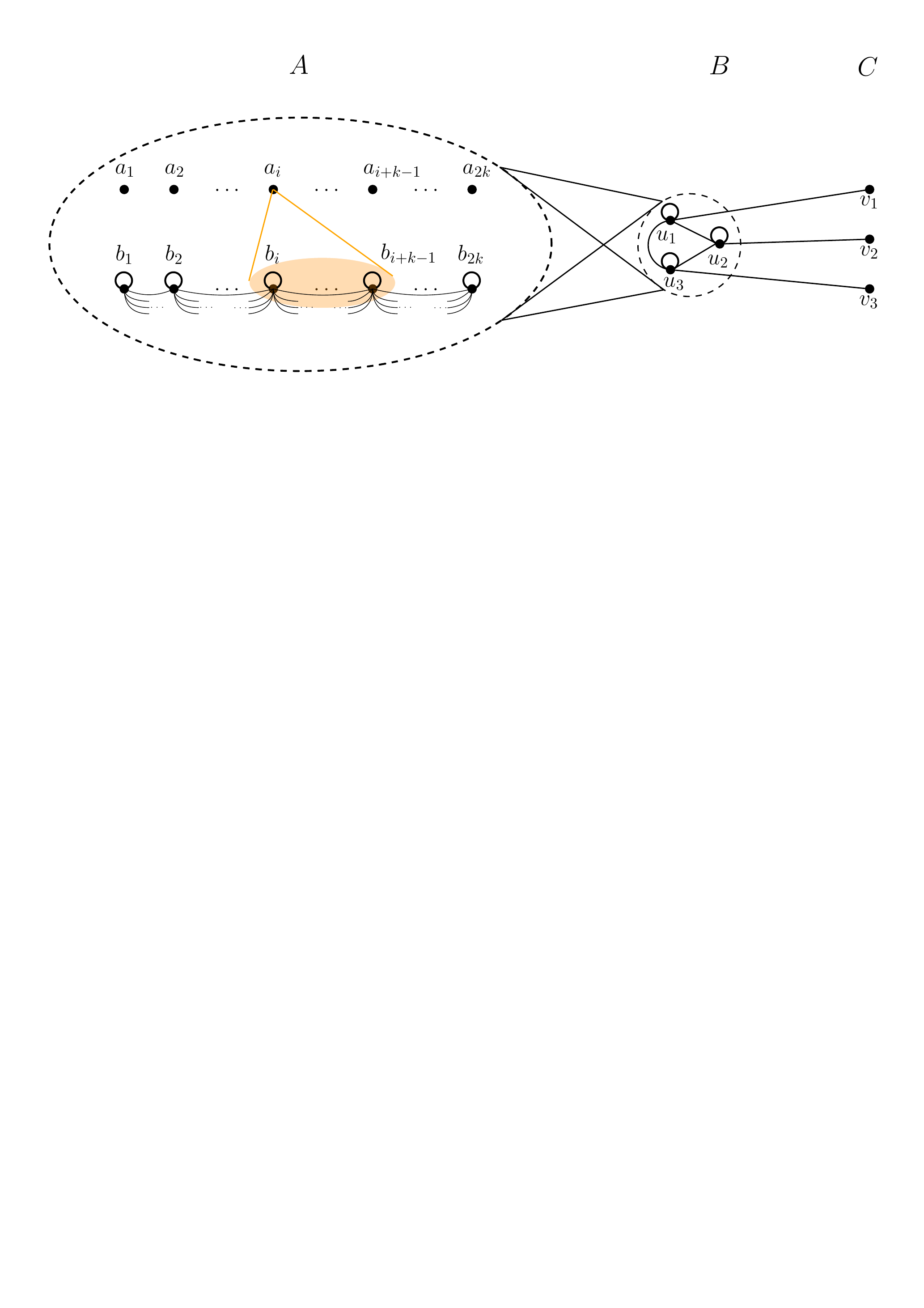}
	\caption{An example of a graph $H$ that has a decomposition $(A,B,C)$, with $i(H)=k$ and $\param(H)=3$. The $a_i$'s form an irreflexive independent set and the $b_i$'s form a reflexive clique. Every vertex $a_i$ is adjacent to $\{b_{i},\dots, b_{i+k-1}\}$, and $\{u_1,u_2,u_3\}$ is fully adjacent to every $a_i$ and $b_i$. Observe that $i(H)\ge i(H[A])\ge k$, as vertices $a_1$, $\dots$, $a_k$ have incomparable neighborhoods. There is no irreflexive edge in $H[A]$, and it can be checked that there is no 3-element set with private or co-private neighbors, implying that $\LHomED(H[A])$ is polynomial-time solvable. But $\{u_1,u_2,u_3\}$ has private neighbors, making $\LHomED(H)$ \NP-hard.} \label{fig:dec-intro}
\end{figure}

How does this observation help solving the problem more efficiently? As every incomparable set is a subset of one of the three sets, we have $i(H)=\max\{i(H[A]),i(H[B\cup C])\}$. Thus it seems that one of the two instances will be at least as hard as the original instance. The catch is that it could happen that one of the two instances is polynomial-time solvable and contains a large incomparable set, while the other is \NP-hard but contains only small incomparable sets. For example, it is possible that $i(H)=i(H[A])=k$, $i[H[B\cup C]]=3$,
but $\LHomED(H[A])$ is polynomial-time solvable. Then we can decompose the problem into an instance of $\LHomED(H[A])$ and an instance of $\LHomED(H[B\cup C])$, solve the former in polynomial time, and the latter in time $i(H[B\cup C])^t\cdot n^{\bigO(1)}=3^t\cdot n^{\bigO(1)}$. Figure~\ref{fig:dec-intro} shows an example where this situation occurs.

Our main result for the edge-deletion version is showing that there are \emph{precisely two} algorithmic ideas that can improve the running time for $\LHomED(H)$: restricting the lists to incomparable sets and exploiting decompositions. Formally, let $\param(H)$ be the maximum of $i(H')$ taken over all induced undecomposable subgraphs $H'$ of $H$ that is \emph{not} classified as polynomial-time solvable by Theorem~\ref{thm:ed-dicho-intro}, that is, do not contain an irreflexive edge and do not contain 3 vertices with private or co-private~neighbors.

\begin{thm}\textbf{\rm \bf (Main result for treewidth, edge deletion)}\label{thm:ed-main-intro-tw}
  Let $H$ be a fixed graph that contains either an irreflexive edge, three vertices with private neighbors, or three vertices with co-private neighbors. Then $\LHomED(H)$ on $n$-vertex instances given with a tree decomposition of width $t$
  \begin{myenumerate}[(a)]
  \item can be solved in time $\param(H)^t \cdot n^{\bigO(1)}$ and
    \item cannot be solved in time $(\param(H)-\epsilon)^t \cdot n^{\bigO(1)}$ for any $\epsilon >0$, unless the SETH fails.
    \end{myenumerate}
  \end{thm}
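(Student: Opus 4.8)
The plan is to prove the two directions of Theorem~\ref{thm:ed-main-intro-tw} separately, building on the three structural ideas already introduced: (i) restricting lists to incomparable sets, (ii) the decomposition of Definition~\ref{def:decomp}, and (iii) the classification of Theorem~\ref{thm:ed-dicho-intro}. For the algorithmic upper bound (a), I would first preprocess: as observed on page~\pageref{page:ed-alg}, we may assume every list $L(v)$ is an incomparable set. Then I would recursively apply the decomposition: whenever the ``active part'' of $H$ (the union of vertices appearing in some list) has a decomposition $(A,B,C)$, we split the instance into the two independent subinstances $G[V_A]$ of $\LHomED(H[A])$ and $G[V_B\cup V_C]$ of $\LHomED(H[B\cup C])$, each with a tree decomposition of width at most $t$ inherited from $G$. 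Iterating, we reduce to instances whose host graph $H'$ is undecomposable. For such $H'$, if $\LHomED(H')$ is polynomial-time solvable by Theorem~\ref{thm:ed-dicho-intro} we solve it directly; otherwise we run the standard dynamic programming over the tree decomposition, where each bag vertex has at most $i(H')\le\param(H)$ states, giving $i(H')^t\cdot n^{\bigO(1)}$. Since every surviving undecomposable non-polynomial piece has $i(H')\le \param(H)$, the whole procedure runs in $\param(H)^t\cdot n^{\bigO(1)}$. The main technical point here is ensuring the recursion terminates (the host graph strictly shrinks at each step, or becomes undecomposable) and that the polynomial-time pieces are recognized correctly; this is routine once Theorem~\ref{thm:ed-dicho-intro} is in hand.

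For the lower bound (b), I would first isolate the witness: by definition of $\param(H)$ there is an induced subgraph $H'$ of $H$ that is undecomposable, is \emph{not} in the polynomial-time class of Theorem~\ref{thm:ed-dicho-intro} (so $H'$ contains an irreflexive edge, or a $3$-set with private neighbors, or a $3$-set with co-private neighbors), and satisfies $i(H')=\param(H)$. It suffices to prove that, already for this fixed $H'$, the existence of a $(\param(H)-\epsilon)^t\cdot n^{\bigO(1)}$-algorithm for $\LHomED(H')$ contradicts the SETH: an instance of $\LHomED(H')$ is trivially an instance of $\LHomED(H)$ (pad the lists with vertices of $V(H)\setminus V(H')$ never used, or simply note $H'$ is an induced subgraph and list homomorphisms to $H'$ are exactly those to $H$ that avoid $V(H)\setminus V(H')$), and treewidth is preserved. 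So the task reduces to a SETH-hardness reduction into $\LHomED(H')$ that blows up the treewidth by a factor approaching $\log_2\param(H)$ per ``coordinate.''

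The heart of the lower bound is therefore a reduction from $q$-CNF-SAT (or, equivalently, from the problem of distinguishing satisfiable from unsatisfiable CSP instances with domain of size $\param(H)$ over bounded-arity constraints) to $\LHomED(H')$ on graphs of bounded treewidth, of the kind pioneered by Lokshtanov--Marx--Saurabh and adapted to list homomorphisms in \cite{DBLP:conf/esa/OkrasaPR20,DBLP:conf/stacs/EgriMR18,DBLP:conf/soda/FockeMR22}. I would build long ``information-carrying paths'' whose vertices have list equal to a fixed maximum incomparable set $S\subseteq V(H')$ of size $\param(H)$, so that the natural assignments along the path encode a value in an alphabet of size $\param(H)$; bundling $t/\log_2\param(H)$ such paths through each bag keeps the width at roughly $t$ while the state space per bag is $\param(H)^t$. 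The incomparability of $S$ is used exactly to build ``consistency gadgets'' that propagate a chosen value faithfully (a comparable pair would let the algorithm collapse two states, killing the bound). Then the obstruction guaranteed by $H'$ being outside the polynomial class---an irreflexive edge, or private, or co-private neighbors---must be turned into a ``constraint-checking gadget'' of bounded size and bounded treewidth that, via a small number of forced edge deletions, tests whether the encoded assignment satisfies a given clause. The private/co-private-neighbor conditions are precisely what make such gadgets realizable: private neighbors of a $3$-set let us penalize (force a deletion on) exactly the ``diagonal'' agreements, co-private neighbors penalize the ``off-diagonal'' ones, and an irreflexive edge gives the basic ``not-equal'' penalty; combining these lets us encode an arbitrary bounded-arity constraint with a controlled number of mandatory deletions, so that the optimum number of deletions in the whole construction equals a fixed baseline iff the CSP instance is satisfiable. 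I expect this gadget construction---getting the arithmetic of forced deletions to separate exactly the YES and NO cases while keeping treewidth additive in the number of bundled paths---to be the main obstacle, and it is where essentially all the case analysis (irreflexive-edge case vs.\ private-neighbor case vs.\ co-private-neighbor case, and whether $H'$ has extra structure like being undecomposable that must be exploited to make the paths behave) will go.
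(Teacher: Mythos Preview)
Your plan for part~(a) is essentially the paper's proof: induct on induced subgraphs of $H$, decompose when possible, and at undecomposable leaves either invoke Theorem~\ref{thm:ed-dicho-intro} (no obstruction) or run the $i(H')^t\cdot n^{\bigO(1)}$ dynamic program.

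For part~(b), the paper takes a shorter route than you propose. Rather than reducing from SAT in the Lokshtanov--Marx--Saurabh style with bundled paths and clause gadgets, it reduces from \coloringED{q} with $q=\param(H)$, whose hardness (Theorem~\ref{thm:ed-coloring-intro}) is imported as a black box. Then the \emph{only} gadget one needs is a 2-portal gadget that $1$-realizes $\NEQ(S)$ on a maximum incomparable set $S$; replacing each edge of the coloring instance by a copy of this gadget finishes the reduction, with treewidth (indeed \coreword\ size) increasing by only an additive constant. This eliminates the path-bundling arithmetic you anticipate entirely.

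More substantively, your sketch of the gadget construction has a genuine gap. You say the obstruction lets you ``penalize diagonal/off-diagonal agreements'' and thereby encode arbitrary constraints, but the obstruction $O$ is a specific $2$- or $3$-vertex set that may be disjoint from the incomparable set $S$ on which your values live. The paper's main technical work is precisely bridging $S$ and $O$: first one shows (Lemma~\ref{lem:ed-allrelations}) that every relation over $\{a,b\}\subseteq O$ is realizable, then one builds an \emph{indicator} gadget (Lemma~\ref{lem:indicator}) that injectively encodes $S$-values as $\{a,b\}$-strings. The indicator in turn rests on the moving lemma (Lemma~\ref{lem:movingMain}): in an \emph{undecomposable} $H'$, any incomparable pair can be moved to any other via a $2$-portal gadget. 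This is where undecomposability is genuinely used---not, as you tentatively suggest, to ``make the paths behave,'' but as the certificate that information can flow between $S$ and $O$ at all. If $H'$ had a decomposition $(A,B,C)$ with $S\subseteq A$ and $O\subseteq B\cup C$, no such move could exist, and no constraint gadget of the kind you envision would work. The case analysis you should expect is therefore not over the three obstruction types, but over whether $H'$ is a strong split graph, which governs how one proves connectivity of the auxiliary graph $\aux(H')$ on incomparable pairs.
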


  For the lower bound of Theorem~\ref{thm:ed-main-intro}, it is sufficient to prove that $\param(H)$ is the correct base of the exponent if $H$ is undecomposable. The flavor of the proof is similar to the proof of Theorem~\ref{thm:vd-main-intro}, but more involved. One reason for the extra complication is that vertex-deletion problems typically give us more power when designing gadgets in a reduction than edge-deletion problems. But beyond that, an inherent difficulty in the proof of Theorem~\ref{thm:ed-main-intro} is that the proof needs to exploit somehow the fact that $H$ is undecomposable. Therefore, we need to find an appropriate certificate that the graph is undecomposable and use this certificate in the gadget construction throughout the proof.

  \paragraph{Parameterization by \coreword\ size.} 
  Esmer et al.~\cite{EsmerFMR24hub} presented a new perspective on lower bounds parameterized by the width of the tree decomposition given in the input. It was shown that many of these lower bonds hold even if we consider a larger parameter. These results showed that for many problems hard instances do not have to use the full power of tree decompositions (not even of path decompositions), the real source of hardness is instances consisting of a large central ``\coreword'' connected to an unbounded number of constant-sized components.

  Formally, we say that a set $Q$ of vertices is a \emph{\core
  {\sigma}{\delta}} of $G$ if every component of $G-Q$ has at most $\sigma$ vertices and each such component is adjacent to at most $\delta$ vertices of $Q$ in $G$. 
Observe that if a graph has a \core{\sigma}{\delta} core of size $p$, then this can be turned into a tree decomposition of width less than $p+\sigma$. This shows that if a problem can be solved in time $c^t\cdot n^{\bigO(1)}$ given a tree decomposition of width $t$ is given in the input, then for every fixed $\sigma$ and $\delta$, this problem can be solved in time $c^p\cdot n^{\bigO(1)}$ given a \core{\sigma}{\delta} of size $p$ is given in the input. Thus any lower bound ruling out the possibility of the latter type of algorithm for a given $c$ also rules out the possibility of the former type of algorithm. 
Esmer et al.~\cite{EsmerFMR24hub} showed that for many fundamental problems, the previously known lower bounds parameterized by the width of the tree decomposition can be strenghtened to parameterization by \coreword\ size. Following their work, we also present our lower bound results in such a stronger form\footnote{An astute reader might wonder if the statements below cannot be strengthened by making $\sigma$ and $\delta$ universal constants. These issues are discussed by Esmer et al.~\cite{EsmerFMR24hub}; we refer to their work for more details.}.

\begin{restatable}{thm}{thmvdcoloringmain}
\textbf{\rm \bf (Main result for \coreword\ size, vertex deletion)}\label{thm:vd-main-intro}
  Let $H$ be a fixed graph which contains either an irreflexive vertex or three pairwise non-adjacent reflexive vertices or an induced reflexive cycle on four or five vertices.
Then for every $\epsilon>0$, there are $\sigma,\delta>0$ such that $\LHomVD(H)$ with a \core{\sigma}{\delta} of size $p$ given in the input cannot be solved in time $(i(H)+1-\epsilon)^p \cdot n^{\bigO(1)}$, unless the SETH fails.
\end{restatable}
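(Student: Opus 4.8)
The plan is to reduce from a SETH-hard problem — concretely, from $q$-CSP with $q$-ary constraints, or equivalently from the problem of deciding satisfiability of a formula where the lower bound is stated in terms of the base $i(H)+1$. The target architecture must produce instances whose hard part is a single ``\coreword'', so I would build an instance of $\LHomVD(H)$ consisting of a central ``information-carrying'' structure $Q$ together with an unbounded collection of constant-size gadgets, each attached to a bounded number of vertices of $Q$. The central structure encodes an assignment to $p$ groups of $\log_{i(H)+1}(\text{domain})$ vertices, so that the $i(H)+1$ states per vertex correspond bijectively to (a) the $i(H)$ elements of a fixed largest incomparable set $S\subseteq V(H)$ and (b) the ``deleted'' state; this is exactly the $i(H)+1$ states the naive algorithm tracks, which is why matching them is tight. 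First I would fix a largest incomparable set $S=\{s_1,\dots,s_{i(H)}\}$ and, exploiting incomparability, for each unordered pair $s_a,s_b$ select witnesses $w_{ab}\in\nh(s_a)\setminus\nh(s_b)$ and $w_{ba}\in\nh(s_b)\setminus\nh(s_a)$; these witnesses, plus the fact that the empty (deleted) state is ``distinguishable'' from every $s_j$ because a deleted vertex imposes no constraint whereas a present vertex in state $s_j$ must be adjacent to its neighbor, are the combinatorial atoms used to build gadgets that can read off the state of a single $Q$-vertex.

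The core of the reduction is a family of \emph{state-checking gadgets}: for each ``bad'' combination of states of a small tuple of $Q$-vertices (corresponding to an unsatisfied clause), a constant-size gadget $D$ with prescribed lists that is list-homomorphic to $H$ (after deleting some minimum number $t_D$ of its internal vertices) \emph{if and only if} the attaching $Q$-vertices are not all in that bad combination, and for which every forbidden combination forces one extra deletion. The total budget on the number of deleted vertices is set so that a solution of the claimed size exists exactly when \emph{all} CSP constraints are satisfiable. The key steps, in order, would be: (1) define the $Q$-vertex encoding and the bijection between the $i(H)+1$ DP-states and $S\cup\{\bot\}$; (2) construct, using the witnesses $w_{ab}$ and the deletion-distinguishability of $\bot$, a gadget that distinguishes any one chosen state from all others at the cost of a single controllable deletion; (3) combine copies of this gadget — via an $\OR$-type construction over the $q$-tuple of coordinates — into a constraint gadget realizing an arbitrary $q$-ary relation over the state alphabet; (4) assemble $n^{\bigO(1)}$ such gadgets, each touching at most $\delta$ vertices of $Q$ and of size at most $\sigma$, so that $(Q,\sigma,\delta)$ is a valid \core{\sigma}{\delta}; (5) do the budget arithmetic and the standard SETH-style block-partitioning so that an $(i(H)+1-\epsilon)^p$ algorithm for $\LHomVD(H)$ would solve $q$-CSP (or SAT) faster than allowed.

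The technically hardest step is (2)–(3): building a gadget that genuinely \emph{exploits all $i(H)+1$ states}. For a present vertex this is the list-homomorphism lower-bound machinery of Egri–Marx–Rzążewski / Okrasa–Piecyk–Rzążewski adapted to incomparable sets, but here the novelty — and the main obstacle — is weaving the deletion state $\bot$ into that machinery so that it is on equal footing with the $i(H)$ image states, i.e.\ a gadget must be able to ``force'' a $Q$-vertex to be deleted, to forbid it from being deleted, and to detect whether it is deleted, all while only deleting a controlled number of its own vertices and while working for an \emph{arbitrary} $H$ satisfying the hypothesis (irreflexive vertex, three independent reflexive vertices, or an induced reflexive $C_4$/$C_5$). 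Each of these three structural cases likely needs its own ``base gadget'' before the uniform incomparable-set argument kicks in — the irreflexive-vertex case is essentially a \textsc{Vertex Cover}/\textsc{OCT}-flavored construction, while the reflexive $C_4$/$C_5$ cases mimic known \maxcut- and cut-type reductions — and one must check that in every case the gadgets stay of constant size and constant boundary, which is precisely what upgrades the treewidth lower bound (Theorem~\ref{thm:vd-main-intro-tw}) to the \coreword-parameterized statement. I would handle the case analysis first to extract a single clean interface (``a constant-size state-selector gadget exists for every coordinate and every state, including $\bot$''), and then run the CSP-to-\coreword compilation of Esmer et al.\ \cite{EsmerFMR24hub} as a black box on top of that interface.
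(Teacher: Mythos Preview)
Your proposal aims at a direct SAT/CSP reduction in which each hub vertex carries $i(H)+1$ states, one of them being the deletion state $\bot$, and then realizes arbitrary relations over the alphabet $S\cup\{\bot\}$. The paper takes a much shorter route: it reduces from \coloringVD{q} with $q=i(H)$, whose $(q+1-\epsilon)^p$ lower bound is already established in \cite{EsmerFMR24hub} (\cref{thm:vd-coloring-intro}). The reduction is purely edge-local: give every original vertex the list $S$ and replace every edge by a single binary gadget, the \emph{$S$-prohibitor}, obtained by gluing together $(v,S)$-prohibitors for all $v\in S$. A $(v,S)$-prohibitor is a short path with carefully chosen lists whose base cost $\alpha$ is attained for every portal behaviour except $(v,v)$; in particular, deleting either portal always costs exactly $\alpha$. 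Thus the asymmetry of the deletion state is not fought against but matched: in the source problem \coloringVD{q} deletion has exactly the same ``removes all constraints'' semantics, so only a not-equal gadget that tolerates deletion at base cost is needed. The hub structure is preserved automatically because each edge is replaced by a constant-size path.

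Your plan has a concrete gap at step~(2)--(3). You write that a gadget must be able to \emph{forbid} a $Q$-vertex from being deleted and to realize arbitrary relations over $S\cup\{\bot\}$. This is not achievable at the gadget level: if a portal $x$ is deleted, every edge incident to $x$ disappears, so any internal configuration of the gadget that was feasible when $x$ was assigned some $s_j$ remains feasible when $x=\bot$. Hence $\vdcount(\calJ\to H,(\bot,b))\le \vdcount(\calJ\to H,(s_j,b))$ for every $j$ and every $b$; you can never make deletion strictly more expensive than assignment, and so relations that require $x\neq\bot$ cannot be realized by a gadget on $x$. Your OR-combination and state-selector machinery would therefore not give you the full alphabet $S\cup\{\bot\}$ symmetrically. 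The paper sidesteps this entirely: it never needs to penalize deletion, only to ensure that deletion is exactly as cheap as a good assignment, which \emph{is} achievable (and is what the $(v,S)$-prohibitor does). If you want to salvage your approach, the clean fix is to stop reducing from CSP and instead reduce from \coloringVD{q}, at which point all you need is the $S$-prohibitor --- and then you are essentially writing the paper's proof.
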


\begin{restatable}{thm}{thmedmainintro}\label{thm:ed-main-intro}
\textbf{\rm \bf (Main result for \coreword\ size, edge deletion)}
 Let $H$ be a fixed graph that contains either an irreflexive edge, three vertices with private neighbors, or three vertices with co-private neighbors.
Then for every $\epsilon>0$, there are $\sigma,\delta>0$ such that $\LHomED(H)$ on $n$-vertex instances with a \core{\sigma}{\delta} of size $p$ given in the input cannot be solved in time $(\param(H)-\epsilon)^p \cdot n^{\bigO(1)}$, unless the SETH fails.
\end{restatable}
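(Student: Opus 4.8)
As already observed, for this lower bound it suffices to prove the bound when $H$ is undecomposable, so we first reduce to that case. Let $H^\ast$ be an induced undecomposable subgraph of $H$ that is \NP-hard --- i.e., contains an irreflexive edge or a $3$-vertex set with private or co-private neighbours --- and that maximises $i(H^\ast)$ among such subgraphs; then $i(H^\ast)=\param(H)$, and since $i(H')\le i(H^\ast)$ for every induced subgraph $H'$ we also get $\param(H^\ast)=i(H^\ast)$. As $H^\ast$ is an \emph{induced} subgraph of $H$, any instance $(G,L)$ of $\LHomED(H^\ast)$ is, verbatim, an instance of $\LHomED(H)$ with the same feasible solutions and the same optimum, and a \core{\sigma}{\delta} of $G$ stays a \core{\sigma}{\delta} of $G$. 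Hence any lower bound for $\LHomED(H^\ast)$ with base $i(H^\ast)$ transfers directly to $\LHomED(H)$ with base $\param(H)$, and from now on we assume $H$ itself is undecomposable and \NP-hard, and we aim to rule out a $(i(H)-\epsilon)^{p}\cdot n^{\bigO(1)}$-time algorithm under the SETH.

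The reduction is from the constraint satisfaction problem over a $c$-element domain, where $c:=i(H)$, and follows the SETH-lower-bound methodology of Lokshtanov, Marx, and Saurabh~\cite{DBLP:journals/talg/LokshtanovMS18}, but producing instances whose hardness already manifests on a \core{\sigma}{\delta}, in the spirit of Esmer et al.~\cite{EsmerFMR24hub}. It is a standard consequence of the SETH --- obtained by grouping the variables of a hard \textsc{Sat} instance into blocks --- that for every $\epsilon>0$ there is an arity bound $r=r(\epsilon)$ such that \textsc{CSP} instances over domain $[c]$ whose constraints have arity at most $r$ cannot be solved in time $(c-\epsilon)^{n}\cdot n^{\bigO(1)}$ on $n$-variable instances unless the SETH fails. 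Given such a \textsc{CSP} instance, fix a maximum incomparable set $S=\{s_1,\dots,s_c\}$ of $H$ and, for each ordered pair $s_a\neq s_b$, a \emph{distinguisher} $d_{a,b}\in\nh(s_a)\setminus\nh(s_b)$. For every \textsc{CSP} variable create one \emph{variable vertex} $w$ with list $L(w)=S$, so that the $c$ elements of $S$ realise the $c$ domain values; these $n$ vertices form the core $Q$. For every constraint, attach a constant-size \emph{constraint gadget}, adjacent only to the at most $r$ variable vertices occurring in that constraint. Then every component of $G-Q$ is a single constraint gadget, of size at most $\sigma$ and adjacent to at most $\delta$ vertices of $Q$, where $\sigma$ and $\delta$ depend only on $r$ and $H$, hence only on $\epsilon$. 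A $(c-\epsilon)^{p}\cdot n^{\bigO(1)}$-time algorithm for $\LHomED(H)$ on such inputs, where $p=|Q|$ is the number of \textsc{CSP} variables, would decide $c$-ary \textsc{CSP} in the same time, contradicting the SETH. So it remains to construct the constraint gadgets and to verify correctness.

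The constraint gadget for a constraint with relation $R\subseteq S^{r}$ must, in every solution of $\LHomED(H)$, admit a minimum number of deleted edges that equals a fixed constant $c_0$ whenever $(f(w_{1}),\dots,f(w_{r}))\in R$ (writing $w_1,\dots,w_r$ for the variable vertices it touches), and equals at least $c_0+1$ otherwise. The building blocks are the \emph{distinguisher edges} --- an edge from $w$ to a pendant vertex with singleton list $\{d_{a,b}\}$, which can be kept exactly when $f(w)\in\nh(d_{a,b})$, in particular when $f(w)=s_a$ but not when $f(w)=s_b$ --- together with forcing devices built from a constant number of private pendant copies of singleton-list vertices, which restrict internal gadget vertices to prescribed neighbourhoods in $H$ at no extra cost in any near-optimal solution. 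Summing the $m$ gadget costs, the optimum of the produced $\LHomED(H)$ instance equals the threshold $K:=m\cdot c_0$ precisely when the \textsc{CSP} instance is satisfiable, and exceeds $K$ otherwise; an optimal solution is decoded by reading off $f(w)\in S$ at each variable vertex, the accounting ensuring that the decoded assignment satisfies every constraint.

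The crux --- and the step we expect to be the real obstacle --- is to build these gadgets while genuinely using that $H$ is undecomposable. Incomparability of $S$ alone yields only the unary distinguisher edges; to obtain the binary (and higher-arity) behaviour needed to encode $R$ at a uniform base cost one has to attach the distinguishers to the \NP-hard substructure (the irreflexive edge, or the $3$-set with private or co-private neighbours), and this substructure has to interact with $S$ itself. Figure~\ref{fig:dec-intro} shows why: if the only hard substructure lies inside an induced $H[A]$ of a decomposition $(A,B,C)$ with $S$ separated from it, then only bases below $c$ are reachable. Ruling this out is exactly what undecomposability provides, but --- as in the proof of Theorem~\ref{thm:ed-main-intro-tw} --- one cannot invoke the definition as a black box: the plan is to first distil, from the assumption that $H$ admits no decomposition $(A,B,C)$, a bounded \emph{certificate of undecomposability} (a constant-size configuration in $H$ containing a maximum incomparable set together with a private/co-private/irreflexive-edge witness suitably linked to it), and then perform the gadget construction on top of this certificate. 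Everything else --- verifying the gadget cost identities from the adjacency (bi-arc) structure of $H$, the arithmetic around $K$, and the fact that $\sigma$ and $\delta$ depend only on $\epsilon$ --- is routine and parallels the treewidth argument; the genuinely new feature is that the instance is \emph{shallow}, a single core $Q$ with constant-size, constant-interface pendants, which is exactly what upgrades the treewidth lower bound to a \coreword-size bound.
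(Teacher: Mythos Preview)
Your reduction-to-undecomposable step and the overall template (variable vertices form the hub, constant-size constraint gadgets hang off) are correct and match the paper. But the proposal is a plan rather than a proof: you explicitly defer the only hard part, the gadget construction, to a ``certificate of undecomposability'' scheme whose form and usability you never specify. The distinguisher pendants you describe give only unary information about $f(w)$; you acknowledge this is insufficient, but you do not say how to get further. Since the whole strength of the lower bound rests on encoding an arbitrary $R\subseteq S^r$ with a \emph{uniform} base cost, this is the entire theorem, not a routine step.

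The paper's proof differs from your plan in two ways. First, it reduces from \coloringED{q} (Theorem~\ref{thm:ed-coloring-intro}) rather than from general CSP; this means only one gadget type is needed, a $1$-realiser of $\NEQ(S)=\{(u,v)\in S^2:u\neq v\}$, and the hub-size lower bound for the source problem is already in hand. Second---and this is the real content---the $\NEQ(S)$ gadget is built in layers: (i) for any two vertices $a,b$ of an obstruction, a small $\NEQ(a,b)$ gadget is constructed by hand, which (via \cite{EsmerFMR24hub}) lets one $1$-realise every $R\subseteq\{a,b\}^r$; (ii) an \emph{indicator} of $S$ over $\{a,b\}$ is built, i.e.\ a gadget with one $S$-valued input port and $|S|(|S|-1)$ many $\{a,b\}$-valued output ports so that distinct inputs force disjoint sets of minimum-cost outputs; (iii) indicators are combined with a suitable $\{a,b\}$-ary relation to realise $\NEQ(S)$ (Lemmas~\ref{lem:ed-neq} and~\ref{lem:ed-allrelomega}). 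Step (ii) is where undecomposability actually enters, via the \emph{moving lemma} (Lemma~\ref{lem:movingMain}): for any two incomparable pairs $\{a,b\}$ and $\{c,d\}$ in an undecomposable $H$ there is a binary gadget giving $\{a,b\}\leadsto\{c,d\}$. Its proof shows that an auxiliary graph $\aux(H)$---whose vertices are the incomparable pairs of $H$, adjacent when they intersect---is (essentially) connected; the argument splits on whether $H$ is a strong split graph, and in each case runs an explicit decomposition-detecting procedure on $H$ whose trace manufactures a path in $\aux(H)$. This connectivity statement is the concrete ``certificate'' you were looking for, and it is substantially more delicate than your sketch suggests. (Incidentally, Lemma~\ref{lem:ed-allrelomega} does in fact $1$-realise every $R\subseteq S^r$, so your CSP route would also go through once this machinery is in place; the paper simply takes the shorter $\NEQ$ route.)
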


  Let us observe that the lower bounds in Theorems~\ref{thm:vd-main-intro} and \ref{thm:ed-main-intro} immediately imply the lower bounds in Theorems~\ref{thm:vd-main-intro-tw} and \ref{thm:ed-main-intro-tw}, respectively. We present these strenghened results in this paper because obtaining them did not require any extra effort: as we shall see, we simply need to use a stronger known lower bound as a starting point.

  We prove all our lower bounds by reduction from two problems. In the \coloringVD{q} problem, given a graph $G$, the task is to remove the minimum number of vertices such that the resulting graph is $q$-colorable. The \coloringED{q} problem is similar, but here we need to remove the minimum number of edges instead. Tight lower bounds for these problems parameterized by the width of the tree decomposition are known \cite{DBLP:journals/talg/LokshtanovMS18,DBLP:conf/iwpec/HegerfeldK22}. Recently, Esmer et al.~\cite{EsmerFMR24hub} strenghtened these results to parameterization by \coreword\ size.
  
\begin{restatable}[\cite{EsmerFMR24hub}]{thm}{vdcoloring}
\label{thm:vd-coloring-intro}
For every $q\ge 1$ and $\epsilon > 0$, there exist integers $\sigma,\delta\ge 1$ such that if there is an algorithm solving in time $(q + 1 - \epsilon)^{\cpar} \cdot n^{\bigO(1)}$ every $n$-vertex instance of \coloringVD{q} given with a \core{\sigma}{\delta} of size at most $\cpar$, then SETH fails.
\end{restatable}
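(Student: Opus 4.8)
The plan is to give a direct reduction from $k$-\textsc{Sat}, for a suitable constant $k=k(q,\epsilon)$, re-engineering the known treewidth lower bound for \coloringVD{q} so that the hardness sits inside a small \coreword\ instead of a long path-like gadget (the path-like instances produced by the classical reduction do \emph{not} admit a small \coreword, so a genuinely new construction is needed). Fix $q$ and $\epsilon$. First I would choose a block length $\gamma=\gamma(q,\epsilon)$ and set $r:=\lceil\gamma/\log_2(q+1)\rceil$, taking $\gamma$ large enough that the constant $c:=(1/\log_2(q+1)+\eta)\log_2(q+1-\epsilon)$ is still strictly below $1$, where $\eta=\eta(\gamma)\to 0$ as $\gamma\to\infty$; this is possible since in the limit this value equals $\log_2(q+1-\epsilon)/\log_2(q+1)<1$. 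Then I would invoke the SETH to fix $k=k(q,\epsilon)$ such that $k$-\textsc{Sat} on $n$ variables has no $(2-\delta_0)^n\cdot n^{\bigO(1)}$ algorithm, where $\delta_0:=2-2^{c'}>0$ for a fixed $c'\in(c,1)$ chosen to absorb the polynomial overhead of the reduction.

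The construction: given a $k$-\textsc{Sat} formula, partition its $n$ variables into blocks of $\gamma$ consecutive variables and represent each block by a tuple of $r$ \emph{block vertices}; the intended ``state'' of these vertices is the base-$(q+1)$ expansion of the corresponding part of an assignment, where a block vertex in a solution has exactly $q+1$ meaningful options --- the $q$ colours, plus being \emph{deleted} --- which is exactly the source of the base $q+1$ in the exponent. The \coreword\ $Q$ is the set of all block vertices, so $|Q|=r\cdot\lceil n/\gamma\rceil\le(1/\log_2(q+1)+\eta)\,n$ for $n$ large. To each block vertex $x$ I attach a private \emph{balancing component}, essentially a $K_q$ made fully adjacent to $x$, with the property that the subgraph on $\{x\}$ together with this component always requires at least one deletion, and exactly one deletion whether $x$ is deleted or receives any of the $q$ colours; this turns ``deleted'' into a genuine $(q+1)$-st state of uniform cost $1$. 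For each clause I attach a \emph{clause component} to the at most $k$ block vertices that carry its literals, designed so that it is $q$-colourable with no deletions precisely when those block vertices encode an assignment satisfying the clause; a small per-block \emph{rounding gadget} rules out the tuples in $\{0,\dots,q\}^r$ that lie outside the image of the encoding. Finally I set the deletion budget to $|Q|$, so that the instance has a solution of cost at most $|Q|$ iff the formula is satisfiable.

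For the analysis I would verify three points. \emph{Correctness}: a satisfying assignment yields a solution of cost exactly $|Q|$ (delete each block vertex whose digit is the $(q+1)$-st value, otherwise delete the appropriate vertex of its balancing $K_q$, and colour every clause and rounding component for free); conversely, since the private balancing components already force at least $|Q|$ deletions in total, a solution of cost at most $|Q|$ must spend its whole budget there and leave every clause and rounding component deletion-free, so the block vertices --- of which there is a single copy each, so no consistency gadgets are needed --- encode a genuine satisfying assignment. \emph{\coreword\ parameters}: deleting $Q$ leaves only balancing, clause, and rounding components, each of size $\bigO_{q,\epsilon}(1)=:\sigma$ and each adjacent to at most $\bigO_{q,\epsilon}(1)=:\delta$ vertices of $Q$; these are the promised $\sigma,\delta$, which necessarily depend on $q,\epsilon$ through $k$ and $\gamma$. \emph{Arithmetic}: the reduced graph has $n^{\bigO(1)}$ vertices, so a hypothetical $(q+1-\epsilon)^p\cdot n^{\bigO(1)}$ algorithm on instances equipped with a \core{\sigma}{\delta} of size $p$ would decide $k$-\textsc{Sat} in time $(q+1-\epsilon)^{|Q|}\cdot n^{\bigO(1)}\le 2^{cn}\cdot n^{\bigO(1)}\le(2-\delta_0)^n$, contradicting the choice of $k$ and hence refuting the SETH.

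The step I expect to be the main obstacle is the design of the clause components, and specifically making them robust to a block vertex being \emph{deleted}: deleting a block vertex removes all its edges at once, including those to several clause components, so each clause component must recover the $(q+1)$-st digit value from the \emph{absence} of its neighbour just as it recovers the other $q$ values from that neighbour's colour, all while staying of constant size and attaching only to the prescribed block vertices. A lighter but related point is checking that the rounding gadgets (which exist because of the number-theoretic mismatch $2^\gamma\le(q+1)^r$) and the uniform-cost balancing gadgets can indeed be realized within the language of $q$-colourability-with-deletions, and that enlarging them only inflates the constant-size components dangling off the \coreword, never $Q$ itself.
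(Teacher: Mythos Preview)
The paper does not prove this statement; it is quoted from \cite{EsmerFMR24hub} and used as a black box (see the sentence preceding the theorem: ``Recently, Esmer et al.~\cite{EsmerFMR24hub} strengthened these results\dots''). So there is no proof in the present paper to compare your attempt against.

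On the merits of your sketch: the outer shell---block the variables, encode each block in base $q+1$ into hub vertices, hang constant-size clause and rounding checkers off the hub, and do the arithmetic---is the right template, and your hub-size and running-time bookkeeping are fine. But the obstacle you yourself flag is not a detail to be filled in; as the architecture stands it \emph{cannot} be closed. The reason is a one-line monotonicity: if a clause component $C$ attached to a hub vertex $x$ admits a proper $q$-colouring with $x$ coloured $c$ and no deletions, then $C-x$ admits one too, since deleting a vertex only relaxes constraints. Hence the predicate ``$C$ is $q$-colourable with $0$ deletions'' has ``$x$ deleted'' as its top element: it holds for ``$x$ deleted'' whenever it holds for any colour of $x$. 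Your intended semantics demand exactly the opposite in general---when the digit encoded by ``$x$ deleted'' falsifies a literal that some colour $c$ satisfies, you need $C$ colourable for $x=c$ but \emph{not} for $x$ deleted---and that is impossible here. Routing through the private balancing $K_q$ does not help, because one of \emph{its} vertices is deleted in every state and the same monotonicity bites. The rounding gadget faces the identical issue (it must reject the wildcard-like ``deleted'' state while accepting some colours), and for $q=1$ the whole scheme collapses outright, since a deletion-free $1$-colourable component is edgeless. Constructions that actually achieve the bound get around this by encoding each digit into a constant-size \emph{group} whose surviving, in-budget-undeletable boundary pins the digit down regardless of which single vertex of the group is deleted; nothing of that kind is present in your outline, so the proof is incomplete.
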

\begin{thm}[\cite{EsmerFMR24hub}]\label{thm:ed-coloring-intro}
	For every $q\ge 2$ and $\eps>0$, there are integers $\sigma$ and $\delta$ such that if an algorithm solves in time $(q-\eps)^p\cdot n^{\bigO(1)}$ every $n$-vertex instance of \coloringED{q} that is given with a \core{\sigma}{\delta} of size $p$, then the SETH fails.
      \end{thm}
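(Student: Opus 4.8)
The plan is to reduce from $k$-\textsc{Sat}, using the Sparsification Lemma so that \textsf{SETH}-hardness survives the restriction to a bounded clause width (which we need so that each clause ``sees'' only $\bigO(1)$ hub vertices). Fix $q\ge 2$ and $\eps>0$. First I would choose an integer $r=r(q,\eps)$ large enough that the ``packing rate'' $\lfloor r\log_2 q\rfloor/r$ is close to $\log_2 q$: since this rate tends to $\log_2 q$ and $\log_2(q-\eps)/\log_2 q<1$, we can pick $r$ so that $\log_2(q-\eps)/(\log_2 q - 1/r)\le 1-\lambda$ for some $\lambda=\lambda(q,\eps)>0$. Then, invoking \textsf{SETH} together with sparsification, fix $k=k(q,\eps,\lambda)$ such that $k$-\textsc{Sat} on $n$ variables and $\bigO_k(n)$ clauses admits no $2^{(1-\lambda)n}\cdot n^{\bigO(1)}$-time algorithm.

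Second, the construction. Given a $k$-\textsc{Sat} formula $\phi$ on variable set $X$ with $|X|=n$, partition $X$ into $N=\lceil n/\lfloor r\log_2 q\rfloor\rceil$ blocks of at most $\lfloor r\log_2 q\rfloor$ variables. The hub $Q$ is a disjoint union of $N$ independent ``super-columns'' $Y_1,\dots,Y_N$ of $r$ vertices each, so $p:=|Q|=rN$; being independent, $Q$ carries no internal cost, so every coloring of $Q$ is locally optimal. Since $q^r\ge 2^{\lfloor r\log_2 q\rfloor}$, fix for each $i$ an injection $\iota_i$ from the truth assignments of block $i$ into $[q]^{Y_i}$. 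To each $Y_i$ I would attach a constant-size \emph{domain gadget} whose minimum ``local'' edge-deletion cost $d_i$ is achieved exactly on the colorings of $Y_i$ in $\mathrm{im}(\iota_i)$, and is at least $d_i+1$ on all other colorings of $Y_i$. To each clause $c$ of $\phi$, whose variables lie in at most $k$ blocks, I would attach a constant-size \emph{clause gadget} $D_c$ (depending only on $q,k,r$), joined only to the at most $kr$ hub vertices of those blocks, with the property that its minimum local cost, over colorings of its boundary that are valid (i.e.\ decode via the $\iota_i$'s to a genuine partial assignment), equals a fixed value $c_0$ on those decoding to an assignment satisfying $c$, and is at least $c_0+1$ on those decoding to an assignment violating $c$. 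With budget $B:=\sum_i d_i+\sum_c c_0$, the standard argument gives $\phi$ satisfiable $\iff$ the \coloringED{q} instance has a solution of cost at most $B$: for ``$\Rightarrow$'' color $Y_i$ by $\iota_i$ of the relevant block of a satisfying assignment and extend each gadget at its base cost; for ``$\Leftarrow$'', any solution of cost $\le B$ forces every $Y_i$ to be valid (else its domain gadget overshoots), hence decodes to some assignment, which must satisfy every clause (else the corresponding $D_c$ overshoots).

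Third, the parameters and the contradiction. The hub $Q$ has $p=rN=\bigO(n)$ vertices; $G-Q$ is the disjoint union of the domain and clause gadgets, each a component of size bounded by a function of $q,k,r$ only (giving $\sigma=\sigma(q,\eps)$) and adjacent to at most $kr$ hub vertices (giving $\delta=\delta(q,\eps)$); also $|V(G)|=\bigO(n)$. A hypothetical $(q-\eps)^p\cdot n^{\bigO(1)}$ algorithm for \coloringED{q} on instances supplied with such a \core{\sigma}{\delta} would then decide $\phi$ in time $(q-\eps)^p\cdot n^{\bigO(1)}=2^{p\log_2(q-\eps)}\cdot n^{\bigO(1)}\le 2^{(1-\lambda)n}\cdot n^{\bigO(1)}$ (by the choice of $r$, absorbing the $\bigO(1)$ additive slack from the rounding of $N$), contradicting \textsf{SETH}.

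The hard part will be the gadget constructions, in particular $D_c$: a graph of size independent of $n$, attached only to the hub vertices encoding $c$'s variables, whose optimal local edge-deletion cost reads off whether the encoded assignment satisfies $c$ — exactly $+1$ on the unique falsifying pattern and never less on the satisfying ones — robustly against whatever the global optimum does elsewhere (this robustness is automatic here since distinct gadgets share only hub vertices and the hub is independent). The natural route is to use $K_q$'s as ``selectors'', which in any proper $q$-colouring rigidly occupy all $q$ colours and can therefore be made to copy a hub vertex's colour or to test equality, and then to combine such tests with a carefully balanced number of $K_q$'s so that precisely one monochromatic edge becomes unavoidable on the falsifying pattern; verifying the claimed cost profile reduces to a finite case analysis over the $\bigO(1)$ colourings of the constant-size boundary, and similarly for the domain gadgets. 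Such rigidity gadgets already underlie the treewidth lower bounds for \coloringED{q}~\cite{DBLP:conf/iwpec/HegerfeldK22,DBLP:journals/talg/LokshtanovMS18}; the remaining work is to repackage them as a single hub-attached component rather than a long propagating chain. Alternatively, one could mirror the vertex-deletion construction behind Theorem~\ref{thm:vd-coloring-intro}, replacing its ``deletion state'' with the edge-deletion mechanism above; the packing arithmetic and the \core{\sigma}{\delta} bookkeeping are routine in either case.
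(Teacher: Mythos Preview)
The paper does not prove this statement: Theorem~\ref{thm:ed-coloring-intro} is quoted from~\cite{EsmerFMR24hub} and used as a black box, so there is no ``paper's own proof'' to compare your proposal against. Your sketch follows the standard template for SETH lower bounds parameterized by hub size (grouping variables into blocks, encoding each block's assignment as a $q$-colouring of $r$ hub vertices, attaching constant-size domain and clause gadgets), and the arithmetic you set up for choosing $r$ and invoking sparsification is correct. Whether the actual proof in~\cite{EsmerFMR24hub} proceeds exactly this way, or instead goes via an intermediate problem, cannot be determined from the present paper; but the outline you give is the natural one and is consistent with how the companion vertex-deletion result (Theorem~\ref{thm:vd-coloring-intro}) is typically established.
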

Our reductions replace edges in a \coloringVD{q} or \coloringED{q} instance by constant-sized gadgets. One can observe that such a transformation has a small effect on treewidth and also on \coreword\ size (although might change $\sigma$ and $\delta$ slightly). Thus we can use Theorems~\ref{thm:vd-coloring-intro} and \ref{thm:ed-coloring-intro} in a tranparent way to obtain the lower bounds in Theorems~\ref{thm:vd-main-intro} and \ref{thm:ed-main-intro}.

\section{Technical Overview}\label{sec:overview}

In this section, we overview some of the most important technical ideas in our results. For clearity, we start with the discussion of the vertex-deletion version and then continue with the more complicated edge-deletion variant.

  \subsection{Vertex-deletion version}
  We start with the vertex-deletion version, where both the \PP~vs.~\NP-hard dichotomy and the complexity bounds for bounded-treewidth graphs are significantly easier to prove.

  \paragraph{Equivalence of $\LHomVD(H)$ with classic problems.}
  We have seen earlier how \textsc{Vertex Cover}, \textsc{Odd Cycle Transversal}, and \textsc{Vertex Multiway Cut} can be reduced to $\LHomVD(H)$ for various graphs $H$. Let us briefly discuss reductions in the reverse direction. It is clear that $\LHomVD(K_1)$ is actually equivalent to \textsc{Vertex Cover}: if we remove those vertices that have empty lists, then the problem is precisely finding a vertex cover of minimum size. However, $\LHomVD(K_2)$ seems to be more general than \textsc{Odd Cycle Transversal}: a list of size one can express that the vertex has to be on a certain side of the bipartition of $G-X$ (if the vertex is not removed). Therefore, $\LHomVD(K_2)$ is slightly more general than \textsc{Odd Cycle Transversal}, and equivalent to an annotated generalization, where given $G$ and two sets $L,R\subseteq V(G)$, the task is to find a set $X$ of vertices of minimum size such that $G-X$ has a bipartition with $R$ and $L$ on different sides.

  For \textsc{Vertex Multiway Cut} with undeletable terminals, we can reduce $\LHomVD(H)$ (where $H$ consists of $k$ independent reflexive vertices $w_1$, $\dots$, $w_k$) to a multiway cut instance $G'$ the following way. Given an instance $(G,L)$ of $\LHomVD(H)$, we obtain $G'$ by first extending it with $k$ terminals $t_1$, $\dots$, $t_k$. Then for every $v\in V(G)$, we introduce a clique of size $|L(v)|$ that is completely connected to $v$. We introduce a perfect matching between the vertices of this clique and the set of terminals that corresponds to the elements of $L(v)$. Therefore, in every solution of \textsc{Vertex Multiway Cut}, all but one vertex of each clique has to be deleted for sure. We can also assume that no more than $|L(v)|-1$ vertices of the clique are deleted: if every vertex of the clique were deleted, then we can modify the solution by removing $v$ instead. This means that if $v$ is not deleted, then it is in the component of a terminal from $L(v)$.  Therefore, it can be shown that there is a tight correspondence between the optimum cost of the $\LHomVD(H)$ instance and the optimum cost of the \textsc{Vertex Multiway Cut} instance. We can also note that this transformation increases treewidth at most by an additive constant and if the original graph has a \core{\sigma}{\delta} of size $p$, then the constructed graph has a \core{\sigma(k+1)}{\delta+k} of size $p+k$. Therefore, we can state the following lower bound:
\begin{thm}\label{thm:vmwaylb} For every $k\ge 3$ and $\epsilon>0$, there are $\sigma,\delta>0$ such that \textsc{Vertex Multiway Cut} with $k$ terminals with a \core{\sigma}{\delta} of size $p$ given in the input cannot be solved in time $(k+1-\epsilon)^p \cdot n^{\bigO(1)}$, unless the SETH fails.
\end{thm}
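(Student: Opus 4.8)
The plan is to combine the reduction described in the paragraph just before the statement with the lower bound of Theorem~\ref{thm:vd-main-intro}, instantiated for the graph $H_k$ consisting of $k$ independent reflexive vertices $w_1,\dots,w_k$. The first step is to verify the two relevant properties of $H_k$. Since $\nh(w_i)=\{w_i\}$ for each $i$, the neighborhoods $\nh(w_1),\dots,\nh(w_k)$ are pairwise incomparable, so $i(H_k)\ge k$, and since any incomparable set lives inside the $k$-element vertex set, $i(H_k)=k$. Moreover $H_k$ contains three pairwise non-adjacent reflexive vertices (this is where $k\ge 3$ is used), so it satisfies the hardness hypothesis of Theorem~\ref{thm:vd-main-intro}. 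Hence, for the given $\epsilon>0$ there exist $\sigma_0,\delta_0>0$ such that $\LHomVD(H_k)$ on instances equipped with a \core{\sigma_0}{\delta_0} of size $p$ admits no $(k+1-\epsilon)^p\cdot n^{\bigO(1)}$-time algorithm unless the SETH fails.

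The second step is to make the reduction $(G,L)\mapsto G'$ precise and establish a cost identity. After a trivial preprocessing step (delete every vertex with empty list and decrement the budget, which affects neither treewidth nor \coreword\ size adversely), build $G'$ with $k$ undeletable terminals $t_1,\dots,t_k$ and, for each $v\in V(G)$, a clique $K_v$ on $|L(v)|$ vertices completely joined to $v$ and matched to $\{t_i : w_i\in L(v)\}$. I would then prove
$\mathrm{OPT}_{\mathrm{VMC}}(G') = \mathrm{OPT}_{\mathrm{LHomVD}}(G,L) + \sum_{v\in V(G)}(|L(v)|-1)$.
For ``$\le$'', a solution $X$ of $\LHomVD(H_k)$ with witnessing assignment $f$ gives a multiway cut by deleting $X$ together with all but one vertex of each $K_v$ — keeping, when $v\notin X$, the vertex matched to $t_j$ with $f(v)=w_j$; since every edge of $G-X$ is monochromatic under $f$ (the only edges of $H_k$ are loops), each component of $G'-X'$ meets only the terminal of its colour. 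For ``$\ge$'', given an optimal multiway cut $X'$: two surviving vertices of the same $K_v$ would connect two distinct terminals, so $X'$ contains at least $|L(v)|-1$ vertices of each $K_v$; an exchange argument on each $v$ whose whole clique $K_v$ lies in $X'$ (remove one such vertex and, if needed, delete $v$ instead — this never increases the size and never reconnects two terminals) lets us assume exactly $|L(v)|-1$ are deleted; reading off the surviving vertex of $K_v$ for each undeleted $v$ yields an assignment, and a bichromatic edge $uv$ would create a $t_i$--$t_j$ path $t_i\,c^*_u\,u\,v\,c^*_v\,t_j$ in $G'-X'$, a contradiction, so the assignment is a list homomorphism of $G-(X'\cap V(G))$ respecting $L$.

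The third step is the parameter bookkeeping. If $Q$ is a \core{\sigma_0}{\delta_0} of $G$ of size $p$, then $Q':=Q\cup\{t_1,\dots,t_k\}$ has size $p+k$: each component $D$ of $G-Q$ becomes, in $G'-Q'$, the set $D\cup\bigcup_{v\in D}K_v$ of at most $\sigma_0(k+1)$ vertices (using $|K_v|=|L(v)|\le k$) touching at most $\delta_0+k$ vertices of $Q'$, while each $K_v$ with $v\in Q$ becomes a component of size at most $k$; no two components of $G-Q$ merge, since new edges only go between $v$, $K_v$, and terminals. Setting $\sigma:=\sigma_0(k+1)$ and $\delta:=\delta_0+k$, an algorithm solving \textsc{Vertex Multiway Cut} with $k$ terminals and a \core{\sigma}{\delta} of size $p$ in time $(k+1-\epsilon)^p\cdot n^{\bigO(1)}$ would solve $\LHomVD(H_k)$ on instances with a \core{\sigma_0}{\delta_0} of size $p$ in time $(k+1-\epsilon)^{p+k}\cdot n^{\bigO(1)}=(k+1-\epsilon)^p\cdot n^{\bigO(1)}$ (as $k,\epsilon$ are fixed), contradicting Theorem~\ref{thm:vd-main-intro}; hence the SETH fails.

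I expect the only genuinely delicate point to be the ``$\ge$'' direction of the cost identity — specifically the normalization of an optimal multiway cut to one deleting exactly $|L(v)|-1$ vertices from every gadget clique, together with the checks that this exchange neither increases the cut size nor opens a new terminal-to-terminal path. The ``$\le$'' direction and the \coreword-parameter transformation are routine.
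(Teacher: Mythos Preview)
Your proposal is correct and follows precisely the approach the paper sketches in the paragraph immediately preceding the theorem: reduce $\LHomVD(H_k)$ to \textsc{Vertex Multiway Cut} via the clique-gadget construction, invoke Theorem~\ref{thm:vd-main-intro} for $H_k$ (using $k\ge 3$ to get the three pairwise non-adjacent reflexive vertices), and track the \coreword\ parameters. Your explicit cost identity and the normalization argument for the ``$\ge$'' direction flesh out exactly what the paper leaves as ``it can be shown that there is a tight correspondence'', and your $\sigma_0(k+1)$, $\delta_0+k$, $p+k$ bookkeeping matches the paper's stated values.
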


\paragraph{Dichotomy for vertex deletion.} We need to prove that $\LHomVD(H)$ is polynomial-time solvable if $H$ is reflexive and $i(H)\le 2$, and it is \NP-hard for every other $H$. If $H$ contains an irreflexive vertex, then we have seen that \textsc{Vertex Cover} can be reduced to $\LHomVD(H)$. For reflexive $H$, the \NP-hard cases of $\LHomVD(H)$ can be easily established using the following alternative characterizations of the tractability condition:
  
\begin{restatable}{lem}{vdalternative}  
\label{lem:vd-alternative-intro}
Let $H$ be a reflexive graph. The following conditions are equivalent.
\begin{myenumerate}
\item $i(H) \leq 2$, \label{it:i2}
\item $H$ does not contain three pairwise nonadjacent vertices, an induced four-cycle, nor an induced five-cycle, \label{it:obstructions}
 \\[-1.2em]
\item $H$ is an interval graph whose vertex set can be covered by two cliques. \label{it:covered}
\end{myenumerate}
\end{restatable}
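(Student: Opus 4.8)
The plan is to prove the three conditions equivalent by going around the cycle of implications \ref{it:i2} $\Rightarrow$ \ref{it:obstructions} $\Rightarrow$ \ref{it:covered} $\Rightarrow$ \ref{it:i2}. Throughout I use that $H$ is reflexive, so $v\in\nh(v)$ for every vertex $v$; this is precisely what makes the ``diagonal'' non-edge pairs of the cyclic obstructions incomparable. For \ref{it:i2} $\Rightarrow$ \ref{it:obstructions} I argue the contrapositive: each of the three forbidden configurations already contains an incomparable set of size three, so $i(H)\ge 3$. If $v_1,v_2,v_3$ are pairwise non-adjacent, then $v_i\in\nh(v_i)\setminus\nh(v_j)$ for every $i\ne j$ (using the loop at $v_i$), so $\{v_1,v_2,v_3\}$ is incomparable. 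If $v_1v_2v_3v_4$ is an induced $C_4$, then $\{v_1,v_2,v_3\}$ is incomparable: for the edge pair $\{v_1,v_2\}$ use $v_4\in\nh(v_1)\setminus\nh(v_2)$ and $v_3\in\nh(v_2)\setminus\nh(v_1)$, similarly for $\{v_2,v_3\}$, and for the non-edge pair $\{v_1,v_3\}$ use the loops, $v_1\in\nh(v_1)\setminus\nh(v_3)$ and $v_3\in\nh(v_3)\setminus\nh(v_1)$. The induced $C_5$ case is an entirely analogous short check with all witnesses drawn from the five cycle vertices. Hence $i(H)\le 2$ forces condition~\ref{it:obstructions}.

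For \ref{it:covered} $\Rightarrow$ \ref{it:i2}: an interval graph is chordal, so $H$ contains no induced $C_4$. Suppose for contradiction that $\{v_1,v_2,v_3\}$ is incomparable. Since $V(H)$ is covered by two cliques, two of these vertices, say $v_1$ and $v_2$, lie in a common clique, hence $v_1\sim v_2$. Incomparability of $\nh(v_1),\nh(v_2)$ provides $a$ with $a\sim v_1$, $a\not\sim v_2$ and $b$ with $b\sim v_2$, $b\not\sim v_1$; using reflexivity one checks $a,b\notin\{v_1,v_2\}$ and $a\ne b$. Since $a$ is non-adjacent to $v_2$ it cannot lie in the clique containing $v_1,v_2$, so it lies in the other clique; the same holds for $b$, hence $a\sim b$. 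Then $\{v_1,v_2,b,a\}$ induces the $4$-cycle $v_1v_2ba$, because the only non-edges among these four vertices are $v_1b$ and $v_2a$ --- a contradiction. Thus $i(H)\le 2$.

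The remaining implication \ref{it:obstructions} $\Rightarrow$ \ref{it:covered} is the substantial one. Condition~\ref{it:obstructions} says $\alpha(H)\le 2$. This alone already gives that $H$ is chordal: an induced $C_k$ with $k\ge 6$ contains $\lceil k/2\rceil\ge 3$ pairwise non-adjacent vertices, while $C_4$ and $C_5$ are forbidden outright. Next, $\overline H$ is bipartite: it is triangle-free because $\alpha(H)\le 2$, and if it had an odd cycle then a shortest such cycle $C$ is chordless (a chord would split off a strictly shorter odd cycle) and has length $\ge 5$; if $|C|=5$ the five vertices of $C$ induce $\overline{C_5}=C_5$ in $H$, and if $|C|\ge 7$ then four vertices of $C$, namely $c_0,c_1,c_3,c_4$ in the cyclic order, induce $2K_2$ in $\overline H$ and hence an induced $C_4$ in $H$ --- both outcomes contradict condition~\ref{it:obstructions}. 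So $\overline H$ is bipartite, which is equivalent to $V(H)$ being covered by two cliques. Finally, $H$ is an interval graph: it is chordal, and it has no asteroidal triple because any asteroidal triple is an independent set of size three (for each pair the path joining them avoids the closed neighborhood of the third vertex, so in particular the three vertices are pairwise non-adjacent), while $\alpha(H)\le 2$; the classical Lekkerkerker--Boland characterization of interval graphs (chordal and asteroidal-triple-free) then finishes condition~\ref{it:covered}.

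I expect \ref{it:obstructions} $\Rightarrow$ \ref{it:covered} to be the only delicate part. Two points need care: in the bipartiteness argument pentagons must be handled separately from longer odd cycles (the $2K_2$ trick fails in $C_5$ precisely because $c_0$ and $c_4$ are adjacent there, which is why the forbidden induced $C_5$ is genuinely needed in addition to the forbidden induced $C_4$), and the conclusion that $H$ is an interval graph rests on the Lekkerkerker--Boland theorem, which I will cite rather than reprove. Everything else (the incomparability computations for the cycles, and the $C_4$ extraction from two cliques) is a routine check.
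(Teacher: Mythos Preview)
Your proof is correct and complete. The overall cycle of implications matches the paper, but two of the three implications are argued differently.

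For \ref{it:covered} $\Rightarrow$ \ref{it:i2}, the paper works geometrically with an interval representation: after trimming intervals so that each of the two cliques shares a common endpoint, the intervals within a clique are nested by length, hence the neighborhoods within each clique form a chain and $i(H)\le 2$ follows. Your argument is purely combinatorial: you only use that interval graphs are $C_4$-free, and you extract an induced $C_4$ directly from a hypothetical incomparable triple together with the two-clique cover. This is more elementary and in fact proves the slightly stronger statement that $C_4$-freeness (rather than full interval-ness) together with the two-clique cover already forces $i(H)\le 2$.

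For \ref{it:obstructions} $\Rightarrow$ \ref{it:covered}, both you and the paper obtain chordality and the absence of asteroidal triples in the same way and invoke Lekkerkerker--Boland. The difference is in showing that $\overline H$ is bipartite: the paper observes that interval graphs are perfect and hence so are their complements, so a triangle-free perfect graph is bipartite; you instead give a direct odd-cycle argument in $\overline H$, handling $C_5$ via self-complementarity and longer odd cycles via a $2K_2$/$C_4$ extraction. Your route avoids the appeal to perfect graphs at the cost of a short case split; the paper's route is a one-liner once perfectness is granted.
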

If $H$ is reflexive and contains an induced four-cycle or an induced five-cycle, then already $\LHom(H)$ is \NP-hard \cite{FEDER1998236}. If $H$ contains three pairwise non-adjacent reflexive vertices, then we have seen that \textsc{Vertex Multiway Cut} with three (undeletable) terminals can be reduced to it.

For the polynomial cases, by Lemma~\ref{lem:vd-alternative-intro} we need to solve the problem only when $H$ is an interval graph that can be partitioned into two cliques $\mathcal{L}$ and $\mathcal{R}$. We can observe that in this case the neighborhoods of the vertices inside $\mathcal{L}$ and $\mathcal{R}$ form two chains. Thus if we assume that every list $L(v)$ is an incomparable set, then every list can contain at most two vertices: one from $\mathcal{L}$ and one from $\mathcal{R}$.

We reduce $\LHomVD(H)$ to a minimum $s\hyph t$ cut problem. Note that using some form of minimum cut techniques cannot be avoided, as $s\hyph t$ \textsc{Min Cut} can be reduced to the case when $H$ consists of two independent reflexive vertices. Let $V_L$ and $V_R$ be the sets of vertices $v$ where $L(v)\subseteq \mathcal{L}$ and $L(v)\subseteq \mathcal{R}$, respectively. If two vertices $u\not\in V_R$ and $v\not\in V_L$ are adjacent such that the vertex in $L(u)\cap\mathcal{L}$ is {\em not} adjacent to the vertex of $L(v)\cap \mathcal{R}$, then we add a directed edge from $u$ to $v$. 
After a solution to $\LHomVD(H)$ is deleted, the remaining vertices can be partitioned into a ``left'' and ``right'' part according to whether they were mapped to $\mathcal{L}$ or $\mathcal{R}$. The directed edge $\overrightarrow{uv}$
 represents the constraint that we cannot have $u$ on the left part and $v$ on the right part simultaneously. Then our problem is essentially reduced to deleting the minimum number of vertices such that there is no path from $V_L$ to $V_R$.

\paragraph{Reduction using gadgets.} To rule out algorithms with running time $(i(H)+1-\epsilon)^t\cdot n^{\bigO(1)}$, we reduce from \coloringVD{q} for $q=i(H)$ to $\LHomVD(H)$. For this purpose, we take an incomparable set $S$ of size $i(H)$ and construct gadgets that can express ``not equal on $S$.'' A gadget in this context means an instance of $\LHomVD(H)$ with a pair of distinguished vertices $(x,y)$. If neither of these vertices is removed, then they need to have different colors from $S$. Every solution has one of the $(|L(x)|+1)(|L(y)|+1)$ possible behaviors on $(x,y)$ (mapping to $V(H)$ or deleting the vertices). Each behavior on $(x,y)$ has some cost: the minimum number of vertex deletions we need to make inside the gadget to find a valid extension (note that this cost does not include the deletion of $x$ and/or $y$). Our goal is to construct a gadget where $L(x)=L(y)=S$ and every behavior on $(x,y)$ has the same cost $\alpha$, except that mapping $x$ and $y$ to the same vertex of $S$ extends only with cost strictly larger than $\alpha$. We call such gadgets \emph{$S$-prohibitors}. Then we can reduce \coloringVD{q} to $\LHomVD(H)$ by giving the list $S$ to every vertex of the original graph $G$, and by replacing each of the $m$ edges with a copy of the $S$-prohibitor gadget. Then it is easy to see that the original graph can be made $q$-colorable with $k$ deletions if and only if the constructed $\LHomVD(H)$ instance has a solution with $\alpha\cdot |E(G)| +k$ deletions.

\paragraph{Constructing the prohibitor gadgets.} A \emph{$(v,S)$-prohibitor} gadget has two portals $(x,y)$ with $L(x)=L(y)=S$, and every behavior has cost exactly $\alpha$, except that it has cost strictly more than $\alpha$ when both $x$ and $y$ are mapped to $v$. By joining together $(v,S)$-prohibitors for every $v\in S$, we obtain the $S$-prohibitor defined in the previous paragraph.

The construction of the $(v,S)$-prohibitors is the core technical part of the  proof of Theorem~\ref{thm:vd-main-intro}. The proof uses the fact that we are considering an \NP-hard case of $\LHomVD(H)$ and hence one of the obstructions listed in Theorem~\ref{thm:vd-dicho-intro} appears in the graph $H$ (irreflexive vertex, three non-adjacent vertices, induced four-cycle, induced five-cycle). Some case analysis is needed based on, e.g., the type of the obstruction that appears, but in all cases the construction is surprisingly compact. We need three additional types of gadgets, which are put together in the way shown in Figure~\ref{fig:intro-prohib}. We can interpret the two portals $x$ and $y$ of a gadget as input and output, respectively. Then setting a value on the input may ``force'' a single value on the output or ``allow'' some values on the output, meaning that these combinations on the input and the output can be extended with minimum cost. 
\begin{itemize}
\item \textit{splitter:} if the input is assigned $v$, then the output is forced to $v'$; if the input is from $S\setminus \{v\}$, then the output can be either $v'$ or $w'$.
\item \textit{translator:} if the input is assigned $v'$, then the output is forced to $a$; if the input is $w'$, then the output can be $b$.
\item \textit{matcher:} minimum cost can be achieved if one of the portals is assigned $a$ and the other is $b$, but cannot be achieved if both portals are assigned $a$.
\end{itemize}
Suppose that vertices $t_1$ and $t_6$ are connected with these gadgets as in Figure~\ref{fig:intro-prohib}. If both $t_1$ and $t_6$ are mapped to $v$, then the splitters force $t_2$ and $t_5$ to $v'$, the translators force $t_3$ and $t_4$ to $a$, which is incompatible with minimum cost of the matcher. On the other hand, if at least one of $t_1$ and $t_6$ is mapped to a vertex from $S\setminus \{v\}$, then the splitters allow us to map one of $t_2$ and $t_5$ to $w'$ and the other to $v'$. In this case, the translators allow us to map one of $t_3$ to $a$ and the other to $b$, which is now compatible with the minimum cost of the matcher.
\begin{figure}
\centering
\includegraphics[scale=1.3,page=1]{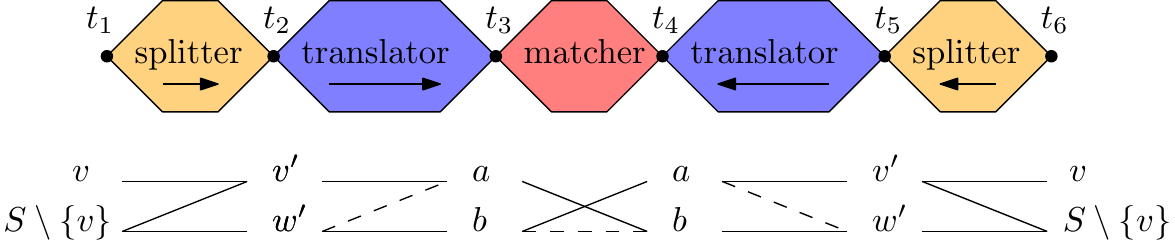}
  \caption{Construction of the $(v,S)$-prohibitor gadget. A dashed line means there is no edge between the two endpoints.}\label{fig:intro-prohib}
\end{figure}

The construction of the matcher is easy if we choose $a$ and $b$ to be non-adjacent vertices that are part of an obstruction. For example, if $a$, $x$, $b$, $y$ is an induced reflexive four-cycle, then a path of 5 vertices with lists $\{a,b\}-\{x,b\}-\{x,y\}-\{b,y\}-\{b,a\}$ is an appropriate matcher. Indeed, the minimum cost 0 cannot be achieved if both endpoints are mapped to $a$.

The splitter can be constructed the following way. Let us choose $w\in S\setminus \{v\}$. As $v$ and $w$ are incomparable, we can choose $v'\in \nh(v)\setminus \nh(w)$ and $w'\in \nh(w)\setminus \nh(v)$. Then the splitter is a four-vertex path with lists 
$S -(V(H)  \setminus \nh(v)) -\{v\} -\{v',w'\}$. The gadget has cost at least 1, as at least one of the two inner vertices has to be deleted. If the first vertex is assigned $v$ and the last vertex is assigned $w'$, then both inner vertices have to be deleted, making the cost 2.

Finally, a short case analysis gives a translator.
Recall from the previous paragraph that $v'\in \nh(v)\setminus \nh(w)$ and $w'\in \nh(w)\setminus \nh(v)$, and, as a case, suppose that $v'$ is not a neighbor of $b$. Then a six-vertex path with lists $\{v',w'\}-\{w\}-\{v'\}-\{b\}-\{a\}-\{a,b\}$ is a translator. At least two of the four inner vertices have to be deleted, meaning that the cost of this gadget is always at least 2. However, if we choose $v'$ on the first vertex and $b$ on the last vertex, then at least three of the inner vertices have to be deleted, raising the cost to 3.

\subsection{Edge-deletion version}
Let us turn our attention now to edge-deletion problems.
While the high-level goals are similar to the vertex-deletion version, the proofs are necessarily more involved: there are two concepts, \emph{bi-arc graphs} and \emph{decompositions} that are relevant only for the edge-deletion version.

\paragraph{Equivalence of $\LHomED(H)$ with classic problems.}
Earlier we have seen that \textsc{Max Cut} and \textsc{Edge Multiway Cut} can be reduced to $\LHomED(H)$ when $H$ is an irreflexive edge or $k$ independent reflexive vertices, respectively. Let us discuss reductions in the other direction. Similarly to the case of \textsc{Odd Cycle Transversal} for vertex deletion, $\LHomED(H)$ is actually equivalent to an annotated generalization of \textsc{Max Cut}, where the two given sets $L$ and $R$ should be on the two sides of the bipartition. However, this annotated generalization is easy to reduce to the original \textsc{Max Cut} problem. Introduce a new vertex $w$ and for every $v\in L$, we connect $w$ and $v$ with  $d(v)$ paths of length 2; for every $v\in R$, we connect $w$ and $v$ with $d(v)$ paths of length 3. We can verify that this extension forces every vertex of $L$ to be on the same side as $w$ and every vertex of $R$ to be on the other side. Furthermore, this extension increases treewidth only by a constant and if the original graph has a \core{\sigma}{\delta} of size $p$, then the constructed graph has a \core{\sigma'}{\delta'} of size $p+1$.

If $H$ consists of $k$ independent reflexive vertices $w_1$, $\dots$, $w_k$, then we can reduce an instance $(G,L)$ of $\LHomED(H)$ to \textsc{Edge Multiway Cut} the following way. Let us extend $G$ to a graph $G'$ by introducing $k$ terminal vertices $t_1$, $\dots$, $t_k$. For every vertex $v\in V(G)$, let us introduce $d(v)$ paths of length 2 between $v$ and $t_i$ if $L(v)$ {\em contains} $w_i$. Suppose now that, in a solution of the multiway cut instance, vertex $v$ is in the component of $t_i$. If $w_i$ is not in $L(v)$, then the solution has to cut all the $|L(v)|\cdot d(v)$ paths. But then we could obtain a solution of the same size by removing all the $d(v)$ original edges incident to $v$ and separating $v$ from all but one terminal by breaking $(|L(v)|-1)\cdot d(v)$ of the paths of length 2. Thus we can assume that vertex $v$ is in the component of some terminal from $L(v)$, showing that we have a reduction from $\LHomED(H)$ to \textsc{Edge Multiway Cut}. We can observe that this transformation increases treewidth at most by an additive constant. Therefore, we can obtain the following lower bound:
\begin{thm}\label{thm:emwaylb}
For every $k\ge 3$ and $\epsilon >0$, there are $\sigma, \delta>0$ such that \textsc{Edge Multiway Cut} with $k$ terminals on $n$-vertex instances given with a tree decomposition of width at most $t$ cannot be solved in time $(k-\epsilon)^t \cdot n^{\bigO(1)}$, unless the SETH fails. 
\end{thm}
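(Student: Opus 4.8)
The plan is to prove the bound by a reduction from $\LHomED(H)$ to \textsc{Edge Multiway Cut} with $k$ terminals, where $H$ is the graph on $k$ pairwise non-adjacent vertices $w_1,\dots,w_k$, each carrying a loop. First I would check that this $H$ is a suitable source of hardness. Since $\nh(w_i)=\{w_i\}$, the set $\{w_1,\dots,w_k\}$ is incomparable, so $i(H)=k$; as $i(H')\le i(H)$ for every induced subgraph $H'$ of $H$, this also gives $\param(H)\le k$. Moreover $H$ is undecomposable: in any decomposition $(A,B,C)$ the set $C$ must be irreflexive while every vertex of $H$ is reflexive, so $C=\emptyset$; the set $B$ is a clique of a graph with no edge between distinct vertices, so $\abs{B}\le 1$, and $B\neq\emptyset$ forces $B=\{w_j\}$, but then a full set of edges between $A$ and $B$ is impossible unless $A=\emptyset$, contradicting $A\neq\emptyset$. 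Finally $\{w_1,w_2,w_3\}$ are independent reflexive vertices, hence have private neighbors, so $H$ is not in the polynomial class of Theorem~\ref{thm:ed-dicho-intro}. Taking $H'=H$ in the definition of $\param$ therefore gives $\param(H)\ge i(H)=k$, so $\param(H)=k$, and Theorem~\ref{thm:ed-main-intro} yields: for every $\epsilon>0$ there are $\sigma_0,\delta_0>0$ such that $\LHomED(H)$ with a \core{\sigma_0}{\delta_0} of size $p$ cannot be solved in time $(k-\epsilon)^{p}\cdot n^{\bigO(1)}$ unless the SETH fails.

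The reduction is the one sketched before the statement: from an instance $(G,L)$ of $\LHomED(H)$ (we may assume all lists are nonempty, else the instance is infeasible) build $G'$ by adding terminals $t_1,\dots,t_k$ and, for each $v\in V(G)$ and each $i$ with $w_i\in L(v)$, adding $d(v)$ internally disjoint length-$2$ paths between $v$ and $t_i$. I would then prove the identity
\[ \mathrm{OPT}_{\mathrm{EMC}}(G') \;=\; \mathrm{OPT}_{\ED}(G,L) \;+\; \sum_{v\in V(G)}(\abs{L(v)}-1)\,d(v). \]
For ``$\le$'', take a list assignment $f$ attaining the $\LHomED(H)$ optimum and remove from $G'$ every $uv\in E(G)$ with $f(u)\neq f(v)$, together with one edge of each of the $d(v)$ paths joining $v$ to $t_j$, over all $v$ and all $j\neq f(v)$ with $w_j\in L(v)$; the component of $t_i$ is then exactly $f^{-1}(w_i)$ plus some path-midpoints, which is a valid solution of the claimed size.

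For ``$\ge$'', take an arbitrary solution $X\subseteq E(G')$. Since the length-$2$ paths attached to distinct vertices of $G$ are pairwise edge-disjoint and disjoint from $E(G)$, we may write $\abs{X}=E_X+\sum_{v}P_X(v)$, where $E_X=\abs{X\cap E(G)}$ and $P_X(v)$ counts the edges of $X$ on the paths incident to $v$. For each $v$: if its component in $G'-X$ contains a terminal $t_i$ with $w_i\in L(v)$, call $v$ \emph{good} and set $f(v)=w_i$; otherwise call $v$ \emph{bad} and let $f(v)$ be any element of $L(v)$. Since $v$ is separated from every $t_j$ with $w_j\in L(v)$ not in its component, $X$ cuts all $d(v)$ of the corresponding paths, so $P_X(v)\ge(\abs{L(v)}-1)d(v)$ if $v$ is good and $P_X(v)\ge\abs{L(v)}\,d(v)$ if $v$ is bad. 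Also, any $uv\in E(G)$ with $f(u)\neq f(v)$ and both endpoints good lies in $X$, since otherwise the distinct terminals $t_{f(u)},t_{f(v)}$ would be connected; hence $E_X\ge\#\{uv\in E(G):f(u)\neq f(v)\}-\sum_{v\text{ bad}}d(v)\ge\mathrm{OPT}_{\ED}(G,L)-\sum_{v\text{ bad}}d(v)$. Summing all these bounds, the contributions of the bad vertices cancel and $\abs{X}\ge\mathrm{OPT}_{\ED}(G,L)+\sum_v(\abs{L(v)}-1)d(v)$. I expect this charging step — engineering the $d(v)$ parallel paths so that the cost a bad vertex saves by cheating is exactly offset — to be the main obstacle; the remaining steps are routine.

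It remains to transfer the parameters. The number of new vertices is $\sum_v\abs{L(v)}\,d(v)\le 2km=n^{\bigO(1)}$. Given a width-$t$ tree decomposition of $G$, add $\{t_1,\dots,t_k\}$ to every bag and, for each midpoint $z$ of a path between a vertex $v$ and a terminal $t_i$, attach a new leaf bag $\{v,t_i,z\}$ adjacent to a bag containing $v$; this is a tree decomposition of $G'$ of width $\le t+k$. Likewise, if $Q$ is a \core{\sigma_0}{\delta_0} of $G$ of size $p$, then $Q\cup\{t_1,\dots,t_k\}$ is a \core{\sigma'}{\delta'} of $G'$ of size $p+k$, with $\sigma',\delta'$ depending only on $\sigma_0,\delta_0,k$: each component of $G-Q$ gains only a bounded number of midpoints and becomes adjacent only to the $\le k$ new terminals beyond its old $\le\delta_0$ attachment points. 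Set $\sigma:=\sigma'$ and $\delta:=\delta'$. If \textsc{Edge Multiway Cut} with $k$ terminals could be solved in time $(k-\epsilon)^{p}\cdot n^{\bigO(1)}$ given a \core{\sigma}{\delta} of size $p$, then via the reduction and the identity above $\LHomED(H)$ on instances with a \core{\sigma_0}{\delta_0} of size $p$ would be solvable in time $(k-\epsilon)^{p+k}\cdot n^{\bigO(1)}=(k-\epsilon)^k\cdot(k-\epsilon)^p\cdot n^{\bigO(1)}=(k-\epsilon)^p\cdot n^{\bigO(1)}$, contradicting the first paragraph. Since a \core{\sigma}{\delta} of size $p$ also yields a tree decomposition of width $<p+\sigma$, the same conclusion holds for the treewidth formulation stated in the theorem.
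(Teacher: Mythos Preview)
Your proof is correct and follows essentially the same reduction as the paper: both take $H$ to be $k$ independent reflexive vertices, build $G'$ by attaching $d(v)$ length-$2$ paths from each $v$ to the terminals indexed by $L(v)$, and argue that the optimum of the multiway cut instance differs from $\mathrm{OPT}_{\ED}(G,L)$ by the fixed offset $\sum_v(\abs{L(v)}-1)d(v)$. Your write-up is in fact more thorough than the paper's sketch: you explicitly verify that $H$ is undecomposable with an obstruction and $\param(H)=k$, and you give a clean global charging argument for the $\ge$ direction, whereas the paper argues by a local exchange (``we may assume $v$ lies in the component of a terminal from $L(v)$'').
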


\paragraph{Dichotomy for edge deletion.} Feder, Hell, and Huang~\cite{DBLP:journals/jgt/FederHH03} proved that $\LHom(H)$ is polynomial-time solvable for bi-arc graphs and \NP-hard otherwise. Bi-arc graphs are defined by a geometric representation with two arcs on a circle; the precise definition appears in Section~\ref{sec:prelims}.
We start with an alternate characterization of the tractability criterion, which can be obtained using the forbidden subgraph characterization of bi-arc graphs \cite{FederHH07,DBLP:journals/jgt/FederHH03}.  

\begin{restatable}{lem}{biarc}
\label{lem:biarc-intro}
  The following two are equivalent:
  \begin{myenumerate}
\item $H$ does not contain an irreflexive edge, 
		a $3$-vertex set $S$ with private neighbors, or
		a $3$-vertex set $S$ with co-private neighbors. \label{it:biarc1}
\item $H$ is a bi-arc graph that does not contain an irreflexive edge or
  a $3$-vertex set $S$ with private neighbors. \label{it:biarc2}
  \end{myenumerate}
\end{restatable}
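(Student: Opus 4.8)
The plan is to derive the equivalence from the known forbidden-induced-subgraph characterization of bi-arc graphs due to Feder, Hell, and Huang~\cite{FederHH07,DBLP:journals/jgt/FederHH03}: there is an explicit list $\calF$ of graphs (cycles with prescribed loop patterns together with a bounded number of sporadic small graphs) such that a graph is a bi-arc graph if and only if it contains no member of $\calF$ as an induced subgraph. The starting observation is that all four properties in play --- ``contains an irreflexive edge'', ``contains a $3$-vertex set with private neighbors'', ``contains a $3$-vertex set with co-private neighbors'', and ``is not a bi-arc graph'' --- are preserved under passing to a supergraph (equivalently, each negation is inherited by induced subgraphs): for the first three because a witness is a bounded-size induced configuration, and for the last because bi-arc graphs are closed under induced subgraphs. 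Since property~\ref{it:biarc1} and property~\ref{it:biarc2} both trivially imply ``no irreflexive edge and no private triple'', the lemma reduces to two assertions: (A) if $H$ has no irreflexive edge and no private triple but has a co-private triple, then $H$ is not a bi-arc graph; and (B) if $H$ has no irreflexive edge and no private triple and is not a bi-arc graph, then $H$ has a co-private triple.

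For (B), since $H$ is not a bi-arc graph it contains some $F\in\calF$ as an induced subgraph, and since ``no irreflexive edge'' and ``no private triple'' are inherited by induced subgraphs, $F$ itself has no irreflexive edge and no private triple. So it suffices to run through the sublist of members of $\calF$ having neither of these configurations --- which turns out to consist of only a few small graphs, e.g.\ the reflexive $C_4$ and $C_5$ --- and to exhibit in each an explicit co-private triple (for the reflexive cycles one can take near-antipodal witnesses, which is a short by-hand check). The six witness vertices then also sit in $H$ and certify a co-private triple of $H$.

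For (A), the key step is to show that already the induced subgraph $H'$ of $H$ on the at most six vertices $\{v_1,v_2,v_3,v_1',v_2',v_3'\}$ witnessing the co-private triple is not a bi-arc graph; this is enough, since bi-arc graphs are induced-subgraph-closed. Here both hypotheses are used: ``no irreflexive edge'' rules out realizing the co-private triple by a pairwise-adjacent irreflexive triple, and ``no private triple'' constrains the admissible loop patterns and the coincidences among the six vertices. One then enumerates the surviving cases --- by which of the (at most) six vertices carry loops and which of them coincide --- and in each either locates a member of $\calF$ inside $H'$ (typically a short reflexive cycle, or one of the sporadic obstructions) or derives a contradiction with the absence of an irreflexive edge or of a private triple.

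I expect (A) to be the main obstacle: unlike (B) it is not a matter of checking a finite list but of producing a concrete obstruction out of an abstract co-private triple, and the bookkeeping --- loop assignments, vertex coincidences, and keeping everything simultaneously consistent with the forbidden-subgraph list and (on the tractable side of Theorem~\ref{thm:ed-dicho-intro}) with the geometric bi-arc representation --- is where essentially all the effort goes. A secondary point requiring care is the completeness of the filtered obstruction sublist used in (B), i.e.\ verifying that every member of $\calF$ with no irreflexive edge and no private triple has been accounted for.
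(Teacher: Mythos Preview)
Your plan is sound in principle but takes a different route from the paper and does substantially more work. The paper does not use the list $\calF$ of forbidden induced subgraphs of $H$; it uses instead the characterization via the \emph{associated bipartite graph} $H^*$ (Theorem~\ref{thm:biarc-characterizations}): $H$ is bi-arc iff $H^*$ contains neither an induced cycle of length at least six nor a special edge asteroid.

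With that formulation, your step~(A) --- which you anticipated as the main obstacle --- becomes a one-line argument and does not even need the auxiliary hypotheses ``no irreflexive edge'' and ``no private triple'': a co-private triple $\{v_1,v_2,v_3\}$ with co-private neighbors $\bar v_1,\bar v_2,\bar v_3$ directly yields the induced $6$-cycle $v_1',\bar v_3'',v_2',\bar v_1'',v_3',\bar v_2''$ in $H^*$, so $H$ is not bi-arc. Your proposed enumeration over loop assignments and coincidences among the six witness vertices is therefore unnecessary.

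For your step~(B), the $H^*$-characterization again collapses the work: if $H$ is not bi-arc then $H^*$ contains a $6$-cycle, an $8$-cycle, or a special edge asteroid (any induced cycle of length at least ten already contains a special edge asteroid). A $6$-cycle in $H^*$ translates back to a co-private triple in $H$; an $8$-cycle in $H^*$ yields, using only ``no irreflexive edge'', a private triple in $H$ (a contradiction); and a special edge asteroid in $H^*$ yields a private triple in $H$ without any further hypothesis. That is three short checks, and it sidesteps entirely the delicate point you flagged about verifying completeness of the filtered sublist of $\calF$.
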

With Lemma~\ref{lem:biarc-intro} in hand, the \NP-hardness part of Theorem~\ref{thm:ed-dicho-intro} follows easily. If $H$ is not a bi-arc graph, then already $\LHom(H)$ is \NP-hard; if $H$ contains an irreflexive edge or three vertices with private neighbors, then we can reduce from \textsc{Vertex Cover} or \textsc{Edge Multiway Cut} with 3 terminals, respectively.

Similarly to the proof of Theorem~\ref{thm:vd-dicho-intro}, the polynomial-time part of Theorem~\ref{thm:ed-dicho-intro} is based on a reduction to a flow problem. The fundamental difference is that in the edge-deletion case, there are graphs $H$ such that $i(H)>2$, but $\LHomED(H)$ is polynomial-time solvable (an example of such a graph is $H[A]$ from Figure~\ref{fig:dec-intro}). Thus, even if we assume that the list of a vertex is an incomparable set, it can have size larger than 2. Therefore, a simple reduction to $s\hyph t$ \textsc{Min Cut} where placing a vertex $v$ on one of two sides of the cut corresponds to the choice between the two elements of the list $L(v)$ cannot work. Instead, we represent each vertex $v$ with multiple vertices. Let $\ell=|L(v)|$. We represent vertex $v$ with a directed path on $\ell+1$ vertices, where we enforce (with edges of large cost) that the first and last vertices are always on the right and left side of the cut. We imagine the edges of the path to be undeletable, for example, each edge has large weight, implying that a minimum weight $s\hyph t$ cut would not remove any of them. This means that the path has $\ell$ possible states in a minimum $s\hyph t$ cut: the only possibility is that for some $i\in [\ell]$, the first $i$ vertices of the path are on the right side (the side corresponding to $t$), and the remaining $\ell+1-i$ vertices are on the left side (corresponding to $s$). Based on the geometric representation on the bi-arc graph $H$, we define an ordering $L(v)=\{a_1,\dots,a_\ell\}$ of each list. The idea is that assigning $a_i$ to $v$ corresponds to the state where the first $i$ vertices of the path are on the right side of the cut.

To enforce this interpretation, whenever $u$ and $v$ are adjacent vertices in $G$, we introduce some edges between the paths representing $u$ and $v$. These edges are introduced in a way that faithfully represents the \emph{interaction matrix} of $u$ and $v$, which is defined as follows. Let $L(u)=\{a_1,\dots, a_{\ell_u}\}$ and $L(v)=\{b_1,\dots, b_{\ell_v}\}$ in the ordering of the lists. The interaction matrix of $u$ and $v$ is a $|L(u)|\times |L(v)|$ matrix where the element in row $i$ and column $j$ is 1 if $a_ib_j\in E(H)$, and 0 otherwise.

Figure~\ref{fig:pathinteraction} (left) shows an example where $L(u)=\{a_1,\dots,a_5\}$, $L(v)=\{b_1,\dots, b_4\}$, and the interaction matrix is as show in the figure, i.e., the 0s form a rectangle in the top-right corner. Then we introduce an edge from the fourth vertex of the path of $u$ to the third vertex of the path of $v$. In the minimum $s\hyph t$ cut problem, this edge has to be removed whenever the tail of the edge is on the left side and the head of the edge is on the right side, which corresponds to assigning one of $\{a_1,a_2,a_3\}$ to $u$ and one of $\{b_3,b_4\}$ to $v$. Therefore, we need to remove this edge if and only if the states of the two paths correspond to a 0 entry in the interaction matrix, that is, when the edge $uv$ has to be removed since its image is not an edge of $H$. This means that this single edge indeed faithfully represents this particular interaction matrix.

\begin{figure}
\centering
\includegraphics[scale=1.1,page=2]{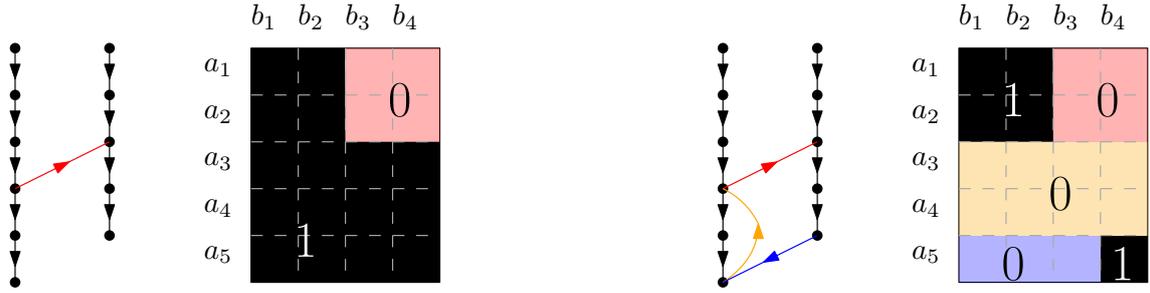}
  \caption{Representing the interaction of two vertices $u$ and $v$ with
    $L(v)=\{a_1,\dots,a_5\}$ and $L(u)=\{b_1,\dots, b_4\}$. Black areas denote ones in the interaction matrix.}\label{fig:pathinteraction}
\end{figure}

Through a detailed analysis of bi-arc graphs without irreflexive edges and 3-vertex sets with private neighbors, we determine how interaction matrices can look like. It turns out that the 0s in the matrix can be partitioned into at most three ``nice'' rectangles: appearing in the top-right corner, appearing in the lower-left corner, or having full width $|L(v)|$ (see Figure~\ref{fig:pathinteraction}, right). Each such nice rectangle can be represented by an edge, thus every interaction matrix can be represented by at most three edges such that in the solution we need to remove at most one of them.

\paragraph{Algorithms on bounded-treewidth graphs.} As discussed on \cpageref{page:ed-alg}, if $H$ has a decomposition as in Definition~\ref{def:decomp}, then $\LHomED(H)$ can be reduced to an instance of $\LHomED(H_1)$ and an instance of $\LHomED(H_2)$, where $H_1=H[A]$ and $H_2=H[B\cup C]$. It follows that if we use the $i(H)^t \cdot n^{\bigO(1)}$-time algorithm whenever $H$ is undecomposable, then we obtain an $\param(H)^t \cdot n^{\bigO(1)}$-time algorithm for every $H$. Furthermore, proving that there is no $(i(H)-\epsilon)^t \cdot n^{\bigO(1)}$-time algorithm for undecomposable $H$ proves that there is no $(\param(H)-\epsilon)^t \cdot n^{\bigO(1)}$-time algorithm for arbitrary $H$.

\paragraph{Reductions using gadgets.}
For the lower bound of Theorem~\ref{thm:ed-main-intro}, it is sufficient to prove the statement under the assumption that $H$ is undecomposable, hence $\param(H)=i(H)$.
For $q\ge 3$, we prove the lower bound by a reduction from  \coloringED{q}, whose hardness was established in \cref{thm:ed-coloring-intro}. As in the vertex-deletion case, we use gadgets that allow a straightforward reduction and the construction of these gadgets is the core technical part of the proof.

Here, a gadget is an instance with a set of distinguished vertices called portals.
Defining the intended behavior of gadgets is neater in the edge-deletion case as the possibility of deleting portals does not complicate matters. For every assignment of the portals, the \emph{cost} of the assignment is the minimum number of edges that needs to be deleted if we want to extend the assignment to the rest of the gadget. We can use the gadget to enforce that the assignment of the portals is one of minimum cost. Therefore, in order to reduce $\LHomED(K_q)$ to $\LHomED(H)$, we choose an incomparable set $S$ of size $q$ and design a gadget that has two portals $(p_1,p_2)$ with $L(p_1)=L(p_2)=S$, and every assignment $f$ with $f(p_1)\neq f(p_2)$ has cost exactly $\alpha$, while every assignment with $f(p_1)=f(p_2)$ has cost $\beta$ strictly more than $\alpha$. We replace every edge of the original graph $G$ with such a gadget. It is clear that the constructed $\LHomED(H)$ instance has a solution of cost $\alpha|E(G)|$ if and only if $G$ is $q$-colorable.

\paragraph{Realizing relations.} Let $c,d\in V(H)$ be two vertices and let $R\subseteq \{c,d\}^r$ be an arbitrary $r$-ary relation. We would like to prove a general statement saying that every such relation can be realized by some gadget: there is a gadget with $r$ portals such that
\begin{itemize}
\item the list of each portal vertex is $\{c,d\}$,
  \item an an assignment on the portal vertices has cost exactly $\alpha$ if it corresponds to a vector in $R$, and
  \item the cost of every other assignment is $\beta>\alpha$.
  \end{itemize}
  We show that if $c$ and $d$ are two vertices chosen from one of the obstructions appearing in Lemma~\ref{lem:biarc-intro}\,(1) (irreflexive edges, three vertices with private or co-private neighbors), then such a gadget representing $R\subseteq \{c,d\}^r$ can indeed be constructed. 
  Crucially, this requires to construct some gadget that realizes the ``Not Equals'' relation on $\{c,d\}$, i.e., $\NEQ=\{(c,d), (d,c)\}$. With $\NEQ$ in hand, we use an earlier result from~\cite{EsmerFMR24hub} for the list coloring problem, which shows that $\NEQ$ can be used to model arbitrary relations.
  Note that this is the point where we use the assumption that we are in the \NP-hard case of Theorem~\ref{thm:ed-dicho-intro} (which we definitively have to exploit at some point): We exploit the structure of an obstruction to model $\NEQ$ on two of its vertices.

  \paragraph{Indicators.} Our next goal is to construct  \emph{indicator gadgets,} defined as follows. The gadget has $\lambda+1$ portals for some constant $\lambda$. Portal $p$ has list $S$, and the remaining $\lambda$ portals have list $\{c,d\}$. Let $\alpha$ be the minimum number of edge deletions that are needed in the gadget. We can think of $p$ as the input and the rest of the portals as the outputs. If we are interested only in solutions where exactly $\alpha$ deletions are made inside the gadget, then assigning a value $a$ to the input is compatible with some set $I(a)\subseteq \{c,d\}^\lambda$ of assignments on the outputs. The indicator gadget has two properties: (1) $I(a)$ is non-empty for any $a\in S$ and (2) $I(a)\cap I(b)=\emptyset$ for any two distinct $a,b\in S$.

If we can construct indicator gadgets, then we can construct the gadget needed to reduce from  \textsc{$q$-Coloring} (that is, expressing $f(p_1)\neq f(p_2)$) in the following way. Let us introduce two copies of the indicator gadget on vertices $(p_1,u_1,\dots,u_\lambda)$ and on $(p_2,v_1,\dots,v_\lambda)$. We have $L(p_1)=L(p_2)=S$ and $L(u_i)=L(v_i)=\{c,d\}$ for $i\in [\lambda]$. Then we define an appropriate $2\lambda$-ary relation $R\subseteq \{c,d\}^{2\lambda}$, realize it with a gadget as discussed above, and then put this gadget on the vertices $\{u_1,\dots,u_p,v_1,\dots,v_p\}$. We define the relation $R$ such that it rules out for any $a\in S$ that the assignment on $(u_1,\dots,u_\lambda)$ is from $I(a)$ and the assignment on $(v_1,\dots,v_\lambda)$ is also from $I(a)$; as we can realize any relation $R$, we can certainly realize such a gadget. Then this gadgets enforces, for any $a$, that the value $a$ cannot appear on both $p_1$ and $p_2$ simultaneously, but allows every other combination. 

We construct indicator gadgets for $\lambda=|S|(|S|-1)$. For every pair $(a,b)$ of distinct vertices from $S$, we construct a subgadget with two portals $(q_1,q_2)$ with $L(q_1)=S$ and $L(q_2)=\{a',b'\}$ for some $a',b'\in V(H)$, and satisfying the following:
\begin{myenumerate}
\item assigning $a$ on $q_1$ forces $a'$ on $q_2$.
\item assigning $b$ on $q_1$ forces $b'$ on $q_2$.
  \item for any $e\in S\setminus \{a,b\}$, assigning $e$ on $q_1$ allows at least one of $a'$ or $b'$ on $q_2$. 
  \end{myenumerate}
  We construct $|S|(|S|-1)$ such subgadgets --- one for every distinct $(a,b)$. The construction of these subgadgets is fairly simple, but in general the pair $(a',b')$ can be different for every pair $(a,b)$. If we were so lucky that every pair $(a',b')$ is actually $(c,d)$, then we would be done with the construction of the indicator. In this case, we can simply join these $|S|(|S|-1)$ subgadgets at $q_1$ to obtain a gadget with input $q_1$ and $|S|(|S|-1)$ output vertices. Now it is clear that if we assign values $a$ and $b$ to the input, then they cannot be compatible with the same assignment on the output vertices: in the subgadget corresponding to pair $(a,b)$, value $a$ on the input forces $c$ on the output, while value $b$ forces $d$ on the output.

  In general, however, we cannot expect $(a',b')$ to be the same pair $(c,d)$ for every choice of $(a,b)$. Therefore, the final component is a gadget that ``moves'' an arbitrary pair $(a',b')$ to $(c,d)$.

  \paragraph{Moving pairs.} We say that there is an \emph{$(a,b)\to (c,d)$ move} if there is a gadget with two portals $(x,y)$ with $L(x)=\{a,b\}$, $L(y)=\{c,d\}$, and the following property: assigning $a$ (resp., $b$) to $x$ forces $c$ (resp., $d$) on $y$. In most cases, it is not very important to us which of $a$ and $b$ is mapped to $c$ or $d$, only the uniqueness of the mapping is important. Therefore, we introduce the notation \emph{$\{a,b\}\leadsto\{c,d\}$ move} to mean either an $(a,b)\to (c,d)$ move or an $(a,b)\to (d,c)$ move. The main result is that if the graph $H$ is undecomposable, then we can have such moves between any two pairs of incomparable vertices.
\begin{restatable}{lem}{lemmovingmain}
  \label{lem:movingMain}
	Let $H$ be an undecomposable graph.
	Let $\{a,b\}$ and $\{c,d\}$ be (not necessarily disjoint) $2$-vertex incomparable sets in $H$. Then $\{a,b\}\leadsto \{c,d\}$.
\end{restatable}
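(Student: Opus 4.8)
The plan is to turn the statement into a connectivity question on a ``move graph'' whose vertices are the $2$-vertex incomparable sets of $H$, to produce its edges by explicit constant-size gadgets, and to show that any failure of connectivity lets us read off a decomposition of $H$, contradicting the hypothesis. The first observation is that \emph{moves compose}: if a gadget $\mathcal{G}_1$ with portals $(x_1,y_1)$ witnesses $\{a,b\}\leadsto\{c,d\}$ and a gadget $\mathcal{G}_2$ with portals $(x_2,y_2)$ witnesses $\{c,d\}\leadsto\{e,f\}$, then identifying $y_1$ with $x_2$ (their lists both being $\{c,d\}$) produces a gadget witnessing $\{a,b\}\leadsto\{e,f\}$. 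Indeed, the cost of the glued gadget is additive over the two parts, so its minimum is $\alpha_1+\alpha_2$, and in any assignment attaining it both parts are individually min-cost; hence the value on $x_1$ forces the value on the shared portal (via $\mathcal{G}_1$), which forces the value on $y_2$ (via $\mathcal{G}_2$), and tracing the two bijections shows the min-cost projection onto $(x_1,y_2)$ is again a bijective relation. So $\leadsto$ is transitive, and it is symmetric because a witnessing gadget has a bijective (hence invertible) min-cost projection. It therefore suffices to show that all $2$-vertex incomparable sets of $H$ lie in a single class of $\leadsto$, i.e., that the graph $\mathcal{M}$ with one vertex per such set and one edge per move is connected.

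Next I would collect a small toolkit of elementary moves realized by tiny gadgets. The workhorse is a \emph{single edge}: for an incomparable set $\{a,b\}$, pick $a'\in\nh(a)\setminus\nh(b)$ and $b'\in\nh(b)\setminus\nh(a)$ (both nonempty by incomparability) and let the gadget be the single edge $xy$ with $L(x)=\{a,b\}$, $L(y)=\{a',b'\}$; since $aa',bb'\in E(H)$ while $ab',a'b\notin E(H)$, its cost-$0$ min-cost projection is exactly $\{(a,a'),(b,b')\}$, so $\{a,b\}\leadsto\{a',b'\}$, and $\{a',b'\}$ is again incomparable (witnessed by $a$ and $b$). Composing two such edge moves gives, for any such $a',b'$ and any $a''\in\nh(a')\setminus\nh(b')$, a move $\{a,b\}\leadsto\{a'',b\}$ that changes one coordinate and fixes the other (note $a\in\nh(a')\setminus\nh(b')$ always, so choosing $a''\neq a$ yields a genuine change). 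A length-$2$ path gadget $xzy$ with a well-chosen list on $z$ supplies a further family of one-coordinate moves routing through common neighborhoods (roughly: $\{a,b\}\leadsto\{a,d\}$ when both pairs are incomparable as witnessed by a common pair $(p,q)$ with $p\in\nh(a)\setminus(\nh(b)\cup\nh(d))$ and $q\in(\nh(b)\cap\nh(d))\setminus\nh(a)$), and a handful of analogous $O(1)$-size gadgets take care of the remaining degenerate configurations, in particular those involving loops or near-twin vertices. Every edge of $\mathcal{M}$ put in this way is a genuine move.

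Finally, suppose $\mathcal{M}$ is disconnected. I would fix a component $\mathcal{D}$, set $W=\bigcup_{Q\in\mathcal{D}}Q$, and exploit the closure properties of the toolkit: the edge move forces $\nh(u)\triangle\nh(u')\subseteq W$ for every $\{u,u'\}\in\mathcal{D}$, and the one-coordinate moves control how the common neighborhood $\nh(u)\cap\nh(u')$ may leave $W$. Iterating these properties I would build a partition of $V(H)$ into sets $A$, $B$, $C$ (roughly: $A$ the part ``reached'' from $\mathcal{D}$, $B$ a reflexive clique of the remaining vertices whose neighborhoods dominate those in $A$, and $C$ an independent set of the remaining vertices whose neighborhoods are dominated), then verify the three conditions of \cref{def:decomp} together with $A\neq\emptyset$ and $B\cup C\neq\emptyset$, and conclude that $H$ is decomposable --- a contradiction. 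This proves $\mathcal{M}$ connected and hence the lemma.

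\emph{Main obstacle.} The heart of the argument is this last step: converting the mere failure of move-connectivity into an honest decomposition $(A,B,C)$. The closure properties only constrain \emph{symmetric differences} of neighborhoods directly, so pinning down the reflexive clique $B$ and the independent set $C$ --- in particular establishing ``all edges between $A$ and $B$'' and ``no edges between $A$ and $C$'' --- will require a careful case analysis according to whether the relevant boundary vertices carry loops, to where the hardness obstruction of $H$ sits relative to $W$, and (for instance when $\mathcal{D}$ touches every vertex of $H$) to which component and which tripartition to use. A subsidiary difficulty is to calibrate the list of elementary moves so that it is at once realizable by explicit constant-size gadgets and rich enough that any ``stuck'' component genuinely exposes a decomposition.
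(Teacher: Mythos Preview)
Your overall strategy---reduce to connectivity of a ``move graph'' on incomparable pairs, then argue that a failure of connectivity yields a decomposition---is in spirit the paper's strategy, but the execution you propose diverges from the paper in two important ways and leaves the decisive step as a promissory note.

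First, the paper does not work with the full move graph. Its key structural lemma (\cref{lem:adjincomp}) is that whenever two incomparable pairs \emph{share a vertex}, there is a move between them; this is proved by a short but non-obvious two-case construction (a length-$2$ path when $a$ has a private neighbor with respect to both $b$ and $c$, and a length-$4$ path otherwise, followed by an application of \cref{lem:distinguishertoforcer} to upgrade ``allows'' to ``forces''). This reduces everything to connectivity of the combinatorially defined graph $\aux(H)$ whose edges are intersecting pairs. Your ``one-coordinate moves'' via two edge gadgets only give $\{a,b\}\leadsto\{a'',b\}$ for $a''\in\nh(a')\setminus\nh(b')$, which is strictly weaker than \cref{lem:adjincomp}, and the extra gadgets you allude to for ``degenerate configurations'' are not specified. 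In fact the paper does \emph{not} show that $\aux(H)$ itself is connected: it restricts to the reflexive pairs (when $H$ is a strong split graph) or to the ``good'' pairs (otherwise), shows that subgraph is connected, and separately moves each remaining pair into the subgraph with a single edge gadget. Your plan does not anticipate this wrinkle.

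Second, and more seriously, the paper does \emph{not} extract a decomposition directly from a hypothetical disconnected component. Instead it runs, in each of the two cases (strong split graph or not), an explicit greedy algorithm that either finds a decomposition or certifies, vertex by vertex, why none exists. These certificates are short alternating sequences (reflexive--irreflexive, edge--non-edge) and they are exactly what is used to build paths in $\aux(H)$ from an arbitrary incomparable pair to a ``canonical'' one (a pair of maximal vertices in the strong-split case; an irreflexive edge or reflexive non-edge otherwise). Your proposed step---take $W=\bigcup_{Q\in\mathcal D}Q$, observe $\nh(u)\triangle\nh(u')\subseteq W$, and then somehow split $V(H)\setminus W$ into a reflexive clique $B$ and an independent set $C$ satisfying the adjacency constraints---is precisely where the real work lies, and you acknowledge this yourself as the ``main obstacle.'' There is no indication of how the closure property on symmetric differences alone gives you control over $\nh(u)\cap\nh(u')$, nor of why a single component $\mathcal D$ should produce a nonempty $A$ with nonempty complement, nor of how the argument would differ when $H$ is or is not a strong split graph (which in the paper's proof is an essential dichotomy, with different algorithms and different certificate shapes). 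As written, this is a restatement of the goal rather than a proof.
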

The assumption that $H$ is undecomposable is essential here: one can observe that if there is a decomposition $(A,B,C)$ and $a,b\in A$ and $c,d\in B$, then 
an $\{a,b\}\to \{c,d\}$ move cannot exist: intuitively, we cannot transmit information through the complete connection between $A$ and $B$.

The first step of the proof is to show that such a move exists if the 2-vertex incomparable sets intersect: that is, there is a $\{a,b\}\leadsto \{a,c\}$ move whenever $\{a,b\}$ and $\{a,c\}$ are incomparable sets. This suggests defining the following auxiliary graph $\aux(H)$: the vertices of $\aux(H)$ correspond to $2$-vertex incomparable sets, and two such vertices are connected if they represent pairs that intersect. Our main goal is showing that (a large part of) $\aux(H)$ is connected. As discussed above, the proof has to use the fact that $H$ is undecomposable. We consider two cases depending on whether $H$ is a \emph{strong split graph} or not, that is, whether it can be partitioned into a reflexive clique and an irreflexive independent set. The way we can exploit the non-existence of decompositions depends on whether $H$ is in this class or not.

\paragraph{Case I: strong split graphs.} In the case of a strong split graph, the following algorithm can be used to detect if there is a non-trivial decomposition. Let us assume that $H$ does not have universal or independent vertices. We say that a vertex is \emph{maximal} if its neighborhood is inclusionwise maximal, that is, there is no vertex that is adjacent to a proper superset of the neighborhood. The key observation is that every maximal vertex has to be in part $B$ of the decomposition. Therefore, we initially
\begin{itemize}
\item move every maximal vertex into $B$ and move every other vertex to $A$.
\end{itemize}
Then we repeatedly apply the following two steps as long as possible:
\begin{itemize}
\item If $v\in A$ is irreflexive and not adjacent to some vertex in $B$, then we
  move $v$ into $C$.
\item If $v\in A$ is reflexive and adjacent to $C$, then we move $v$ into $B$.
  \end{itemize}
  It can be checked that the algorithm is correct: if it stops with a non-empty set $A$, then $(A,B,C)$ is a valid decomposition. Thus the assumption that $H$ has no decomposition implies that the algorithm moves every vertex to $B\cup C$.

  Consider an incomparable pair $\{a,b\}$ that we want to move to $\{c,d\}$. It is sufficient to consider only the case where $a$ and $b$ are both reflexive. The algorithm eventually moves $a$ to $B$, and there is a sequence of moves that certify this. That is, there is a sequence $\ell_0,r_1,\ell_1,r_1,\dots,r_k,\ell_k$ such that $\ell_0=a$, $\ell_k$ is a maximal reflexive vertex, $\ell_i$ is a reflexive vertex adjacent to $r_{i+1}$, and $r_i$ is an irreflexive vertex not adjacent to $\ell_i$ (see Figure~\ref{fig:alternating}). If we choose this alternating path certificate to be of minimal length, then $\ell_i$ and $\ell_{i+1}$ are incomparable: $r_{i+1}$ and $r_{i+2}$ are adjacent to exactly one of them. Therefore, the pairs $\{\ell_i,\ell_{i+1}\}$ and $\{\ell_{i+1},\ell_{i+2}\}$ are adjacent in $\aux(H)$, implying that $\{a,b\}$ is in the same component of $\aux(H)$ as $\{\ell_{k-1},\ell_k\}$. If $q$ is some maximal vertex with a neighborhood distinct from $\ell_k$, then $\{\ell_k,q\}$ is also incomparable, and it is adjacent to $\{\ell_{k-1},\ell_k\}$. The conclusion is that every incomparable pair $\{a,b\}$ is in the same component as some pair $\{a',b'\}$ of incomparable \emph{maximal} vertices. Therefore, it is sufficient to show that whenever $\{a',b'\}$ and $\{c',d'\}$ are two pairs of incomparable vertices such that $a',b',c',d'$ are all maximal, then $\{a',b'\}$ and $\{c',d'\}$ are in the same component of $\aux(H)$. Then at least one of $\{a',d'\}$ or $\{a',c'\}$ is incomparable (depending on whether $\nh(a')=\nh(c')$ or not). Either of these pairs is adjacent to both $\{a',b'\}$ and $\{c',d'\}$.
  \begin{figure}
\centering
\includegraphics[scale=1.1,page=1]{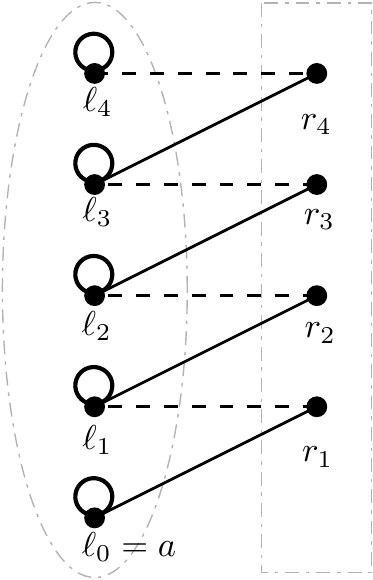}	  
\caption{An alternating path certifying that $a$ is moved to $B$. Vertex $\ell_4$ is maximal.}\label{fig:alternating}
\end{figure}    
  \paragraph{Case II: graphs that are not strong split graphs.}
  If $H$ is not a strong split graph, then either it contains two adjacent irreflexive vertices, or two non-adjacent reflexive vertices. We can find a decomposition the following way. Initially, we
  \begin{itemize}
  \item put into $A$ every reflexive vertex that is not adjacent to some other reflexive vertex, and
  \item put into $A$ every irreflexive vertex that is adjacent to some other irreflexive vertex.
  \end{itemize}
  Then we repeat the following two steps as long as possible:
  \begin{itemize}
\item If $v\not\in A$ is irreflexive and adjacent to $A$, then we move $v$ into $A$.
\item If $v\not\in A$ is reflexive and not adjacent to some vertex in $A$, then we move $v$ into $A$.
\end{itemize}
Again, we can verify that if the algorithm stops without moving every vertex to $A$, then we have a non-trivial decomposition. Therefore, for every vertex $a$, the algorithm provides a sequence of moves that certifies that $a$ has to be in part $A$ of any decomposition. Similarly to the previous case, we can use such a (minimal) certificate to show that every $(a,b)$ is in the same component of $\aux(H)$ as some $(a',b')$, where $a'$ and $b'$ are either adjacent irreflexive vertices or non-adjacent reflexive vertices. Therefore, all that is left to show is that if both $(a',b')$ and $(c',d')$ have this property, then they are in the same component of $\aux(H)$. This can be proved with a short case analysis.

\section{Preliminaries}\label{sec:prelims}
For an integer $k$, by $[k]$ we denote $\{1,\ldots,k\}$. For a set $X$, by $2^X$ we denote the family of all subsets of $X$.

\paragraph{Graph theory.}
Let $H$ be a graph. By $V(H)$ and $E(H)$ we denote, respectively, the vertex set and the edge set of $H$.
A vertex of $H$ is irreflexive (resp. reflexive) if it does not have a loop (resp. has a loop).
A graph is irreflexive (resp. reflexive) if its every vertex is irreflexive (resp. reflexive).

For a graph $H=(V,E)$ and a set $X \subseteq V$, by $H[X]$ we denote the graph induced by $X$, 
i.e., $(X, \{ e \in E ~|~ e \subseteq X\}$. By $G-X$ we denote the graph obtained form $H$ by removing all vertices in $X$ along with incident edges, i.e., $H[V \setminus X]$. For a set $X \subseteq E$, by $H \setminus X$ we denote the graph obtained by removing all edges in $X$, i.e., $(V,E \setminus X)$.

Given a graph $H=(V,E)$ and a vertex $v\in V$, by $\nh(v)$ we denote the neighborhood of $v$ in $H$.
Note that $v \in \nh(v)$ if and only if $v$ is reflexive. A vertex is \emph{isolated} if its neighborhood is empty.
We say that $v$ \emph{dominates} $u\in V$ if $\nh(u) \subseteq \nh(v)$. 
If equality does not hold then $v$ \emph{strictly} dominates $u$.
The vertex $v$ is \emph{maximal} if there is no vertex in $V$ that strictly dominates $v$. Finally, a vertex is \emph{universal} if it is adjacent to every vertex in $H$ (including itself).

If $u$ dominates $v$ or $v$ dominates $u$ then $u$ and $v$ are \emph{comparable}. Otherwise, they are \emph{incomparable}. A set $S\subseteq V$ is \emph{incomparable} if its vertices are pairwise incomparable. Conversely, $S$ is \emph{comparable} if the vertices of $S$ are pairwise comparable.
Recall that by $i(H)$ we denote the size of a largest incomparable set in $H$.

\paragraph{\boldmath Treewidth and \core{\sigma}{\delta}s.}
Consider a graph with a \core{\sigma}{\delta} $Q$ of size $p$. Introducing a bag that contains $Q$ that is the center of a star whose leaves are $Q\cup C_i$ for each connected component $C_i$ of $G-Q$. Then this is a tree decomposition of $G$ of width at most $p+\sigma-1$. We state this observation formally.
\begin{obs}\label{obs:coretotw}
	For some $\sigma, \delta\ge 1$, let $G$ be a graph given with a \core{\sigma}{\delta} of size $p$. One can obtain a tree decomposition of width less than $p+\sigma$ in time polynomial in the size of $G$.
\end{obs}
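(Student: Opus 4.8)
The plan is to exhibit the required tree decomposition directly as a star and then check the three defining properties. Write $Q$ for the given \core{\sigma}{\delta} of size $p$ and let $C_1,\dots,C_r$ be the connected components of $G-Q$; by the definition of a \coreword\ each $C_i$ has at most $\sigma$ vertices. I would take the underlying tree $T$ to be a star with center $x_0$ and leaves $x_1,\dots,x_r$, set the bag $\beta(x_0)=Q$, and set $\beta(x_i)=Q\cup V(C_i)$ for every $i\in[r]$ (if $r=0$, the decomposition is just the single bag $Q$).

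Next I would run through the standard checks. For vertex coverage, every vertex of $Q$ lies in $\beta(x_0)$ and every vertex of $C_i$ lies in $\beta(x_i)$. For edge coverage, an edge with both endpoints in $Q$ lies in $\beta(x_0)$, an edge inside some $C_i$ or between $Q$ and $C_i$ lies in $\beta(x_i)$, and no edge of $G$ joins distinct components $C_i$ and $C_j$ precisely because the $C_i$ are the components of $G-Q$. For the connectedness (subtree) condition, a vertex of $Q$ lies in every bag, so the nodes whose bags contain it induce the whole star, which is connected; a vertex of $C_i$ lies only in $\beta(x_i)$, a single node. Finally, the width is $\max\bigl(|Q|,\max_{i\in[r]}|Q\cup V(C_i)|\bigr)-1\le p+\sigma-1<p+\sigma$, and computing the components of $G-Q$ and writing down the at most $r+1$ bags takes time polynomial in the size of $G$.

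I do not expect a genuine obstacle here; the one point that deserves care is the edge-coverage check, where we use that $Q$ separates $G-Q$ into exactly the components $C_i$, so that every edge of $G$ has both of its (at most two) non-$Q$ endpoints inside a single component and is therefore contained in the corresponding leaf bag.
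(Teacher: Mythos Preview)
Your proposal is correct and matches the paper's own argument exactly: the paper also constructs a star tree decomposition with center bag $Q$ and leaf bags $Q\cup C_i$, noting the width is at most $p+\sigma-1$. You actually supply more verification detail than the paper does (which states the construction in one sentence), but the construction is identical.
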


\paragraph{Private neighbors, co-private neighbors.}
Let $U$ be a set of vertices and let $u,v\in U$. We say that $u$ has a \emph{private neighbor with respect to $U$} if there is $u'\in \nh(u)\setminus \bigcup_{v\in U, v\neq u} \nh(v)$. We say that $u,v$ have a \emph{co-private neighbor with respect to $U$} if there is a vertex $u'\in \left(\nh(u)\cap \nh(v)\right)\setminus \bigcup_{w\in U\setminus\{u,v\}} \nh(w)$. If the set $U$ is clear from context we may not explicitly specify with respect to which set the vertices have private/co-private neighbors.
The set $U$ has \emph{private neighbors} if each $u\in U$ has a private neighbor (with respect to $U$). Similarly, $U$ has \emph{co-private neighbors} if each pair of distinct vertices $u,v\in U$ has a co-private neighbor.

\paragraph{Variants of the list homomorphism problem.}
In this paper we consider three computational problems denoted by \LHom($H$), \LHomVD($H$), and \LHomED($H$).
In each of the problems $H$ is a fixed graph, and the instance is a pair $(G,L)$, where $G$ is a graph and $L : V(G) \to  2^{V(H)}$ is a \emph{list} function. 

In the \LHom($H$) problem we ask whether there exists a homomorphism $\phi: V(G) \to V(H)$ which respects lists $L$, i.e., $\phi(v) \in L(v)$ for all $v \in V(G)$.
In the \LHomVD($H$) (resp. \LHomED($H$)) we ask for a smallest set $X$ of \emph{vertices} (resp. \emph{edges}) such that $G-X$ (resp. $G \setminus X$) admits a homomorphism to $H$ that respects lists $L$ (``VD'' in the name of the problem stands for ``Vertex Deletion'' and ``ED'' stands for ``Edge Deletion'').

Note that \LHomVD($H$) and \LHomED($H$) are optimization problems. Sometimes it will be convenient to consider their corresponding decisions versions, when we are additionally given an integer $k$ and we ask whether the instance graph can be modified into a yes-instance of \LHom($H$) by removing at most $k$ vertices/edges.

Let $(G,L)$ be an instance of one of the problems defined above.
Suppose that there is a vertex $x \in V(G)$ such that $L(x)$ contains two vertices $u,v$ with $\nh(u) \subseteq \nh(v)$.
We claim that we can safely remove $u$ from $L(x)$.
Indeed, note that in any homomorphism that maps $x$ to $u$ we can safely remap $x$ to $v$, without changing the images of other vertices.
Thus without loss of generality we can assume that in each of the three problems, each list is an incomparable set.
In particular, the size of each list is at most $i(H)$.

\paragraph{Gadgets.}
Let $J$ be a graph together with a list assignment $L\from V(J)\to 2^{V(H)}$ and $r$ distinguished vertices $\boldx=(x_1,\ldots, x_r)$ from $J$. Then we refer to the tuple $\calJ=(J, L, \boldx)$ as a \emph{$r$-ary $H$-gadget}.
We might not specify $r$ nor $H$ in case they are clear from the context.
The vertices $x_1,\ldots,x_r$ are called \emph{portals}.

In most cases our gadgets will be just paths with portals in endvertices.
In such a case we introduce the following abbreviated notation.
Let $\calJ=(J,L,(x,y))$ be a binary gadget of arity $2$ such that $J$ is a path with consecutive vertices $x,p_1,\ldots,p_\ell,y$. We refer to $\calJ$ just by specifying the lists of consecutive vertices, i.e., $L(x) - L(p_1) -\ldots -L(p_\ell) - L(y)$.

\section{\boldmath Complexity dichotomy for \LHomVD($H$)}
\label{sec:dicho-vd}
The aim of this section is to prove the following result.

\thmvddicho*

Before we proceed to the proof, let us show alternative characterizations of the graphs considered in \cref{thm:vd-dicho-intro}.

\vdalternative*
\begin{proof}
\noindent\textbf{(\ref{it:i2}.$\to$\ref{it:obstructions}.)}
This implication is trivial, as each of the structures listed in statement \eqref{it:obstructions} contains an incomparable set of size 3.

\noindent\textbf{(\ref{it:obstructions}.$\to$\ref{it:covered}.)}
In what follows we will use some facts from graph theory,
and formally we apply them to the graph obtained from $H$ by removing all loops.

Note that every induced cycle with at least six vertices contains an independent set of size 3,
thus the only induced cycles in $H$ are triangles.
Furthermore $H$ does not contains an \emph{asteroidal triple}: three independent vertices, so that each pair is joined with a path avoiding the neighborhood of the third one.
It is well known that graphs with no induced cycles of length at least 4 and no asteroidal triples are precisely interval graphs~\cite{Lekkeikerker1962}. Thus $H$ is an interval graph with maximum independent set of size at most 2.

Interval graphs are perfect, so their complements are perfect as well.
As the complement of $H$ is triangle-free, it is bipartite.
This means that the vertex set of $H$ can be covered with two cliques.

\noindent\textbf{(\ref{it:covered}.$\to$\ref{it:i2}.)}
Suppose that the vertex set of $H$ can be covered with two cliques $\calL$ and $\calR$.
Consider some intersection representation of $H$ by segments on a line.
As segments on a line satisfy the Helly property, there is a point $\ell$ contained in all segments from $\calL$ 
and a point $r$ contained in all segments from $\calR$. Without loss of generality assume that $\ell$ is to the left of $r$.
Observe that we can trim the segments from $\calL$ so that their left endpoint is $\ell$,
and the segments from $\calR$ so that their right endpoint is $r$, obtaining another intersection representation of $H$ by segments on a line.

Let $v_1,v_2,\ldots,v_k$ be the vertices from $\calL$, ordered increasingly with respect to the length of their corresponding segments (ties are resolved arbitrarily). As all these segments share their left endpoint, we observe that for $i < j$ the segment representing $v_i$ is contained in the segment representing $v_j$. Consequently, we obtain that $\nh(v_1) \subseteq \nh(v_2) \subseteq \ldots \subseteq \nh(v_k)$ (here we use the fact that all vertices are reflexive, so we always have $v_i \in \nh(v_i)$).
By analogous reasoning for $\calR$ we observe that each of cliques $\calL,\calR$ is a comparable set of vertices.
Consequently $i(H) \leq 2$.

This completes the proof.
\end{proof}

Now we can proceed to the proof of the complexity dichotomy for \LHomVD($H$).

\begin{proof}[Proof of \cref{thm:vd-dicho-intro}.]
Let us begin with the hard cases.

\paragraph{Hardness.}
If $H$ contains an irreflexive vertex $z$, then solving \LHomVD($H$) on a graph where all lists are set to $\{z\}$ is equivalent to solving \textsc{Max Independent Set}.

So let us assume that $H$ is reflexive. By \cref{lem:vd-alternative-intro}. we can assume that is contains one of the three subgraphs given in item \ref{it:obstructions} of the lemma.

If $H$ contains three pairwise non-adjacent reflexive vertices $a,b,c$, then there is a straightforward reduction from \textsc{Vertex Multiway Cut} with three terminals.
We take the same graph and set the list of every terminal to $\{a\}, \{b\}$, and $\{c\}$, respectively.
The lists of all non-terminal vertices are $\{a,b,c\}$.

If $H$ contains a four- or five-vertex reflexive cycle, then it is already \NP-hard to decide whether an input graph admits a list homomorphism to $H$ without deleting any vertices~\cite{FEDER1998236}. 

\paragraph{Algorithm.}
So from now on let us suppose that $H$ is reflexive and $i(H) \leq 2$.
By~\cref{lem:vd-alternative-intro} we observe that the vertex set of $H$ can be covered by two cliques $\calL$ and $\calR$.
As it was already observed in the proof of \cref{lem:vd-alternative-intro}, if the vertices of a reflexive interval graph can be covered by two cliques, then each of these cliques forms a comparable set.

Let $(G,L)$ be an instance of \LHomVD($H$), where $G$ has $n$ vertices, and consider $x \in V(G)$.
For a homomorphism $\phi$ from $G$ to $H$, we say that $x$ is mapped to the \emph{left} (resp. \emph{right}) if $\phi(x) \in \calL$ (resp. $\phi(x) \in \calR$.
Note that we can safely assume that $L(x) \neq \emptyset$, as we always need to delete such a vertex $x$.
Thus $|L(x)| \in \{1,2\}$.
If $|L(x)|=1$, we say that $x$ is \emph{decided}: we know in advance whether $x$ will be mapped to the left or to the right (if $x$ is not deleted).
If $|L(x)|=2$, we say that $x$ is \emph{undecided}. Recall the list of each undecided vertex contains one element from $\calL \setminus \calR$ and one element from $\calR \setminus \calL$. Let $m$ be the number of undecided vertices in $G$.

Now the problem boils down to deciding, for each undecided vertex, whether it will be mapped to the left or to the right.
Let $xy$ be an edge of $G$. Note that mapping both $x$ and $y$ to the same side always satisfies the constraint given by the edge $xy$.
However, sometimes we cannot map $x$ and $y$ to different sides. This depends on the existence of the corresponding edge in $H$ and is entirely determined in advance by the lists of $x$ and $y$.
We say that the ordered pair $(x,y)$ of vertices in $G$ is \emph{left-right-incompatible} if the following conditions are satisfied:
\begin{myenumerate}
\item $x$ is adjacent to $y$ in $G$,
\item $\calL \cap L(x) \neq \emptyset$ and $\calR \cap L(y) \neq \emptyset$,
\item denoting $\{a\} = \calL \cap L(x)$ and $\{b\} = \calR \cap L(y)$, we have $ab \notin E(H)$.
\end{myenumerate}
Intuitively, a pair $(x,y)$ is left-right-incompatible if we cannot simultaneously map $x$ to the left and $y$ to the right,
but looking at the list of each vertex separately this possibility is not ruled out.

We aim to reduce the problem to solving the minimum vertex separator problem:
given a digraph $D$ with \emph{source} $s$ and \emph{sink} $t$, find a smallest set $X \subseteq V(D) \setminus \{s,t\}$ such that there is no $s$-$t$-path in $D - X$. This problem can be solved using one of algorithms for finding a maximum flow.

The problem can be equivalently stated as follows.
Given a digraph $D$ with specified vertices $s,t$, partition $V(D)$ intro three sets $S,X,T$, such that
\begin{myitemize}
\item $s \in S$ and $t \in T$,
\item there is no arc beginning in $S$ and ending in $T$.
\item $X$ is of minimum possible size.
\end{myitemize}
This formulation will be more useful to us.
We will create an instance $(D,s,t)$ of the minimum vertex separator problem where the minimum $s$-$t$-separator has size at most $k$ if and only if $(G,L)$ can be modified into a yes-instance of \LHom($H$) by removing at most $k$ vertices.

We start the construction of $D$ with introducing the set $V(G)$ and two additional vertices $s$ and $t$.
In our intended solution, the vertices of $G$ mapped to the left (resp. right) will be in $S$ (resp. $T$), while the deleted vertices will be in $X$.

Consider a vertex $x \in V(G)$. If $x$ is decided and $L(x) \subseteq \calL$, we add the arc $\overrightarrow{sx}$.
This way we make sure that $x$ can either be mapped to the left (i.e., $x \in S$) or deleted (i.e., $x \in X$).
Similarly, if  $L(x) \subseteq \calR$, we add the arc $\overrightarrow{xt}$.

The only thing left is to take care of incompatible pairs.
Let $(x,y)$ be a left-right-incompatible pair in $G$.
Then we add the arc $\overrightarrow{xy}$.
Note that this ensures that we cannot have $x \in S$ and $y \in T$,
i.e., we cannot simultaneously map $x$ to the left and map $y$ to the right.
This completes the construction of $(D,s,t)$. The correctness of the reduction follows from the description above.
\end{proof}

\section{\boldmath Tight results for \LHomVD($H$)}
\label{sec:vd-tight}
In this section we prove \cref{thm:vd-main-intro-tw} and \cref{thm:vd-main-intro}.

\subsection{\boldmath  Algorithm for \LHomVD($H$)}\label{sec:vd-algo}

First we show the algorithmic part of \cref{thm:vd-main-intro-tw}.

\begin{customthm}{\ref{thm:vd-main-intro-tw}~(a)}\label{thm:vd-algo}
	Let $H$ be a fixed graph. Then every $n$-vertex instance of $\LHomVD(H)$ given along with a tree decomposition of width $t$ can be solved in time $(i(H)+1)^t \cdot n^{\bigO(1)}$.
\end{customthm}
\begin{proof}
	Consider an instance $(G,L)$ of \LHomVD($H$), given along with a tree decomposition of width $t$.
	Recall that without loss of generality we may assume that each list is an incomparable set, i.e., the size of each list is at most $i(H)$. Now the algorithm is a straightforward dynamic programming on the given tree decomposition: For each vertex of $G$, we have at most $i(H)+1$ possible states --- mapping it to one of at most $i(H)$ vertices from its list, or deleting it.
\end{proof}

\subsection{\boldmath  Hardness for \LHomVD($H$)}\label{sec:vd-lb}

In this section we show \cref{thm:vd-main-intro}; recall that it implies the hardness counterpart of \cref{thm:vd-main-intro-tw}.
Recall that a special case of  \cref{thm:vd-main-intro} where $H$ is an irreflexive clique, even in non-list variant, is given in \cref{thm:vd-coloring-intro}.

\thmvdcoloringmain*

Let $H$ be a fixed graph.
Recall that a binary $\calJ = (J,L,(x,y))$ is a graph with lists $L : V(J) \to 2^{V(H)}$ and two distinguished vertices $x,y$ called portals.
We treat $(J,L)$ as an instance of \LHomVD($H$), in particular each portal can be mapped to a vertex from its list or be deleted.
To simplify the notation, we introduce a special symbol $\del$: if we say that a vertex $v$ is mapped to $\del$, we mean that $v$ is deleted.
A useful way of thinking of it is to imagine that we are considering homomorphism to the graph obtained from $H$ by adding a new universal vertex $\del$.

Let $(J,L,(x,y))$ be a binary gadget. For $(a,b) \in (L(x) \cup \{\del\}) \times (L(y) \cup \{\del\})$,
by $\vdcount(\calJ \to H, (a,b))$ we denote the size of a smallest set $X \subseteq V(J) \setminus \{x,y\}$
for which there exists a list homomorphism $\phi$ from $J-X$ to $H$ such that $\phi(x)=a$ and $\phi(y)=b$
(recall that mapping a portal to $\del$ actually denotes deleting this portal).
The \emph{base cost} of $\calJ$ is $\vdcount(\calJ \to J, (\del,\del))$.
Note that if the base cost of $\calJ$ is $\alpha$, then for every $(a,b) \in (L(x) \cup \{\del\}) \times (L(y) \cup \{\del\})$ we have $\vdcount(\calJ \to H, (a,b)) \geq \alpha$.
Intuitively, every pair $(a,b) \in L(x) \times L(y)$ such that $\vdcount(\calJ \to H, (a,b))=\alpha$ is ``free'', and for others we need to ``pay'' by deleting some vertices of $\calJ$.

The crucial gadget used in the proof of \cref{thm:vd-main-intro} is called a \emph{prohibitor}.

\begin{defn}[$(v,S)$-prohibitor]\label{def:prohibitor}
Let $H$ be a graph, $S$ be an incomparable subset of $V(H)$, and $v $ be a vertex in $S$.
A binary gadget $\calF_v=(F_v,L,(x,y))$ with base cost $\alpha$, where $f_v$ is a path with endvertices $x \neq y$, is a \emph{$(v,S)$-prohibitor} if the following properties hold
\begin{myenumerate} 
\item $L(x)=L(y) = S$,
\item for all  $(a,b) \in (S \cup \{\del\})^2 \setminus (v,v)$ it holds that $\vdcount(\calF_v \to H, (a,b)) = \alpha$,
\item $\vdcount(\calF_v \to H, (v,v)) > \alpha$.
\end{myenumerate}
\end{defn}

The following result is the main technical lemma used in the proof of \cref{thm:vd-main-intro}.

\begin{restatable}{lem}{lemprohibitor}
\label{lem:vd-gadget}
Let $H$ be a fixed graph which contains either an irreflexive vertex or a three pairwise non-adjacent reflexive vertices or an induced reflexive cycle on four or five vertices.
Let $S$ be an incomparable set of size at least 2 and let $v \in S$.
Then there exists a $(v,S)$-prohibitor.
\end{restatable}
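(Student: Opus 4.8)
The plan is to construct the $(v,S)$-prohibitor by the modular assembly sketched in the technical overview (Figure~\ref{fig:intro-prohib}): build three auxiliary path gadgets --- a \emph{splitter}, a \emph{translator}, and a \emph{matcher} --- and glue two splitters, two translators, and one matcher in a path between the two portals $x$ and $y$. First I would fix some $w\in S\setminus\{v\}$ (possible since $|S|\geq 2$); since $v$ and $w$ are incomparable, pick $v'\in\nh(v)\setminus\nh(w)$ and $w'\in\nh(w)\setminus\nh(v)$. The splitter is the four-vertex path with lists $S - (V(H)\setminus\nh(v)) - \{v\} - \{v',w'\}$: at least one inner vertex must always be deleted, but if the input portal is mapped to $v$ and the output portal to $w'$ then \emph{both} inner vertices must be deleted, so its base cost is $1$ and only the combination "$v$ in, $w'$ out" costs $2$. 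Thus on input $v$ the output is forced to $v'$ at minimum cost, while on any input in $S\setminus\{v\}$ both $v'$ and $w'$ are allowed. The matcher is constructed from a pair $a,b$ of non-adjacent vertices lying in one of the obstructions guaranteed by the hypothesis (irreflexive vertex, three independent reflexive vertices, induced reflexive $C_4$ or $C_5$); as in the overview, a short path with lists alternating through the obstruction realizes "minimum cost iff the two portals are assigned $a$ and $b$ in some order, impossible if both are $a$''. The translator is produced by a short case analysis linking $\{v',w'\}$ to $\{a,b\}$: e.g.\ if $v'$ is not adjacent to $b$, the six-vertex path $\{v',w'\}-\{w\}-\{v'\}-\{b\}-\{a\}-\{a,b\}$ has base cost $2$ but costs $3$ on "$v'$ in, $b$ out'', forcing $a$ on input $v'$ and allowing $b$ on input $w'$; the symmetric cases are handled analogously.

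Once the three gadgets are in hand, I would verify the gluing. Chaining splitter--translator--matcher--translator--splitter with $x$ and $y$ as the outer portals (identifying internal portals $t_1,\dots,t_6$ as in Figure~\ref{fig:intro-prohib}), the base cost $\alpha$ of the composite is the sum of the base costs of the pieces: for any assignment of $(x,y)$ other than $(v,v)$, at least one of the splitters receives an input from $S\setminus\{v\}$ and can send $w'$ forward, so its translator can route $b$ to the matcher, the matcher gets one $a$ and one $b$ and incurs only its base cost, and every piece simultaneously attains its minimum --- giving total cost exactly $\alpha$. When the input is $(v,v)$, both splitters are forced to output $v'$, both translators are then forced to output $a$, and the matcher sees $(a,a)$, which costs strictly more than its base cost; hence $\vdcount(\calF_v\to H,(v,v))>\alpha$. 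The cases where a portal is deleted ($\del$) are subsumed since deleting a portal only relaxes constraints and one can always complete at cost $\alpha$. Also one must check $L(x)=L(y)=S$, which holds because the outer portals of the two splitters carry list $S$ by construction.

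The step I expect to be the genuine obstacle is the \emph{translator}, since the pair $(v',w')$ produced by the incomparability of $\{v,w\}$ is essentially arbitrary, whereas the matcher needs its two feed-in values to be a specific non-adjacent pair $\{a,b\}$ coming from a concrete obstruction. Bridging these requires a careful case distinction on how $v'$ and $w'$ relate (adjacency/non-adjacency) to $a$ and $b$, and within each obstruction type (irreflexive vertex vs.\ independent triple vs.\ $C_4$ vs.\ $C_5$) one has to exhibit an explicit short path with the right base cost and the right unique "expensive'' combination. A secondary subtlety is bookkeeping the base costs so that the additive structure really yields a clean $\alpha$ and a strict increase only at $(v,v)$ --- i.e.\ confirming that at every non-$(v,v)$ assignment all pieces can be optimized \emph{simultaneously}, with no hidden interaction forcing an extra deletion. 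I would organize the write-up by first stating and proving the three gadget lemmas (splitter, matcher, translator) separately, each with its small case analysis, and then assembling them with a short compositional argument; the matcher and splitter arguments are routine once the right lists are written down, and the translator carries the bulk of the case work.
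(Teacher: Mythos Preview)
Your overall architecture matches the paper's proof: splitter, matcher, translator, then glue. The splitter and the translator you describe are exactly the paper's, and your compositional argument for the final $(v,S)$-prohibitor is fine.

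There is, however, a genuine gap in how you handle the \emph{irreflexive vertex} case. You write that the matcher is built from a pair $a,b$ of non-adjacent vertices ``lying in one of the obstructions (irreflexive vertex, three independent reflexive vertices, induced reflexive $C_4$ or $C_5$)''. But a single irreflexive vertex does not give you a pair of distinct non-adjacent vertices, and in general no such pair need exist in $H$: take a complete graph on the reflexive vertices with two loops removed --- then any two distinct vertices are adjacent, yet $i(H)\ge 2$ and the hypothesis of the lemma is satisfied. So your translator has nowhere to land, and the plan stalls.

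The paper resolves this by splitting into two cases \emph{before} building the matcher. If $H$ has an irreflexive vertex $i$, it constructs a $(v',w')$-matcher directly (no translator at all), via a three-way case split on whether $i$ lies in $\nh(v')\cap\nh(w')$, in $\nh(w')\setminus\nh(v')$, or outside $\nh(w')$; in each case a short explicit path on lists involving $i$, $v$, $w$, $v'$, $w'$ does the job. Only when $H$ is reflexive does the paper pick a non-adjacent pair $a,b$ from one of the three reflexive obstructions, build an $(a,b)$-matcher, and then use translators to convert it into a $(v',w')$-matcher. You should restructure your proof the same way: treat the irreflexive case separately and build the matcher on $\{v',w'\}$ directly there.
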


Let us postpone the proof of \cref{lem:vd-gadget} and first prove \cref{thm:vd-main-intro}, assuming \cref{lem:vd-gadget}.
Suppose $S$ is an incomparable set in $H$ of size at least 2, and for every $v \in S$ we are given a $(v,S)$-prohibitor $\calF_v=(F_v,L,(x_v,y_v))$ with base cost $\alpha_v$.
An \emph{$S$-prohibitor} is a binary gadget $\calF=(F,L,(x,y))$, where graph $F$ obtained by identifying all vertices $x_v$ into a vertex $x$ and all vertices $y_v$ into a vertex $y$. Note that portals of each $\calF_v$ are non-adjacent, so this identification does not introduce any multiple edges. Furthermore, the lists of all portals are the same, i.e., $S$.
The base cost of $\calF$ is $\alpha := \sum_{v \in S} \alpha_v$.
The following properties of the $S$-prohibitor follow directly from the properties of $(v,S)$-prohibitors.
\begin{myenumerate}[(P1)]
\item $L(x)=L(y) = S$,
\item for all  $(a,b) \in (S \cup \{\del\})^2 \setminus \bigcup_{v \in S}(v,v)$ it holds that $\vdcount(\calF \to H, (a,b)) = \alpha$, \label{prop:proh1}
\item for all $v \in S$ it holds that $\vdcount(\calF \to H, (v,v)) > \alpha$.\label{prop:proh2}
\end{myenumerate}

Now we are ready to prove \cref{thm:vd-main-intro}.

\begin{proof}[Proof of \cref{thm:vd-main-intro}.]
Again, we will prove the lower bound for the decision variant, where we are given an integer $k$ and we ask whether the optimum solution is of size at most $k$.
Let $q := i(H)$ and let $S=\{v_1,v_2,\ldots,v_q\}$ be an incomparable set in $H$ of size $q$.

We reduce from \coloringVD{q}, whose hardness was shown in \cref{thm:vd-coloring-intro}.
Note that if $q=1$, then $H$ cannot contain a three-element set with private or co-private neighbors, so $H$ contains an irreflexive vertex $z$  and the result follows immediately from \cref{thm:vd-coloring-intro}.
So from now on assume that $q \geq 2$ and consider an instance $(G,k)$ of \coloringVD{q} given with a \core{\sigma}{\delta} $Q$.

\paragraph{Construction of $(G',L',k')$.}
We build an equivalent instance $(G',L',k')$ of \LHomVD($H$) as follows.
We start the construction by introducing all vertices of $G$ to $G'$.
The list $L'$ of $x \in V(G)$ is set to $L'(x) = S$.
Let $\calF=(F,L',(x',y'))$ be the $S$-prohibitor obtained by combining $(v,S)$-prohibitors given by \cref{lem:vd-gadget} for all $v \in S$.
For each edge $xy$ of $G$, we introduce a copy $\calF_{xy}$ of $\calF$ and identify $x$ with $x'$ and $y$ with $y'$.
This completes the construction of $G'$.
Finally, we set $k' := k + \alpha \cdot |E(G)|$, where $\alpha$ is the base cost of the $S$-prohibitor.

\paragraph{Equivalence of instances.}
Now let us argue that $(G',L',k')$ is a yes-instance of \LHomVD($H$) if and only if $(G,k)$ is a yes-instance of \coloringVD{q}.
First, suppose that $(G,k)$ is a yes-instance of \coloringVD{q}, i.e., there exists a set $X \subseteq V(G)$ of size at most $k$ and a proper coloring $\phi$ of $G - X$.
Each vertex $x \in V(G') \setminus X$ is mapped to $v_i$, where $i = \phi(x)$.
Now consider an edge $xy$ of $G$ and the corresponding copy $\calF_{xy}$ of the $S$-prohibitor $\bigl( F_{xy},L',(x',y') \bigr)$ introduced to $G'$.
We observe that either at least one of $x'=x,y'=y$ is in $X$, or $x',y'$ are mapped to distinct vertices from $S$.
Thus, by property (P\ref{prop:proh1}), there exists a set $X_{xy} \subseteq V(F_{xy})$ of size $\alpha$, such that the mapping of $\{x',y'\} \setminus X$ can be extended to a list homomorphism from $F_{xy} - X_{xy}$ to $H$.
Summing up, by deleting the set $X' :=  X + \sum_{xy \in E(G)} X_{xy}$, we obtain a graph that admits a list homomorphism to $H$. The total number of deleted vertices is $\abs{X'} = |X| + \sum_{xy \in E(G)} |X_{xy}| \leq k + \alpha \cdot |E(G)| = k'$.

For the other direction, suppose there exists a set $X' \subseteq V(G')$ of size at most $k'$,
such that the graph $G' - X'$ admits a homomorphism $\psi$ to $H$, respecting the lists $L'$.
Consider  $xy \in E(G)$ and let $X'_{xy} := X' \cap (V(F_{xy}) \setminus \{x',y'\})$, where $\calF_{xy}=(F_{xy},L',(x',y'))$ is the $S$-prohibitor introduced for $xy$.
We claim that without loss of generality we can assume that $|X'_{xy}| = \alpha$.
Indeed, the properties (P\ref{prop:proh1}), (P\ref{prop:proh2}) imply that $|X'_{xy}| \geq \alpha$.
Suppose now that $X'_{xy} > \alpha$. If $\{x',y'\} \cap X' \neq \emptyset$ or $\psi(x') \neq \psi(y')$,
then by property (P\ref{prop:proh1}) we can substitute $X'_{xy}$ with a smaller subset of $V(F_{xy}) \setminus \{x',y'\}$ with the same properties.
Thus we can assume that $x',y' \notin X'$ and $\psi(x') = \psi(y')$.
Define $\tilde{X}'$ to be the set obtained from $X'$ by including $x'$ and substituting the vertices of $X'_{xy}$ with the subset of $V(F_{xy}) \setminus \{x',y'\}$ given by the property (P\ref{prop:proh1}).
Note that $|\tilde{X}'| \leq |X'| \leq k'$ and by (P\ref{prop:proh1}), $\psi$ can be modified into a homomorphism from $G' - \tilde{X}'$ to $H$, respecting the list $L'$.

Now define $X := X' \cap V(G)$. We note that $|X| = |X'| - \alpha \cdot |E(G)| \leq k$.
It is straightforward to verify that $\phi : V(G) \setminus X \to [q]$ defined as $\phi(x) = i$ where $\psi(x) = v_i$ is a proper coloring of $G - X$.

\paragraph{Structure of $G'$.}
Let $h$ be the number of vertices  in the $S$-prohibitor $\mathcal{F} = \bigl(F,L',(x',y')\bigr)$; note that its number of vertices is upper-bounded by a function of $\abs{H}$.
Consider a \core{\sigma}{\delta} $Q$ of $G$; recall that the vertices of $Q$ are also vertices of $G'$.
We observe that $Q$ is a \core{\sigma'}{\delta} of $G'$ for some $\sigma'$ depending only of $\sigma,\delta$, and $H$.

Indeed, consider a component $C'$ of $G'-Q$.
Note that $C'$ is either equal to a copy $\mathcal{F}_{xy}$ of $\mathcal{F}$ without the portal vertices, for an edge $xy$ inside $Q$; or it corresponds to some component $C$ of $G - Q$, where all edges (including those from $C$ to $Q$) are replaced by copies of $\mathcal{F}$. 
In the first case, $C'$ has at most $h$ vertices.
In the second case, $C'$ has at most $\sigma$ vertices from $G$, at most $\binom{\sigma}{2}\cdot h$ vertices from the gadgets with both portals in $C'$, and at most $\delta \cdot h$ vertices from the gadgets with one portal in $Q$ and the other in $C'$.
Thus, $C'$ has at most $\sigma' := \sigma + \binom{\sigma}{2} \cdot h + \delta \cdot h$ vertices.
In both cases, a component of $G'-Q$ is adjacent to at most $\max(2,\delta)$ vertices of $Q$.

\medskip
Furthermore, note that $G'$ has $\abs{V(G)} + |E(G)| \cdot h = |V(G)|^{\bigO(1)}$ vertices.
Now the lower bound follows directly from \cref{thm:vd-coloring-intro}.
\end{proof}

\subsubsection{Constructing prohibitors}

Now we are left with proving \cref{lem:vd-gadget}.

\lemprohibitor*
\begin{proof}
Let $H$ be a graph, $S$ be an incomparable set of size at least two, and let $v \in S$.
Note that since $S$ is incomparable, every $u \in S \setminus \{v\}$ has a neighbor in $V(H) \setminus \nh(v)$.
Pick arbitrary $w \in S \setminus \{v\}$. Let $v' \in \nh(v) \setminus \nh(w)$ and $w' \in \nh(w) \setminus \nh(v)$.

We will construct a $(v,S)$-prohibitor in several steps.
The intermediate gadgets depend on the choice of $S,v,v',w,w'$.
However, as these sets and vertices are fixed throughout the proof,
we will not indicate this in the notation in order to keep it simple.

\paragraph{Step 1. Constructing a splitter.} 
The first building block is the gadget called a \emph{splitter},
which is a four-vertex path $\calA = S -V(H)  \setminus \nh(v) -\{v\} -\{v',w'\}$.
It is straightforward to verify the following properties.
\begin{myenumerate}
\item The base cost of $\calA$ is 1.
\item $\vdcount(\calA \to H, (a,b))=1$ if and only if $(a,b) \in \{ (v,v') \} \cup \left \{ (u,v'), (u,w') ~|~ u \in S \setminus \{v\} \right \}$.
\end{myenumerate}

\medskip
The next gadget is called a \emph{matcher}.
For two vertices $p,q \in V(H)$, a \emph{$(p,q)$-matcher} is a binary gadget $\calM_{p,q}=(M_{p,q},L,(y_1,y_2))$ with base cost $\alpha$,
where $M_{p,q}$ is a path with endvertices $y_1,y_2$ such that $L(y_1)=L(y_2)=\{p,q\}$, and the following properties are satisfied.
\begin{myenumerate}
\item If $(a,b) \notin \{ (p,p), (q,q) \}$, then $\vdcount(\calM_{p,q} \to H, (a,b)) = \alpha$,
\item $\vdcount(\calM_{p,q} \to H, (p,p)) > \alpha$.
\end{myenumerate}
Note that the value of $\vdcount(\calM_{p,q} \to H, (q,q))$ might be either equal to or larger than $\alpha$.

\paragraph{\boldmath  Step 2. Constructing a $(v',w')$-matcher.}
We aim to construct a $(v',w')$-matcher.
Its construction of the matcher depends on the type of ``hard substructure'' present in $H$.

\medskip
\paragraph{\boldmath  Case I: $H$ contains an irreflexive vertex.}
First, let us consider the case that $H$ has an irreflexive vertex $i$.
We distinguish three subcases.
\begin{description}
\item[Case (a): $i \in \nh(w') \cap \nh(v')$.] Then the matcher is $\{v',w'\} - \{w,i\}-\{i\}-$ $\{i\}-\{w,i\}-\{v',w'\}$ and its base cost is 1.
\item[Case (b): $i \in \nh(w') \setminus \nh(v')$.] Then the matcher is $\{v',w'\}-$ $\{i\}-\{i\}-$ $\{v',w'\}$ and its base cost is 1.
\item[Case (c): $i \notin \nh(w')$.] Then the matcher is $\{v',w'\}-\{v,w\}-\{w'\}-$ $\{i\}-\{i\}-$ $\{w'\}-\{v,w\}-\{v',w'\}$ and its base cost is 2.
\end{description}

\medskip
\paragraph{\boldmath  Case II: $H$ is a reflexive graph.}
Let $a,b$ be two non-adjacent vertices of $H$. Note that such vertices always exist by our assumption on $H$.
Later we will explain how exactly we choose $a$ and $b$.

\subparagraph{\boldmath  Step 2.1. Constructing an $(a,b)$-matcher}
Now we show that we can appropriately choose non-adjacent $a,b$ so that we can construct an $(a,b)$-matcher $\calM_{a,b}$. 
The construction depends on the type of the ``hard'' substructure contained in $H$.
\begin{description}
\item[Case (a): $H$ has three pairwise non-adjacent vertices.]
Let these vertices be $a,b,c$.
Then we define $\calM_{a,b}=\{a,b\}-\{b\}-$ $\{c\}-\{a\}- \{b\}-\{a,b\}$ and its base cost is 2.
\item[Case (b): $H$ has an induced cycle $C$ with four vertices.]
Let the consecutive vertices of $C$ be $a,x,b,y$.
Then we define $\calM_{a,b}=\{a,b\}-\{x,b\}-$ $\{x,y\}-\{b,y\}-\{a,b\}$ and its base cost is 0.
\item[Case (c): $H$ has an induced cycle $C$ with five vertices.]
Let the consecutive vertices of $C$ be $a,x,b,y,z$.
Then we define $\calM_{a,b}=\{a,b\}-\{x,y\}-\{b,z\}-\{a,b\}$ and its base cost is 0.
\end{description}
The properties of the gadgets can again be easily verified.

\subparagraph{Step 2.2. Constructing a translator.} 
The next gadget is called a \emph{$(v',w')\to(a,b)$-translator} $\calB_{(v',w')\to(a,b)}=(B_{(v',w')\to(a,b)},L,(z_1,z_2))$.
The graph $B_{(v',w')\to(a,b)}$ is a path whose endvertices are $z_1,z_2$ and have lists $L(z_1)=\{v',w'\}$ and $L(z_2)=\{a,b\}$.
Denoting the base cost of $\calB$ by $\alpha$, the intended behavior of the translator is as follows.
\begin{myenumerate}
\item If $(x,y) \notin \{ (v',b), (w',a) \}$, then $\vdcount(\calB_{(v',w')\to(a,b)} \to H, (x,y)) = \alpha$,
\item $\vdcount(\calB_{(v',w')\to(a,b)} \to H, (v',b)) > \alpha$.
\end{myenumerate}
Again,  we do not restrict the value of $\vdcount(\calB \to H, (w',a))$, we only know that it is at least $\alpha$.

We claim that we can always construct either a $(v',w')\to(a,b)$-translator or a $(v',w')\to(b,a)$-translator.
The construction is split into four subcases.
\begin{description}
\item[Case (a): $v' \in \nh(a)$ and $w \in \nh(b)$.] Then we define $\calB_{(v',w')\to(a,b)}= \{v',w'\}-\{a,w\}-\{a,b\}$ and its base cost is 0.
\item[Case (b): $v' \notin \nh(a)$ and $v' \in \nh(b)$.] Then we define $\calB_{(v',w')\to(b,a)}= \{v',w'\}-\{w\}-\{v'\}-\{a,b\}$ and its base cost is 1.
\item[Case (c): $v' \notin \nh(a)$ and $v' \notin \nh(b)$.] Then we define $\calB_{(v',w')\to(a,b)}= \{v',w'\}-\{w\}-\{v'\}-\{b\}-\{a\}-\{a,b\}$ and its base cost is 2.
\item[Case (d): $w \notin \nh(b)$.] Then we define $\calB_{(v',w')\to(b,a)}= \{v',w'\}-\{w\}-\{b\}-\{a,b\}$ and its base cost is 1.
\end{description}

\subparagraph{\boldmath  Step 2.3. Combining translators and an $(a,b)$-matcher into a $(v',w')$-matcher.}
Now we are ready to construct a $(v',w')$-matcher in Case II, see also \cref{fig:vd-matcher}.
Suppose that in Step 2.2. we obtain a $(v',w')\to(a,b)$-translator; in the other case swap the roles of $a$ and $b$ in the description below.
We create two copies $\calB_{(v',w')\to(a,b)}=(B_{(v',w')\to(a,b)},L,(z_1,z_2))$ and $\calB_{(v',w')\to(a,b)}'= (B_{(v',w')\to(a,b)}',L,(z_1',z_2'))$ of the $(v',w')\to(a,b)$-translator and the $(a,b)$-matcher $\calM_{a,b}=(M_{a,b},L,(y,y'))$.
We identify the vertices as follows: $z_2=y$ and $z_2' = y'$.
Note that we always identify vertices with equal lists.
The portals of the constructed gadget are $z_1,z_1'$ and its base cost is the sum of base costs of $\calB_{(v',w')\to(a,b)}, \calB_{(v',w')\to(a,b)}'$, and $\calM_{a,b}$.
The properties of $\calB_{(v',w')\to(a,b)},\calB_{(v',w')\to(a,b)}'$, and $\calM_{a,b}$ imply that the constructed gadget is indeed a $(v',w')$-matcher.

\paragraph{\boldmath  Step 3. Construction a $(v,S)$-prohibitor.}
In the final step we use splitters and a $(v',w')$-matcher to construct a $(v,S)$-prohibitor.
The construction is analogous as the one is Step 2.3, see also \cref{fig:vd-prohibitor}.
We create two copies $\calA=(A,L,(x_1,x_2))$ and $\calA'=(A',L,(x_1',x_2'))$ of the splitter and the $(v',w')$-matcher $\calM_{v',w'}=(M_{v',w'},L,(y,y'))$.
We identify the vertices as follows: $x_2=y$ and $x_2' = y'$. The portals of the constructed gadget are $x_1,x_1'$ and its base cost is the sum of base costs of $\calA, \calA'$, and $\calM_{v',w'}$.
The correctness of the construction follows directly from the properties of the splitter and the $(v',w')$-matcher.
\end{proof}

\begin{figure}
\begin{subfigure}[b]{0.5\textwidth}
\centering
\includegraphics[scale=1,page=4]{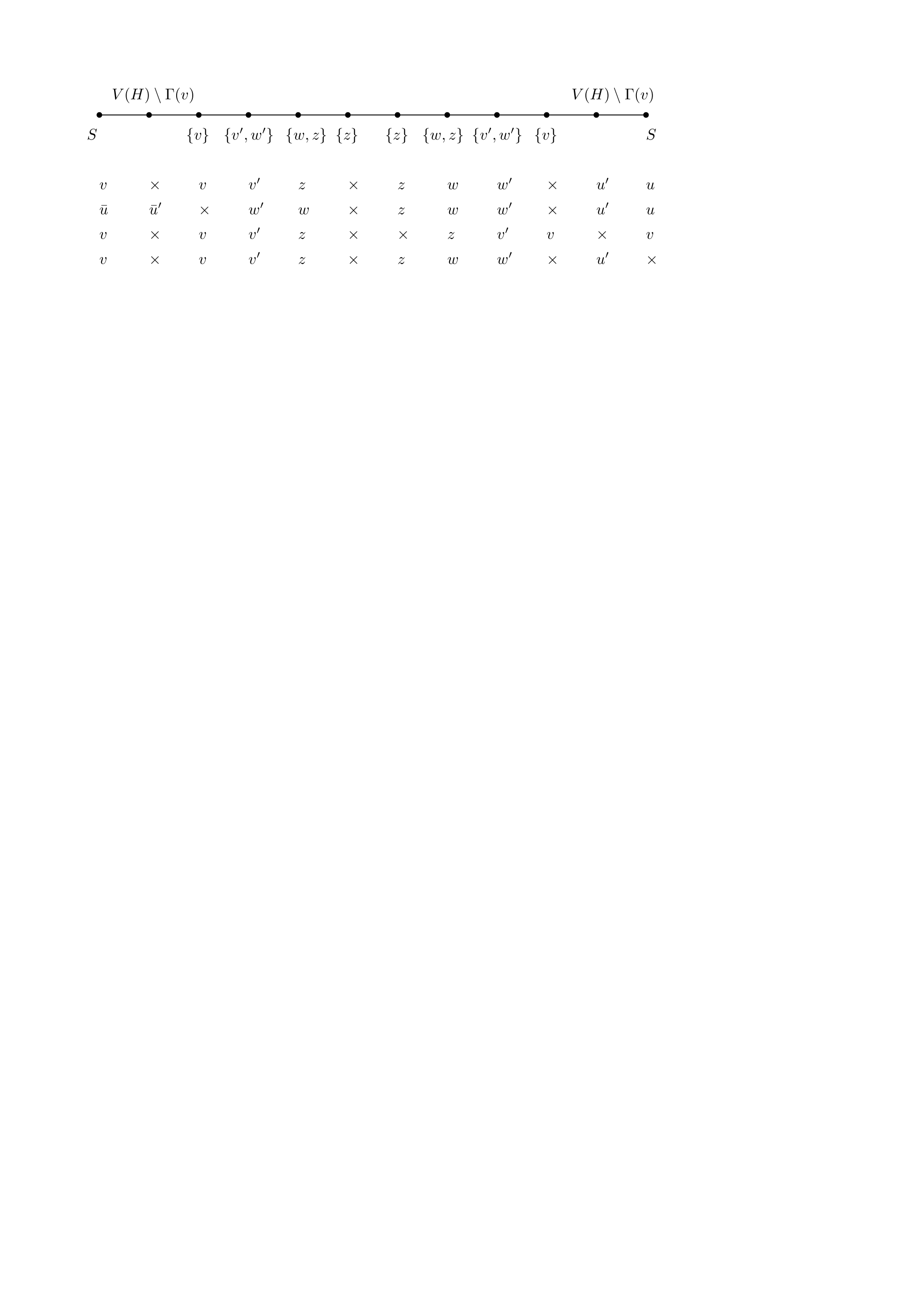}
\caption{A $(v',w')$-matcher constructed in Case II.}
\label{fig:vd-matcher}
\end{subfigure}
\begin{subfigure}[b]{0.5\textwidth}
\centering
\includegraphics[scale=1,page=5]{figures/vd-gadgets}
\caption{The construction of a $(v,S)$-prohibitor.}
\label{fig:vd-prohibitor}
\end{subfigure}
\caption{High-level construction of the gadgets in \cref{lem:vd-gadget}.
Ovals denote the portals of the gadgets with their lists given inside.
Solid lines show which transitions between colors of portals are always possible,
and dashed lines show transitions that might or might not be possible.}
\end{figure}

\section{\boldmath Complexity dichotomy for \LHomED($H$)}
\label{sec:dicho-ed}
In this section we prove the complexity dichotomy for \LHomED($H$).

\thmeddicho* 

The proof is more involved than its counterpart for \LHomVD($H$), i.e., \cref{thm:vd-dicho-intro}.
Again, we will start with providing an alternative characterization of graphs considered in \cref{thm:ed-dicho-intro}.

Recall that \LHom($H$) is polynomial-time-solvable if $H$ is a \emph{bi-arc graph} and \NP-hard otherwise~\cite{DBLP:journals/jgt/FederHH03}.
Usually bi-arc graphs are defined in terms of certain geometric representation. We introduce it later, in \cref{sec:geometric}, and now show an equivalent characterization.

For a graph $H=(V,E)$, by $H^*$ we denote its \emph{associate bipartite graph}, i.e.,
the graph with vertex set $\bigcup_{v \in V} \{ v', v'' \}$ and edge set $\bigcup_{uv \in E} \{ u'v'', v',u''\}$.
We denote the bipartition of $H^*$ by $(V',V'')$, where $V' = \{v' ~|~ v \in V\}$ and $V'' = \{v'' ~|~ v \in V\}$.

Let $H$ be a bipartite graph with bipartition $(U,V)$ and let $k \geq 1$.
An \emph{special edge asteroid} is a sequence $u_0v_0,u_1v_1,\ldots,u_{2k}v_{2k}$ of edges in $H$,
where for all $i \in \{0,\ldots,2k\}$ it holds that $u_i \in U$ and $v_i \in V$,
and for each $i \in \{0,\ldots,2k\}$ (subscripts are computed modulo $2k$) there exists a path $P_{i,i+1}$ with endvertices $u_i$ and $u_{i+1}$,
such that
\begin{myenumerate}[({SEA}1.)]
\item for each $i \in \{0,\ldots,2k\}$ there is no edge between $\{u_{i+k},v_{i+k}\}$ and $\{v_i,v_{i+1}\} \cup V(P_{i,i+1})$,\label{prop:sea1}
\item there is no edge between $\{ u_0,v_0\}$ and $\{v_1,v_2,\ldots,v_{2k}\} \cup V(P_{1,2}) \cup \ldots, V(P_{2k-1,2k})$. \label{prop:sea2}
\end{myenumerate}

\begin{thm}[Feder, Hell, Huang~\cite{DBLP:journals/combinatorica/FederHH99,DBLP:journals/jgt/FederHH03}]\label{thm:biarc-characterizations}
Let $H$ be a graph. The following statements are equivalent.
\begin{myenumerate}
\item $H$ is a bi-arc graph.
\item $H^*$ is the complement of a circular-arc graph.
\item $H^*$ does not contain an induced cycle with at least six vertices or a special edge asteroid.
\end{myenumerate}
\end{thm}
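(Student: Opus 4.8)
The plan is to prove the two equivalences $(1)\Leftrightarrow(2)$ and $(2)\Leftrightarrow(3)$ separately: the first is a translation between two kinds of geometric representations, and the second is a forbidden-configuration analysis of circular-arc graphs.

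For $(1)\Leftrightarrow(2)$ I would simply unfold the geometric definition of bi-arc graphs from \cref{sec:geometric}. A bi-arc representation of $H$ places, on a circle with two fixed distinct points, a pair of arcs $(N_x,S_x)$ for each $x\in V(H)$ — all the $N$-arcs passing through one of the fixed points and all the $S$-arcs through the other — such that $xy\in E(H)$ holds exactly when $N_x\cap S_y=\emptyset$ and $N_y\cap S_x=\emptyset$. Given such a representation, assign to the vertex $v'$ of $H^*$ the arc $N_v$ and to $v''$ the arc $S_v$. Two of these arcs intersect precisely when the corresponding vertices of $H^*$ are \emph{non}-adjacent: this is immediate between a $v'$ and a $w''$ from the defining condition, and within $\{v':v\in V(H)\}$ (resp. within $\{v'':v\in V(H)\}$) all arcs pairwise meet because they share a common point, matching the fact that $V'$ and $V''$ are independent in $H^*$, i.e.\ cliques in $\overline{H^*}$. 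Hence the $N$- and $S$-arcs form a circular-arc representation of $\overline{H^*}$, so $H^*$ is the complement of a circular-arc graph; running the argument backwards — starting from an arbitrary circular-arc representation of $\overline{H^*}$ and normalizing it so that the arcs of $V'$ share one point and those of $V''$ another — recovers a bi-arc representation of $H$. This step is essentially bookkeeping once the two definitions are in place.

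For $(2)\Leftrightarrow(3)$ the goal is to characterize, by forbidden induced configurations, which associate bipartite graphs $H^*$ have a circular-arc complement. The easy direction — that an induced cycle on at least six vertices, or a special edge asteroid, in $H^*$ prevents $\overline{H^*}$ from being a circular-arc graph — is proved by contradiction: for a long induced cycle one invokes the classical fact that $\overline{C_k}$ for $k\ge 6$ (e.g.\ the triangular prism $\overline{C_6}$) is not a circular-arc graph, and, since circular-arc graphs are closed under induced subgraphs, $\overline{H^*}$ cannot be one either; for a special edge asteroid one argues that, in any purported circular-arc representation of $\overline{H^*}$, the arcs for the $u_i,v_i$ together with the connecting paths $P_{i,i+1}$ would have to be arranged cyclically around the circle in a way incompatible with the non-adjacencies forced by conditions (SEA1) and (SEA2) — the circular analogue of how an asteroidal triple obstructs an interval representation. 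The harder direction is the converse: assuming $H^*$ has neither configuration, one must \emph{build} a circular-arc representation of $\overline{H^*}$. This proceeds by a structural analysis of $H^*$ (absence of long induced cycles makes each ``side'' essentially interval-like) followed by gluing local interval representations into one global circular representation, where the absence of edge asteroids is exactly what guarantees that the gluing closes up consistently around the circle.

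I expect the main obstacle to be precisely this converse construction inside $(2)\Leftrightarrow(3)$: extracting an explicit circular-arc representation from the mere \emph{absence} of the two obstruction types is where one must control the global geometry, and it is the technically deepest ingredient — everything else is definition-chasing (for $(1)\Leftrightarrow(2)$) or routine obstruction-to-contradiction arguments. Since for the present paper this statement is needed only as a black box — to bridge the geometric bi-arc criterion for \LHom($H$) and the purely combinatorial conditions of \cref{lem:biarc-intro} — I would in fact not reprove it here but simply cite \cite{DBLP:journals/combinatorica/FederHH99,DBLP:journals/jgt/FederHH03}.
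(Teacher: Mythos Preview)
Your conclusion matches the paper exactly: this theorem is stated with citations to \cite{DBLP:journals/combinatorica/FederHH99,DBLP:journals/jgt/FederHH03} and used as a black box, with no proof given. Your sketch of how the equivalences would be established is reasonable background, but as you correctly note, the paper does not reprove the result.
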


Now let us proceed to the alternative characterization of graphs considered in \cref{thm:ed-dicho-intro}.

\biarc*
\begin{proof}
Let $H=(V,E)$ be a graph with no irreflexive edge nor a three-vertex set with private neighbors.

\noindent(\ref{it:biarc1}.$\to$\ref{it:biarc2}.)
For contradiction suppose that $H$ does not contain a three-vertex set with co-private neighbors and is not a bi-arc graph. By \cref{thm:biarc-characterizations} this means that $H^*$ contains either an induced cycle with at least six vertices or a special edge asteroid. Notice that any induced cycle with at least 10 vertices contains a special edge asteroid,
this leaves us with three cases to consider.

First, suppose that $H^*$ contains an induced cycle with consecutive vertices $v_0',v_1'',v_2',v_3'',v_4',v_5''$,
where $v_0,\ldots,v_5$ are some (non-necessarily distinct) vertices of $H$.
From the definition of $H^*$ it follows that the set $\{v_0,v_2,v_4\}$ has co-private neighbors $\{v_1,v_3,v_5\}$.
Indeed, each $v_i$ for $i \in \{1,3,5\}$ is adjacent to $v_{i-1}$ and $v_{i+1}$ but not to $v_{i+3}$ (subscripts computed modulo 6). This contradicts the assumption on $H$.

Now suppose that $H^*$ contains an induced cycle with consecutive vertices $v_0',v_1'',\ldots,v_7''$,
where $v_0,\ldots,v_7$ are some (non-necessarily distinct) vertices of $H$.
As $H$ does not contain an irreflexive edge and $v_3v_4 \in E(H)$, at least one of $v_3,v_4$ is a reflexive vertex.
By symmetry suppose that $v_3$ is reflexive. We observe that the set $\{v_0,v_3,v_6\}$ has private neighbors:
$v_1$ is a private neighbor of $v_0$, $v_5$ is a private neighbor of $v_6$, and $v_3$ is a private neighbor of $v_3$.
This contradicts the assumption on $H$.

Finally, suppose that $H^*$ contains a special edge asteroid $u_0'v_0'',u_1'v_1'',\ldots,u_{2k}'v_{2k}''$,
where $k \geq 1$ and each $u_i,v_i$ for $i \in \{0,\ldots,2k\}$ is a vertex of $H$.
The definition of $H^*$ asserts that for each $i \in \{0,1,k\}$ we have $u_iv_i \in E(H)$
Furthermore, by property (SEA\ref{prop:sea1}.) we know that $u_0v_k, u_1v_k, v_0u_k,v_1u_k \notin E(H)$.
Finally, by property (SEA\ref{prop:sea2}.) we know that $u_0v_1,v_0u_1 \notin E(H)$.
Thus the set $\{u_0,u_1,u_k\}$ has private neighbors $v_0,v_1,v_k$, a contradiction.

\smallskip
\noindent(\ref{it:biarc2}.$\to$\ref{it:biarc1}.) For the other direction, suppose that $H$ is bi-arc and contains a three element set $\{v_1,v_2,v_3\}$  with co-private neighbors. Let $\bar v_1$ be the co-private neighbor of $v_2,v_3$, and analogously define $\bar v_2$ and $\bar v_3$. It is straightforward to observe that the consecutive vertices $v_1',\bar v_3'', v_2', \bar v_1'', v_3', \bar v_2''$ induce a six-cycle in $H^*$. By \cref{thm:biarc-characterizations} this contradicts $H$ being a bi-arc graph.
\end{proof}

\subsection{\NP-hardness}\label{sec:EDdichoHardness}

Now we are ready to prove the hardness part of \cref{thm:ed-dicho-intro}.
By \cref{lem:biarc-intro} it is sufficient to show the following.

\begin{thm}
Let $H$ be a fixed graph, such that one of the following holds
\begin{myenumerate}
\item $H$ contains an irreflexive edge,
\item $H$ contains a three-element set with private neighbors,
\item $H$ is not a bi-arc graph.
\end{myenumerate}
Then \LHomED($H$) is \NP-hard.
\end{thm}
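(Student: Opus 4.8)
The plan is to establish \NP-hardness by reducing from well-known \NP-hard problems, handling each of the three cases separately. For case (3), where $H$ is not a bi-arc graph, there is nothing new to prove: Feder, Hell, and Huang~\cite{DBLP:journals/jgt/FederHH03} showed that even the decision problem \LHom($H$) is \NP-hard, and \LHom($H$) is simply the special case of \LHomED($H$) where we ask whether the optimum number of deleted edges is $0$. So I would dispatch this case with a single sentence. For case (1), where $H$ contains an irreflexive edge $z_1 z_2$ (so $z_1, z_2$ are distinct irreflexive vertices with $z_1 z_2 \in E(H)$), the plan is to reduce from \textsc{Vertex Cover}: given a graph $G$, set $L(v) = \{z_1, z_2\}$ for every $v \in V(G)$. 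Since $z_1$ and $z_2$ are irreflexive, a list homomorphism of $G \setminus X$ to $H$ corresponds exactly to a proper $2$-coloring of $G \setminus X$, i.e., making $G \setminus X$ bipartite by deleting edges; but actually, to mimic \textsc{Vertex Cover}, the cleaner route is to observe that deleting the minimum number of edges so that the image lands in an edge of $K_2$ is \textsc{Max Cut}, which is already \NP-hard. I would present case (1) as a reduction from \textsc{Max Cut}, using the observation already stated in the introduction that \maxcut\ is expressed by \LHomED($K_2$) with all lists $V(H)$; formally, restricting the instance to lists $\{z_1, z_2\}$ makes \LHomED($H$) contain \maxcut\ as a special case.

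For case (2), where $H$ contains a three-element set $S = \{v_1, v_2, v_3\}$ with private neighbors $v_1', v_2', v_3'$ (so $v_i v_j' \in E(H) \iff i = j$), the plan is to reduce from \textsc{Edge Multiway Cut} with three terminals, whose \NP-hardness is classical (and whose parameterized hardness appears later as \cref{thm:emwaylb}). Given an instance of \textsc{Edge Multiway Cut} with terminals $t_1, t_2, t_3$, I would build $(G', L')$ as follows: keep $G$, assign $L'(v) = \{v_1, v_2, v_3\}$ to each non-terminal $v$, and for each terminal $t_i$ attach a private-neighbor gadget that forces $t_i$ to be mapped to $v_i$. The cleanest way to force $t_i \mapsto v_i$ is to add a pendant vertex $t_i'$ with list $\{v_i'\}$ joined to $t_i$ by a bundle of $n$ parallel paths of length two (so that deleting those gadget-edges is never worthwhile); since $v_j'$ is adjacent only to $v_i$ when $j = i$, a cost-efficient solution maps $t_i$ to $v_i$. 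Then a set of edges $X \subseteq E(G)$ makes each component of $G \setminus X$ contain at most one terminal if and only if the corresponding assignment maps vertices so that every surviving edge $uv$ has $f(u) f(v) \in E(H)$, using the fact that $v_i v_j \notin E(H)$ is \emph{not} guaranteed — so here I must be careful. The key subtlety: $\{v_1, v_2, v_3\}$ having private neighbors does not say the $v_i$ are pairwise non-adjacent. To get around this, instead of reducing from \textsc{Edge Multiway Cut} directly on the $v_i$, I would reduce from \textsc{Edge Multiway Cut} but encode each original vertex $v$ by a pendant structure that forces $v$ to behave like a choice among $\{v_1', v_2', v_3'\}$ — wait, those may be adjacent too.

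The genuinely robust approach for case (2), and the one I would actually carry out, is to first construct a \emph{$\NEQ$-gadget} or rather a gadget that behaves like three pairwise non-adjacent reflexive vertices, using the private-neighbor structure — but this is essentially the content of the much harder treewidth lower bound (\cref{lem:movingMain} and the indicator construction) and is overkill for mere \NP-hardness. For \NP-hardness alone, the simplest correct route is: reduce from \textsc{Max $3$-Cut} / \textsc{$3$-Coloring}-deletion, i.e., from \coloringED{3}. Take a $3$-coloring-edge-deletion instance $G$; build $G'$ by replacing each edge $uv$ of $G$ with a gadget whose portals have list $S = \{v_1,v_2,v_3\}$ and which penalizes exactly the assignments $f(u) = f(v)$. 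Such a gadget is a path $L(u) = \{v_1,v_2,v_3\} - \{v_1',v_2',v_3'\} - \cdots$; in fact a two-edge path $u - p - v$ with $L(p) = \{v_1', v_2', v_3'\}$ has the property that $f(u) = f(v) = v_i$ can be realized with one deleted edge (delete $pv$ or $up$), whereas $f(u) \neq f(v)$ can be realized with the same cost — this does not separate the two. I would therefore use the standard trick of bundling: replace each edge by $N$ parallel copies of a small gadget that costs $c$ when the endpoints agree and $c - 1$ when they disagree, but since private neighbors only give "adjacent iff equal index" this asymmetry is exactly what a single path edge through a private neighbor provides once we route it correctly. The main obstacle, and the step I expect to spend the most care on, is precisely pinning down this edge-gadget for case (2): proving that the private-neighbor triple yields a gadget separating "equal" from "not equal" at the level of edge-deletion cost, and verifying the arithmetic so that the total deletion budget translates faithfully. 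Cases (1) and (3) are routine; case (2) is where the real (though still modest) work lies, and I would handle it by an explicit small gadget plus a budget-counting argument, taking care that the private neighbors $v_i'$ need not be distinct from the $v_j$ nor pairwise non-adjacent.
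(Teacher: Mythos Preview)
Your handling of cases (1) and (3) matches the paper exactly: \textsc{Max Cut} with lists $\{z_1,z_2\}$ for an irreflexive edge, and the Feder--Hell--Huang result for non-bi-arc $H$.

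For case (2) you correctly identify the obstacle (the $v_i$ may be pairwise adjacent, so putting list $\{v_1,v_2,v_3\}$ on every vertex does not encode \textsc{Edge Multiway Cut}), but you then talk yourself out of the right fix. Your two-edge path $u - p - v$ with $L(u)=L(v)=\{v_1,v_2,v_3\}$ and $L(p)=\{v_1',v_2',v_3'\}$ is miscalculated: if $f(u)=f(v)=v_i$ then taking $f(p)=v_i'$ satisfies \emph{both} edges, so the cost is $0$, not $1$; whereas if $f(u)=v_i$, $f(v)=v_j$ with $i\neq j$, then for any $f(p)=v_k'$ exactly one of the two edges is violated, so the cost is $1$. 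Thus this gadget \emph{does} separate ``equal'' from ``unequal'', and it is precisely the reduction from \textsc{Edge Multiway Cut}, not from \coloringED{3}: minimizing deletions means minimizing index changes, i.e., separating the three terminals.

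The paper's proof is exactly this. It subdivides every edge of the multiway-cut instance $G$ once (which preserves the optimum), obtaining a bipartite $G'$ with bipartition $(U,V)$ and terminals in $U$; it then sets $L(t_i)=\{x_i\}$, $L(w)=\{x_1,x_2,x_3\}$ for $w\in U\setminus\{t_1,t_2,t_3\}$, and $L(w)=\{x_1',x_2',x_3'\}$ for $w\in V$. No pendant bundles or budget tricks are needed: the single-element terminal lists force the terminals, and the alternating lists on the bipartition make every surviving edge encode ``same index''. Your plan is salvageable once you correct the cost computation and recognize that the gadget points toward \textsc{Edge Multiway Cut} rather than \coloringED{3}.
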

\begin{proof}
Consider the cases.
\begin{enumerate}
\item If $H$ contains an irreflexive edge $xy$, then solving \LHomED($H$) on an instance where all lists are $\{x,y\}$ is equivalent to solving \textsc{Max Cut}.
\item Suppose that $H$ contains a set $\{x_1,x_2,x_3\}$ with private neighbors. For each $i \in [3]$, the private neighbor of $x_i$ is denoted by $x_i'$. 
We reduce from \textsc{Edge Multiway Cut} with three terminals.
Consider an instance $G$ with terminals $t_1,t_2,t_3$.
Notice that by subdividing each edge of $G$ once we obtain a bipartite graph $G'$ which is an equivalent instance of 3-\textsc{Edge Multiway Cut} (still with terminals $t_1,t_2,t_3$). Let the bipartition of $G'$ be $(U,V)$, where $t_1,t_2,t_3 \in U$. For each vertex $w$ of $G'$ we define $L(w) \in V(H)$ as follows
\[
L(w) =
\begin{cases}
\{x_i\} & \text{ if } w = t_i \text{ for } i \in [3],\\
\{x_1,x_2,x_3\} & \text{ if } w \in U \setminus \{t_1,t_2,t_3\},\\
\{x_1',x_2',x_3'\} & \text{ if } w \in V.
\end{cases}
\]
It is straightforward to verify that solving $(G',L)$ as an instance of \LHomED($H$) is equivalent to solving $G'$ (and thus $G$) as an instance of \textsc{Edge Multiway Cut} with three terminals.
\item If $H$ it not a bi-arc graph, then already solving \LHom($H$) (or equivalently, solving \LHomED($H$) with deletion budget 0) is \NP-hard~\cite{DBLP:journals/jgt/FederHH03}.\qedhere
\end{enumerate}
\end{proof}

\subsection{Polynomial-time algorithm}
Before we proceed to the proof of the algorithmic statement in \cref{thm:ed-dicho-intro},
let us carefully analyze the structure of the graphs $H$ that are considered here.
We use the formulation based on \cref{lem:biarc-intro}, i.e., throughout this section we assume that $H$ is a bi-arc graph
that does not contain an irreflexive edge nor a three-element set with private neighbors.
For brevity, we do not repeat this in assumption in the lemmas.

\subsubsection{\boldmath Analyzing a geometric representation of $H$}\label{sec:geometric}

As we mentioned before, bi-arc graphs are typically defined in terms of a certain geometric representation.
Let $C$ be a circle, the top-most point being $p$, the bottom-most point being $q$.
A \emph{bi-arc} $(N, S)$ is an ordered pair of arcs on $C$ such that the \emph{north arc} $N$ contains $p$ but not $q$, and the \emph{south arc} $S$ contains $q$ but not $p$.
A graph $H$ is a \emph{bi-arc graph} if there exists a bijection mapping every vertex $x \in V(H)$ to a bi-arc $\{(N_x,S_x) \mid x\in V(H)\}$
such that, for any $x,y\in V(H)$, we have
	\begin{myitemize}
		\item If $xy\in E(H)$ then neither $N_x$ intersects $S_y$ nor $N_y$ intersects $S_x$.
		\item If $xy\notin E(H)$ then both $N_x$ intersects $S_y$ and $N_y$ intersects $S_x$.
	\end{myitemize}
Let $\calN=\{N_x \mid x\in V(H)\}$ and $\calS=\{S_x \mid x\in V(H)\}$. We refer to the tuple $(\calN, \calS)$ as a \emph{geometric representation} of $H$.	
For a subset $X$ of the vertices of $H$, we set $\calN_X\coloneqq \{N_x \mid x\in X\}$ and $\calS_X\coloneqq \{S_x \mid x\in X\}$.

In what follows we assume that $H=(V,E)$ is a bi-arc graph with geometric representation $(\calN, \calS)$ defined on the circle $C$ with top-most point $p$ and the bottom-most point $q$. Let us introduce some more notation, see also \cref{fig:geometric}.

Let $C_L$ and $C_R$ be the left and the right half of the circle $C$ (both including $p$ and $q$), respectively.
We consider the points in $C_L$ and $C_R$ ordered clockwise starting from $q$ and $p$, respectively.
Thus, e.g., if $x,y \in C_L$ and $x$ is closer to $q$ than $y$ (along $C_L$), we write $x < y$.
For $N\in \calN$, the \emph{lower bound} $\ell(N)$ of $N$ is the smallest of the points in $N\cap C_L$, i.e., it is the endpoint of $N$ that is on the left half of $C$. Correspondingly, its \emph{upper bound} $u(N)$ is the endpoint of $N$ in $C_R$.
Similarly, for $S\in \calS$, $\ell(S)$ is the endpoint of $S$ in $C_R$, and $u(S)$ is its endpoint in $C_L$.
The orders on the two halves of $C$ allow us to compare the same bounds for arcs from the same set of the geometric representation, say $\ell(S_1)$ and $\ell(S_2)$ for $S_1, S_2\in \calS$. They also allow us to compare opposed bounds for arcs from different sets. For example, for $S\in \calS, N\in \calN$, both $\ell(N)$ and $u(S)$ are in $C_L$, and if, say, $\ell(N)<u(S)$ then $N$ and $S$ intersect on the left half of $C$.

\begin{figure}
\centering
\includegraphics[scale=1]{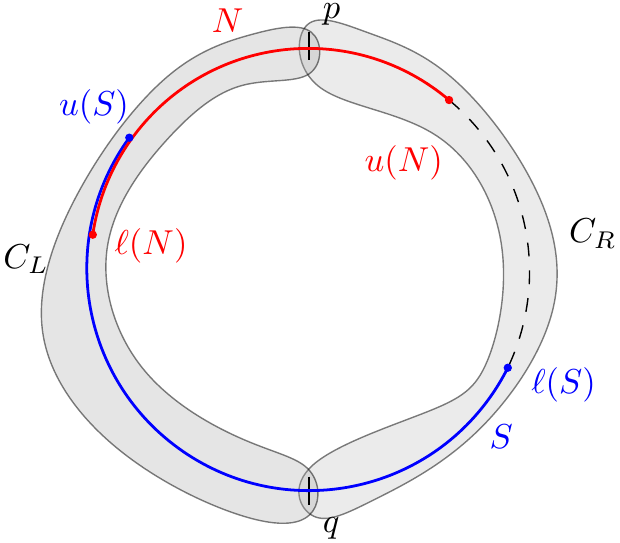}
\caption{Geometric representation of $H$. Here $\ell(N) < u(S)$ and $u(N) < \ell(S)$.}
\label{fig:geometric}
\end{figure}

\paragraph{Orderings of arcs and vertices.}
First, we observe that from $(\calN,\calS)$ we can deduce certain natural ordering of incomparable sets in $H$.

Let $x,y\in V(H)$. We write $N_x  \precarc N_y$ if $\ell(N_x)< \ell(N_y)$ and $u(N_x)< u(N_y)$. Analogously, $S_x \precarc S_y$ if $\ell(S_x)<\ell(S_y)$ and $u(S_x)< u(S_y)$.
From this we define a binary relation $\prec$ on $V(H)$  as follows:
\begin{myitemize}
	\item If both $x$ and $y$ are irreflexive then $x \prec y$ if and only if $S_x \precarc S_y$.
	\item Otherwise, $x \prec y$ if and only if $N_x \precarc N_y$.
\end{myitemize}

Observe that it is possible that two arcs $N_x,N_y$ (resp. $S_x,S_y$) are incomparable with respect to $\precarc$.
However, this means that the vertices $x$ and $y$ are comparable.

\begin{lem}\label{lem:linorderedArcs}
If $X$ is an incomparable set in $H$, then $\precarc$ induces a strict total order on $\calN_X$ and on $\calS_X$.
\end{lem}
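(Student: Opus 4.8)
The plan is to show that $\precarc$ is a strict total order on $\calN_X$; the argument for $\calS_X$ is entirely symmetric (just swap the roles of $\calN$ and $\calS$, of $p$ and $q$, and of the two halves of $C$). A strict total order requires three properties: irreflexivity, transitivity, and totality (trichotomy). Irreflexivity is immediate from the definition, since $\ell(N_x) < \ell(N_x)$ is false. Transitivity of $\precarc$ follows coordinatewise: if $\ell(N_x) < \ell(N_y) < \ell(N_z)$ on the totally ordered half-circle $C_L$ and likewise $u(N_x) < u(N_y) < u(N_z)$ on $C_R$, then $\ell(N_x) < \ell(N_z)$ and $u(N_x) < u(N_z)$, so $N_x \precarc N_z$. (One should note that $\ell$ and $u$ take distinct values on distinct arcs from the geometric representation — or if not, a degenerate tie can be broken by an infinitesimal perturbation of the representation, which preserves all intersection/non-intersection relations; this is the kind of routine technicality I would dispatch in one sentence.)

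The crux is totality: I must show that for any two distinct $x, y \in X$, either $N_x \precarc N_y$ or $N_y \precarc N_x$. Suppose not. Since $\ell(N_x) \neq \ell(N_y)$ and $u(N_x) \neq u(N_y)$, the only way to fail both is that the two coordinate comparisons disagree: without loss of generality $\ell(N_x) < \ell(N_y)$ but $u(N_y) < u(N_x)$. Geometrically this means the arc $N_y$ is contained in the arc $N_x$ (both contain the top point $p$, $N_y$ starts later on the left and ends earlier on the right). The key claim is then: $N_y \subseteq N_x$ forces $x$ and $y$ to be comparable in $H$, contradicting that $X$ is an incomparable set. This is exactly the remark made in the paragraph just before the lemma ("it is possible that two arcs $N_x, N_y$ are incomparable with respect to $\precarc$; however, this means that the vertices $x$ and $y$ are comparable"), so the work is to make that remark precise.

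To prove the claim, I would argue directly from the defining property of the geometric representation that $N_y \subseteq N_x$ implies $\nh(y) \subseteq \nh(x)$ or a closely related domination. Take any vertex $z$ with $zy \in E(H)$ (including possibly $z=y$ if $y$ is reflexive). Then neither $N_z$ meets $S_y$ nor $N_y$ meets $S_z$; I want to conclude the same with $x$ in place of $y$. Since $N_x \supseteq N_y$, it is conceivable that the larger arc $N_x$ now meets $S_z$, so a pure containment of north arcs is not quite enough, and I will also need to use the structure of the relevant subcase of the dichotomy (bi-arc graph with no irreflexive edge and no $3$-set with private neighbours) — in particular whether $x$ and $y$ are reflexive or irreflexive, which governs whether $\prec$ was defined via $\calN$ or $\calS$ in the first place. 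Concretely: the case $N_y \precarc N_x$ fails only when we are in the branch of the definition where $\prec$ uses north arcs, i.e.\ at least one of $x, y$ is reflexive; I would split on the reflexivity of $x$ and $y$ and in each case check, using the bi-arc intersection conditions and $N_y \subseteq N_x$ (together with the corresponding containment one gets for the south arcs, since a reflexive vertex $v$ has $N_v \cap S_v = \emptyset$), that $\nh(y) \subseteq \nh(x)$, contradicting incomparability of $X$. The main obstacle is precisely this last step: translating "$N_y$ nested inside $N_x$" into genuine neighbourhood domination requires carefully pinning down where the south arcs of the two vertices sit relative to each other, and this is where the standing assumptions on $H$ and the precise definition of $\prec$ (the reflexive/irreflexive split) have to be used rather than just the raw geometry. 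Everything else is bookkeeping on two linearly ordered half-circles.
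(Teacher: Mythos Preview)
Your plan is correct up through the reduction to the nesting case $N_y\subseteq N_x$, but the final step is overcomplicated because you are missing one observation and, as a consequence, you chase the domination in the wrong direction.

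The bi-arc definition gives \emph{both} implications: $xy\in E(H)$ implies $N_x\cap S_y=\emptyset$, and $xy\notin E(H)$ implies $N_x\cap S_y\neq\emptyset$. Hence $xy\in E(H)$ if and only if $N_x\cap S_y=\emptyset$; a single intersection test already decides adjacency. With this in hand the argument is one line: if $N_y\subseteq N_x$ and $z\in\nh(x)$, then $S_z\cap N_x=\emptyset$, so a fortiori $S_z\cap N_y=\emptyset$, so $z\in\nh(y)$. Thus $\nh(x)\subseteq\nh(y)$, and $x,y$ are comparable, contradicting that $X$ is incomparable. This is exactly the paper's proof.

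You instead tried to show $\nh(y)\subseteq\nh(x)$, and that direction genuinely does not follow from $N_y\subseteq N_x$ alone: a south arc $S_z$ disjoint from the smaller $N_y$ may well hit the larger $N_x$. That is why you found yourself reaching for the standing hypotheses on $H$ (no irreflexive edge, no $3$-set with private neighbours) and for the reflexive/irreflexive case split in the definition of $\prec$. None of that is needed here; this lemma is about $\precarc$ on arcs and uses only that $H$ is a bi-arc graph. The relation $\prec$ on vertices, with its reflexivity split, enters only in the subsequent lemmas.
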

\begin{proof}
Let us consider $(\calN_X,\precarc)$, as the argument for $\calS_X$ is analogous.
It is straightforward to see that $\precarc$ is irreflexive and transitive.
It remains to show that it every pair of elements of $\calN_X$ is comparable in $\precarc$.
For contradiction, suppose this is not the case that there are $x,y\in X$ such that neither $N_x \precarc N_y$ nor $N_y \precarc N_x$.
As $p \in N_x \cap N_y$, we observe that one of the arcs $N_x,N_y$ must be contained in the other.
By symmetry suppose $N_y\subseteq N_x$.
Then every arc in $\calS_X$ that is disjoint with $N_x$ is also disjoint from $N_y$.
Consequently, every neighbor of $x$ is also a neighbor of $y$.
This means that $y$ dominates $x$, a contradiction to the fact that $X$ is incomparable.
\end{proof}

We  say that a set $X \subseteq V$ is \emph{reflexive} (resp. \emph{irreflexive}) if every vertex is $X$ is reflexive (resp. irreflexive).
If $X$ is neither reflexive nor irreflexive, it is \emph{mixed}.
Let us point out that if $\{x,y\}$ is incomparable and reflexive, or incomparable and irreflexive, then $N_x \precarc N_y$ does not necessarily imply $S_x \precarc S_y$ (or vice versa). 
However, if $\{x,y\}$ is incomparable and mixed then the orders coincide.

\begin{lem}\label{lem:mixedpo}
Let $\{x,y\}$ be a mixed incomparable set in $H$.
Then $N_x \precarc N_y$ if and only if $S_x \precarc S_y$.
\end{lem}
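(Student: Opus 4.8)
The plan is to use the geometric representation directly, exploiting the fact that the pair $\{x,y\}$ is \emph{mixed}: without loss of generality $x$ is irreflexive and $y$ is reflexive. The key structural input is what adjacency means for such a pair. Since $x$ is irreflexive, $xx\notin E(H)$, so $N_x$ intersects $S_x$; since $y$ is reflexive, $yy\in E(H)$, so $N_y$ does not intersect $S_y$. I would first record the adjacency/non-adjacency translations we need: for two vertices $a,b$, $ab\notin E(H)$ iff $N_a$ meets $S_b$ \emph{and} $N_b$ meets $S_a$, while $ab\in E(H)$ iff neither happens.

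First I would set up coordinates on the two halves $C_L,C_R$ using the orders already fixed in the paper (clockwise from $q$ on $C_L$, clockwise from $p$ on $C_R$), and write $\ell(\cdot),u(\cdot)$ for the endpoints as defined. The goal is the equivalence $N_x\precarc N_y \iff S_x\precarc S_y$. By symmetry (swapping the roles of $x$ and $y$, which swaps $\precarc$ to its reverse on both sides simultaneously) it suffices to prove one implication, say $N_x\precarc N_y \Rightarrow S_x\precarc S_y$; and by Lemma~\ref{lem:linorderedArcs} applied to the incomparable set $\{x,y\}$, both $\precarc$ on $\{N_x,N_y\}$ and on $\{S_x,S_y\}$ are strict total orders, so "not $N_x\precarc N_y$'' means $N_y\precarc N_x$, and likewise for $S$. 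Hence it is enough to derive a contradiction from assuming $N_x\precarc N_y$ together with $S_y\precarc S_x$.

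The heart of the argument is then a short case analysis on how $N_x,N_y$ sit relative to $S_x,S_y$ on each half of the circle, using the three adjacency facts: $x$ irreflexive ($N_x$ meets $S_x$), $y$ reflexive ($N_y$ disjoint from $S_y$), and the (un)known status of the edge $xy$. I would argue as follows. On $C_L$: $\ell(N_x)<\ell(N_y)$ by hypothesis, and since $N_x$ meets $S_x$ on $C_L$ we have $\ell(N_x)<u(S_x)$; I want to locate $u(S_y)$. Since $N_y$ is disjoint from $S_y$, in particular they do not meet on $C_L$, so $u(S_y)<\ell(N_y)$. On $C_R$: $u(N_x)<u(N_y)$ by hypothesis; the assumption $S_y\precarc S_x$ gives $\ell(S_y)<\ell(S_x)$ and $u(S_y)<u(S_x)$. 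Now I play the two possibilities for the edge $xy$ against each other. If $xy\in E(H)$ then $N_x$ is disjoint from $S_y$ and $N_y$ disjoint from $S_x$; combining "$N_x$ disjoint from $S_y$ on $C_L$'' ($u(S_y)<\ell(N_x)$) with $u(S_y)<\ell(N_y)$ and the location of $S_x$ forces $\ell(N_y)<u(S_x)$ or a symmetric statement to collide with $N_y$ disjoint from $S_x$, contradicting one of the disjointness relations on $C_R$ once we also use $u(N_y)$'s position relative to $\ell(S_x)$. If instead $xy\notin E(H)$ then $N_y$ meets $S_x$ and $N_x$ meets $S_y$; "$N_x$ meets $S_y$'' on $C_L$ gives $\ell(N_x)<u(S_y)$, but we already have $u(S_y)<\ell(N_y)$, so $\ell(N_x)<\ell(N_y)$ is consistent — the contradiction instead comes on $C_R$, where $u(S_y)<u(S_x)$ and $u(N_x)<u(N_y)$ together with the requirement that $N_y$ meet $S_x$ but (from $y$ reflexive) $N_y$ miss $S_y$ pin down $u(S_y)$ between the two $N$'s in a way incompatible with $S_y\precarc S_x$. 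In each branch one derives an impossible chain of inequalities on one of the two half-circles.

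The main obstacle I anticipate is bookkeeping: there are several endpoints ($\ell(N_x),u(N_x),\ell(N_y),u(N_y),\ell(S_x),u(S_x),\ell(S_y),u(S_y)$) on two circular arcs with their own orientations, and it is easy to mis-state which endpoint lies in $C_L$ versus $C_R$ or to conflate "meets'' with "contains''. I would therefore first prove a clean auxiliary observation — something like: for an irreflexive vertex $a$ and reflexive vertex $b$ with $\{a,b\}$ incomparable, exactly one of the two "crossing'' patterns occurs on each half — and invoke Lemma~\ref{lem:linorderedArcs} to keep the orders total. I expect that once the six or eight inequalities are laid out carefully for the single implication, a one-paragraph contradiction closes each of the two edge-status cases, and the reverse implication follows by the symmetry remark. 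A picture (in the spirit of \cref{fig:geometric}) would make the case analysis transparent, but the formal proof is just chasing these inequalities.
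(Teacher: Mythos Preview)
Your approach has a genuine gap: you try to derive a contradiction using only the four arcs $N_x,N_y,S_x,S_y$, but those constraints alone are consistent. Concretely, take $x$ irreflexive, $y$ reflexive, $xy\in E(H)$, and place on $C_L$ (in order from $q$) the points $u(S_y)<\ell(N_x)<u(S_x)<\ell(N_y)$, and on $C_R$ (in order from $p$) the points $u(N_x)<u(N_y)<\ell(S_y)<\ell(S_x)$. One checks directly that $N_x\precarc N_y$, $S_y\precarc S_x$, $N_x$ meets $S_x$, $N_y$ misses $S_y$, and both cross pairs $N_x,S_y$ and $N_y,S_x$ are disjoint. A similar arrangement works when $xy\notin E(H)$. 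So no amount of inequality-chasing on just these four arcs can close either branch of your case split; incomparability is used in your argument only through Lemma~\ref{lem:linorderedArcs}, which merely says the orders are total and contributes nothing further.

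What is missing is precisely what the paper supplies: a \emph{third} vertex. Incomparability gives a private neighbor, say $y'\in\nh(y)\setminus\nh(x)$, and then the standing hypothesis that $H$ has no irreflexive edge forces $y'$ to be reflexive (since $y$ is irreflexive in the relevant case). The extra arcs $N_{y'},S_{y'}$, together with the reflexivity of both $x$ and $y'$, pin down on which half-circle the various intersections occur and produce the needed chain $\ell(N_y)>u(S_{y'})>\ell(N_x)$. Your sketch never invokes a private neighbor nor the no-irreflexive-edge assumption, and without at least one of them the lemma simply does not follow from the data you are manipulating.
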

\begin{proof}
Suppose $S_x \precarc S_y$ (the other direction is analogous).
Consider the case that $x$ is reflexive and $y$ is irreflexive.
Since $\{x,y\}$ is incomparable there is a vertex $y' \in \nh(y)\setminus \nh(x)$. Thus, $N_{y'}$ intersects $S_x$ but not $S_y$,
and $S_{y'}$ intersects $N_x$ but not $N_y$.
Then $N_{y'}$ intersects $S_x$ in $C_R$ since $S_x  \precarc S_y$.
Since $x$ is reflexive and hence $N_x$ and $S_x$ do not intersect it follows that $u(N_{x}) < u(N_{y'})$.
Since $H$ contains no irreflexive edge it follows that $y'$ is also reflexive and consequently $N_{y'}$ and $S_{y'}$ do not intersect,
which means $\ell(S_{y'})>u(N_{y'})>u(N_x)$.
Thus, $S_{y'}$ and $N_x$ intersect in $C_L$, i.e.~$u(S_{y'})>\ell(N_x)$.
Since $N_y$ does not intersect $S_{y'}$ it follows that $\ell(N_y)>u(S_{y'})> \ell(N_x)$,
which implies $N_x \precarc N_y$ by \cref{lem:linorderedArcs}, as required.  
	
The case where $x$ is irreflexive and $y$ is reflexive can be treated analogously using a neighbor $x' \in \nh(x)\setminus \nh(y)$.
\end{proof}

The next lemma is an analog of \cref{lem:linorderedArcs} for $\prec$.

\begin{lem}\label{lem:linorderedVertices}
If $X$ is an incomparable set in $H$, then $\prec$ is a strict total order on $X$.
\end{lem}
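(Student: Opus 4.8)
The plan is to verify the three defining properties of a strict total order: irreflexivity, comparability together with asymmetry, and transitivity. The first two will be immediate consequences of \cref{lem:linorderedArcs}; the real content is transitivity, where the case distinction in the definition of $\prec$ (whether one compares north arcs or south arcs) forces a short case analysis that is held together by \cref{lem:mixedpo}.

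For irreflexivity, observe that $\precarc$ is irreflexive, so $x\not\prec x$ for every $x\in X$, independent of whether $x$ is reflexive. For comparability and asymmetry, fix distinct $x,y\in X$. If both $x$ and $y$ are irreflexive, then by \cref{lem:linorderedArcs} exactly one of $S_x\precarc S_y$ and $S_y\precarc S_x$ holds, hence exactly one of $x\prec y$ and $y\prec x$. Otherwise at least one of $x,y$ is reflexive, and \cref{lem:linorderedArcs} applied to $\calN_X$ again yields exactly one of $x\prec y$ and $y\prec x$. So $\prec$ is total and asymmetric on $X$.

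For transitivity, suppose $x\prec y$ and $y\prec z$; we may assume $x,y,z$ are pairwise distinct, since otherwise one of the two hypotheses already fails by irreflexivity or asymmetry. The key observation is that, by \cref{lem:mixedpo}, whenever a pair among $\{x,y,z\}$ is mixed, the comparisons $N_a\precarc N_b$ and $S_a\precarc S_b$ agree, so for such a pair we may freely translate between the north-arc order and the south-arc order; moreover, by \cref{lem:linorderedArcs}, both $\precarc$ restricted to $\calN_X$ and $\precarc$ restricted to $\calS_X$ are transitive. We distinguish cases according to the reflexivity of the middle vertex $y$. \emph{If $y$ is reflexive}, then both $x\prec y$ and $y\prec z$ are, by definition, comparisons of north arcs, so $N_x\precarc N_y\precarc N_z$ and hence $N_x\precarc N_z$; if $\{x,z\}$ is not irreflexive this is exactly $x\prec z$, whereas if $\{x,z\}$ is irreflexive then both $\{x,y\}$ and $\{y,z\}$ are mixed, so \cref{lem:mixedpo} gives $S_x\precarc S_y\precarc S_z$, whence $S_x\precarc S_z$ and again $x\prec z$. \emph{If $y$ is irreflexive}, we further split on $x$ and $z$. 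If both $x$ and $z$ are reflexive, then $x\prec y$ and $y\prec z$ compare north arcs, $N_x\precarc N_y\precarc N_z$, and $x\prec z$ holds as well. If both $x$ and $z$ are irreflexive, then $x,y,z$ are all irreflexive, every comparison is of south arcs, and $S_x\precarc S_y\precarc S_z$ gives $x\prec z$. Finally, if exactly one of $x,z$ is reflexive --- say $x$ reflexive and $z$ irreflexive, the other case being symmetric --- then $x\prec y$ reads $N_x\precarc N_y$ while $y\prec z$ reads $S_y\precarc S_z$; since $\{x,y\}$ is mixed, \cref{lem:mixedpo} turns $N_x\precarc N_y$ into $S_x\precarc S_y$, so $S_x\precarc S_z$; and since $\{x,z\}$ is mixed, \cref{lem:mixedpo} turns this into $N_x\precarc N_z$, which is exactly $x\prec z$ because $x$ is reflexive.

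No single step is hard; the only thing requiring care is matching each comparison to the correct arc-order according to the reflexivity pattern of its endpoints, and keeping track of exactly when \cref{lem:mixedpo} may be invoked (precisely for mixed pairs). I expect the ``exactly one of $x,z$ reflexive'' subcase to be the most error-prone part, since it is the only place where the argument must detour through the south order and back to the north order.
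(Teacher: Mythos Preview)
Your proof is correct and follows essentially the same approach as the paper: irreflexivity is immediate, comparability comes from \cref{lem:linorderedArcs}, and transitivity is handled by noting that the homogeneous cases (all reflexive or all irreflexive) are direct while the mixed cases are dispatched via \cref{lem:mixedpo}. The paper's own proof is much terser (it simply says the mixed case ``follows in a straightforward way from \cref{lem:mixedpo}''), whereas you have carefully spelled out each reflexivity pattern; this added detail is welcome and your case analysis is accurate, including the subcase where the argument must pass from north arcs to south arcs and back.
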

\begin{proof}
It is straightforward to see that $\prec$ is irreflexive.
By \cref{lem:linorderedArcs} we observe that every pair of elements is comparable by $\prec$.
It remains to show transitivity. Let $x,y,z\in X$ with $x \prec y$ and $y \prec z$. If all of $x,y,z$ are reflexive or if all of them are irreflexive then $x \prec z$ follows directly. 
Otherwise, $x \prec z$ follows in a straightforward way from \cref{lem:mixedpo}.
\end{proof}

As a consequence of \cref{lem:linorderedVertices} we observe that there are few possible ways how a vertex may interact with an incomparable set.

\paragraph{Characterizing types of interactions.}
Let $X$ be an incomparable set in $H$. As $H$ has no irreflexive edges, by \cref{lem:linorderedVertices} we know that $\prec$ is a strict total order on $X$. Let the elements of $X$ be $x_1 \prec x_2 \prec \dots \prec x_{|X|}$.
A \emph{prefix} of $X$ is any subset of $X$ of the form $\{x_1,x_2,\ldots,x_i\}$ for $i \in [|X|-1]$.
Similarly, a \emph{suffix} of $X$ is any subset of the form $\{x_i,x_{i+1},\ldots,x_t\}$ for $i \in \{2,\ldots,|X|\}$.
Note that a prefix or suffix is always a non-empty proper subset of $X$.
A \emph{prefix} (resp. \emph{suffix}) of $\calN_X$ or $\calS_X$ is the set of appropriate arcs of a prefix (resp. suffix) of $X$.

\begin{defn}[Interaction types]\label{def:types}
Let $y$ be any vertex of $H$. We define some \emph{interaction types} of $y$ with $X$ as follows.
\begin{description}
\item[p-interaction] $y$ has a \emph{p-interaction} (or \emph{p-interacts}) with $X$ if $\nh(y)\cap X$ is a prefix of $X$.
In terms of the geometric representation this means that the arcs $N_y$ and $S_y$ intersect a suffix of $\calS_X$ and $\calN_X$, respectively.
\item[s-interaction] $y$ has an \emph{s-interaction} (or \emph{s-interacts}) with $X$ if $\nh(y)\cap X$ is a suffix of $X$.
This means that the arcs $N_y$ and $S_y$ intersect a prefix of $\calS_X$ and $\calN_X$, respectively.
\item[0-interaction] $y$ has a \emph{0-interaction} (or \emph{0-interacts}) with $X$ if $\nh(y)\cap X=\emptyset$.
This means that the arcs $N_y$ and $S_y$ intersect all arcs in $\calS_X$ and $\calN_X$, respectively.
\item[1-interaction] $y$ has a \emph{1-interaction} (or \emph{1-interacts}) with $X$ if $X \subseteq \nh(y)$.
This means that the arcs $N_y$ and $S_y$ intersect no arcs in $\calS_X$ and $\calN_X$, respectively.
\end{description}
\end{defn}

The following observation is a simple consequence of the properties of the arcs in $(\calN,\calS)$.

\begin{lem}\label{lem:simpleNonAdjacency}
Let $X$ be an incomparable set in $H$ and let $x_1, x_2, x_3\in X$ with $x_1 \prec x_2 \prec x_3$.
Then $\nh(x_1) \cap \nh(x_3) \subseteq \nh(x_2)$.
\end{lem}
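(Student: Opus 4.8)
The plan is to argue entirely in the geometric representation $(\calN,\calS)$, using the orderings from Lemmas~\ref{lem:linorderedArcs} and~\ref{lem:linorderedVertices}. Since $X$ is incomparable, $\precarc$ is a strict total order on $\calN_X$ and on $\calS_X$, and $\prec$ is a strict total order on $X$ with $x_1\prec x_2\prec x_3$. Let $y\in\nh(x_1)\cap\nh(x_3)$; I want to show $y\in\nh(x_2)$, i.e.\ that neither $N_y$ intersects $S_{x_2}$ nor $N_{x_2}$ intersects $S_y$. The idea is that $y$ being adjacent to both the $\prec$-smallest and the $\prec$-largest of the three vertices ``traps'' it into being adjacent to the middle one as well, because the relevant endpoints of $N_{x_2}$ (resp.\ $S_{x_2}$) lie between those of $x_1$ and $x_3$.

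First I would establish the correct monotonicity of the arc endpoints along $X$. By definition of $\prec$ (and using that $H$ has no irreflexive edge, so no two vertices in $X$ are both irreflexive — actually $X$ may still be mixed, so I must be slightly careful), Lemma~\ref{lem:mixedpo} together with the definition of $\prec$ gives that on the index set $\{1,2,3\}$ the orders on $\calN_X$ and $\calS_X$ are consistent: $N_{x_1}\precarc N_{x_2}\precarc N_{x_3}$ and $S_{x_1}\precarc S_{x_2}\precarc S_{x_3}$. (For a purely reflexive or purely irreflexive triple this needs the observation that $\prec$ is \emph{defined} via one of the two arc-orders and Lemma~\ref{lem:linorderedVertices} already packages the transitivity; for a mixed triple Lemma~\ref{lem:mixedpo} forces the two arc-orders to coincide pairwise, hence globally on three elements.) In particular $\ell(N_{x_1})<\ell(N_{x_2})<\ell(N_{x_3})$, $u(N_{x_1})<u(N_{x_2})<u(N_{x_3})$, and likewise for the $S$-arcs.

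Next, the core case analysis: suppose for contradiction that $y\notin\nh(x_2)$, so $N_y$ intersects $S_{x_2}$ and $N_{x_2}$ intersects $S_y$; at least one of these two intersections happens on the left half $C_L$ and the corresponding statement for the other arc on $C_R$, but it is cleanest to track one intersection point at a time. Say $N_y$ and $S_{x_2}$ intersect. The intersection is witnessed on $C_L$ (where $N_y$ has endpoint $\ell(N_y)$ going ``down'' from $p$ and $S_{x_2}$ has endpoint $u(S_{x_2})$) or on $C_R$ ($u(N_y)$ vs.\ $\ell(S_{x_2})$). In the $C_L$ case, $N_y$ reaching down to meet $S_{x_2}$ means $\ell(N_y)<u(S_{x_2})<u(S_{x_3})$ (the last inequality by the monotonicity just established), so $N_y$ also meets $S_{x_3}$ on $C_L$, contradicting $y\in\nh(x_3)$. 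In the $C_R$ case, $u(N_y)>\ell(S_{x_2})>\ell(S_{x_1})$, so $N_y$ meets $S_{x_1}$ on $C_R$, contradicting $y\in\nh(x_1)$. The symmetric argument handles the case that $N_{x_2}$ intersects $S_y$: an intersection on $C_L$ gives $\ell(N_{x_2})<u(S_y)$, but also $\ell(N_{x_1})<\ell(N_{x_2})<u(S_y)$, so $S_y$ meets $N_{x_1}$ and $y\notin\nh(x_1)$, contradiction; an intersection on $C_R$ gives $u(S_y)$... rather $\ell(S_y)<u(N_{x_2})<u(N_{x_3})$, so $S_y$ meets $N_{x_3}$ and $y\notin\nh(x_3)$, contradiction. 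Either way we reach a contradiction, so $y\in\nh(x_2)$.

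The main obstacle I anticipate is getting the direction of every inequality right and making sure the monotonicity step is legitimate when the triple is mixed or entirely reflexive/irreflexive — that is exactly the subtlety that Lemmas~\ref{lem:mixedpo} and~\ref{lem:linorderedVertices} were set up to handle, and I would invoke them explicitly rather than re-deriving anything. A secondary care point is the bookkeeping of ``left half'' vs.\ ``right half'' and the clockwise orders defined on $C_L$ (from $q$) and $C_R$ (from $p$): it is worth stating once, at the start, exactly what ``$N$ meets $S$ on $C_L$'' translates to in terms of $\ell(N)$ and $u(S)$ (namely $\ell(N)<u(S)$), and then the four sub-cases above are each a one-line consequence of monotonicity. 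No heavy computation is needed; the whole proof is a short picture-chase once the setup is fixed.
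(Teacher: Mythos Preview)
Your overall strategy is fine, but the monotonicity step has a real gap. You assert that $N_{x_1}\precarc N_{x_2}\precarc N_{x_3}$ \emph{and} $S_{x_1}\precarc S_{x_2}\precarc S_{x_3}$ always hold, and then your four sub-cases use both orderings. This is false in general: the paper explicitly notes (just before Lemma~\ref{lem:mixedpo}) that for a purely reflexive or purely irreflexive incomparable pair, $N_x\precarc N_y$ need not imply $S_x\precarc S_y$. So for an all-reflexive triple you only get the $N$-ordering (that is how $\prec$ is defined), and your sub-cases that rely on $u(S_{x_2})<u(S_{x_3})$ or $\ell(S_{x_1})<\ell(S_{x_2})$ are unjustified there; symmetrically for an all-irreflexive triple. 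Lemma~\ref{lem:mixedpo} only helps on mixed pairs, and in an all-reflexive triple there are none.

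The fix is the case split the paper uses. If at least two of $x_1,x_2,x_3$ are reflexive, every pair among them has a reflexive vertex, so $\prec$ gives $N_{x_1}\precarc N_{x_2}\precarc N_{x_3}$; then $N_{x_2}\subseteq N_{x_1}\cup N_{x_3}$, and since $S_y$ misses both $N_{x_1}$ and $N_{x_3}$ it misses $N_{x_2}$, so $y\in\nh(x_2)$ by the bi-arc definition. If at most one is reflexive, Lemma~\ref{lem:mixedpo} applied to the mixed pairs yields $S_{x_1}\precarc S_{x_2}\precarc S_{x_3}$ and the symmetric argument works. Note that in each case you only need \emph{one} of the two non-intersection conditions, because the bi-arc definition makes ``$N_{x_2}$ meets $S_y$'' and ``$N_y$ meets $S_{x_2}$'' equivalent; your four-sub-case contradiction argument is correct once restricted to whichever ordering you actually have, but the containment $N_{x_2}\subseteq N_{x_1}\cup N_{x_3}$ is a cleaner way to phrase it.
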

\begin{proof}
Let $y \in \nh(x_1) \cap \nh(x_3)$.
First suppose that at least two vertices in $\{x_1, x_2, x_3\}$ are reflexive.
Then we have $N_{x_1} \precarc N_{x_2} \precarc N_{x_3}$ and consequently $N_{x_2}\subseteq (N_{x_1}\cup N_{x_3})$.
Since $S_y$ intersects neither $N_{x_1}$ nor $N_{x_3}$, it cannot intersect $N_{x_2}$.
Hence $y$ and $x_2$ are adjacent.
	
Now suppose that at most one vertex in $\{x_1, x_2, x_3\}$ is reflexive.
This case can be treated analogously using the fact that now \cref{lem:mixedpo} ensures $S_{x_1} \precarc S_{x_2} \precarc S_{x_3}$.
\end{proof}

Now we show that \cref{def:types} describes all possible interactions of a vertex and an incomarable set.

\begin{lem}\label{lem:hastobetype}
Let $X$ be an incomparable set and let $y$ be a vertex in $H$. Then $y$ has of one of the four interaction types with $X$ defined in \cref{def:types}.
\end{lem}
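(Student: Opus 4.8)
Let $X$ be an incomparable set and let $y$ be a vertex in $H$. Then $y$ has one of the four interaction types (p, s, 0, 1) with $X$.

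The plan is to show that the neighborhood trace $\nh(y)\cap X$ is always an "interval at the end", i.e., either a prefix, a suffix, all of $X$, or empty, where prefix/suffix are taken with respect to the strict total order $\prec$ on $X$ guaranteed by \cref{lem:linorderedVertices}. The key structural input is \cref{lem:simpleNonAdjacency}: for any $x_1 \prec x_2 \prec x_3$ in $X$, if $y$ is adjacent to both $x_1$ and $x_3$ then it is adjacent to $x_2$. In other words, $\nh(y)\cap X$ is "convex" in the linear order $\prec$ — it is a contiguous block $\{x_i, x_{i+1}, \dots, x_j\}$. So the first step is: invoke \cref{lem:linorderedArcs}/\cref{lem:linorderedVertices} to fix the enumeration $x_1 \prec \dots \prec x_{|X|}$, then use \cref{lem:simpleNonAdjacency} to conclude that $\nh(y)\cap X$ is an interval $[x_i,x_j]$ in this order (handling the degenerate cases $|X|\le 2$ and $\nh(y)\cap X$ empty or all of $X$ separately, where the claim is immediate).

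The remaining and main step is to rule out the "middle interval" case, i.e., to show we cannot have $1 < i$ and $j < |X|$ simultaneously; if the block touches the bottom end ($i=1$) it is a prefix (p-interaction), if it touches the top end ($j=|X|$) it is a suffix (s-interaction), if it is everything we get a 1-interaction, and if empty a 0-interaction. So suppose for contradiction $x_{i-1} \prec x_i$ and $x_j \prec x_{j+1}$ both exist in $X$, with $y$ adjacent to $x_i,\dots,x_j$ but not to $x_{i-1}$ nor to $x_{j+1}$. I would translate this into the geometric representation: work with the arcs $N_y, S_y$ and the orders $\precarc$ on $\calN_X$ and $\calS_X$ from \cref{lem:linorderedArcs}. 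Adjacency of $y$ with $x$ means $N_y$ does not meet $S_x$ and $N_x$ does not meet $S_y$; non-adjacency means both intersections occur. The intuition is that "$N_y$ intersects $S_x$" is a monotone (threshold) condition along the order on $\calS_X$ — because $p\in N_y$ and all the $S_x$ avoid $p$, $N_y$ can only be "avoided" by the $S_x$'s at one end of the $\precarc$-chain — and similarly "$S_y$ intersects $N_x$" is a threshold condition along the order on $\calN_X$. Having $y$ non-adjacent to both $x_{i-1}$ (at the low end of the block) and $x_{j+1}$ (at the high end) would force $N_y$ to intersect both a low $S$ and a high $S$ in $\calS_X$, and by the nesting/threshold structure of these arcs (they all contain $q$), this would force $N_y$ to intersect some $S_{x_k}$ with $i\le k\le j$ as well — contradicting adjacency of $y$ with $x_k$. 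Making this precise is exactly where the bounds $\ell(\cdot), u(\cdot)$ on $C_L, C_R$ and the comparisons set up before \cref{lem:geometric} get used, and is the step I expect to require the most care: one must argue on the correct half of the circle and possibly split on whether the relevant $x_k$'s are reflexive or irreflexive, using \cref{lem:mixedpo} to transfer between the $N$-order and the $S$-order in the mixed case (as was done in \cref{lem:simpleNonAdjacency}).

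An alternative, and probably cleaner, route that avoids re-deriving the geometry: once \cref{lem:simpleNonAdjacency} has established that $\nh(y)\cap X$ is a $\prec$-interval, add $y$ to the picture and consider the set $X' = X \cup \{y\}$. If $y$ is incomparable with enough of $X$, one applies \cref{lem:simpleNonAdjacency} again but now with $y$ playing the role of a middle or end vertex of a slightly larger incomparable set; the "interval must touch an end" conclusion then follows because $y$ either dominates or is dominated by the vertices just outside the block (its neighborhood restricted to $X$ being contained in theirs or vice versa — which is exactly what it means for the block not to touch that end), and this domination is incompatible with the way $X$ sits inside $H$ unless the block does touch that end. Either way, the bulk of the work is the contradiction argument ruling out a strictly interior block; the prefix/suffix/empty/full case distinction at the end is purely bookkeeping against \cref{def:types}.
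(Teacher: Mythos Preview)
Your first step is correct and matches the paper: invoking \cref{lem:simpleNonAdjacency} shows that $\nh(y)\cap X$ is a $\prec$-interval. This dispatches one of the two bad patterns (adjacent to $x_1,x_3$, not to $x_2$).

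Your second step --- ruling out a strictly interior interval via the bi-arc geometry --- has a genuine gap. The lemma is \emph{false} for bi-arc graphs in general; it needs the standing assumption of this section that $H$ has no $3$-vertex set with private neighbors, and your argument never invokes it. Concretely, take $H$ to be the reflexive path on vertices $1,2,3,4,5$ (a bi-arc graph). Then $X=\{1,3,5\}$ is incomparable, and $y=3$ has $\nh(y)\cap X=\{3\}$, a strictly interior block. Your threshold/monotonicity intuition for ``$N_y$ meets $S_x$'' is also off: that condition is an \emph{OR} of a left-half and a right-half intersection, so the set of $S_x$ avoided by $N_y$ is only a $\precarc$-interval, not an end-segment; you cannot push it to a prefix/suffix from the geometry alone. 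The alternative route you sketch (adding $y$ to $X$ and arguing via domination) is too vague, and for the same reason cannot succeed without using the private-neighbors hypothesis somewhere.

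The paper's actual argument for the interior case is short and purely combinatorial. If $y$ is adjacent to $x_2$ but to neither $x_1$ nor $x_3$ (with $x_1\prec x_2\prec x_3$), then $y$ is a private neighbor of $x_2$ with respect to $\{x_1,x_2,x_3\}$. Since $X$ is incomparable, pick $z\in\nh(x_1)\setminus\nh(x_2)$ and $z'\in\nh(x_3)\setminus\nh(x_2)$; by \cref{lem:simpleNonAdjacency}, $z\notin\nh(x_3)$ and $z'\notin\nh(x_1)$. Hence $z,z'$ are private neighbors of $x_1,x_3$, so $\{x_1,x_2,x_3\}$ has private neighbors --- contradicting the assumption on $H$.
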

\begin{proof}
For contradiction, suppose the opposite.
Then there are $x_1, x_2, x_3\in X$ with $x_1 \prec  x_2 \prec x_3$ such that either i) $yx_1, yx_3 \in E(H)$ and $yx_2 \notin E(H)$, or ii) $yx_2 \in E(H)$ and $yx_1, yx_3 \notin E(H)$.
By \cref{lem:simpleNonAdjacency} we know that the first case is impossible, to let us assume that the second one occurs.  
Note that $y$ is a private neighbor of $x_2$ with respect to $\{x_1, x_2, x_3\}$.
Since $X$ is incomparable, there are vertices $z\in \nh(x_1)\setminus \nh(x_2)$ and $z'\in \nh(x_3)\setminus \nh(x_2)$.
By \cref{lem:simpleNonAdjacency}, $z \notin \nh(x_3)$ and $z' \notin \nh(x_1)$.
Thus, $z$ and $z'$ are, respectively, private neighbors of $x_1$ and $x_3$ with respect to $\{x_1, x_2, x_3\}$.
This contradicts the assumption on $H$.
\end{proof}

In the next lemmas we analyze pairwise interactions of incomparable sets.

\begin{lem}\label{lem:sameTypesameInteraction}
Let $X$ and $\{y_1, y_2\}$ be incomparable sets in $H$ such that both $y_1$ and $y_2$ have a p-interaction (resp. s-interaction) with $X$.
Then $y_1$ and $y_2$ have the same neighborhood in $X$.
\end{lem}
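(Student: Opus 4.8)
The plan is to argue by contradiction using the total order $\prec$ on $X$ together with \cref{lem:simpleNonAdjacency}. Suppose $y_1$ and $y_2$ both p-interact with $X$ but $\nh(y_1)\cap X\neq \nh(y_2)\cap X$. Writing $X$ as $x_1\prec x_2\prec\cdots\prec x_{|X|}$, the set $\nh(y_j)\cap X$ is a prefix $\{x_1,\dots,x_{k_j}\}$ for each $j$, and by assumption $k_1\neq k_2$; say $k_1<k_2$. Then $x_{k_2}$ is adjacent to $y_2$ but not to $y_1$, and $x_{k_1}$ is adjacent to both. I would now exploit the fact that $\{y_1,y_2\}$ is itself incomparable: there must be a vertex $z\in \nh(y_1)\setminus\nh(y_2)$. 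The goal is to show this forces a forbidden configuration — either a $3$-vertex set with private neighbors, contradicting the standing assumption on $H$, or a violation of the prefix structure via \cref{lem:simpleNonAdjacency}.

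Concretely, the key step is to locate, on the $X$-side and on the $\{y_1,y_2\}$-side, three vertices whose neighborhoods realize private neighbors. Since $k_1<k_2$, pick $x_a$ with $a\le k_1$ (adjacent to both $y_1,y_2$), $x_b$ with $k_1<b\le k_2$ (adjacent to $y_2$, not $y_1$); note $x_a\prec x_b$. Combined with the vertex $z\in\nh(y_1)\setminus\nh(y_2)$, one should be able to massage $\{x_b, y_1, ?\}$ or a triple among $\{y_1,y_2,z\}$-type vertices into a set with private neighbors. The cleanest route is probably: use \cref{lem:linorderedVertices} to order $\{y_1,y_2\}$ (say $y_1\prec y_2$), observe that a p-interaction means $N_{y_j}$ meets a \emph{suffix} of $\calS_X$, and translate $k_1<k_2$ into an inclusion/ordering statement about the arcs $N_{y_1}, N_{y_2}$ (or $S_{y_1},S_{y_2}$) relative to $\calN_X,\calS_X$. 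If $y_1\prec y_2$ forces $\nh(y_1)\cap X\subseteq \nh(y_2)\cap X$ but $\{y_1,y_2\}$ incomparable forces the reverse not to degenerate, we are already done for one ordering; the other ordering is symmetric, swapping the roles of $y_1$ and $y_2$ and flipping prefix-arguments.

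So the structure of the write-up will be: (1) invoke \cref{lem:linorderedVertices} to put a $\prec$-order on $X$ and on $\{y_1,y_2\}$; (2) translate "p-interaction" into the geometric statement that $N_{y_j}$ intersects a suffix of $\calS_X$ (equivalently $S_{y_j}$ intersects a suffix of $\calN_X$), as recorded in \cref{def:types}; (3) show that if $y_1\prec y_2$ then the suffix of $\calS_X$ met by $N_{y_1}$ contains the one met by $N_{y_2}$ — this is the monotonicity step, proved by comparing $\ell(N_{y_1})<\ell(N_{y_2})$ (from $N_{y_1}\precarc N_{y_2}$, using \cref{lem:mixedpo} if the pair is mixed) against the endpoints of arcs in $\calS_X$, which are themselves $\precarc$-ordered by \cref{lem:linorderedArcs}; (4) conclude $\nh(y_2)\cap X\subseteq \nh(y_1)\cap X$; (5) for the reverse inclusion, use that $\{y_1,y_2\}$ incomparable yields $z\in\nh(y_2)\setminus\nh(y_1)$, and since $z\notin X$ in general this does not immediately contradict (4) — so instead apply the same monotonicity argument in the $\calS_X$/$\calN_X$ picture starting from the assumption $k_1<k_2$ to derive $\ell(N_{y_1})>\ell(N_{y_2})$ or the analogous reversed inequality, contradicting (3). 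The s-interaction case is handled verbatim with "suffix" replaced by "prefix" throughout.

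The main obstacle I anticipate is step (3)/(5): getting the direction of the arc monotonicity right in all three parity cases (both reflexive, both irreflexive, mixed), since whether one reasons with $\calN_X$ or $\calS_X$ and whether $\precarc$ on arcs matches $\prec$ on vertices depends on reflexivity, and \cref{lem:mixedpo} only rescues the mixed case. The safe approach is to reduce everything to a single picture — say always argue about where $N_{y_j}$ sits relative to the arcs $\calS_X$ — and note that $p\in N_{y_1}\cap N_{y_2}$ so these two north arcs are nested or $\precarc$-comparable; incomparability of $\{y_1,y_2\}$ rules out nesting (as in the proof of \cref{lem:linorderedArcs}), so they are $\precarc$-comparable, and then the prefix met is monotone in $\precarc$. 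If $k_1<k_2$ this monotonicity is violated, giving the contradiction; hence $k_1=k_2$, i.e.\ $\nh(y_1)\cap X=\nh(y_2)\cap X$.
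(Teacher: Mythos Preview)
Your initial instinct—to exhibit a $3$-vertex set with private neighbors—is the correct one, and the paper's proof does exactly that. However, you abandon it for the ``cleaner'' arc-monotonicity route in steps (3)--(5), and that route has a genuine gap: it never invokes the standing hypothesis that $H$ has no $3$-element set with private neighbors. Any argument that does not use this hypothesis cannot succeed, because the statement is false for general bi-arc graphs. Concretely, $N_{y_1}\precarc N_{y_2}$ does \emph{not} force the suffix of $\calS_X$ met by $N_{y_1}$ to contain (or be contained in) the one met by $N_{y_2}$: whether an intersection $N_{y_j}\cap S_{x_i}$ occurs on $C_L$ or on $C_R$ can vary with $i$, and the two sides give opposite monotonicity directions. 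So from $N_{y_1}\precarc N_{y_2}$ alone you cannot deduce $k_1\le k_2$ or $k_1\ge k_2$, and there is nothing for $k_1<k_2$ to contradict.

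What the paper actually does is closer to your first paragraph. From $k_1<k_2$ one picks $x_1\prec x_2\prec x_3$ in $X$ with $\nh(y_1)\cap\{x_1,x_2,x_3\}=\{x_1\}$ and $\nh(y_2)\cap\{x_1,x_2,x_3\}=\{x_1,x_2\}$. Careful geometric reasoning (pinning down on which half of $C$ the relevant intersections lie, using $N_{x_1}\precarc N_{x_2}\precarc N_{x_3}$) yields $S_{y_1}\precarc S_{y_2}$. Then, taking $y_1'\in\nh(y_1)\setminus\nh(y_2)$ (from incomparability of $\{y_1,y_2\}$) and $x_3'\in\nh(x_3)\setminus\nh(x_2)$ (from incomparability of $X$, via \cref{lem:simpleNonAdjacency}), one chains endpoint inequalities to show $y_1'x_3'\notin E(H)$. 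This makes $\{y_1',x_2,x_3\}$ a $3$-set with private neighbors $y_1,y_2,x_3'$, contradicting the assumption on $H$. So your vertex $z$ and your $x_b$ are the right ingredients; the missing piece is the third vertex (a private neighbor of the last element of $X$) and the geometric chain showing it is non-adjacent to $z$.
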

\begin{proof}
Suppose both $y_1$ and $y_2$ have a p-interaction with $X$. (The case where they both have an s-interaction is analogous.) For contradiction, suppose that $\nh(y_2) \cap X \setminus \nh(y_1) \cap X \neq \emptyset$.
Since $y_1$ and $y_2$ both have a p-interaction with $X$,
there are $x_1,x_2,x_3\in X$ with $x_1 \prec x_2 \prec x_3$ such that
$\nh(y_1)\cap \{x_1,x_2,x_3\}=\{x_1\}$ and $\nh(y_2)\cap \{x_1,x_2,x_3\}=\{x_1,x_2\}$.
	
Suppose that at least two of $x_1, x_2, x_3$ are reflexive and consequently $N_{x_1} \precarc N_{x_2} \precarc N_{x_3}$. Then $S_{y_1}$ intersects $N_{x_2}, N_{x_3}$ in $C_R$, and $S_{y_2}$ intersects $N_{x_3}$ also in $C_R$.
In particular, $S_{y_1} \precarc S_{y_2}$.
	
Since $\{y_1,y_2\}$ is incomparable there is some $y_1' \in \nh(y_1)\setminus \nh(y_2)$.
Note that $y_1' \notin \{x_1,x_2,x_3\}$.
Since $S_{y_1} \precarc S_{y_2}$ it follows that $N_{y_1'}$ intersects $S_{y_2}$ in $C_L$, i.e., $\ell(N_{y_1'}) < u(S_{y_2})$.

Since $X$ is incomparable there is some $x_3' \in \nh(x_3)\setminus \nh(x_2)$.
Note that $x_3' \notin \{y_1,y_2\}$.
Since $N_{x_2} \precarc N_{x_3}$ it follows that $S_{x_3'}$ intersects $N_{x_2}$ in $C_L$, i.e., $\ell(N_{x_2})< u(S_{x_3'})$. 
	
Summing up, we have $\ell(N_{y_1'})<u(S_{y_2})<\ell(N_{x_2})< u(S_{x_3'})$, where the second inequality follows from the fact that $S_{y_2}$ and $N_{x_2}$ are disjoint $y_2x_2 \in E(H)$.
As $\ell(N_{y_1'}) M i(S_{x_3'})$, we observe that $N_{y_1'}$ intersects $S_{x_3'}$ and this $y_1'$ and $x_3'$ are non-adjacent.
Now observe that the set $\{y_1',x_2,x_3\}$ has private neighbors $y_1,y_2$, and $x_3'$, a contradiction with the assumption on $H$.
	
The case where at most one of $x_1,x_2,x_3$ is reflexive can be treated analogously using the fact that $S_{x_1}\precarc S_{x_2} \precarc S_{x_3}$ (recall \cref{lem:mixedpo}) and considering arcs from $\calN$ instead of ones from $\calS$ and vice versa.
\end{proof}

\begin{lem}\label{lem:orderofTypes}
Let $X$ and $Y$ are incomparable sets in $H$.
Let $y_1, y_2, y_3 \in Y$.
\begin{myenumerate}[(i)]
\item If $y_1$ p-interacts with $X$ and $y_2$ 1-interacts with $X$, then $y_1 \prec y_2$. \label{item:pb41}
\item If $y_1$ p-interacts with $X$ and $y_2$ 0-interacts with $X$, then $y_1 \prec y_2$. \label{item:pb40}
\item If $y_1$ 1-interacts with $X$ and $y_2$ s-interacts with $X$, then $y_1 \prec y_2$. \label{item:1b4s}
\item If $y_1$ 0-interacts with $X$ and $y_2$ s-interacts with $X$, then $y_1 \prec y_2$. \label{item:0b4s}
\item If $y_1$ p-interacts with $X$ and $y_2$ s-interacts with $X$, then $y_1 \prec y_2$. \label{item:pb4s}
\item If $y_1$ p-interacts and $y_2$ 1-interacts with $X$, then $y_3$ cannot 0-interact with $X$. \label{item:ifpthenno10}
\item If $y_1$ s-interacts and $y_2$ 1-interacts with $X$, then $y_3$ cannot 0-interact with $X$. \label{item:ifsthenno01}
	\end{myenumerate}
\end{lem}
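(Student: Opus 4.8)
The plan is to work entirely inside the geometric representation $(\calN,\calS)$ of $H$, invoking the standing assumptions on $H$ (no irreflexive edge, no $3$-vertex set with private or co-private neighbors) only to rule out the bad configurations. A preliminary observation halves the work: reflecting the circle $C$ across the vertical axis through $p$ and $q$ interchanges $C_L$ with $C_R$ and hence $\ell$ with $u$ for every arc, keeps north arcs north and south arcs south, and preserves all intersections; so it produces another valid geometric representation of the same graph $H$, one in which $\precarc$ and therefore $\prec$ is reversed. Under this reflection a $\prec$-prefix of $X$ becomes a $\prec$-suffix and vice versa, so ``p-interaction'' and ``s-interaction'' swap, whereas $0$- and $1$-interactions are unchanged. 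Consequently item~\ref{item:1b4s} follows from item~\ref{item:pb41}, item~\ref{item:0b4s} from item~\ref{item:pb40}, and item~\ref{item:ifsthenno01} from item~\ref{item:ifpthenno10}, while item~\ref{item:pb4s} is self-symmetric. It therefore suffices to prove items \ref{item:pb41}, \ref{item:pb40}, \ref{item:pb4s}, and \ref{item:ifpthenno10}.

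For items \ref{item:pb41}, \ref{item:pb40}, and \ref{item:pb4s} I would argue by contradiction, assuming $y_2 \prec y_1$, and write $X=\{x_1 \prec \dots \prec x_k\}$. Translating the interaction hypotheses through \cref{def:types}: a p-interaction of $y_1$ gives $x_1 \in \nh(y_1)$, $x_k \notin \nh(y_1)$, hence $N_{y_1}$ meets $S_{x_k}$ and $S_{y_1}$ meets $N_{x_k}$; a $1$-interaction of $y_2$ makes $N_{y_2}$ disjoint from all of $\calS_X$ and $S_{y_2}$ disjoint from all of $\calN_X$; a $0$-interaction makes $N_{y_2}$ meet every arc of $\calS_X$ and $S_{y_2}$ meet every arc of $\calN_X$, in particular $x_1 \notin \nh(y_2)$; and an s-interaction of $y_2$ makes $x_1 \notin \nh(y_2)$ and $x_k \in \nh(y_2)$. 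Combining these with the definition of $\prec$ on the pair $\{y_1,y_2\}$ (governed by $\precarc$ on the north arcs unless both vertices are irreflexive) and running the same reflexive/irreflexive case split used in the proofs of \cref{lem:mixedpo} and \cref{lem:sameTypesameInteraction}, one pins down a chain of endpoint comparisons on $C_L$ and $C_R$; in each branch this chain is either directly incompatible with $y_2 \prec y_1$, or it lets one pick a neighbor $y_1' \in \nh(y_1)\setminus\nh(y_2)$ together with a neighbor of some $x_j$ separating $x_j$ from an adjacent element of $X$, and these exhibit a $3$-vertex set with private neighbors (or, in a few branches, with co-private neighbors), contradicting the hypothesis on $H$.

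For item~\ref{item:ifpthenno10}, suppose for contradiction that some $y_3 \in Y$ $0$-interacts with $X$ while $y_1$ p-interacts and $y_2$ $1$-interacts. Choose consecutive $x_a \prec x_b$ in $X$ with $x_a \in \nh(y_1)$ and $x_b \notin \nh(y_1)$. Then $x_b \in \nh(y_2)$ ($1$-interaction) but $x_b \notin \nh(y_1)$ (by choice) and $x_b \notin \nh(y_3)$ ($0$-interaction), so $x_b$ is a private neighbor of $y_2$ with respect to $\{y_1,y_2,y_3\}$. The remaining task is to produce private neighbors of $y_1$ and of $y_3$: using that $Y$ is incomparable there are $w \in \nh(y_1)\setminus\nh(y_2)$ and $w' \in \nh(y_3)\setminus\nh(y_2)$, and the geometric picture — together with items \ref{item:pb41} and \ref{item:pb40}, which force $y_1 \prec y_2$ and $y_1 \prec y_3$ and thereby constrain on which side of $C$ such neighbors may lie — is used to show that $w$ and $w'$ can be chosen with $w \notin \nh(y_3)$ and $w' \notin \nh(y_1)$. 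Then $\{y_1,y_2,y_3\}$ has private neighbors, contradicting the hypothesis on $H$; a co-private configuration surfacing along the way is an equally good contradiction.

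I expect the bookkeeping in the last two paragraphs to be the main obstacle: there is no single clean inequality that settles things, so one must be systematic about the reflexive/irreflexive cases and about when an endpoint comparison closes a branch directly versus when it only hands back a forbidden $3$-set — essentially the argument carried out for \cref{lem:sameTypesameInteraction}, but with one extra moving part ($y_2$, and for item~\ref{item:ifpthenno10} also $y_3$).
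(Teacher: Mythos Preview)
Your plan for items \ref{item:pb41}, \ref{item:pb40}, \ref{item:pb4s} is essentially the paper's: endpoint chasing in the bi-arc representation, split by the reflexive/irreflexive status of the relevant pairs, with the ``no three vertices with private neighbors'' hypothesis closing the stubborn branches. Your reflection observation (that mirroring $C$ across the $p$--$q$ axis reverses $\prec$ and swaps p- with s-interactions) is a clean way to obtain \ref{item:1b4s}, \ref{item:0b4s}, \ref{item:ifsthenno01} for free; the paper just says ``analogous''.

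Where your route diverges is item~\ref{item:ifpthenno10}, and there the paper's argument is considerably shorter than what you outline. You propose to exhibit private neighbors of $\{y_1,y_2,y_3\}$ directly. The paper instead \emph{flips the roles of $X$ and $Y$}: once \ref{item:pb41} and \ref{item:pb40} give $y_1\prec y_2$ and $y_1\prec y_3$, there are only two possible orderings of $\{y_1,y_2,y_3\}$ inside $Y$. In each ordering one checks how $x_1$ (the $\prec$-minimum of $X$) and $x_2$ (the $\prec$-maximum) interact with $\{y_1,y_2,y_3\}$. In one ordering $x_2$ is adjacent only to the middle element; in the other $x_1$ is adjacent to the two outer elements but not the middle one. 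Either pattern is forbidden for a single vertex against the incomparable set $Y$ by \cref{lem:hastobetype} (equivalently, by \cref{lem:simpleNonAdjacency} and the no-private-neighbors assumption already packaged there), and that is the whole proof. No further hunt for private neighbors $w,w'$ is needed, and the ``bookkeeping'' you anticipate disappears.

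Your direct approach is not wrong in spirit, but as written it has a soft spot: in the ordering $y_1\prec y_3\prec y_2$ the obstruction is $x_1$ hitting the two ends and missing the middle, which is a \cref{lem:simpleNonAdjacency} contradiction rather than a private- or co-private-neighbor configuration on $\{y_1,y_2,y_3\}$; your outline, which is organised around producing private (or co-private) neighbors, does not obviously close this branch. Reorganising around \cref{lem:hastobetype} as above handles both orderings uniformly in two lines.
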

\begin{proof}
Let $x_1$ and $x_2$ be, respectively, the minimum and the maximum elements of $X$ (according to $\prec$).

\paragraph{Proof of \eqref{item:pb41}.}
Note that according to their interaction types, $y_1$ is adjacent to $x_1$ but not $x_2$,
and $y_2$ is adjacent to both $x_1$ and $x_2$.

\smallskip
Suppose that $\{x_1, x_2\}$ is reflexive or mixed. Then $N_{x_1} \precarc N_{x_2}$ and consequently $S_{y_1}$ intersects $N_{x_2}$ in $C_R$, i.e., $\ell(S_{y_1})< u(N_{x_2})$
As $S(y_2)$ is disjoint with $N(x_2)$, we have $\ell(S(y_1)) < \ell(S(y_2))$ and thus $S_{y_1} \precarc S_{y_2}$.
If $\{y_1, y_2\}$ is irreflexive or mixed, this already implies $y_1 \prec y_2$ (recall \cref{lem:mixedpo}).

So suppose that $y_1$ and $y_2$ are reflexive.
If $N_{y_1}$ intersects $S_{x_2}$ in $C_L$, then, as $x_2y_2 \in E(H)$,
we have $N_{y_1} \precarc N_{y_2}$ and therefore $y_1 \prec y_2$ as required.
If otherwise $N_{y_1}$ intersects $S_{x_2}$ in $C_R$,
then $\ell(S_{x_2})<u(N_{y_1})<\ell(S_{y_1}) <u(N_{x_2})$,
where the second inequality uses that $y_1$ is reflexive and thus $N_{y_1}$ and $S_{y_1}$ are disjoint.
In particular, we established that $S_{x_2}$ and $N_{x_2}$ intersect, which means that $x_2$ is irreflexive.
Consequently, $\{x_1,x_2\}$ is mixed and so $S_{x_1} \precarc S_{x_2}$ according to \cref{lem:mixedpo}.
However, this gives us $\ell(S_{x_1}) < \ell(S_{x_2})<u(N_{y_1})$, i.e.,that $S_{x_1}$ intersects $N_{y_1}$,
which is a contradiction to the fact that $y_1$ and $x_1$ are adjacent.

\smallskip			%
Now suppose that $\{x_1, x_2\}$ is irreflexive.
Note that in this case, since $H$ does not contain an irreflexive edge, both $y_1$ and $y_2$ are reflexive.

Similar to the previous case, $S_{x_1} \precarc S_{x_2}$ and consequently $N_{y_1}$ intersects $S_{x_2}$ in $C_L$, i.e.,$\ell(N_{y_1})< u(S_{x_2})$.
As $N_{y_2}$ is disjoint with $S_{x_2}$, we obtain $\ell(N_{y_1})< u(S_{x_2}) < \ell(N_{y_2})$.
Thus, $N_{y_1} \precarc N_{y_2}$, which implies $y_1 \prec y_2$ since $\{y_1, y_2\}$ is reflexive.

\paragraph{Proof of \eqref{item:pb40}.}
Note that according to their interaction types, $y_1$ is adjacent to $x_1$ but not $x_2$, and $y_2$ is adjacent to none of $x_1$ and $x_2$.

\smallskip
Suppose that $\{x_1, x_2\}$ is reflexive or mixed. Then $N_{x_1} \precarc N_{x_2}$ and consequently $S_{y_1}$ intersects $N_{x_2}$ in $C_R$, i.e., $\ell(S_{y_1})< u(N_{x_2})$. 

We claim that $S_{y_1} \precarc S_{y_2}$.
For contradiction suppose that $S_{y_2} \precarc S_{y_1}$.
We will show that then $H$ contains three-element set with private neighbors.
Since each of $X$ and $Y$ is incomparable there are $y_2' \in \nh(y_2)\setminus \nh(y_1)$ and $x_2' \in \nh(x_2)\setminus \nh(x_1)$.
It follows that $S_{x_2'}$ intersects $N_{x_1}$ in $C_L$, i.e.,~$\ell(N_{x_1})<u(S_{x_2'})$.
Also, $N_{y_2'}$ intersects $S_{y_1}$ in $C_L$ (because $S_{y_2} \precarc S_{y_1}$), i.e.,~$\ell(N_{y_2'})<u(S_{y_1})$.
Since $x_1$ and $y_1$ are adjacent we have $u(S_{y_1})< \ell(N_{x_1})$ and consequently
$\ell(N_{y_2'})<u(S_{y_1})<\ell(N_{x_1})<u(S_{x_2'})$.
This means that which means that $y_2'$ and $x_2'$ are not adjacent.
Therefore, the set $\{y_1,x_2',y_2\}$ has private neighbors $x_1,x_2,y_2'$, a contradiction.

Thus we established that $S_{y_1} \precarc S_{y_2}$.
If $\{y_1, y_2\}$ is irreflexive or mixed, or if $N_{y_1} \precarc N_{y_2}$, then $y_1 \prec y_2$, as required.
So let us assume for contradiction that $y_1$ and $y_2$ are reflexive and $N_{y_2} \precarc N_{y_1}$.
Since $S_{y_1} \precarc S_{y_2}$ it follows that $S_{y_2}$ intersects $N_{x_1}$ in $C_L$.
Similarly, $S_{y_1}$ intersects $N_{x_2}$ in $C_R$.
Using the fact the $y_1$ and $y_2$ are reflexive, we conclude that both $N_{y_1}$ and $N_{y_2}$ are strictly contained in $N_{x_1}\cup N_{x_2}$.
Since $N_{y_2} \precarc N_{y_1}$ it follows that $S_{x_2}$ intersects $N_{y_1}$ in $C_R$ and therefore $x_2$ is irreflexive (as $N_{y_1}$ is strictly contained in $N_{x_1}\cup N_{x_2}$).
Thus, $X$ is mixed and so $S_{x_1} \precarc S_{x_2}$ by \cref{lem:mixedpo}.
However, this implies that $N_{y_1}$ intersects $S_{x_1}$, a contradiction.
		
\smallskip
Now suppose that $\{x_1, x_2\}$ is irreflexive. 
Then $S_{x_1} \precarc S_{x_2}$ and consequently $N_{y_1}$ intersects $S_{x_2}$ in $C_L$.
If $N_{y_2} \precarc N_{y_1}$ we obtain a three-element set with private neighbors as in the previous case.
So we can assume $N_{y_1} \precarc N_{y_2}$.
Since $H$ does not contain an irreflexive edge, we observe that $y_1$ is reflexive.
Therefore $\{y_1,y_2\}$ is reflexive of mixed and thus $y_1 \prec y_2$.

\paragraph{Proof of \eqref{item:1b4s} and \eqref{item:0b4s}.} The proofs are analogous to the proofs of \eqref{item:pb41} and \eqref{item:pb40}, respectively.

\paragraph{Proof of \eqref{item:pb4s}.}  According to their interaction types, $y_1$ is adjacent to $x_1$ but not to $x_2$,
and $y_2$ is adjacent to $x_2$ but not to $x_1$. Note that the roles of $\{x_1,x_2\}$ and $\{y_1,y_2\}$ are symmetric.
Since $H$ does not contain an irreflexive edge,
it is not the case that both $\{x_1,x_2\}$ and $\{y_1,y_2\}$ are irreflexive.

We claim that $N_{x_1} \precarc N_{x_2}$ if and only if $S_{y_1} \precarc S_{y_2}$,
and $S_{x_1} \precarc S_{x_2}$ if and only if $N_{y_1} \precarc N_{y_2}$.
Let us prove the first statement, as the other one is symmetric (by swapping the roles of $x_1,x_2$ and $y_1,y_2$).

Suppose that $N_{x_1} \precarc N_{x_2}$ and $S_{y_2} \precarc S_{y_1}$.
As $S_{y_1}$ is disjoint with $N_{x_1}$ and $S_{y_2}$ is disjoint with $N_{x_2}$,
we notice that $S_{y_1}$ cannot intersect $N_{x_2}$, a contradiction.
Now suppose that $S_{y_1} \precarc S_{y_2}$ and $N_{x_2} \precarc N_{x_1}$.
The proof is symmetric to the previous case by swapping $x_1,x_2$ with $y_1,y_2$
and arcs from $\calS$ with arc from $\calN$.

The properties above imply that if $\{x_1,x_2\}$ is mixed or reflexive (i.e., $N_{x_1} \precarc N_{x_2}$),
and $\{y_1,y_2\}$ is mixed of irreflexive (i.e., $\prec$ orders $y_1,y_2$ with respect to arcs in $\calS$),
we obtain $y_1 \prec y_2$.
Similarly, if $\{x_1,x_2\}$ is mixed or irreflexive and $\{y_1,y_2\}$ is mixed or reflexive,
then $y_1 \prec y_2$.

We are left with the case that both $\{x_1,x_2\}$ and $\{y_1,y_2\}$ are reflexive and the following properties hold:
\begin{myitemize}
\item $N_{x_1} \precarc N_{x_2}$ (because $x_1 \prec x_2$),
\item $N_{y_2} \precarc N_{y_1}$ (otherwise $y_1 \prec y_2$ and we are done),
\item $S_{y_1} \precarc N_{y_1}$ and $S_{x_2} \precarc S_{x_1}$  (by the implications shown above).
\end{myitemize}
For contradiction, suppose that such a case occurs.
As $N_{y_2}$ is disjoint with $S_{x_2}$, it intersects $S_{x_1}$ in $C_L$.
On the other hand, $N_{y_2}$ does not intersect $S_{y_2}$, thus $u(S_{y_2}) < \ell(N_{y_2}) < u(S_{x_1})$.
But now $N_{x_1}$ is supposed to intersect $S_{y_2}$ in $C_L$ which is impossible without intersecting $S_{x_1}$, a contradiction.

\paragraph{Proof of \eqref{item:ifpthenno10}.}
Suppose the converse holds, i.e. there exist $y_1,y_2,y_3 \in Y$ such that they p-interact, 0-interact and 1-interact with $X$, respectively.
In particular $\nh(x_1) \cap \{y_1,y_2,y_3\} = \{y_1,y_2\}$ and $\nh(x_2) \cap \{y_1,y_2,y_3\} = \{y_2\}$.

Statements \eqref{item:pb41} and \eqref{item:pb40} imply that $y_1 \prec y_2$ and $y_1 \prec y_3$.

If $y_1 \prec y_2 \prec y_3$, then the interaction of $x_2$ with $Y$ is not of any type specified in \cref{def:types}.
If $y_1 \prec y_3 \prec y_2$, then the interaction of $x_1$ with $Y$ is not of any type specified in \cref{def:types}.
In both cases this contradicts \ref{lem:hastobetype}.

\paragraph{Proof of \eqref{item:ifsthenno01}.} The proof is analogous to the proof of \eqref{item:ifpthenno10}.
\end{proof}

For an incomparable set $Y$ enumerated as $y_1 \prec y_2 \prec \ldots \prec y_{|Y|}$,
a \emph{segment} is a contiguous subsequence of $y_1,\ldots,y_{|Y|}$,
i.e., $\{y_i,y_{i+1},\ldots,y_j\}$ for some $i \leq j$.
A segment $Y'$ \emph{precedes} a segment $Y''$ if if $y' \prec y''$ for every $y' \in Y'$ and $y'' \in Y''$.
A segment $Y'$ is a \emph{p-segment} with respect to an incomparable set $X$ if every $y' \in Y'$ has a p-interaction with $X$.
Analogously we define s-segments, 1-segments, and 0-segments.

As a consequence of \cref{lem:orderofTypes} we obtain the following corollary.
\begin{cor}\label{cor:orderofTypes}
Suppose that $X$ and $Y$ are incomparable sets in $H$.
Then $Y$ can be partitioned into (possibly empty) segments with the following interactions with $X$:
\begin{myitemize}
\item a p-segment preceding a 1-segment preceding an s-segment, or
\item a p-segment preceding a 0-segment preceding an s-segment, or
\item a 0-segment preceding a 1-segment, or
\item a 1-segment preceding a 0-segment.
\end{myitemize}
Furthermore, the vertices of each segment have the same neighborhood in $X$.
\end{cor}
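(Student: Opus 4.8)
The plan is to read the partition off directly from the interaction types. By \cref{lem:linorderedVertices}, $\prec$ is a strict total order on $Y$, so enumerate $Y=\{y_1\prec y_2\prec\dots\prec y_{|Y|}\}$; by \cref{lem:hastobetype} each $y_j$ has a (unique) interaction type with $X$ among p, s, 0, 1, and I would attach this type as a label to $y_j$. (If $X=\emptyset$ the claim is trivial, taking $Y$ as a single 0-segment, so assume $X\neq\emptyset$.) First I would locate the p- and s-vertices: items \eqref{item:pb41}, \eqref{item:pb40}, \eqref{item:pb4s} of \cref{lem:orderofTypes} say that every p-vertex $\prec$-precedes every vertex bearing one of the three other labels, so the p-vertices form an initial segment of $Y$; dually, items \eqref{item:1b4s}, \eqref{item:0b4s}, \eqref{item:pb4s} show that the s-vertices form a final segment. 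Every vertex strictly between these two segments is labelled 0 or 1.

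Next I would case on which of the labels 0, 1 occurs in this middle block. If only 0 occurs (or the block is empty) we obtain the pattern ``p-segment preceding 0-segment preceding s-segment''; if only 1 occurs we obtain ``p-segment preceding 1-segment preceding s-segment''. If both occur, then \eqref{item:ifpthenno10} and \eqref{item:ifsthenno01}, applied contrapositively, force $Y$ to contain neither a p-vertex nor an s-vertex, so $Y$ consists entirely of 0- and 1-vertices; it then remains to rule out interleaving, i.e.\ to show that no triple $y_i\prec y_j\prec y_k$ carries labels $0,1,0$ or $1,0,1$. This is the one step that is not pure bookkeeping: I would apply \cref{lem:hastobetype} ``in reverse'', to a vertex $x\in X$ and the incomparable set $Y$. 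A $0,1,0$ triple makes $x$ adjacent to $y_j$ but to neither $y_i$ nor $y_k$, while a $1,0,1$ triple makes $x$ adjacent to $y_i$ and $y_k$ but not to $y_j$ (the latter is also immediate from \cref{lem:simpleNonAdjacency} applied to $Y$); in each case $\nh(x)\cap Y$ is neither a prefix, a suffix, all of $Y$, nor empty, contradicting \cref{lem:hastobetype}. Hence the 0-vertices form one segment and the 1-vertices another, one preceding the other, matching the last two patterns.

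It remains to verify the ``furthermore'' clause. Within a 0-segment every vertex has neighbourhood $\emptyset$ in $X$, and within a 1-segment every vertex has neighbourhood $X$, so the claim is immediate there; for a p-segment or an s-segment, any two of its vertices form an incomparable pair (since $Y$ is incomparable), so \cref{lem:sameTypesameInteraction} shows they have the same neighbourhood in $X$, hence the whole segment does. I expect the non-interleaving argument in the last case to be the only genuine obstacle, but, as indicated, it collapses to one further invocation of \cref{lem:hastobetype}; all the remaining work is organising the ordering facts of \cref{lem:orderofTypes}.
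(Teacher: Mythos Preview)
Your proof is correct and follows essentially the same approach as the paper, which simply cites \cref{lem:orderofTypes} for the partition and \cref{lem:sameTypesameInteraction} for the equal neighbourhoods. Your explicit treatment of the non-interleaving of $0$- and $1$-segments via \cref{lem:hastobetype} applied with the roles of $X$ and $Y$ swapped is a detail the paper's one-line proof leaves implicit, but it is indeed needed (items \eqref{item:ifpthenno10}--\eqref{item:ifsthenno01} of \cref{lem:orderofTypes} say nothing when no p- or s-vertex is present), and your argument for it is clean.
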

\begin{proof}
The possible ways of partitioning $Y$ follow immediately from \cref{lem:orderofTypes}.
The equality of neighborhoods is trivial in case of 0- and 1-segments, and for p- and s-segments it follows from \cref{lem:sameTypesameInteraction}.
\end{proof}

\paragraph{Interaction matrices.}
A useful way to think of interactions of incomparable set is via their \emph{interaction matrices}.
Let $X$ and $Y$ be two incomparable sets in $H$. Recall that by \cref{lem:linorderedVertices} each of these sets is totally ordered by $\prec$.
We enumerate the elements of $X$ as $x_1 \prec x_2 \ldots x_{|X|}$ and the elements of $Y$ as $y_1 \prec y_2 \prec \ldots \prec y_{|Y|}$.
The interaction matrix of $(X,Y)$ is the boolean matrix with $|X|$ rows and $|Y|$ columns such that for $i \in [|X|]$ and $j \in [|Y|]$ we have
\[
M_{i,j} = \begin{cases}
1 & \text { if } x_iy_j \in E(H),\\
0 & \text { if } x_iy_j \notin E(H).
\end{cases}
\]
Note that the transpose of $M$ is the interaction matrix of $(Y,X)$.
\cref{cor:orderofTypes} can be restated in the following way in terms of interaction matrices.

\begin{cor}\label{cor:verynicematrix}
Let $X$ and $Y$ be incomparable sets and let $M$ be the interaction matrix of $(X,Y)$.
Then there are indices $(i_1,j_1), (i_2,j_2) \in \{0,\ldots,|X|+1\} \times \{0,\ldots,|Y|+1\}$
such that $M_{i,j} = 1$ if and only if either $i \leq i_1$ and $j \leq j_1$
or $i \geq i_2$ and $j \geq j_2$.
Furthemore, either $i_1 < i_2$ and $j_1 < j_2$,
or $i_1 \geq i_2$ and $j_1 \geq j_2$.
\end{cor}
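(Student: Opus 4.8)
The plan is to read the statement off directly from \cref{cor:orderofTypes}, since essentially all the geometric content is already packed into that corollary together with \cref{lem:hastobetype,lem:sameTypesameInteraction}. First I would record what a single column of $M$ can look like: column $j$ is exactly the indicator vector of $\nh(y_j)\cap X$ written along the $\prec$-ordered list $x_1\prec\dots\prec x_{|X|}$, so by \cref{lem:hastobetype} it is one of four vectors --- an initial segment of ones (if $y_j$ p-interacts with $X$), a final segment of ones (s-interaction), the all-zero vector (0-interaction), or the all-one vector (1-interaction) --- and by \cref{lem:sameTypesameInteraction} all p-columns coincide and all s-columns coincide. Applying \cref{cor:orderofTypes} to the pair $(X,Y)$ then tells us that the columns of $M$, taken in $\prec$-order, split into at most three consecutive blocks following one of its four patterns; write $P,O,Z,S$ for the (possibly empty) blocks of p-, 1-, 0-, s-columns, let the common p-column be the indicator of a prefix $\{x_1,\dots,x_a\}$ of $X$, and the common s-column the indicator of a suffix $\{x_b,\dots,x_{|X|}\}$.

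Next I would go through the four patterns. In the patterns $[Z][O]$, $[O][Z]$ and $[P][Z][S]$ the set of ones of $M$ is visibly a union of a ``north-west block'' and a ``south-east block'' of the grid: e.g.\ for $[P][Z][S]$ it is $(\{1,\dots,a\}\times P)\cup(\{b,\dots,|X|\}\times S)$, so one simply sets $(i_1,j_1)$ and $(i_2,j_2)$ to the appropriate corners, encoding an empty block by pushing the corresponding index out of range. The only pattern not immediately of this shape is $[P][O][S]$ with $O\ne\emptyset$ and both $P,S$ nonempty: the $O$-columns are all-ones, so the ones in the columns $P\cup O$ form a rectangle only if there is no row index $i$ with $a<i<b$. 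To rule that out I would apply \cref{cor:orderofTypes} a second time, now to $(Y,X)$: such a row $x_i$ would have $\nh(x_i)\cap Y=O$, a \emph{strict middle} block of the $\prec$-ordered columns of $Y$, which is not a prefix, a suffix, all of $Y$, nor empty, contradicting \cref{lem:hastobetype} --- unless $O$, $P$, or $S$ is empty. Hence $b\le a+1$ in this pattern, and then the ones contributed by the $O$-columns already lie inside the union of the row ranges $\{1,\dots,a\}$ and $\{b,\dots,|X|\}$, so again the one-set is a north-west block together with a south-east block.

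Finally I would verify the ``furthermore'' clause, i.e.\ that the corners $(i_1,j_1)$ and $(i_2,j_2)$ sit in the claimed relative position. This amounts to comparing, in each of the finitely many configurations left after the degeneracy analysis above, the column-block boundaries against one another and the prefix length $a$ against the suffix start $b$: when $P$ and $S$ are separated by a nonempty $Z$-block the columns satisfy $j_1<j_2$, and the same middle-row argument forces the corresponding row inequality, whereas when the blocks meet or degenerate the corners fall into the overlapping case. I expect this last piece of bookkeeping --- tracking all the ways blocks can be empty or prefixes/suffixes can meet, and checking in each case that the two designated rectangles still exactly tile the one-set with their corners correctly ordered --- to be the main obstacle. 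The point that makes it go through is precisely that \cref{cor:orderofTypes} is available in both directions: neither the column picture nor the row picture alone determines the shape of $M$, but imposing both does.
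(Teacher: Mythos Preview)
Your proposal is correct and follows the same route as the paper: apply \cref{cor:orderofTypes} to $(X,Y)$ to control the column structure and to $(Y,X)$ (equivalently, to $M^T$) to control the row structure, then combine. The paper's own proof is a single sentence to this effect and leaves the case bookkeeping implicit; you simply spell out that bookkeeping in more detail.
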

\begin{proof}
The proof follows in a straightforward way from \cref{cor:orderofTypes} and the fact that $M$ transposed is also an interaction matrix
and thus has to satisfy the same conditions.
\end{proof}

\begin{figure}
\centering
\includegraphics[scale=1,page=1]{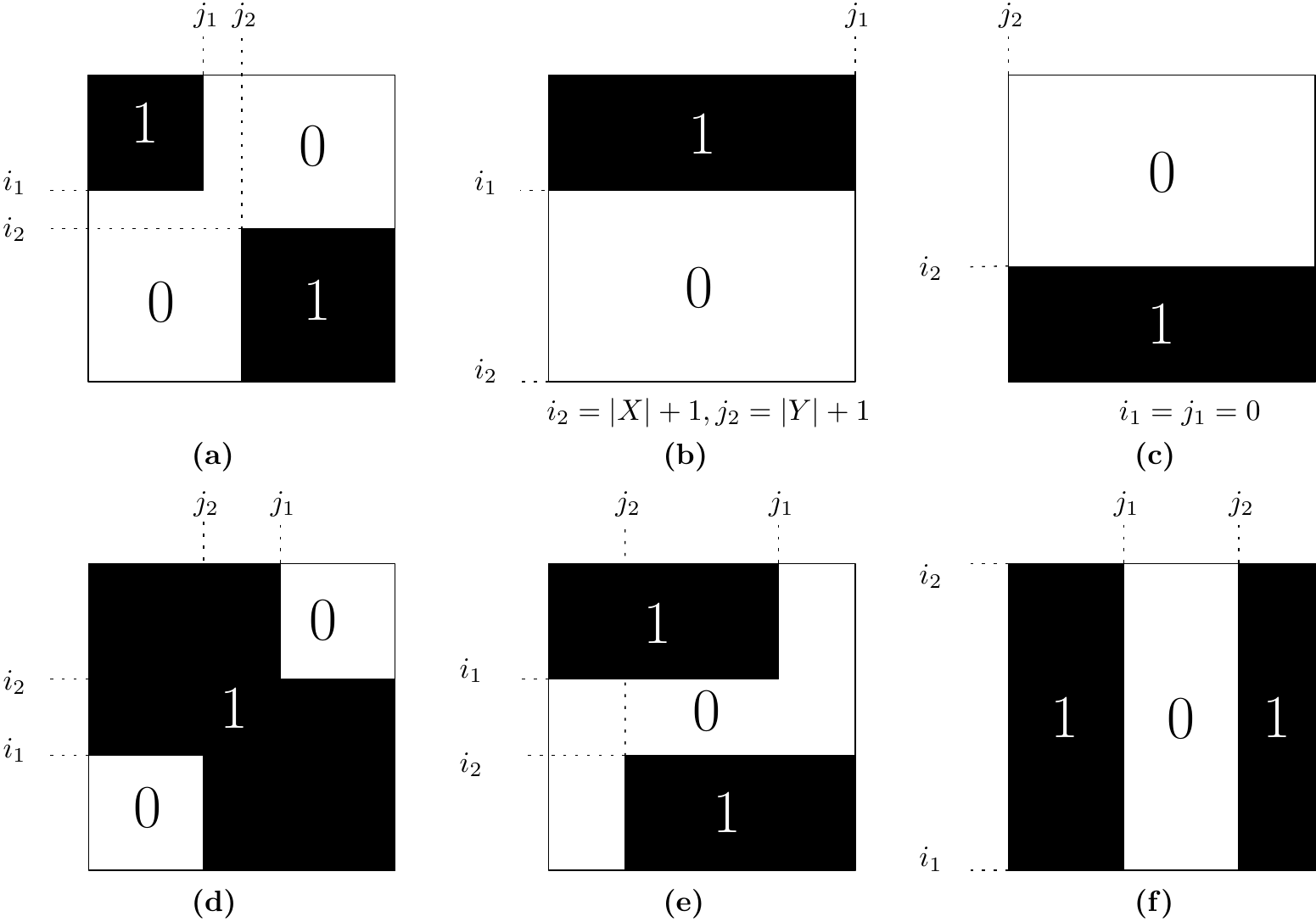}
\caption{Interaction matrices (a) -- (d) are allowed by \cref{cor:verynicematrix}.
Matrices (e) and (f) do not satisfy the last condition in \cref{cor:verynicematrix}.}
\label{fig:matrices2}
\end{figure}

Note that \cref{cor:verynicematrix} implies that the non-zero entries of $M$ form two (possibly overlapping or empty) rectangles,
one containing the top-left corner and the other containing the bottom-right corner, see \cref{fig:matrices2}.
However, in our algorithm we are more interested in covering zero entries of $M$.
It is straightforward to observe that they can also be covered by two rectangles, one containing the bottom-left corner,
and the other containing the top-right corner of $M$.
However, the next corollary we consider a slightly different partition of $M$ into rectangles, this time pairwise disjoint.

\begin{cor}\label{cor:nicematrix}
Let $X$ and $Y$ be incomparable sets and let $M$ be the interaction matrix of $(X,Y)$.
Then the zero entries can be partitioned into three (possibly empty) pairwise disjoint rectangles $R_1, R_2, R_3$,
such that:
\begin{myitemize}
\item $R_1$ contains the top-right corner of $M$ or is empty,
\item $R_2$ consists of a number of consecutive rows of $M$ or is empty,
\item $R_3$ contains the bottom-left corner of $M$ or is empty.
\end{myitemize}
\end{cor}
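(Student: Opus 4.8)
The plan is to read off the statement directly from Corollary~\ref{cor:verynicematrix}. That corollary supplies indices $(i_1,j_1),(i_2,j_2)\in\{0,\dots,|X|+1\}\times\{0,\dots,|Y|+1\}$ with $M_{i,j}=1$ exactly when $(i\le i_1\text{ and }j\le j_1)$ or $(i\ge i_2\text{ and }j\ge j_2)$, and moreover either $i_1<i_2$ and $j_1<j_2$, or $i_1\ge i_2$ and $j_1\ge j_2$. I would use the convention $\{a,\dots,b\}=\emptyset$ when $a>b$, which is what absorbs the boundary values $i_1=0$, $j_1=|Y|$, etc., and then split into the two cases the corollary provides.

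In the first case, $i_1<i_2$ and $j_1<j_2$, so the three blocks of rows $\{1,\dots,i_1\}$, $\{i_1+1,\dots,i_2-1\}$, $\{i_2,\dots,|X|\}$ partition the row set. I would take
\[
R_1=\{1,\dots,i_1\}\times\{j_1+1,\dots,|Y|\},\quad
R_2=\{i_1+1,\dots,i_2-1\}\times\{1,\dots,|Y|\},\quad
R_3=\{i_2,\dots,|X|\}\times\{1,\dots,j_2-1\}.
\]
Since these occupy disjoint row sets they are pairwise disjoint; $R_1$ contains the top-right corner whenever nonempty, $R_2$ is a block of consecutive rows, and $R_3$ contains the bottom-left corner whenever nonempty. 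One then checks that every entry of each $R_k$ is a zero (for $R_1$: $i\le i_1<i_2$ and $j>j_1$, so neither disjunct in the characterization of $M_{i,j}=1$ holds; similarly for $R_2$, $R_3$), and that every zero entry $(i,j)$ falls into $R_1$, $R_2$, or $R_3$ according to whether $i\le i_1$ (which forces $j>j_1$), $i_1<i<i_2$, or $i\ge i_2$ (which forces $j<j_2$).

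In the second case, $i_1\ge i_2$ and $j_1\ge j_2$, so the two all-ones rectangles overlap. The key observation is that every entry in rows $i_2,\dots,i_1$ equals $1$: such a row has $i\ge i_2$, and for $j\le j_1$ the first disjunct gives $M_{i,j}=1$, while for $j>j_1\ge j_2$ the second disjunct does. Hence all zeros lie in rows $\le i_2-1$ or in rows $\ge i_1+1$, and since $i_2-1<i_1+1$ these two row ranges are disjoint. I would therefore set $R_2=\emptyset$ and
\[
R_1=\{1,\dots,i_2-1\}\times\{j_1+1,\dots,|Y|\},\qquad
R_3=\{i_1+1,\dots,|X|\}\times\{1,\dots,j_2-1\}.
\]
Again every entry of $R_1$ and $R_3$ is zero, and a zero in a row $i\le i_2-1\le i_1$ must have $j>j_1$ (hence lies in $R_1$), while a zero in a row $i\ge i_1+1>i_1$ must have $j<j_2$ (hence lies in $R_3$); $R_1$ and $R_3$ contain the top-right and bottom-left corners when nonempty, and the empty $R_2$ is vacuously a set of consecutive rows.

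I do not expect a genuine obstacle: with Corollary~\ref{cor:verynicematrix} in hand this is bookkeeping with row and column indices, and the only step requiring a moment's thought is the second case, where one must notice that the overlap of the two all-ones rectangles creates an all-ones horizontal band, so that the middle rectangle $R_2$ can simply be taken empty. The remaining care lies only in the conventions making the degenerate (empty) rectangles legitimate.
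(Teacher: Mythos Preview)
Your proof is correct and follows exactly the same approach as the paper: you invoke Corollary~\ref{cor:verynicematrix}, split into the same two cases, and in each case define precisely the same three rectangles (with $R_2$ empty in the second case). You actually provide more detail than the paper does in verifying that the rectangles contain only zeros and cover all zeros, and your explicit observation about the all-ones horizontal band in the second case is a nice clarification of why $R_2$ may be taken empty there.
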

\begin{proof}
	For $(i,j) \in \{0,\ldots,|X|+1\} \times \{0,\ldots,|Y|+1\}$, by $M_{\leq i, \geq j}$ we mean the submatrix consisting of all entries $M_{i',j'}$ for $(i', j') \in [|X|] \times [|Y|]$ such that $i' \leq i$ and $j' \geq j$. Note that if $i=0$ or $j=|Y|+1$, then $M_{\leq i, \geq j}$ is empty.
Similarly we define  $M_{\geq i, \leq j}$.

Let $(i_1,j_1)$ and $(i_2,j_2)$ be as in \cref{cor:verynicematrix}. We have two cases, see also \cref{fig:matrices3}

If $i_1 < i_2$ and $j_1 < j_2$, then we define $R_1 = M_{\leq i_1, \geq j_1+1}$ and $R_3 = M_{\geq i_2, \leq j_2-1}$.
The rectangle $R_2$ consists of all entries $M_{i,j}$ for $i_1 < i < i_2$ and $1 \leq j \leq |Y|$.

If $i_1 \geq i_2$ and $j_1 \geq j_2$, then we define $R_1 = M_{\leq i_2-1, \geq j_1+1}$ and $R_3 = M_{\geq i_1+1, \leq j_2-1}$.
The rectangle $R_2$ is empty.
\end{proof}

\begin{figure}
\centering
\includegraphics[scale=1,page=2]{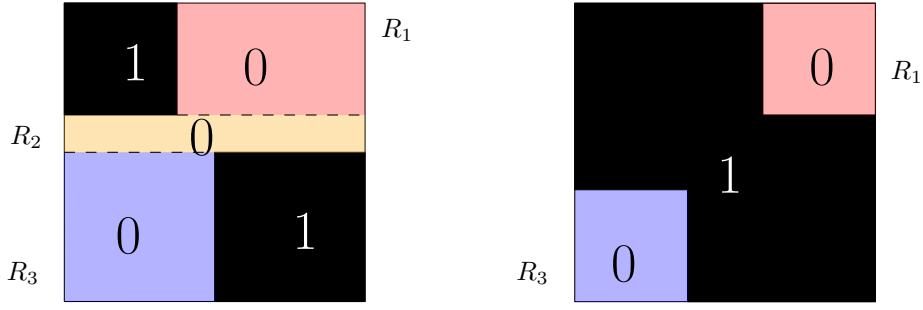}
\caption{\cref{cor:nicematrix}: partitioning zero entries of an interaction matrix into at most three pairwise disjoint rectangles.}
\label{fig:matrices3}
\end{figure}

Note that in \cref{cor:nicematrix} we lost some properties of $M$ compared to \cref{cor:verynicematrix}.
In particular, the matrix from \cref{fig:matrices2}~(d) is allowed by \cref{cor:nicematrix} but not by \cref{cor:verynicematrix}.
Nevertheless, the weaker properties will be sufficient for our algorithm.

\subsubsection{\boldmath  Solving \LHomED($H$)}

Equipped with \cref{cor:nicematrix}, we can finally complete the proof of \cref{thm:ed-dicho-intro}.
Recall that by  \cref{lem:biarc-intro} it is sufficient to prove the following theorem.

\begin{thm}\label{thm:easiness}
Let $H$ be a fixed bi-arc graph which contains no irreflexive edge and no set of 3 vertices with private neighbors.
Then \LHomED($H$) can be solved in polynomial time.
\end{thm}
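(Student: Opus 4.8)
The plan is to reduce $\LHomED(H)$ to a minimum-weight $s$-$t$ cut problem (equivalently, a maximum flow computation), in the spirit of the algorithm for $\LHomVD(H)$ in the proof of \cref{thm:vd-dicho-intro}, but using the finer structural analysis of interaction matrices provided by \cref{cor:nicematrix}. As a preliminary step, recall that we may assume every list $L(v)$ is an incomparable set, so $|L(v)| \le i(H)$ and, crucially, $\prec$ puts a strict total order on $L(v)$ by \cref{lem:linorderedVertices}; write $L(v) = \{a_1^v \prec a_2^v \prec \dots \prec a_{\ell_v}^v\}$. We may also discard any vertex $v$ with $L(v) = \emptyset$ after accounting for the fact that all its incident edges must be deleted (or, more carefully, handle such vertices by forced deletions). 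The intended meaning is: assigning $a_i^v$ to $v$ corresponds to placing the ``threshold'' of $v$ between positions $i$ and $i+1$.

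The construction would go as follows. First, build the \emph{path gadget} for each $v$: a directed path $P_v$ on $\ell_v + 1$ vertices $v_0 \to v_1 \to \dots \to v_{\ell_v}$, with ``infinite-weight'' edges forcing $v_0$ on the $t$-side and $v_{\ell_v}$ on the $s$-side (so in any finite cut, for some $i \in [\ell_v]$ the first $i$ path-vertices are on the $t$-side and the rest on the $s$-side — this encodes the choice $f(v) = a_i^v$). Next, for each edge $uv \in E(G)$, look at the interaction matrix $M = M(L(u), L(v))$; by \cref{cor:nicematrix} its zero entries partition into at most three disjoint rectangles $R_1$ (top-right corner), $R_2$ (full-width consecutive rows), $R_3$ (bottom-left corner). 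For each nonempty rectangle we add a single unit-weight directed edge between $P_u$ and $P_v$, oriented so that this edge is cut precisely when the cut-states of $P_u$ and $P_v$ correspond to an entry of that rectangle — a top-right or bottom-left corner rectangle is represented exactly as in the example of \cref{fig:pathinteraction} (an edge from the appropriate vertex of one path to the appropriate vertex of the other), and a full-width rectangle $R_2 = \{$rows $i_1{+}1,\dots,i_2{-}1\}$ is represented by an edge between the two ``threshold'' vertices $u_{i_1}$ and $u_{i_2-1}$ of $P_u$ alone — wait, more precisely by an edge forcing a cut whenever $u$'s state lies in that row range, which one can realize by an edge from $u_{i_1}$ (forced to depend only on $P_u$) routed to $t$ or $s$ appropriately; the key point established by \cref{cor:nicematrix} is that \emph{at most one} of the three rectangles is ``active'' for any pair of states, so the minimum cut pays at most one unit per edge $uv$ and pays exactly one unit iff the chosen images violate $uv$.

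**The correctness argument** then has two directions. Given a list homomorphism of $G \setminus X$ to $H$ of minimum cost, we read off the threshold positions, obtain a cut of $D$, and the infinite edges are not cut because the forced endpoints are respected; the unit edges cut are exactly those representing violated edges of $G$, and by the ``at most one active rectangle'' property this is exactly $|X|$. Conversely, given a minimum finite $s$-$t$ cut of $D$, no infinite edge is cut, so each $P_v$ is in one of its $\ell_v$ legal states, defining an assignment $f : V(G) \to V(H)$ respecting lists; the edges of $G$ whose interaction contributes a cut edge are exactly the ``deleted'' edges, and again the rectangle-disjointness (each zero entry lies in a unique $R_j$) ensures the count matches. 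Hence the minimum cut value equals the optimum of the $\LHomED(H)$ instance, and maximum flow runs in polynomial time, completing the proof.

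**The main obstacle** I anticipate is the bookkeeping needed to realize each rectangle faithfully by a single unit-weight edge in the right orientation, and in particular the full-width rectangle $R_2$: unlike the two ``corner'' rectangles, a band of consecutive rows depends only on $u$'s state and not on $v$'s, so it must be represented by an edge internal to $P_u$ pointing to $s$ or $t$ (or by two edges, one to each terminal, set up so that exactly one is cut when $u$'s state falls in the band), and one must check this does not over- or under-count. A secondary subtlety is verifying that \cref{cor:verynicematrix}/\cref{cor:nicematrix} genuinely guarantee that no two of $R_1, R_2, R_3$ can be simultaneously cut by a single pair of path-states — this follows because the three rectangles are pairwise disjoint and a fixed pair of states $(i,j)$ hits the matrix in exactly one entry $M_{i,j}$, but it is worth stating explicitly. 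Everything else (the infinite-weight encoding, reading assignments off cuts, the polynomial running time of max-flow) is routine.
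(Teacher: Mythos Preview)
Your proposal is correct and follows essentially the same approach as the paper's proof: reduce to a minimum $s$-$t$ cut by representing each vertex $v$ with a directed path on $|L(v)|+1$ nodes (with unbreakable arcs forcing the endpoints to opposite sides), and then, for each edge $uv$ of $G$, introduce up to three unit-weight arcs corresponding to the rectangles $R_1,R_2,R_3$ from \cref{cor:nicematrix}.

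On your stated obstacle: your first instinct for $R_2$ was exactly right, and the paper does precisely that. If $R_2$ consists of rows $i',\dots,i''$ of the interaction matrix of $(L(v),L(w))$, the paper adds a \emph{backward} unit-weight arc $\overrightarrow{p^v_{i''}\,p^v_{i'-1}}$ within the path $P^v$ itself (not an arc to $s$ or $t$). This arc must be cut precisely when $p^v_{i''}$ is on the $s$-side and $p^v_{i'-1}$ is on the $t$-side, i.e., when the state of $v$ lies in $\{a^v_{i'},\dots,a^v_{i''}\}$, independently of $w$ --- exactly as desired. One detail the paper makes explicit and you should too: different edges of $G$ incident to $v$ may introduce the \emph{same} backward arc on $P^v$, so the digraph must allow parallel arcs. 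Your secondary subtlety (that the three rectangles are pairwise disjoint, so any fixed pair of states activates at most one of the three arcs) is used exactly as you say; the paper remarks that removing one arc is always sufficient.
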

\begin{proof}
We will solve the problem by a reduction to the min-cut problem.
In this problem we are given a digraph $D$ with source $s$ and sink $t$, multiple arcs, and arc weights, we aim to find a minimum $s$-$t$-cut:
a set $A$ of arc with minimum total weight such that $D \setminus A$ contains no $s$-$t$-paths.
This problem can be solved in polynomial time using network flow algorithms.

Note that the problem can be equivalently stated as follows.
We want to partition the vertex set of $D$ into two sets, $S$ and $T$,
such that
\begin{myitemize}
\item $s \in S$ and $t \in T$,
\item the set $A$ of arcs beginning in $S$ and ending in $T$ is as small as possible.
\end{myitemize}
We will use this interpretation when describing the intended behavior of $D$.
We will refer to the vertices in $S$ as \emph{mapped to the left},
and the vertices in $T$ as \emph{mapped to the right}.

In our construction of $D$ we will have two types of arcs: with unit weight and arcs whose weight is equal to a large number $\mathfrak{w}$,
to be specified later. We will choose $\mathfrak{w}$ in a way that in any optimum solution no edge with weight $\mathfrak{w}$ is deleted.
Thus these edges will be called \emph{unbreakable}.

Let $(G,L)$ be an instance of \LHomED($H$), where $H$ is as in the statement of the theorem.
Fix some geometric representation of $H$ as a bi-arc graph and let $\prec$ be the relation on vertices defined in the previous section.
We start the construction of $D$ with introducing the source $s$ and the sink $t$.

Now consider a vertex $v \in V(G)$ with list $X := L(v)$.
Note that we can assume that $L(v) \neq \emptyset$, as otherwise we immediately report a no-instance.
Recall that $X$ is an incomparable set and by \cref{lem:linorderedVertices} the relation $\prec$ is a strict total order on $X$.
Let us enumerate $X$ as $x_1 \prec x_2 \prec \ldots \prec x_{|X|}$.

We introduce to $D$ a directed path $P^v$ with $|X|+1$ vertices denoted by $p^v_0,p^v_1,\ldots,p^v_{|X|}$.
The edges of $P^v$ are unbreakable. Furthermore we add unbreakable arcs $\overrightarrow{p^v_0t}$ and $\overrightarrow{sp^v_{|X|}}$.
Note that this means that $p^v_0$ will always be mapped to the right and $p^v_{|X|}$ will always be mapped to the left.
Furthermore, as there can be no edges from vertices mapped to the left to vertices mapped to the right,
and all edges of $P^v$ are unbreakable, we conclude that in any optimum solution there is $i \in [|X|]$
such that $p^v_0,\ldots,p^v_{i-1}$ are mapped to the right and $p^v_{i},\ldots,p^v_{|X|}$ are mapped to the left.
The choice of $i$ will be interpreted as mapping $v$ to $x_i$. Let us denote this value of $i$ by $\trans(v)$.

Now let us take care of edges of $G$. Fix some arbitrary ordering of vertices of $G$.
Let $vw$ be an edge such that $v$ precedes $w$ in the ordering.
Let $X := L(v)$ and $Y := L(w)$, where the elements of $X$ are $x_1 \prec \ldots \prec x_{|X|}$ and the elements of $Y$ are $y_1 \prec \ldots \prec y_{|Y|}$.
Let $M$ be the interaction matrix of $(X,Y)$.
Let $R_1,R_2,R_3$ be the (possibly empty) rectangles given by \cref{cor:nicematrix}.

Note that if $v$ is mapped to $x_i$ and $w$ is mapped to $y_j$, then the edge $vw$ has to be deleted if and only if $M_{i,j}=0$,
i.e, it $M_{i,j}$ is contained in exactly one of $R_1,R_2,R_3$.

Suppose that $R_1$ is non-empty and its bottom-left corner is $M_{i',j'}$. In particular $j' \geq 1$.
This means that if $v$ is mapped to $x_i$ for some $i \leq i'$ and $w$ is mapped to $y_j$ for some $i \geq j'$,
the edge $vw$ needs to be deleted.
We add the arc $\overrightarrow{p^v_{i'}p^w_{j'-1}}$.
Note that is $v$ is mapped to $x_i$ for $i \leq i'$, then $\trans(v) \leq i'$ and thus $p^v_{i'}$ is mapped to the left.
If $w$ is mapped to $y_j$ for $j \geq j'$, then $\trans(v) \geq j'$ and thus $p^w_{j'-1}$ is mapped to the right.
Thus then we need to remove the arc $\overrightarrow{p^v_{i'}p^w_{j'-1}}$.

If $R_3$ is non-empty, we proceed in an analogous way.
Let the top-right corner of $R_3$ be $M_{i',j'}$: we add the arc $\overrightarrow{p^w_{j'}p^v_{i'-1}}$.

Finally, suppose that $R_2$ is non-empty and let is consist of rows $i',i'+1,\ldots,i''$.
This means that if $v$ is mapped to $x_i$ for $i' \leq i \leq i''$, then we have to remove the edge $vw$, regardless of the image of $w$.
We add the arc $\overrightarrow{p^v_{i''}p^v_{i'-1}}$. We point out that such an arc might have already been introduced; in this case we add another copy.
Note that if $v$ is mapped to  is mapped to $x_i$ for $i' \leq i \leq i''$, i.e., $i' \leq \trans(v) \leq i''$,
then $p^v_{i'}$ is mapped to the left and $p^v_{i''}$ is mapped to the left. Thus in this case the arc $\overrightarrow{p^v_{i''}p^v_{i'-1}}$ needs to be deleted.

This completes the construction of $D$. We set $\mathfrak{w}$ to be the number of unit arcs plus one.
Note that removing all unit arcs destroys all $s$-$t$-paths in $D$, i.e., we never need to remove any unbreakable arc.

From the discussion above it follows that a minimum $s$-$t$-cut in $D$ has weight $k$ if and only if we can modify $(G,L)$ into a yes-instance of \LHom($H$) by removing $k$ edges.

Indeed, consider an edge $vw$ of $G$.
If $v$ and $w$ are mapped to an edge of $V(H)$, then we do not need to remove any arc introduced for $vw$.
Now consider the case that $v$ and $w$ are mapped to a non-edge $x_iy_j$ of $V(H)$, where $x_i \in L(v)$ and $y_j \in L(w)$.
Then the edge $vw$ needs to be removed from $G$. On the other hand, we always need to remove from $D$ at least one unit arc introduced for the edge $vw$.
Furthermore, removing one arc is always sufficient.
This follows from the fact that rectangles $R_1,R_2,R_3$ are pairwise disjoint and we only need to remove the arc corresponding to the rectangle containing $M_{i,j}$ (where $M$ is the interaction matrix of $(L(v), L(w))$.
This completes the proof.
\end{proof}

\section{\boldmath Tight results for \LHomED($H$)}
\label{sec:ed-tight}
In this section we prove \cref{thm:ed-main-intro-tw} and \cref{thm:ed-main-intro}.

Let us define the central parameter of this section.
Recall that by $i(H)$ we denote the size of a largest incomparable set in $H$.
Also, recall the definition of the decomposition.

\defdecomp*

If a graph $H$ admits no decomposition, we say that it is \emph{undecomposable}.

\begin{defn}[Obstruction]\label{def:obstruction}
	An \emph{obstruction} in $H$ is a set $O \subseteq V(H)$, such that at least one of the following holds
	\begin{myenumerate}
		\item $|O|=2$ and $O$ induces in $H$ an irreflexive edge,
		\item $|O|=3$ and $O$ has private neighbors,
		\item $|O|=3$ and $O$ has co-private neighbors.
	\end{myenumerate}
\end{defn}

Recall that by \cref{thm:ed-dicho-intro} obstructions are precisely the minimal substructures in $H$ that make the \LHomED($H$) problem \NP-hard (assuming \PP $\neq$ \NP).
	
For a graph $H$ that contains an obstruction, we define 
\begin{align*}
\param(H) :=  \max \{ i(H') ~|~  & H' \text{ is an undecomposable induced subgraph} \\ & \text{of $H$ that contains an obstruction}  \}
\end{align*}
For sake of completeness, if $H$ does not contain an obstruction, we define $\param(H):=1$.

Clearly $\param(H) \leq i(H)$ for every graph $H$.
However, we observe that $i(H)$ cannot be upper-bounded by any function of $\param(H)$.

Let $H_1 = K_2$ and call its vertex set $A$. Clearly, $H_1$ contains an obstruction and $i(H_1)=2$.
Let $k \geq 3$ be an integer. Let $H_2$ be a graph constructed as follows.
We introduce a reflexive clique $B$ consisting of vertices $v_0,v_1,\ldots,v_{k+1},w_0,w_1,\ldots,w_{k+1}$.
Furthermore, we add a set $C$ of irreflexive vertices $u_1,u_2,\ldots,u_k$.
For each $i \in [k]$, the vertex $u_i$ is adjacent to $v_j$ for all $j \geq i$ and all $w_j$ for all $j \leq i$.
We point out that the vertices $v_0,\ldots,v_{k+1}$ are pairwise comparable, and so are the vertices $w_0,\ldots,w_{k+1}$.
Furthermore, for each $i,j$ it holds that $\nh(u_j) \subseteq \nh(v_i)$ and $\nh(u_j) \subseteq \nh(w_i)$.
Finally, vertices $u_1,\ldots,u_k$ form an incomparable set. Thus $i(H_1)=k$.
We observe that $H_1$ does not contain any obstruction. Indeed, irreflexive vertices form an independent set.
Suppose now that $H_1$ contains a three-element set $O$ with (co-)private neighbors. In particular, $O$ is incomparable and of size 3,
thus it must consist of irreflexive vertices, say $u_i,u_j,u_\ell$ for $i < j < \ell$.
But note that $\nh(u_i) \cap \nh(u_\ell) \subseteq \nh(u_j) \subseteq \nh(u_i) \cup \nh(u_\ell)$.
Consequently, $u_j$ cannot have a private neighbor, and $u_i,u_\ell$ cannot have a co-private neighbor, a contradiction.

Now let $H$ be a graph obtained by adding all edges between $A$ and $B$.
Note that $(A,B,C)$ is a decomposition of $H$. Consequently, we have $i(H) = i(H_2)= k$ and $\param(H) = i(H_1)=2$.
As $k$ can be arbitrarily large, we conclude that $i(H)$ cannot be bounded in terms of $\param(H)$.

\subsection{\boldmath  Algorithm for \LHomED($H$)}
First we show the algorithmic part of \cref{thm:ed-main-intro-tw}.

\begin{customthm}{\ref{thm:ed-main-intro-tw}~(a)}\label{thm:ed-algo}
	Let $H$ be a fixed graph and let $(G,L)$ be an $n$-vertex instance of $\ED(H)$, given along with a tree decomposition of width $t$.
	Then $(G,L)$ can be solved in time $\param(H)^t \cdot n^{\bigO(1)}$.
\end{customthm}

Before we proceed to the proof, let us show how a decomposition of $H$ can be used algorithmically.

\begin{lem}\label{lem:ed-decomp}
	Let $H$ be a graph that admits a decomposition $(A,B,C)$.
	Let $H_1 := H[A]$ and $H_2:= H[B \cup C]$.
	Then every instance $(G,L)$ of $\ED(H)$ can be solved by solving an instance $(G_1,L)$ of $\ED(H_1)$ and  an instance $(G_2,L)$ of $\ED(H_2)$,
	where $G_1,G_2$ are vertex-disjoint induced subgraphs of $G$.
	All additional computation is performed in time polynomial in $|V(G)|$.
\end{lem}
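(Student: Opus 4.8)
The plan is to make precise the informal discussion on \cpageref{page:ed-alg}. Given an instance $(G,L)$ of $\ED(H)$, first preprocess the lists so that every $L(v)$ is an incomparable set (this is safe, as recalled in the preliminaries, and can be done in polynomial time). The key structural fact about a decomposition $(A,B,C)$ is that for any $a\in A$, $b\in B$, $c\in C$ we have $\nh(c)\subseteq\nh(a)\subseteq\nh(b)$; hence any incomparable set is entirely contained in $A$, in $B$, or in $C$. In particular each preprocessed list $L(v)$ lies inside $A$, inside $B$, or inside $C$. Partition $V(G)$ accordingly into $V_A$ (lists inside $A$) and $V_{BC}$ (lists inside $B\cup C$), breaking ties arbitrarily for vertices with empty or singleton lists; set $G_1:=G[V_A]$ and $G_2:=G[V_{BC}]$, which are vertex-disjoint induced subgraphs whose union covers $V(G)$.

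The next step is to argue that the edges of $G$ between $V_A$ and $V_{BC}$ contribute a fixed, computable amount to the optimum, independent of the rest of the solution. Take any edge $uv$ with $u\in V_A$ and $v\in V_{BC}$. If $L(v)\subseteq B$, then since $A$ and $B$ are completely joined in $H$, every list-respecting assignment maps $uv$ to an edge of $H$, so $uv$ need never be deleted. If $L(v)\subseteq C$, then since there is no edge between $A$ and $C$ in $H$, every list-respecting assignment maps $uv$ to a non-edge, so $uv$ must be deleted in every feasible solution. Thus, letting $Z$ be the set of edges between $V_A$ and $V_{BC}$ whose $V_{BC}$-endpoint has its list inside $C$, every solution of $\ED(H)$ on $(G,L)$ must delete all of $Z$ and may as well keep all other edges between $V_A$ and $V_{BC}$; moreover $|Z|$ is computable in polynomial time.

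It remains to show that, after removing the edges between $V_A$ and $V_{BC}$, the two sides are genuinely independent: a list-respecting homomorphism of $G_1\setminus X_1$ to $H$ using only labels in $A$ is the same thing as a list-respecting homomorphism to $H_1=H[A]$ (and similarly for $G_2$ and $H_2=H[B\cup C]$), because $H_1$ and $H_2$ are induced subgraphs and all relevant lists already sit inside $A$ resp.\ $B\cup C$. Combining, $\mathsf{OPT}_{\ED(H)}(G,L)=\mathsf{OPT}_{\ED(H_1)}(G_1,L)+\mathsf{OPT}_{\ED(H_2)}(G_2,L)+|Z|$; given optimal edge-deletion sets $X_1,X_2$ for the two subinstances, $X_1\cup X_2\cup Z$ is an optimal solution for $(G,L)$, and conversely any solution restricts to feasible solutions on the two sides. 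All of the partitioning, the computation of $Z$, and the reassembly are clearly polynomial in $|V(G)|$.

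The only mildly delicate point — and the one I would be most careful about — is the handling of vertices whose (preprocessed) list is empty or a singleton and therefore could in principle be assigned to either side: I would fix a convention (e.g.\ put them in $V_A$) and check that the two claims about cross edges still go through, which they do since a vertex with an empty list forces a no-instance anyway and a singleton list inside $B$, inside $A$, or inside $C$ still satisfies $\nh(c)\subseteq\nh(a)\subseteq\nh(b)$. The argument is otherwise entirely bookkeeping; no result beyond Definition~\ref{def:decomp} and the list-reduction observation from the preliminaries is needed.
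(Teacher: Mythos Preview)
Your proposal is correct and follows essentially the same approach as the paper: use the chain of neighborhoods $\nh(c)\subseteq\nh(a)\subseteq\nh(b)$ to confine each incomparable list to one of $A,B,C$, then observe that cross edges from $V_A$ to $V_B$ are always satisfied while those from $V_A$ to $V_C$ are never satisfied, so the instance splits. Your write-up is in fact a bit more explicit than the paper's (you state the formula $\mathsf{OPT}=\mathsf{OPT}_1+\mathsf{OPT}_2+|Z|$ and discuss the empty-list corner case), but the argument is the same.
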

\begin{proof}
	Notice that for every $a \in A$, $b \in B$, and $c \in C$,
	we have $\nh(c) \subseteq \nh(a) \subseteq \nh(b)$.
	Since for every $x \in V(G)$ we can assume that $L(x)$ is an incomparable set,
	we conclude that each list is fully contained in $A,B$, or $C$.
	
	For $X \in \{A,B,C\}$, let $V_X$ be the set of vertices $x \in V(G)$ such that $L(x) \subseteq X$. By the previous observation, sets $V_A,V_B,V_C$ form a partition of $V(G)$.
	
	Consider $x_a \in V_A$ and $x_b \in V_B$ such that $x_ax_b \in E(G)$.
	Note that for any $a \in L(x_a)$ and any $b \in L(x_b)$ we have $ab \in E(H)$.
	Thus the edge $x_ax_b$ can be safely deleted without changing the solution to the instance.
	
	Now consider $x_a \in V_A$ and $x_c \in V_C$ such that $x_ax_c \in E(G)$.
	For any $a \in L(x_a)$ and any $c \in L(x_c)$ we have $ac \notin E(H)$.
	Thus in any solution the edge $x_ax_c$ has to be deleted.
	Consequently, we can focus on solving the instance $(G \setminus \{x_ax_c\},L)$,
	and afterwards add $x_ax_c$ to the set of deleted edges.
	
	Repeating the above two steps exhaustively, we obtain an equivalent instance where there are no edges between $V_A$ and $V_B$,
	and no edges between $V_A$ and $V_C$.
	Thus, the instances given by graphs $G_1 := G[V_A]$ and $G_2 := G[V_B \cup V_C]$ can be solved independently.
	Finally, recall that for each $i \in [2]$, the lists of all vertices of $G_i$ are contained in $V(H_i)$, so $(G_i,L)$ is an instance of (the minimization version of) $\ED(H_i)$.
\end{proof}

We show that it is straight-forward that $\ED(H)$ can be solved in time $i(H)^t \cdot n^{\bigO(1)}$.

\begin{lem}\label{lem:ed-algo-simple}
	Let $H$ be a fixed graph and let $(G,L)$ be an $n$-vertex instance of $\ED(H)$, given along with a tree decomposition of width $t$.
	Then $(G,L)$ can be solved in time $i(H)^t \cdot n^{\bigO(1)}$.
\end{lem}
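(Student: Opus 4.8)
The plan is to run the textbook bottom-up dynamic programming over the given tree decomposition, exploiting the fact (established in the preliminaries) that we may assume every list $L(v)$ is an incomparable set, so $|L(v)| \le i(H)$. Since in the edge-deletion problem no vertex is ever deleted, a candidate solution is just a list-respecting map $\phi\from V(G)\to V(H)$, and its value is the number of edges $uv\in E(G)$ with $\phi(u)\phi(v)\notin E(H)$; the task is to minimize this value. First I would transform the width-$t$ tree decomposition in polynomial time into a nice tree decomposition of the same width, with leaf, introduce-vertex, forget-vertex, and join nodes, and additionally introduce-edge nodes, so that each edge of $G$ is ``charged'' at exactly one node.

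For a bag $B_x$ at node $x$, the DP table is indexed by the at most $\prod_{v\in B_x}|L(v)| \le i(H)^{|B_x|} \le i(H)^{t+1}$ list-respecting assignments $f\from B_x\to V(H)$; the entry $T_x[f]$ stores the minimum, over all list-respecting extensions $\phi$ of $f$ to the vertices already forgotten in the subtree rooted at $x$, of the number of already-introduced edges that $\phi$ maps to a non-edge of $H$. The recurrences are the routine ones: a leaf has the single empty assignment with value $0$; an introduce-vertex node branches over the $\le i(H)$ choices for the new vertex and copies the value; an introduce-edge node for $uv$ adds $1$ to $T_x[f]$ precisely when $f(u)f(v)\notin E(H)$; a forget-vertex node takes, for each assignment on the smaller bag, the minimum over the $\le i(H)$ values of the forgotten vertex; and a join node combines its two children by $T_x[f] = T_{x_1}[f] + T_{x_2}[f]$ for each common assignment $f$. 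Each node is processed in time $i(H)^{t+1}\cdot n^{\bigO(1)}$, which equals $i(H)^t\cdot n^{\bigO(1)}$ since $H$ is fixed, and reading the unique entry of $T_r$ at the root $r$ (whose bag is empty) yields the optimum.

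There is no genuinely hard step here: this is the standard treewidth dynamic programming, and the only points worth care are the bookkeeping that ensures each edge of $G$ contributes to the cost exactly once (handled by the introduce-edge nodes), and the fact that the list-restriction to incomparable sets must be applied \emph{before} setting up the tables, so that the state space has size $i(H)^{\bigO(t)}$ rather than $|V(H)|^{\bigO(t)}$. The genuinely new algorithmic content of Theorem~\ref{thm:ed-main-intro-tw}(a) lies in combining this lemma with the decomposition of Lemma~\ref{lem:ed-decomp}, which is treated separately.
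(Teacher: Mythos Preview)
Your proposal is correct and matches the paper's approach: restrict lists to incomparable sets (so each has size at most $i(H)$) and then run the standard dynamic programming over the tree decomposition. The paper's proof is in fact much terser than yours, simply stating that one remembers, for each mapping of a bag, the minimum number of edge deletions needed in the processed subgraph; your nice-tree-decomposition details are a faithful fleshing-out of that sketch.
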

\begin{proof} 
	Suppose that for $v,v'\in V(H)$ we have $\nh(v)\subseteq \nh(v')$. Then in every solution $v'$ can be used instead of $v$. Therefore, we can assume that every list in $L$ is an incomparable set of $H$, and consequently that each list is of size at most $i(H')$.
	Now we can easily solve the problem in time $i(H')^t \cdot n^{\bigO(1)}$ by a straightforward dynamic programming on a tree decomposition:
	For each mapping of the current bag $X$ we remember the minimum number of edges 
	that need to be deleted to obtain a list homomorphism of the subgraph of $G$ induced by the bags in the subtree rooted in $X$,
	whose projection on $X$ is as prescribed.
\end{proof}

Now we are ready to prove \cref{thm:ed-algo}.

\begin{proof}[Proof of \cref{thm:ed-algo}.]
	We will prove the following statement by induction on $|V(H')|$:
	\begin{itemize}
		\item[$(\star)$] For every induced subgraph of $H'$ of $H$,
		every $n$-vertex instance $(G,L)$ of $\ED(H')$ 
		given along with a tree decomposition with width $t$ can be solved in time $\param(H)^t \cdot n^{\bigO(1)}$.
	\end{itemize}
	First, suppose that $H'$ does not have an obstruction. Then the problem can be solved in polynomial time by \cref{thm:ed-dicho-intro}.
	
	Second, suppose that $H'$ is undecomposable. Then $i(H') = \param(H)$ and the statement follows from \cref{lem:ed-algo-simple}.
	
	Note that the case that $|V(H')|=1$ is covered by the cases discussed above.
	So suppose that $V(H')$ contains an obstruction and admits a decomposition $(A,B,C)$, and that $(\star)$ holds for all subgraphs with fewer than $|V(H')|$ vertices.
	Let $H'_1:=H'[A]$ and $H'_2:=H'[B \cup C]$. As $A \neq \emptyset$ and $B \cup C \neq \emptyset$, the inductive assumption applies to $H_1'$ and $H'_2$. Note that given a tree decomposition of $G$ with width $t$ we can easily obtain a tree decomposition of every induced subgraph of $G$ with width at most $t$.
	Thus, by \cref{lem:ed-decomp}, $(\star)$ holds also for $H'$.
	Now the theorem follows from $(\star)$ for $H'=H$.
\end{proof}

\subsection{\boldmath  Hardness for \LHomED($H$)}
In this section we prove the \cref{thm:ed-main-intro}.

\thmedmainintro*

While the high-level idea of the proof is very similar to the one of \cref{thm:ed-main-intro}, the construction of the gadgets is much more involved in this case. The main difficulty stems from the fact that we have an additional algorithmic trick, i.e., the decomposition of $H$, and we need to take this into consideration while building gadgets.

\subsubsection{Realizing relations}
Recall the definition of gadgets from \cref{sec:prelims}. Note that a gadget may use lists.
We start by defining what it means for a gadget to realize a relation.

\begin{defn}[Realizing a relation]\label{def:realizing}
	Let $H$ be a graph. For some positive integer $r$, let $R\subseteq V(H)^r$. Consider some $r$-ary $H$-gadget $\calJ=(J,L,\boldx)$ , and some $\boldd\in V(H)^r$.
	By $\edcount(\calJ \to H, \boldd)$ we denote the size of a minimum set of edges $X$ that ensure that there is a list homomorphism $h$ from $(J\setminus X,L)$ to $H$ with $\phi(\boldx)=\boldd$.
	Then $\calJ$ \emph{realizes} the relation $R$ if there is an integer $k$ such that, for each $\boldd\in V(H)^r$, $\edcount(\calJ \to H, \boldd)=k$ if $\boldd\in R$, and $\edcount(\calJ \to H, \boldd)>k$, otherwise.
	We say that $\calJ$ \emph{$\omega$-realizes} $R$ for some integer $\omega\ge 1$ if additionally, for each $\boldd\notin R$ we have $\edcount(\calJ \to H, \boldd)=k+\omega$.
\end{defn}

Our definitions of $H$-gadgets and realizing a relation generalize the definitions of a $q$-gadget and the corresponding definition of realizing a relation from \cite{EsmerFMR24hub}. Consider the  graph $K_2$ with vertices $a$ and $b$. Then  the problem \LHomED($K_2$) is a reformulation of list $2$-coloring with edge deletions, where mapping a vertex to $a$ or $b$ is interpreted as coloring it using the two respective colors.
Hence, we can translate the following result from \cite{EsmerFMR24hub} to our notation using $H$-gadgets.

\begin{lem}[{\cite[Corollary 6.12]{EsmerFMR24hub}}]\label{lem:2colrealizing}
	Consider the graph $K_2$ with vertices $a$ and $b$. 
	For each $r\ge 1$, and $R\subseteq \{a,b\}^r$, there is an $r$-ary $K_2$-gadget that $1$-realizes $R$.
\end{lem}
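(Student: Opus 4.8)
The statement is essentially a citation, so the ``proof'' I would write is really a careful bridge between the framework of \cite{EsmerFMR24hub} and the $H$-gadget language of this paper, specialised to $H=K_2$. First I would observe that, since $K_2$ has two vertices $a,b$ and a single (irreflexive) edge $ab$, a list homomorphism from $(G,L)$ to $K_2$ is exactly a proper list $2$-colouring of $G$ with colour set $\{a,b\}$, and hence $\LHomED(K_2)$ is literally the list-$2$-colouring-with-edge-deletions problem, i.e.\ \textsc{ColoringED}$(2)$ with lists. Under this correspondence an $r$-ary $K_2$-gadget $\calJ=(J,L,\boldx)$ is precisely a $2$-colouring ``$q$-gadget'' in the terminology of \cite{EsmerFMR24hub} with $q=2$, a portal tuple $\boldx$, and each portal carrying the full list $\{a,b\}$; the quantity $\edcount(\calJ\to K_2,\boldd)$ is exactly the minimum number of monochromatic edges over list colourings of $J$ extending the prescribed colours $\boldd$ on the portals. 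Thus ``$\calJ$ $\omega$-realizes $R$'' in Definition~\ref{def:realizing} coincides verbatim with the notion of $\omega$-realization used in \cite{EsmerFMR24hub}.

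With this dictionary in place, the proof is a direct invocation: \cite[Corollary 6.12]{EsmerFMR24hub} asserts that for every $r\ge 1$ and every $R\subseteq\{a,b\}^r$ there is an $r$-ary $2$-colouring gadget that $1$-realizes $R$, and translating the gadget back through the correspondence gives an $r$-ary $K_2$-gadget that $1$-realizes $R$ in our sense. So the body of the proof is essentially: (i) state the correspondence between $\LHomED(K_2)$-instances and edge-deletion list $2$-colouring; (ii) check that it maps $H$-gadgets to the gadgets of \cite{EsmerFMR24hub} and preserves $\edcount$ exactly; (iii) apply the cited corollary.

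The only mildly delicate point --- and hence the ``main obstacle'' --- is making sure the two notions of ``realizing'' line up \emph{including the additive constant and the gap~$\omega$}, and that no looped vertices or non-uniform lists sneak in: in our Definition~\ref{def:realizing} the portals of a general $H$-gadget may carry arbitrary lists, whereas the $q$-gadgets of \cite{EsmerFMR24hub} implicitly have full colour lists on portals. For $H=K_2$ this is a non-issue because the only incomparable lists of size~$\le i(K_2)=2$ that we would ever use are $\{a,b\}$ itself (a singleton list just forces a portal colour and can be simulated), and $K_2$ has no loops, so no reflexive/irreflexive subtleties arise. Once this is spelled out, the lemma follows with no further computation.
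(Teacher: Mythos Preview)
Your proposal is correct and matches the paper's treatment exactly: the paper does not give a proof either, but simply observes in the paragraph preceding the lemma that $\LHomED(K_2)$ is a reformulation of list $2$-colouring with edge deletions and that the $H$-gadget framework specialises to the $q$-gadget framework of \cite{EsmerFMR24hub}, then cites the result directly. Your sketch spells out this translation in slightly more detail than the paper bothers to, but the content and approach are identical.
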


We observe that in the world of colorings, an edge essentially takes the role of a ``Not Equals'' gadget. So we can deduce the following generalization of \cref{lem:2colrealizing}.
\begin{cor}\label{lem:NEQtoRelstrong}
	Let $H$ be some graph, and let $a$, $b$ be two vertices in $H$. Suppose there is an $H$-gadget that $1$-realizes $\NEQ(a,b)\coloneqq\{(a,b),(b,a)\}$.
	Then, for each $r\ge 1$, and $R\subseteq \{a,b\}^r$, there is an $r$-ary $H$-gadget that $1$-realizes $R$.
\end{cor}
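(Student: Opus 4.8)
The plan is to transport the $K_2$-level construction of \cref{lem:2colrealizing} into $H$ by substituting, for each edge of a $K_2$-gadget, a copy of the hypothesised $\NEQ(a,b)$-gadget. The guiding observation is that a $K_2$-gadget is nothing but an instance of \LHomED($K_2$) whose lists are nonempty subsets of $\{a,b\}$ together with a choice of portals (each with list $\{a,b\}$), and that in such an instance an edge $uv$ must be deleted precisely when its two endpoints receive the same colour; so a single edge of $K_2$ already $1$-realizes $\NEQ(a,b)$, and every $K_2$-gadget is assembled from such constraints. It only remains to carry this assembly out inside $H$, where a concrete $\NEQ(a,b)$-constraint is available by hypothesis.

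Concretely, fix $r\ge 1$ and $R\subseteq\{a,b\}^r$, and let $\calJ=(J,L_J,\boldx)$ be the $r$-ary $K_2$-gadget $1$-realizing $R$ provided by \cref{lem:2colrealizing}; we may assume $J$ is loopless. Let $\calN=(N,L_N,(s,t))$ be the given $H$-gadget $1$-realizing $\NEQ(a,b)$, with base value $k_{\calN}$, so that $\edcount(\calN\to H,(a,b))=\edcount(\calN\to H,(b,a))=k_{\calN}$ while $\edcount(\calN\to H,(a,a))=\edcount(\calN\to H,(b,b))=k_{\calN}+1$. I would build the $H$-gadget $\calJ'=(J',L',\boldx)$ as follows: start from the vertices of $J$ with their lists $L_J(\cdot)\subseteq\{a,b\}\subseteq V(H)$, and for every edge $uv\in E(J)$ attach a fresh copy of $\calN$, disjoint from everything constructed so far except that its portals $s,t$ are identified with $u$ and $v$. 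The portals of $\calJ'$ remain $\boldx$.

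To verify that $\calJ'$ $1$-realizes $R$, I would compute $\edcount(\calJ'\to H,\boldd)$. For $\boldd\notin\{a,b\}^r$ a portal list is violated and the value is infinite, consistent with $R\subseteq\{a,b\}^r$. For $\boldd\in\{a,b\}^r$, the decisive point is that the copies of $\calN$ meet one another only in vertices of $J$, so an optimal deletion set of $\calJ'$ may be assumed to act coherently: it is determined by a list homomorphism $\phi\colon V(J)\to\{a,b\}$ extending $\boldd$ together with, on the copy attached to each $uv\in E(J)$, an optimal partial solution for the pair $(\phi(u),\phi(v))$, and these per-edge optima may be chosen simultaneously. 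Hence
\begin{align*}
\edcount(\calJ'\to H,\boldd)&=\min_{\phi}\ \sum_{uv\in E(J)}\edcount\bigl(\calN\to H,(\phi(u),\phi(v))\bigr)\\
&=|E(J)|\,k_{\calN}+\min_{\phi}\bigl|\{\,uv\in E(J):\phi(u)=\phi(v)\,\}\bigr|,
\end{align*}
the minimum being over all list homomorphisms $\phi$ of $J$ into $\{a,b\}$ that extend $\boldd$. The last minimum is exactly $\edcount(\calJ\to K_2,\boldd)$, because in \LHomED($K_2$) the edges forced to be deleted are precisely the monochromatic ones. Since $\calJ$ $1$-realizes $R$, there is a constant $k'$ with $\edcount(\calJ\to K_2,\boldd)=k'$ for $\boldd\in R$ and $=k'+1$ for $\boldd\in\{a,b\}^r\setminus R$; consequently $\edcount(\calJ'\to H,\boldd)$ equals the fixed value $k:=|E(J)|\,k_{\calN}+k'$ on $R$ and $k+1$ off $R$, i.e.\ $\calJ'$ $1$-realizes $R$.

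There is no deep obstacle here, since the statement is essentially a simulation; the only step that genuinely needs care — and the place where a full write-up would spend its effort — is the ``coherence'' claim used to split the optimum into one term per edge of $J$: that an optimal solution of $\calJ'$ restricts on each copy of $\calN$ to an optimal partial solution for the images of its two portals, and conversely that these per-copy optima are jointly attainable. Both directions follow from the fact that distinct copies of $\calN$ share no vertex other than the portal along which they were glued, so deleting edges inside one copy and choosing images of its internal vertices neither constrains nor is constrained by another copy; everything else in the argument is bookkeeping. The sole role of the hypothesis is, of course, the existence of the $\NEQ(a,b)$-gadget $\calN$ — without it there is no way to move information across an edge of $J$ inside $H$.
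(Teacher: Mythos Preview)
Your proposal is correct and follows exactly the same approach as the paper: take the $K_2$-gadget from \cref{lem:2colrealizing} and replace every edge by a copy of the given $\NEQ(a,b)$-gadget. You supply more detail than the paper does on why the cost decomposes additively over the copies of $\calN$ and on the resulting arithmetic, but the underlying construction and reasoning are identical.
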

\begin{proof}
	By \cref{lem:2colrealizing}, there is an $K_2$-gadget $\calJ$ that $1$realizes $R$, assuming the vertices of $K_2$ are labeled $a$ and $b$.
	A single edge in $\calJ$ is nothing more or less than a $K_2$-gadget $1$-realizing $\NEQ(a,b)$. Therefore, if we modify $\calJ$ by replacing each of its edges by the $H$-gadget that $1$-realizes $\NEQ(a,b)$, the resulting $H$-gadget $1$-realizes $R$.
\end{proof}

\subsubsection{Main gadgets}\label{sec:maingadgets}
The proof of \cref{thm:ed-main-intro} is based on a series of intermediate results, 
in which we show that certain relations can be realized over $H$ (see \cref{def:realizing}).

Given some binary relation $R\subseteq X \times Y$ and an element $x\in X$, we define $R(x)\coloneqq\{y\in Y \mid (x,y)\in R\}$.
An important role is played by \emph{indicators}. Let $H$ be a graph.
For a set $S \subseteq V(H)$ and two vertices $a,b \in V(H)$, an \emph{indicator of $S$ over $\{a,b\}$} is any relation $I\subseteq S\times \{a,b\}^{|S|(|S|-1)}$ such that
\begin{myitemize}
	\item $I(x)$ is non-empty for each $x\in S$, and
	\item $I(x)$ and $I(x')$ are disjoint for distinct $x,x'\in S$.
\end{myitemize}
Intuitively, $I$ ``translates'' elements of $S$ to binary sequences over $\{a,b\}$, so that distinct elements of $S$ are mapped to distinct (but not necessarily unique) codewords.	
	
The following lemma shows that the existence of an obstruction allows us to construct indicators of incomparable sets. 
\begin{restatable}{lem}{lemindicator} 
\label{lem:indicator}
Let $H$ be an undecomposable graph that contains an obstruction $O$.
Let $a,b\in O$ be distinct, and let $S$ be an incomparable set in $H$ of size at least 2.
Then there is an indicator $I$ of $S$ over $\{a,b\}$ and a gadget that realizes $I$. 
\end{restatable}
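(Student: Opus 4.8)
\textbf{Proof plan for \cref{lem:indicator}.}
The plan is to build the indicator $I$ by combining, for every ordered pair $(x,y)$ of distinct vertices of $S$, a binary ``selector'' subgadget with portals $(q_1,q_2)$ where $L(q_1)=S$ and $L(q_2)=\{a',b'\}$ for some pair $a',b'\in V(H)$ that may depend on $(x,y)$. The intended behaviour of such a selector is: assigning $x$ to $q_1$ forces $a'$ on $q_2$ (at base cost), assigning $y$ to $q_1$ forces $b'$ on $q_2$ (at base cost), and for every other $e\in S\setminus\{x,y\}$ at least one of $a',b'$ is allowed on $q_2$. Concretely, since $S$ is incomparable, there are vertices $x'\in\nh(x)\setminus\nh(y)$ and $y'\in\nh(y)\setminus\nh(x)$; a short path gadget of the form $S - (V(H)\setminus\nh(x)) - \{x'\} - \{x',y'\}$, or a mild variant of it, realizes the desired selector after checking the (finitely many) cases coming from the interaction types of \cref{def:types}. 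I expect this selector construction to be the easy part, following the same flavour as the splitter/translator constructions of \cref{lem:vd-gadget}.

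The next step is to normalize the output pair: since in general the pair $(a',b')$ produced by the selector for $(x,y)$ is not $(a,b)$, I would invoke \cref{lem:movingMain} (which requires exactly the hypothesis that $H$ is undecomposable) to obtain an $\{a',b'\}\leadsto\{a,b\}$ move, i.e.\ a binary gadget with portals $(u,w)$, $L(u)=\{a',b'\}$, $L(w)=\{a,b\}$, such that the two values of $a'$ and $b'$ are forced to the two values of $\{a,b\}$ uniquely (in some order). Composing each selector with the appropriate move at its second portal turns every selector into a gadget with portals $(q_1,z)$, $L(q_1)=S$, $L(z)=\{a,b\}$, that forces distinct values of $\{a,b\}$ on $z$ when $q_1$ is set to $x$ versus $y$, and allows at least one value on $z$ otherwise. (The $\{a,b\}$ here are exactly two vertices of the obstruction $O$; this is where we use that $O$ exists, since \cref{lem:movingMain} and the selectors only need incomparability, but we must land the coordinates on the prescribed pair $a,b\in O$ — the ``moving'' lemma lets us move \emph{into} any incomparable pair, and any two vertices of an obstruction are incomparable.)

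Finally I would assemble the indicator. Enumerate the $|S|(|S|-1)$ ordered pairs $(x,y)$ of distinct elements of $S$; for each, take one normalized selector-plus-move gadget with portals $(q_1^{(x,y)}, z^{(x,y)})$, and identify all the $q_1^{(x,y)}$ into a single vertex $p$ with list $S$. The remaining $\lambda\coloneqq|S|(|S|-1)$ portals $z^{(x,y)}$ form the binary output coordinates. Set $\alpha$ to be the sum of the base costs of all the pieces. For $e\in S$, let $I(e)\subseteq\{a,b\}^{\lambda}$ be the set of output assignments compatible with setting $p=e$ using exactly $\alpha$ deletions inside the gadget; $I(e)$ is nonempty because each piece independently allows at least one extension at base cost. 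For distinct $e,e'\in S$, the coordinate indexed by the pair $(e,e')$ is forced to one value of $\{a,b\}$ when $p=e$ and to the \emph{other} value when $p=e'$ (by the forcing property of that particular selector), hence $I(e)\cap I(e')=\emptyset$. This $I$ is by definition an indicator of $S$ over $\{a,b\}$, and the gadget just described realizes it.

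The main obstacle I anticipate is the careful bookkeeping in the selector construction: one must verify, across all interaction types (p-, s-, 0-, 1-interaction from \cref{def:types}, and using \cref{cor:orderofTypes}/\cref{cor:nicematrix} for the shape of neighborhoods), that a \emph{constant}-size path gadget realizes the forcing behaviour with a uniform base cost and strictly larger cost on the forbidden combination, and that composing it with the move gadget from \cref{lem:movingMain} preserves the ``forces'' versus ``allows'' dichotomy and keeps all base costs additive. A secondary subtlety is ensuring that identifying the $q_1$ portals does not create unintended interactions between the different selector pieces — this follows because the pieces share only the single vertex $p$ and are otherwise vertex-disjoint, so their costs and their sets of base-cost-optimal extensions multiply independently.
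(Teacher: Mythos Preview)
Your high-level plan is correct and matches the paper's proof: one binary gadget per ordered pair $(x,y)$ of distinct elements of $S$, each sending $x$ and $y$ to distinct values of $\{a,b\}$, then identify all first portals into a single vertex $p$ and use the $|S|(|S|-1)$ second portals as the output coordinates; the indicator properties then follow exactly as you argue.

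Where you diverge from the paper is in the construction of the per-pair gadget, and there you are both working too hard and citing the wrong machinery. The references to \cref{def:types}, \cref{cor:orderofTypes}, and \cref{cor:nicematrix} are misplaced: those results concern the structure of bi-arc graphs in the \emph{polynomial-time} side of the dichotomy and play no role in the hardness gadgets. More importantly, no explicit ``selector'' to an intermediate pair $\{x',y'\}$ is needed. Since $\{x,y\}\subseteq S$ is already incomparable and any two vertices of an obstruction are incomparable, \cref{lem:movingMain} gives $\{x,y\}\leadsto\{a,b\}$ directly; \cref{lem:movefromS} then simply enlarges the first portal's list from $\{x,y\}$ to $S$, yielding in one stroke a move from $S$ to $\{a,b\}$ that forces $x$ and $y$ to opposite values. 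The cost-normalization issue you flag (equal base cost for every $e\in S$, so that the identified gadget \emph{realizes} the indicator) is handled uniformly by \cref{lem:moverealizable}, not by case analysis. With these three lemmas in place the bookkeeping you anticipate as the ``main obstacle'' disappears entirely.
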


Let us postpone the proof of \cref{lem:indicator} to \cref{sec:indicator}. It uses \cref{lem:movingMain} as key ingredient. Proving \cref{lem:movingMain} is the most technically involved part of our lower bound, and this is done in \cref{sec:2vertexmoves}.

The next lemma shows that the structure of obstructions is rich enough to express all binary relations.

\begin{restatable}{lem}{lemallrelations}
\label{lem:ed-allrelations}
Let $H$ be a graph and let $O$ be an obstruction in $H$. Let $a,b \in O$ be distinct.
Then for each $\omega\ge 1$, $r\ge 1$, and $R\subseteq \{a,b\}^r$, there is a gadget that $\omega$-realizes $R$.
\end{restatable}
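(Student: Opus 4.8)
The plan is to reduce this statement to \cref{lem:NEQtoRelstrong} (and hence to \cref{lem:2colrealizing}), so the entire task boils down to constructing, for each distinct pair $a,b$ in an obstruction $O$ and each $\omega\ge 1$, a single $H$-gadget that $\omega$-realizes the relation $\NEQ(a,b)=\{(a,b),(b,a)\}$. Once we have such a gadget for $\omega=1$, \cref{lem:NEQtoRelstrong} immediately yields a $1$-realizer for every $R\subseteq\{a,b\}^r$; and to upgrade from $1$-realizing to $\omega$-realizing an arbitrary $R$, the standard trick is to take $\omega$ disjoint copies of the $1$-realizing gadget sharing only the $r$ portals — this multiplies every $\edcount$ by $\omega$, so the gap between $\boldd\in R$ and $\boldd\notin R$ becomes exactly $\omega$ while the minimum cost becomes $\omega k$. (One must double-check that taking disjoint copies does not accidentally allow a cheaper joint extension: since the copies share only the portal vertices and edge-deletions are counted per-copy, the optimal extension in each copy is independent given the portal assignment, so this is routine.)

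First I would handle the easiest case: if $O$ is an irreflexive edge $\{a,b\}$, then a single edge $xy$ with $L(x)=L(y)=\{a,b\}$ is already a $1$-realizer of $\NEQ(a,b)$ — the edge survives iff its endpoints get the two distinct colors, so cost $0$ on $\NEQ$ and cost $1$ off it. Next, if $O=\{v_1,v_2,v_3\}$ has private neighbors with private neighbors $v_1',v_2',v_3'$, and $a,b\in O$, say $a=v_1$, $b=v_2$: here I would build a short path-gadget exploiting that $v_1'$ is adjacent to $v_1$ but not $v_2$ and $v_2'$ is adjacent to $v_2$ but not $v_1$. A candidate is a path $\{a,b\}-\{a',?\}-\cdots-\{b',?\}-\{a,b\}$ whose internal lists force, when the two ends receive the same value, at least one extra edge-deletion compared to when they receive distinct values; the precise list sequence will need a small case analysis depending on adjacencies among $v_1',v_2',v_3'$ and their reflexivity (much in the spirit of the matcher/translator constructions in the proof of \cref{lem:vd-gadget}). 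The co-private-neighbor case $O=\{v_1,v_2,v_3\}$ is handled by the dual construction, using co-private neighbors $\bar v_1,\bar v_2,\bar v_3$ where $\bar v_i$ is adjacent to the two $v_j$ with $j\neq i$: the relevant nonadjacency is $\bar v_1 v_1\notin E(H)$ etc., and by Lemma~\ref{lem:biarc-intro}'s proof this configuration induces a six-cycle $v_1',\bar v_3'',v_2',\bar v_1'',v_3',\bar v_2''$ in $H^*$, i.e.\ a reflexive-free analogue of a $C_6$ obstruction, from which a $\NEQ$-matcher of the form $\{a,b\}-\cdots-\{a,b\}$ can be read off directly (analogous to the induced-four-cycle matcher $\{a,b\}-\{x,b\}-\{x,y\}-\{b,y\}-\{a,b\}$ used in \cref{lem:vd-gadget}).

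The main obstacle I anticipate is not any single gadget but the bookkeeping to make all three obstruction types yield a gadget that $1$-realizes $\NEQ$ \emph{exactly} — i.e.\ with a uniform minimum cost $k$ on both $(a,b)$ and $(b,a)$ and cost exactly $k+1$ on each of $(a,a)$ and $(b,b)$ — rather than merely ``cost $\ge k+1$ off-relation'', since the clean $\omega$-scaling argument needs the off-relation gap to be controllable. If a naive construction gives a gap larger than $1$ on one of the diagonal entries, one fixes this by padding: add a parallel ``slack'' sub-path that contributes a fixed extra cost to the cheap configurations so that all configurations reach the same baseline, leaving a uniform gap of $1$. I expect the private-neighbor case to require the most case analysis (reflexive vs.\ irreflexive $v_i'$, and which adjacencies hold among them), but each subcase is a constant-size path gadget whose cost for each of the four portal assignments is a finite check; none of this affects treewidth or \coreword\ size beyond additive constants, which is all that is needed downstream.
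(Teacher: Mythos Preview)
Your proposal is correct and follows essentially the same approach as the paper: reduce via \cref{lem:NEQtoRelstrong} to constructing a $1$-realizer of $\NEQ(a,b)$, then handle the three obstruction types separately. The paper gives a single explicit gadget (depicted in a figure) for each of the private-neighbor and co-private-neighbor cases, so the sub-case analysis on adjacencies and reflexivity you anticipate turns out to be unnecessary; you also spell out the $\omega$-copies argument for $\omega>1$, which the paper's written proof actually leaves implicit.
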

\begin{proof}
By \cref{lem:NEQtoRelstrong}, it suffices to show that there is a gadget that $1$-realizes $\NEQ(a,b)=\{(a,b),(b,a)\}$.
	We consider three cases, as in the definition of an obstruction.
	As the obstructions are symmetric, the choice of $a,b \in O$ does not matter.
	\begin{description}          
	\item[Case (1): $O$ induces an irreflexive edge.]
	Let $O = \{a,b\}$.
	A single edge between the portals $1$-realizes $\NEQ(a,b)$. %
	
	\item[Case (2): $|O|=3$ and $O$ has private neighbors.]
	Let $O = \{a,b,c\}$. Let $a',b',c'$ denote private neighbors of $a,b,c$, respectively.
	A gadget $1$-realizing $\NEQ(a,b)$ is depicted in \cref{fig:neq2}.
	
	\item[Case (3): $|O|=3$ and $O$ has co-private neighbors.]
	Let $O = \{a,b,c\}$. Let $\bar a$ be a co-private neighbor for $b,c$,
	let $\bar b$ be a co-private neighbor for $a,c$,
	and let $\bar c$ be a co-private neighbor for $a,b$.
	A gadget $1$-realizing $\NEQ(a,b)$ is depicted in \cref{fig:neq3}.
      \end{description}
      This completes the proof.\end{proof}
	
	\begin{figure}
	\centering
		\begin{subfigure}[b]{0.32\textwidth}
			\centering
			\includegraphics[scale=1,page=4]{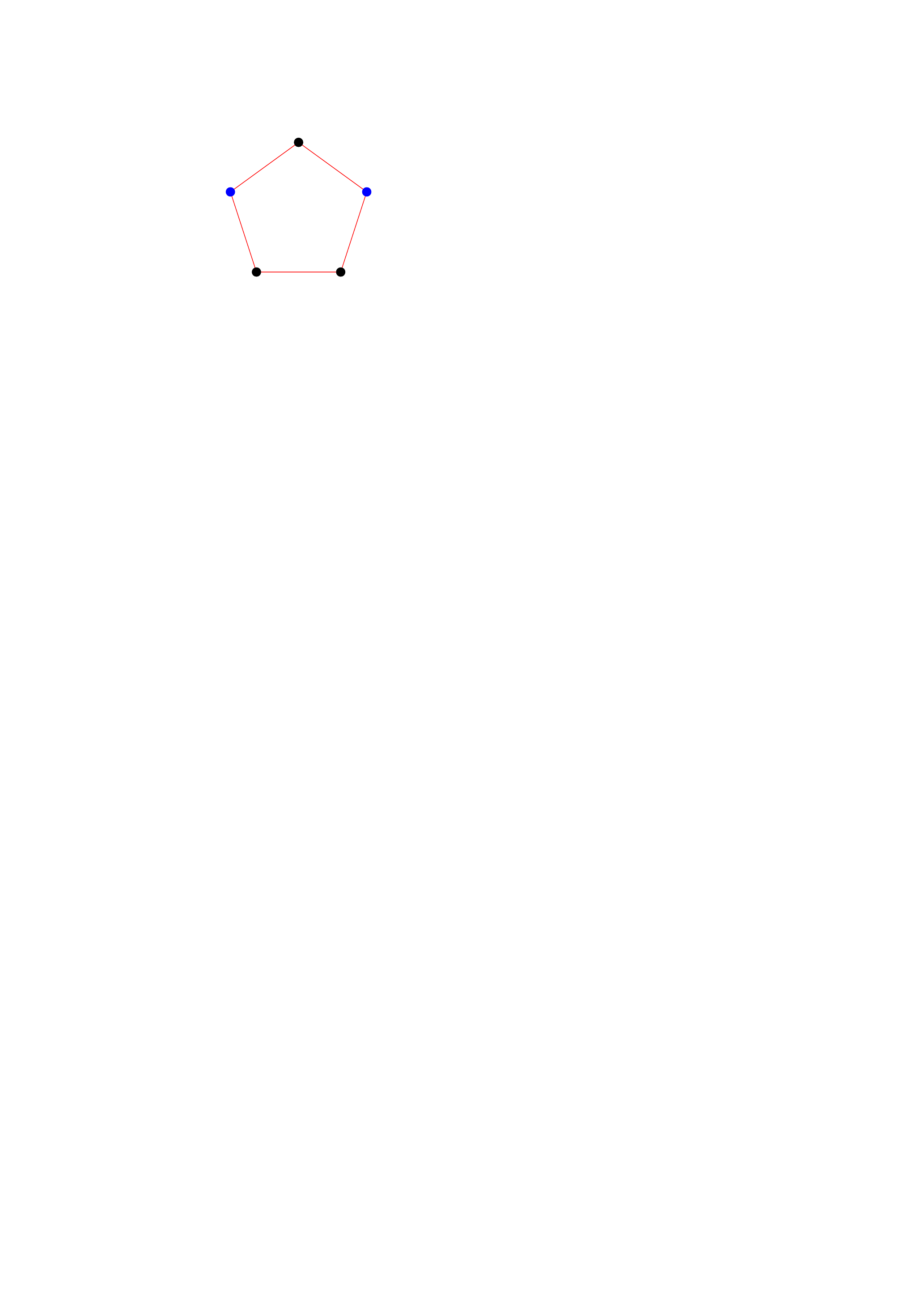}
			\caption{Case (2).}
			\label{fig:neq2}
		\end{subfigure}
		\begin{subfigure}[b]{0.42\textwidth}
			\centering
			\includegraphics[scale=1,page=5]{figures/ed-gadgets.pdf}
			\caption{Case (3).}
			\label{fig:neq3}
		\end{subfigure}
		\caption{The construction of $\calJ_{\NEQ}$ in the proof of \cref{lem:ed-allrelations}.
			Portals of the gadgets are marked blue and the sets denote lists.
		}
	\end{figure}

Combining \cref{lem:indicator} and \cref{lem:ed-allrelations} we obtain the final gadget used in the hardness proof.

\begin{lem}\label{lem:ed-neq}
Let $H$ be an undecomposable graph that contains an obstruction $O$. Let $a,b\in O$ be distinct, and let $S$ be an incomparable set in $H$ of size at least 2. For some integers $r,p\ge 1$, let $R\subseteq S^r\times \{a,b\}^p$.
Then there is a gadget that realizes $R$.
\end{lem}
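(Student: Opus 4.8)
\textbf{Proof plan for Lemma~\ref{lem:ed-neq}.}
The plan is to compose the two previously established ingredients: the indicator of \cref{lem:indicator} (which encodes the ``big-list'' coordinates over $S$ into binary sequences over $\{a,b\}$) and the general relation-realization of \cref{lem:ed-allrelations} (which can $\omega$-realize any Boolean relation over $\{a,b\}$ with an arbitrarily large gap $\omega$). First I would invoke \cref{lem:indicator} to obtain an indicator $I\subseteq S\times \{a,b\}^{|S|(|S|-1)}$ together with a gadget $\calI$ realizing it; write $\lambda\coloneqq|S|(|S|-1)$ and let $k_I$ be the common cost of assignments consistent with $I$, so that any portal assignment outside the graph of $I$ costs at least $k_I+1$. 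For each of the $r$ ``$S$-type'' portals of the target relation $R$ I would install a fresh copy of $\calI$, identifying its input portal with that target portal; this gives $r\lambda$ new ``output'' portals with list $\{a,b\}$, in addition to the $p$ original $\{a,b\}$-portals.

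Next I would describe the Boolean relation that needs to be realized on those $r\lambda+p$ binary portals. Using the indicator graph $I$, define
\[
R'\coloneqq\Bigl\{(\boldz_1,\dots,\boldz_r,\boldy)\in\{a,b\}^{r\lambda}\times\{a,b\}^p \;\Big|\; \exists (\boldx,\boldy)\in R \text{ with } \boldz_i\in I(x_i)\text{ for all }i\in[r]\Bigr\},
\]
which is a perfectly good relation $R'\subseteq\{a,b\}^{r\lambda+p}$ (no list issues, since all coordinates are over $\{a,b\}$). By \cref{lem:ed-allrelations}, applied with $\omega\coloneqq r+1$ (chosen large enough to dominate the slack that a single ``wrong'' indicator copy can absorb), there is a gadget $\calR'$ that $\omega$-realizes $R'$; I would attach $\calR'$ to the $r\lambda+p$ binary portals. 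The whole construction — the $r$ indicator copies plus $\calR'$ — is the claimed gadget, with the $S$-type portals and the $p$ original $\{a,b\}$-portals designated as portals and everything else internal.

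The correctness argument has two directions. For an assignment $\boldd=(\boldx,\boldy)$ of the portals, the cost decomposes as a sum of the costs inside each indicator copy plus the cost inside $\calR'$. If $(\boldx,\boldy)\in R$, then for each $i$ we can pick $\boldz_i\in I(x_i)$ (nonempty by the indicator property), realizing each indicator at its minimum cost $k_I$ and, since $(\boldz_1,\dots,\boldz_r,\boldy)\in R'$ by construction, realizing $\calR'$ at its minimum cost $k_{R'}$; so the total is the fixed value $k\coloneqq r k_I+k_{R'}$. Conversely, suppose some portal assignment achieves cost $\le k$. Because each indicator copy costs at least $k_I$ and $\calR'$ costs at least $k_{R'}$, every copy must be at its exact minimum — in particular each indicator is consistent with $I$, forcing the hidden output $\boldz_i$ to lie in $I(x_i)$, and $\calR'$ being at its minimum forces $(\boldz_1,\dots,\boldz_r,\boldy)\in R'$; the disjointness of the sets $I(x)$ then pins down each $x_i$ uniquely from $\boldz_i$, and definition of $R'$ yields $(\boldx,\boldy)\in R$. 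Finally, any assignment with $(\boldx,\boldy)\notin R$ either violates some indicator (adding at least $1$) or has $(\boldz_1,\dots,\boldz_r,\boldy)\notin R'$ for every consistent choice of the $\boldz_i$, in which case $\calR'$ incurs its penalty $\omega=r+1$; either way the cost strictly exceeds $k$, so the gadget realizes $R$ (though not necessarily with a controlled gap, which is why the statement asks only for ``realizes'', not ``$\omega$-realizes'').

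\textbf{Anticipated main obstacle.} The delicate point is the ``$\omega$ large enough'' bookkeeping in the converse direction: one must make sure that an assignment mapping an $S$-portal $x_i$ \emph{inconsistently} (so that the indicator copy pays $\ge k_I+1$) cannot be ``cheaper overall'' than a consistent-but-wrong assignment where $\calR'$ pays its penalty — and conversely that mixing a few such deviations cannot sneak under the threshold $k$. Choosing $\omega=r+1$ handles the worst case where all $r$ indicators are simultaneously off by one, but the argument has to be stated carefully so that the penalty of $\calR'$ is only ``collected'' when every indicator is at its minimum; the clean way is to first observe that any sub-minimum cost forces all indicators exact, and only then bring in the $R'$-membership constraint. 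I expect this case analysis, rather than the construction itself, to be where the proof needs care.
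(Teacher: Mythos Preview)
Your proposal is correct and follows essentially the same approach as the paper: attach $r$ copies of the indicator gadget from \cref{lem:indicator} to translate the $S$-coordinates into $\{a,b\}$-strings, then realize an appropriate Boolean relation on the resulting $r\lambda+p$ binary portals via \cref{lem:ed-allrelations}.

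Two minor remarks. First, the paper defines the Boolean relation slightly differently: it fixes a single representative $\id(x)\in I(x)$ for each $x\in S$ and sets $R_I=\{(\id(x_1),\dots,\id(x_r),\boldy):(\boldx,\boldy)\in R\}$, whereas you allow any $\boldz_i\in I(x_i)$. Both choices work for the same reason (disjointness of the $I(x)$'s), so this is purely cosmetic. Second, your ``anticipated main obstacle'' is not actually an obstacle, and $\omega=1$ already suffices: since the total cost decomposes additively and each summand is bounded below by its own minimum, achieving the overall minimum $k=r k_I+k_{R'}$ forces every indicator copy and $\calR'$ to be simultaneously at their minima. There is no scenario where an inconsistent indicator can ``trade off'' against a cheaper $\calR'$, because $\calR'$ can never go below $k_{R'}$. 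The paper accordingly does not discuss $\omega$ at all.
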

\begin{proof}
Let $I \subseteq S \times \{a,b\}^{|S|(|S|-1)}$ be an indicator of $S$ over $\{a,b\}$ as in \cref{lem:indicator}.
For each element $x\in S$ let $\id(x)$ be some fixed element from $I(x)$. This way, by definition of an indicator, each $x\in  S$ has a unique \emph{identifier} $\id(x)$ over $\{a,b\}$.
We ``translate'' $R$ to the following relation in $\{a,b\}^{r\abs{S}(\abs{S}+1)+p}$
\[
	R_I\coloneqq \{(\id(x_1), \ldots, \id(x_r), y_1,\ldots y_p) \mid (x_1,\ldots, x_r, y_1, \ldots, y_p)\in R\}.
\]

We obtain a gadget $\calJ$ that realizes $R$ by combining $r$ copies of the indicator gadget for $I$ (exists by \cref{lem:indicator}) with a gadget $\calJ_I$ realizing $R_I$ (exists by \cref{lem:ed-allrelations}), and identifying the respective portals, i.e., for each $i$ the portals of $\calJ_I$ that correspond to $\id(x_i)$ are identified with the respective portals of one copy of the indicator gadget.
\end{proof}

We refine the previous result by balancing out the cost of violating the relation. In particular, we show the existence of a $1$-realization, i.e., where the violation cost exceeds the deletion cost for states that are in $R$ by $1$, see \cref{def:realizing}.

\begin{lem}\label{lem:ed-allrelomega}
	Let $H$ be an undecomposable graph that contains an obstruction, and let $S$ be an incomparable set in $H$ of size at least 2. For some integers $r,p\ge 1$, let $R\subseteq S^r$.
	Then there is a gadget that $1$-realizes $R$.
\end{lem}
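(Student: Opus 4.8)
The plan is to refine the construction used in the proof of \cref{lem:ed-neq}: there the ``translated'' relation over $\{a,b\}$ is handled by an arbitrary realization, whereas here I would push it through \cref{lem:ed-allrelations} with violation penalty $\omega=1$, which will then propagate to a $1$-realization of $R$. (The parameter $p$ in the statement plays no role; we simply treat $R\subseteq S^r$, exactly as the $\{a,b\}^p$ block is carried along in \cref{lem:ed-neq} if one wants it.)

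First I would fix an obstruction $O$ in $H$ and two distinct vertices $a,b\in O$, and invoke \cref{lem:indicator} to obtain an indicator $I\subseteq S\times\{a,b\}^{m}$ (with $m=\abs{S}(\abs{S}-1)$) together with a gadget $\calI$ realizing it, of some base cost $\beta$. For each $x\in S$ I would fix one identifier $\id(x)\in I(x)$; since the sets $I(x)$ are pairwise disjoint, $\id$ is injective. Next I would form the translated relation $R_I\coloneqq\{(\id(x_1),\dots,\id(x_r))\mid (x_1,\dots,x_r)\in R\}\subseteq\{a,b\}^{rm}$ and apply \cref{lem:ed-allrelations} with $\omega=1$ to get a gadget $\calJ_I$ that $1$-realizes $R_I$, with base cost $\kappa$. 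Then I would build $\calJ$ just as in \cref{lem:ed-neq}: take $r$ vertex-disjoint copies $\calI_1,\dots,\calI_r$ of $\calI$, declare the ``$S$-portal'' of $\calI_i$ to be the $i$-th portal of $\calJ$ (list $S$), and identify the $m$ binary portals of $\calI_i$ with the $i$-th length-$m$ block of binary portals of $\calJ_I$.

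It then remains to verify that $\calJ$ $1$-realizes $R$ with base cost $k\coloneqq r\beta+\kappa$, which is pure bookkeeping. For $(x_1,\dots,x_r)\in R$, the extension sending the binary portals of each $\calI_i$ to $\id(x_i)$ costs $\beta$ per copy and $\kappa$ in $\calJ_I$ (its codeword lies in $R_I$), and no extension is cheaper (each $\calI_i$ costs at least $\beta$, $\calJ_I$ at least $\kappa$), so the cost is exactly $k$. For $(x_1,\dots,x_r)\in S^r\setminus R$, take any extension and let $\mathbf{d}_i\in\{a,b\}^m$ be the induced labels on the binary portals of $\calI_i$: if some $\mathbf{d}_i\notin I(x_i)$ then $\calI_i$ already costs at least $\beta+1$; and if every $\mathbf{d}_i\in I(x_i)$ then $(\mathbf{d}_1,\dots,\mathbf{d}_r)\notin R_I$ --- otherwise $\mathbf{d}_i=\id(x_i')$ for some $(x_1',\dots,x_r')\in R$, forcing $\mathbf{d}_i\in I(x_i)\cap I(x_i')$ and hence $x_i=x_i'$ for all $i$, so $(x_1,\dots,x_r)\in R$, a contradiction --- so $\calJ_I$ costs $\kappa+1$; in both cases the total is at least $k+1$. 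Conversely, mapping the binary portals of $\calI_i$ to $\id(x_i)$ achieves cost exactly $r\beta+(\kappa+1)=k+1$, where injectivity of $\id$ and $(x_1,\dots,x_r)\notin R$ guarantee that the resulting codeword lies outside $R_I$. Hence $\edcount(\calJ\to H,(x_1,\dots,x_r))=k+1$, as needed. I do not anticipate a genuine obstacle here: all the substance is already in \cref{lem:indicator}, \cref{lem:ed-allrelations} and \cref{lem:ed-neq}, and the only point requiring care is the two-sided count --- that violating $R$ costs at least one more (uses disjointness of the $I(x)$'s) and at most one more (uses the explicit matching extension together with the $\omega=1$ choice).
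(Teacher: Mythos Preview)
Your proof is correct, but it follows a genuinely different route from the paper's. The paper does not push $\omega=1$ through the indicator construction. Instead, it enumerates $\bar R=S^r\setminus R$ as $\bar\boldd_1,\dots,\bar\boldd_{|\bar R|}$, defines an auxiliary relation $R'\subseteq S^r\times\{a,b\}^{|\bar R|}$ in which tuples from $R$ are paired with the all-$a$ string and each $\bar\boldd_i$ is paired with the string having a single $b$ in position $i$, realizes $R'$ via \cref{lem:ed-neq} (mere realization, no control on the violation cost), and finally attaches to each of the $|\bar R|$ new portals a pendant vertex whose list is $\{A\}$ for some $A\in\nh(a)\setminus\nh(b)$; this forces exactly one extra deletion whenever the first $r$ coordinates are outside $R$.

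Your approach is arguably cleaner: you avoid the pendant-vertex trick and the auxiliary coordinates indexed by $\bar R$, so your gadget size does not scale with $|\bar R|$. The key insight you exploit---that the indicator $\calI$ has the same cost $\beta$ on \emph{every} element of $I$ (not just the chosen identifiers), so any deviation from $I(x_i)$ already costs at least one extra, while deviations inside the $I(x_i)$'s are absorbed by the $1$-realization of $R_I$ from \cref{lem:ed-allrelations}---is exactly what makes the two-sided bound go through. The paper's approach, on the other hand, treats \cref{lem:ed-neq} as a black box and normalizes the violation costs externally; this is more modular but at the expense of a larger gadget.
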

\begin{proof}
	Let $\bar R=S^r\setminus R$ and let $\bar \boldd_1, \ldots, \bar \boldd_{\abs{\bar R}}$ be an enumeration of $\bar R$.
	Let $O$ be an obstruction in $H$ and let $a,b\in O$ be distinct.
	We define a relation $R'$ in $S^r\times \{a,b\}^{\abs{\bar R}}$ as follows:
		A tuple $(x_1, \ldots, x_r, y_1, \ldots, y_{\abs{\bar R}})$ is in $R'$ if and only if one of the following holds:
			\begin{enumerate}
				\item $(x_1, \ldots, x_r)\in R$ and $y_i=a$ for all $i\in [\abs{\bar R}]$, or
				\item for some $i\in [\abs{\bar R}]$, $(x_1, \ldots, x_r)=\bar \boldd_i$, $y_i=b$, and $y_j=a$ for all $j\neq i$.
			\end{enumerate}
	According to \cref{lem:ed-neq}, there is a gadget $\calJ$ that realizes $R'$.
	Let $Z_1, \ldots, Z_r$ be the portals of $\calJ$ that belong to the entries $x_1\ldots, x_r$, and let $z_1, \ldots, z_{\abs{\bar R}}$ be those that belong to $y_1, \ldots, y_{\abs{\bar R}}$.
	By definition of a realization, there is some integer $\alpha$ such that for each $\boldd\in R'$, we have $\edcount(\calJ \to H,\boldd)=\alpha$.
	We modify $\calJ$ by attaching to each portal $z_i$ a vertex $v_i$ with a list $L(v_i)$. Recall that any two distinct vertices of an obstruction are incomparable. Consequently, $a$ has a neighbor $A$ in $H$ that is not a neighbor of $b$. For each $v_i$ we set $L(v_i)=\{A\}$.
	Now interpret this modified gadget $\calJ'$ as a gadget with portals $Z_1, \ldots, Z_r$. We claim that $\calJ'$ $1$-realizes $R$:
	Whenever $Z_1, \ldots, Z_r$ are mapped to a state in $R$ this requires $\alpha$ edge deletions in $\calJ$, and these also suffice as all of the $z_i$s can be mapped to $a$, which is adjacent to the image $A$ of $v_i$s.
	However, if $Z_1, \ldots, Z_r$ are in a state $\bar \boldd_i\in \bar R$ then $\alpha$ edge deletions are required in $\calJ$, but one more edge deletion is required between $z_i$ that is mapped to $b$ and $v_i$, which is mapped to $A$. At the same time $\alpha +1$ edge deletions are sufficient to extend such a state $\bar \boldd_i$, so all violations of $R$ have the same cost $\alpha +1$.
\end{proof}

\subsubsection{\boldmath Proof of \texorpdfstring{\cref{thm:ed-main-intro}}{Theorem \ref{thm:ed-hardness-final}}: Tight lower bound under the SETH}
Let us start with proving \cref{thm:ed-main-intro} in the special case that $H$ is undecomposable.

\begin{thm}\label{thm:ed-hardness}
Let $H$ be an undecomposable graph that contains an obstruction.
For every $\epsilon >0$, there are $\sigma, \delta>0$ such that \LHomED($H$) on $n$-vertex instances given with a \core{\sigma}{\delta} of size $p$ cannot be solved in time $(i(H)-\epsilon)^p \cdot n^{\bigO(1)}$, unless the SETH fails.
\end{thm}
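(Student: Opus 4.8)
The plan is to reduce from \coloringED{q} with $q\coloneqq i(H)$, whose SETH-hardness parameterized by hub size is \cref{thm:ed-coloring-intro}. Since $H$ is undecomposable and contains an obstruction, $\param(H)=i(H)$, and $q\ge 2$ because every obstruction is an incomparable set of size at least $2$. Fix an incomparable set $S=\{v_1,\dots,v_q\}$ of size $q$ in $H$ and let $\NEQ_S\coloneqq\{(x,y)\in S^2 : x\neq y\}$. Applying \cref{lem:ed-allrelomega} with $r=2$ and $R=\NEQ_S$ yields a gadget $\calJ=(J,L,(p_1,p_2))$ that $1$-realizes $\NEQ_S$: there is an integer $\alpha$ with $\edcount(\calJ\to H,(x,y))=\alpha$ for all distinct $x,y\in S$ and $\edcount(\calJ\to H,(x,x))=\alpha+1$ for all $x\in S$. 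This is the only step that uses the assumption that $H$ is undecomposable (through \cref{lem:ed-allrelomega}, and ultimately through \cref{lem:movingMain} and \cref{lem:indicator}).

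Given an instance $(G,k)$ of \coloringED{q} together with a \core{\sigma_0}{\delta_0} $Q$ of $G$ of size $p$, I would construct an instance $(G',L')$ of \LHomED($H$) as follows: take $V(G)$, set $L'(v)\coloneqq S$ for every $v\in V(G)$ (identifying colour $i\in[q]$ with $v_i$), and for each edge $uv\in E(G)$ attach a fresh copy $\calJ_{uv}$ of $\calJ$ with its two portals identified with $u$ and $v$; set $k'\coloneqq \alpha\,|E(G)|+k$. Distinct copies share only portal vertices, so the edge sets $E(\calJ_{uv})$ partition $E(G')$, and in any solution each copy forces at least $\alpha$ deletions, with equality precisely when $u$ and $v$ receive distinct images. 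Hence a $q$-colouring of $G$ with at most $k$ monochromatic edges extends to a list homomorphism of some $G'-X'$ with $|X'|\le k'$, and conversely the restriction to $V(G)$ of any solution $(X',\psi)$ with $|X'|\le k'$ is a $q$-colouring with at most $|X'|-\alpha|E(G)|\le k$ monochromatic edges; the two instances are equivalent.

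It remains to track the hub. Let $h\coloneqq|V(J)|$, a constant depending only on $H$. Exactly as in the proof of \cref{thm:vd-main-intro}, $Q$ is a \core{\sigma}{\delta} of $G'$ with $\sigma\coloneqq\sigma_0+\binom{\sigma_0}{2}h+\delta_0 h$ and $\delta\coloneqq\max(2,\delta_0)$: a component of $G'-Q$ is either a gadget interior attached to an edge of $G[Q]$ (at most $h$ vertices, adjacent to $2$ vertices of $Q$), or it arises from a component $C$ of $G-Q$ by replacing its at most $\binom{\sigma_0}{2}$ internal edges and its at most $\delta_0$ edges to $Q$ by gadget copies (at most $\sigma$ vertices, still adjacent to at most $\delta_0$ vertices of $Q$). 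Also $|V(G')|=n+h\,|E(G)|=n^{\bigO(1)}$ and the whole construction is polynomial. Now fix $\epsilon>0$, let $\sigma_0,\delta_0$ be the constants guaranteed by \cref{thm:ed-coloring-intro} for $q=i(H)$ and this $\epsilon$, and output $\sigma,\delta$ as above. Any algorithm solving \LHomED($H$) in time $(i(H)-\epsilon)^p\cdot n^{\bigO(1)}$ on instances with a \core{\sigma}{\delta} of size $p$ would, composed with the reduction, solve \coloringED{q} in time $(q-\epsilon)^p\cdot n^{\bigO(1)}$ on instances with a \core{\sigma_0}{\delta_0} of size $p$, contradicting the SETH by \cref{thm:ed-coloring-intro}.

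\textbf{Main obstacle.} The substantive work is already contained in \cref{lem:ed-allrelomega} and its prerequisites --- in particular \cref{lem:movingMain} (moving between incomparable pairs when $H$ is undecomposable) and \cref{lem:indicator} --- which is where the undecomposability hypothesis is genuinely exploited. Given that machinery, the present theorem is a clean edge-gadget reduction; the only points requiring care are that replacing each edge of $G$ by a constant-sized gadget changes $\sigma$ only by an $H$-dependent factor while leaving the hub size $p$ (and hence the base $i(H)$ of the exponent) untouched, and that $i(H)=\param(H)\ge 2$ here, which is needed to invoke both \cref{lem:ed-allrelomega} and \cref{thm:ed-coloring-intro}.
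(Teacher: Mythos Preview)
Your proposal is correct and follows essentially the same approach as the paper's proof: reduce from \coloringED{q} with $q=i(H)$ by replacing each edge with a gadget that $1$-realizes $\NEQ_S$ on a maximum incomparable set $S$, then track the hub. The one cosmetic difference is that the paper special-cases $i(H)=2$ (observing that the obstruction must then be an irreflexive edge and reducing directly to \textsc{Max Cut}), whereas you handle all $q\ge 2$ uniformly via \cref{lem:ed-allrelomega}; your uniform treatment is valid since that lemma only requires $|S|\ge 2$.
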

\begin{proof}
Note that we have $i(H) \geq 2$, as every obstruction contains two incomparable vertices.
First consider a special case that $i(H) = 2$.
We note that then the graph must contain an irreflexive edge $ab$, as all other types of obstructions contain three pairwise incomparable vertices. But then, if all lists are set to $\{a,b\}$, the problem is equivalent to $\coloringED{2}$ (or, equivalently, \textsc{Max Cut}). Then the statement follows from \cref{thm:ed-coloring-intro} for $q=2$ (even in the stronger non-list variant).

So from now on assume that $q := i(H) \geq 3$. With \cref{thm:ed-coloring-intro} in mind, we give a reduction from $\coloringED{q}$. Let $(G,z)$ be an $N$-vertex instance of $\coloringED{q}$, where $z$ is the deletion budget amd the graph $G$ is given with a \core{\sigma}{\delta} $Q$.

Let $S$ be a maximum incomparable set in $H$, i.e., $|S|=i(H)=q$.
Let $\calJ = (J,L',(x,y))$ be a gadget that is $1$-realizing $\NEQ(S) := \{ (u,v) \in S \times S ~|~ u \neq v \}$ as given by \cref{lem:ed-allrelomega}.
Let $\alpha := \edcount(\calJ \to H, (a,b))$, where $a,b$ are arbitrarily chosen distinct elements of $S$.
Since $\calJ$ realizes $\NEQ(S)$, the value of $\alpha$ does not depend on the choice of $a,b$.

\paragraph{An instance of $\LHomED(H)$.}
We define an instance $(G',L',z')$ of $\LHomED(H)$ with deletion budget $z'$.
Let $G'$ be obtained from $G$ by substituting each edge $uv$ by a copy of $\calJ$,
where $x$ is identified with $u$, and $y$ is identified with $v$.
The lists $L'$ are inherited from $\calJ$; note that they are well-defined 
as each inner (non-portal) vertex is in one copy of $\calJ$ and the lists of portal vertices, i.e., the original vertices of $G$, are all equal to $S$.
Set $z' := \alpha \cdot |E(G)| +z$.

\paragraph{Equivalence of instances.}
Fix some bijection $f$ between $S$ and $[q]$.
Suppose that $G$ is $q$-colorable after (at most) $z$ edges are deleted.
This, combined with $f$, gives a mapping of the vertices of $G$ to $S$ so that, in $G'$, the portals of all but $z$ copies of $\calJ$ are mapped to different elements of $S$.
We can extend this mapping to a list homomorphism by removing $\alpha$ edges from each gadget with dichromatic portals, and $\alpha+1$ edges from each gadget with monochromatic portals. (Here we use the fact that $\calJ$ is a $1$-realizer of $\NEQ(S)$ and therefore the violation cost is always $\alpha+1$.) This yields a total of (at most) $z'$ edge deletions. 

Vice versa, a minimum size set $X$ of at most $z'$ edge deletions from $G'$ that ensures a list homomorphism $h$ from $G'\setminus X$ to $H$ means that $h$ maps the portals of at least $\abs{E(G)}-z$ copies of $\calJ$ to distinct vertices of $S$. Consequently, the restriction of $h$ to the vertices of $G$ gives a list $q$-coloring of $G$ that requires at most $z$ edge deletions.

\paragraph{Structure of the constructed instance.}
The size of $\calJ$ depends only on $H$, and we introduce $E(G)$ copies of $\calJ$. Consequently, the size of $G'$ is $N^{\bigO(1)}$.
Recall that $Q$ is the given \core{\sigma}{\delta} of $G$. Note that the connected components of $G'-Q$ are subsets of some copy of $\calJ$ or otherwise are connected components of $G-Q$ in which all edges are replaced by copies of $\calJ$. As $\calJ$ only has two portals, each such component has at most $\max\{\delta, 2\}$ neighbors in $Q$.
The size of a connected component of $G'-Q$ depends only on $\sigma$ and $H$. Consequently, $Q$ is a \core{\sigma'}{\delta'} of $G'$, where $\sigma'$ depends only on $\sigma$ and $H$, and $\delta'=\max\{\delta, 2\}$.

\paragraph{Running time.}
As the size of $G'$ is polynomial in $N$, the hypothetical algorithm for $\LHomED(H)$ would require $(i(H)-\epsilon)^p \cdot N^{\bigO(1)}$ time to solve  $\coloringED{q}$. This contradicts the SETH for some $\sigma$ and $\delta$ depending on $\eps$ (and consequently for some $\sigma'$ and $\delta'$ depending on $\eps$), which shows the theorem.
\end{proof}

Before we proceed to the proof of \cref{thm:ed-main-intro}, we need one more lemma.

\begin{lem}\label{lem:ed-obstructionssurvive}
If $H$ contains an obstruction, then it contains an induced undecomposable subgraph $H'$ that has an obstruction.
\end{lem}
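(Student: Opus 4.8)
The plan is to let $H'$ be an induced subgraph of $H$ with the fewest vertices among all induced subgraphs of $H$ that contain an obstruction; such a subgraph exists because $H$ itself contains one. I claim that this $H'$ is undecomposable, which proves the lemma. The key preliminary observation is that the vertex set $O$ of any obstruction of $H'$ is an incomparable set (with respect to $H'$): for an irreflexive edge $\{a,b\}$, if $\nh(a)\subseteq\nh(b)$ then $b\in\nh(a)\subseteq\nh(b)$ would make $b$ reflexive, a contradiction; for a $3$-set $\{v_1,v_2,v_3\}$ with private neighbors $v_1',v_2',v_3'$, each $v_i'$ lies in $\nh(v_i)\setminus\nh(v_j)$ for all $j\neq i$, so no two of the $v_i$ are comparable; and for a $3$-set with co-private neighbors, the co-private neighbor of $\{v_i,v_k\}$ lies in $\nh(v_i)\setminus\nh(v_j)$ whenever $\{i,j,k\}=\{1,2,3\}$, which again witnesses pairwise incomparability. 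Since for any decomposition $(A,B,C)$ of $H'$ and any $a\in A$, $b\in B$, $c\in C$ we have $\nh(c)\subseteq\nh(a)\subseteq\nh(b)$, an incomparable set can never meet two distinct parts; hence $O$ would be entirely contained in one of $A$, $B$, or $C$.

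Assume for contradiction that $H'$ has a decomposition $(A,B,C)$, and recall that $H'[A]$ and $H'[B\cup C]$ are both proper induced subgraphs of $H'$ (since $B\cup C\neq\emptyset$ and $A\neq\emptyset$). I would then perform a short case analysis on where $O$ sits, in each case exhibiting an obstruction inside a proper induced subgraph of $H'$, contradicting minimality. If $O\subseteq A$: when $O$ is an irreflexive edge it already lies in $H'[A]$; when $O=\{v_1,v_2,v_3\}$ has private (resp.\ co-private) neighbors, no witness can lie in $C$ (there is no edge between $A$ and $C$, yet every witness is adjacent to some vertex of $O\subseteq A$) and no witness can lie in $B$ (the full adjacency between $A$ and $B$ would force the witness to be adjacent to a vertex of $O$ it must avoid), so all witnesses lie in $A$ and the obstruction survives in $H'[A]$. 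If $O\subseteq B$: $O$ cannot be an irreflexive edge since $B$ is reflexive, and in the private/co-private cases no witness can lie in $A$ (the full adjacency between $A$ and $B$ again forces a forbidden adjacency), so all witnesses lie in $B\cup C$ and the obstruction survives in $H'[B\cup C]$. If $O\subseteq C$: $O$ cannot be an irreflexive edge since $C$ is independent, and otherwise no witness can lie in $A$ (no edge between $A$ and $C$, yet each witness is adjacent to some vertex of $O\subseteq C$), so again the obstruction survives in $H'[B\cup C]$.

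Every case contradicts the minimality of $H'$, so $H'$ admits no decomposition; being an induced subgraph of $H$ that contains an obstruction, it is exactly the subgraph the lemma asks for.

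I do not anticipate a substantial obstacle: the argument is a routine case analysis. The two points needing a little care are the preliminary observation that obstructions are incomparable sets---which is what guarantees that $O$ lands inside a single part of a hypothetical decomposition---and the mechanical verification that the private and co-private witnesses are pushed into the appropriate part, which follows directly from the two defining adjacency conditions of a decomposition (complete bipartite adjacency between $A$ and $B$, and no adjacency between $A$ and $C$).
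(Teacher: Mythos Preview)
Your proposal is correct and follows essentially the same approach as the paper: both arguments show that an obstruction, together with its witnessing (co-)private neighbors, must lie entirely within either $H'[A]$ or $H'[B\cup C]$, using the neighborhood containments $\nh(c)\subseteq\nh(a)\subseteq\nh(b)$ forced by a decomposition. The only cosmetic difference is that you phrase it as choosing a vertex-minimal $H'$ and deriving a contradiction, whereas the paper phrases it as an inductive step from $H$ down to one of the two parts; the case analysis and the underlying reasons are the same.
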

\begin{proof}
If $H$ is undecomposable, then the statement is trivial.
Suppose then that $H$ admits a decomposition $(A,B,C)$. Note it is sufficient to show that either $H[A]$ or $H[B \cup C]$ contains an obstruction.
Indeed, then the statement will follow by applying the argument to $H[A]$ or $H[B \cup C]$ inductively, until we reach an undecomposable graph.

If $H$ contains an irreflexive edge, then this edge must be contained in $A$.
So now suppose that $H$ contains a three-element set $O$ with private or co-private neighbors. Note that any two vertices from $O$ are incomparable.
Since for all $a \in A$, $b \in B$, $c \in C$ we have $\nh(c) \subseteq \nh(a) \subseteq \nh(b)$, we observe that $O$ must be fully contained in one of $A,B,C$.

Let $O'$ be the set of (co-)private neighbors of $O$.
Each vertex from $O'$ is adjacent to some vertex in $O$ and non-adjacent to some vertex in $O$.
Thus if $O \subseteq A$, then $O' \subseteq A$.
If $O \subseteq B$, then $O' \subseteq C$.
If $O \subseteq C$, then $O' \subseteq B$.
Summing up, one of $H[A]$ or $H[B \cup C]$ contains an obstruction.
\end{proof}

Now let us show that \cref{thm:ed-hardness} implies \cref{thm:ed-main-intro}.

\begin{proof}[Proof of \cref{thm:ed-main-intro}.]
By \cref{lem:ed-obstructionssurvive} we know that there exists an undecomposable subgraph of $H$ that contains an obstruction.
Let $H'$ be such a subgraph which maximizes $i(H')$. Recall that we have $\param(H) = i(H')$.
Now the theorem follows from applying \cref{thm:ed-hardness} to $H'$.
\end{proof}

\subsubsection{Moving between $2$-vertex incomparable sets}\label{sec:2vertexmoves}

Towards proving \cref{lem:indicator} in \cref{sec:indicator}, the current section deals with the use of binary gadgets in order to ``move'' (formally defined in \cref{def:moves}) from one pair of vertices to another while maintaining certain properties. The goal is to prove \cref{lem:movingMain}, the statement of which uses the notation from \cref{def:movesforpairs}. 
We first introduce some helpful notation and preliminary results about binary gadgets. Many of these results generalize to $r$-ary gadgets in a straightforward way but for us it suffices to consider the binary setting.

\begin{defn}[Moves and forcing]\label{def:moves} 
	Let $H$ be a graph. We say that a binary gadget $\calJ=(J,L,(x,y))$ is a \emph{move} from $L(x)$ to $L(y)$.
	For $(a,b) \in L(x)\times L(y)$,
	$\edcount(\calJ \to H, (a,b))$ is the size of a smallest set $F \subseteq E(J)$
	that ensures a list homomorphism $\phi$ from $(V(J), E(J)\setminus F)$ to $H$ such that $\phi(x)=a$ and $\phi(y)=b$.
	For each $a\in L(x)$, let $\edcount(\calJ \to H, (a,\ast))=\min_{b\in L(y)}\edcount(\calJ \to H, (a,b))$. Analogously we define $\edcount(\calJ \to H, (\ast,b))$.
	Then the \emph{base cost} of $\calJ$ is $\edcount(\calJ)= \min_{a\in L(x),b\in L(y)}\edcount(\calJ \to H, (a,b))$.
	Finally, let $\calJ(a)$ be the set of vertices that can be reached from $a$ by moving via $\calJ$ with minimum edge deletions, that is, the set of vertices $b$ for which $\edcount(\calJ \to H, (a,b))=\edcount(\calJ \to H, (a,\ast))$.

	Given vertices $a\in L(x)$ and $b\in L(y)$, the move $\calJ$ \emph{allows} $a\to b$ if $b\in \calJ(a)$. The move \emph{forces} $a\to b$ if $\calJ(a)=\{b\}$.
\end{defn}

We will be interested in moves between \emph{incomparable} sets $L(x)$ and $L(y)$.
In \cref{lem:addcost1}, we establish that in this case we can add a cost to moving from or moving to some particular element in $L(x)$ or $L(y)$, respectively. In \cref{lem:moverealizable}, we will use this fact to normalize the costs such that $\edcount(\calJ \to H, (a,\ast))$ is the same (the base cost $\edcount(\calJ)$) for each element $a$ in $L(x)$. This shows that for each gadget $\calJ$ we can realize the relation $R\coloneqq \{(u,v) \mid u\in S, v\in \calJ(u)\}$, and therefore it suffices to focus on what elements are in $\calJ(u)$, rather than on the actual number of required edge deletions $\edcount(\calJ \to H, (u,\ast))$. This is why the concepts of ``forcing'' and ``allowing'' from \cref{def:moves} can be defined with respect to $\calJ(a)$.

\begin{lem}\label{lem:addcost1}
	Let $\calJ=(J,L,\{x,y\})$ be a move from $L(x)$ to $L(y)$ with $a\in L(x)$ and $c\in L(y)$. 
	If $L(x)$ is incomparable then, for each non-negative integer $k$, there is a move $\calJ'$ from $L(x)$ to $L(y)$ such that for all $b\in L(x)\setminus\{a\}$ and $v\in L(y)$ we have
	\begin{myitemize}
		\item $\edcount(\calJ' \to H, (a,v))=\edcount(\calJ \to H, (a,v))+k$, and
		\item $\edcount(\calJ' \to H, (b,v))=\edcount(\calJ \to H, (b,v))$.
	\end{myitemize} 
	
	Analogously, if $L(y)$ is incomparable then, for each non-negative integer $k$, there is a move $\calJ'$ from $L(x)$ to $L(y)$ such that for all $v\in L(x)$ and $d\in L(y)\setminus\{c\}$ we have
	\begin{myitemize}
		\item $\edcount(\calJ' \to H, (v,c))=\edcount(\calJ \to H, (v,c))+k$, and
		\item $\edcount(\calJ' \to H, (v,d))=\edcount(\calJ \to H, (v,d))$.
	\end{myitemize}
\end{lem}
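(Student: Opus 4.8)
The plan is to obtain $\calJ'$ from $\calJ$ by appending a small ``penalty'' structure at the portal $x$ (for the first statement; the construction at $y$ for the second is symmetric) that is cheap to satisfy whenever $x$ is mapped to some $b\in L(x)\setminus\{a\}$, but forces exactly $k$ extra edge deletions whenever $x$ is mapped to $a$. Concretely, I would first use incomparability: for every $b\in L(x)\setminus\{a\}$ the incomparable pair $\{a,b\}$ supplies a vertex $w_b\in\nh(b)\setminus\nh(a)$. Let $W=\{w_b\mid b\in L(x)\setminus\{a\}\}$ (nonempty whenever $|L(x)|\ge 2$, which is the only case that is ever used); by construction no vertex of $W$ is adjacent to $a$. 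Then $\calJ'$ is $\calJ$ together with $k$ fresh vertices $z_1,\dots,z_k$, each equipped with the list $W$, each joined to $x$ by a single edge; the portals remain $(x,y)$.

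The verification splits into upper and lower bounds on $\edcount(\calJ'\to H,(\cdot,v))$. For the upper bounds: an optimal edge set witnessing $\edcount(\calJ\to H,(a,v))$, together with the $k$ new edges $xz_j$, witnesses $\edcount(\calJ'\to H,(a,v))\le \edcount(\calJ\to H,(a,v))+k$ (extend the homomorphism on the $z_j$ arbitrarily within $W$); and an optimal edge set witnessing $\edcount(\calJ\to H,(b,v))$ for $b\ne a$, together with the assignment $z_j\mapsto w_b$ for all $j$, leaves every new edge satisfied since $w_b\in\nh(b)$, giving $\edcount(\calJ'\to H,(b,v))\le\edcount(\calJ\to H,(b,v))$. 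For the lower bounds: since $J$ is a subgraph of $J'$ that contains both portals, any list homomorphism of an edge-subgraph of $J'$ with prescribed portal images restricts to one of the corresponding edge-subgraph of $J$, so $\edcount(\calJ'\to H,(\cdot,v))\ge\edcount(\calJ\to H,(\cdot,v))$ in all cases; moreover, if $x$ is mapped to $a$, each $z_j$ is mapped into $W\subseteq V(H)\setminus\nh(a)$, hence every edge $xz_j$ must be deleted, and as these $k$ edges are not in $E(J)$ we get the extra $+k$. Combining the two directions yields the claimed equalities, and the second statement follows by the identical argument at $y$, using vertices $w_d\in\nh(d)\setminus\nh(c)$ for $d\in L(y)\setminus\{c\}$.

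I expect no substantive obstacle: the only point requiring care is ensuring that the newly introduced edges are disjoint from $E(J)$, which is exactly what makes the penalty \emph{additive} with respect to the deletions already forced inside $\calJ$ rather than potentially overlapping with them — and this is automatic since the $z_j$ are fresh vertices. (If desired one could also note that the degenerate case $|L(x)|=1$ is never needed; alternatively, when $L(x)=\{a\}$ with $a$ non-universal one may attach $k$ pendants with list $\{c\}$ for any fixed $c\notin\nh(a)$.)
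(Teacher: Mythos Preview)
Your proof is correct and follows essentially the same approach as the paper: attach $k$ fresh pendant vertices to $x$ whose lists avoid $\nh(a)$ but contain a neighbor of every other $b\in L(x)$. The only cosmetic difference is that the paper gives each pendant the full list $V(H)\setminus\nh(a)$ rather than your smaller set $W$, and your verification of the upper and lower bounds is spelled out more explicitly than in the paper.
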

\begin{proof}
	We will prove the first part of the statement. The second part can be shown analogously. 
	
	Let $a\in L(x)$ and let $k$ be a non-negative integer. We obtain $\calJ'$ from $\calJ$ by adding $k$ pendant vertices to $x$, each with list $V(H)\setminus \nh(a)$. Note that for each $b\in L(x)\setminus\{a\}$ there is a vertex $b'\in \nh(b)\setminus \nh(a)\subseteq V(H)\setminus\nh(a)$ by the assumption that $L(x)$ is incomparable.
	Therefore, if $x$ is mapped to $b$ then all the pendant vertices can be mapped to $b'$ without requiring additional edge deletions. So, for each $v\in L(y)$, $\edcount(\calJ' \to H, (b,v))=\edcount(\calJ \to H, (b,v))$.
	However, if $x$ is mapped to $a$ then the pendant vertices cannot be mapped to a neighbor of $a$, which requires $k$ edge deletions, and we obtain $\edcount(\calJ' \to H, (a,v))=\edcount(\calJ \to H, (a,v))+k$ for each $v\in L(y)$.
\end{proof}

Recall the definition of realizing a relation from \cref{def:realizing}.

\begin{lem}\label{lem:moverealizable}
	Let $\calJ$ be a move from $S$ to $S'$. If $S$ is an incomparable set then there is a move $\calJ'$ from $S$ to $S'$ that realizes $R\coloneqq \{(u,v) \mid u\in S, v\in \calJ(u)\}$.
\end{lem}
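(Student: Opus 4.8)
Looking at this statement, I need to prove that if $\calJ$ is a move from $S$ to $S'$ and $S$ is incomparable, then there is a move $\calJ'$ from $S$ to $S'$ that realizes the relation $R = \{(u,v) \mid u \in S, v \in \calJ(u)\}$. Let me think about what "realizes" means here and how to achieve it.

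The plan is to use \cref{lem:addcost1} to normalize the costs of the move $\calJ$ so that for every $u \in S$, the minimum cost $\edcount(\calJ' \to H, (u, \ast))$ equals a common value — the base cost. Since $S$ is incomparable, the first part of \cref{lem:addcost1} applies: for any chosen element $a \in L(x) = S$ and any non-negative integer $k$, we can add $k$ to all costs $\edcount(\cdot \to H, (a, v))$ while leaving costs from other elements unchanged. By iterating this over the elements of $S$, I can raise the cost profile of each element to match the maximum of $\min_v \edcount(\calJ \to H, (u,v))$ over $u \in S$.

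Here is the key sequence of steps. First, let $\alpha = \max_{u \in S} \edcount(\calJ \to H, (u, \ast))$. Enumerate $S = \{u_1, \ldots, u_s\}$ and process the elements one at a time: starting from $\calJ_0 = \calJ$, for each $i$ in turn apply \cref{lem:addcost1} with $a = u_i$ and $k = \alpha - \edcount(\calJ_{i-1} \to H, (u_i, \ast))$ (this quantity is non-negative and unchanged by earlier steps, since earlier applications only modified costs from $u_1, \ldots, u_{i-1}$), obtaining $\calJ_i$. Set $\calJ' = \calJ_s$. After this process, for every $u \in S$ we have $\edcount(\calJ' \to H, (u, \ast)) = \alpha$. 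Moreover, for each $u$, the set of $v$ achieving this minimum is exactly $\calJ(u)$: adding the same constant $k$ to every cost $\edcount(\cdot \to H, (u, v))$ over all $v$ preserves which $v$ are minimizers, and for $u = u_i$ the value subtracted is uniform in $v$. Hence $\edcount(\calJ' \to H, (u,v)) = \alpha$ exactly when $v \in \calJ(u)$, i.e.\ when $(u,v) \in R$, and $\edcount(\calJ' \to H, (u,v)) > \alpha$ otherwise. Taking $k = \alpha$ in \cref{def:realizing} (with the understanding that states $\boldd = (u,v)$ with $u \notin S$ are irrelevant, or can be excluded by the lists), this shows $\calJ'$ realizes $R$. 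The only subtlety is bookkeeping: one must check that step $i$ does not disturb the already-normalized cost $\edcount(\cdot \to H, (u_j, \ast))$ for $j < i$; this holds because \cref{lem:addcost1} with $a = u_i$ leaves $\edcount(\cdot \to H, (b, v))$ unchanged for all $b \neq u_i$.

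The main obstacle I anticipate is a minor definitional one rather than a deep difficulty: making sure that "realizing $R$" in the sense of \cref{def:realizing} is compatible with $R$ being a relation over $S \times S'$ rather than $V(H)^r$. Since the portal $x$ of $\calJ'$ has list $S$ and the portal $y$ has list $S'$, any valid assignment of portals already lies in $S \times S'$, so the relation is effectively defined on the full state space once we restrict to list-respecting assignments. One should state this cleanly, perhaps noting that $\edcount(\calJ' \to H, \boldd) = \infty$ (or undefined) for $\boldd$ violating the lists, which is consistent with the ``$> k$'' requirement. With that caveat handled, the proof is a routine induction over the elements of $S$ using \cref{lem:addcost1}.

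\begin{proof}
	Since $S$ is an incomparable set, the first part of \cref{lem:addcost1} applies to the portal $x$ of $\calJ$ (which has list $S$).
	Enumerate $S=\{u_1,\ldots,u_s\}$ and set
	\[
		\alpha \coloneqq \max_{u\in S}\edcount(\calJ \to H, (u,\ast)).
	\]
	We construct a sequence of moves $\calJ=\calJ_0,\calJ_1,\ldots,\calJ_s$ from $S$ to $S'$ as follows.
	For $i=1,\ldots,s$, let $k_i\coloneqq \alpha-\edcount(\calJ_{i-1} \to H, (u_i,\ast))$ and apply \cref{lem:addcost1} with $a=u_i$ and $k=k_i$ to $\calJ_{i-1}$ to obtain $\calJ_i$.

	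We first check that $k_i$ is a well-defined non-negative integer.
	By construction, for $j<i$ the move $\calJ_j$ is obtained from $\calJ_{j-1}$ by applying \cref{lem:addcost1} with $a=u_j$, which leaves $\edcount(\cdot \to H, (b,v))$ unchanged for every $b\in S\setminus\{u_j\}$ and every $v\in S'$.
	In particular, for all $i$, passing from $\calJ_{i-1}$ to $\calJ_{i}$ does not change any value $\edcount(\cdot\to H,(u_j,v))$ with $j>i$, and passing from $\calJ_0$ to $\calJ_{i-1}$ does not change $\edcount(\cdot\to H,(u_i,v))$ for any $v\in S'$. Hence $\edcount(\calJ_{i-1} \to H, (u_i,\ast))=\edcount(\calJ \to H, (u_i,\ast))\le \alpha$, so $k_i\ge 0$.

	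Let $\calJ'\coloneqq \calJ_s$.
	Fix $u=u_i\in S$. The application of \cref{lem:addcost1} producing $\calJ_i$ adds exactly $k_i$ to $\edcount(\cdot \to H,(u_i,v))$ for every $v\in S'$ (uniformly in $v$), and no later or earlier step changes these values. Therefore, for all $v\in S'$,
	\[
		\edcount(\calJ' \to H, (u_i,v))=\edcount(\calJ \to H, (u_i,v))+k_i.
	\]
	Since adding the same constant $k_i$ to all entries of the row indexed by $u_i$ preserves the set of minimizers, we get that $\edcount(\calJ' \to H, (u_i,v))$ attains its minimum exactly at those $v$ with $\edcount(\calJ \to H, (u_i,v))=\edcount(\calJ \to H, (u_i,\ast))$, i.e.\ exactly at $v\in \calJ(u_i)$; and this minimum value equals $\edcount(\calJ \to H, (u_i,\ast))+k_i=\alpha$.

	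Putting this together over all $u\in S$: for $\boldd=(u,v)\in S\times S'$ we have $\edcount(\calJ' \to H, \boldd)=\alpha$ if $v\in \calJ(u)$, i.e.\ if $\boldd\in R$, and $\edcount(\calJ' \to H, \boldd)>\alpha$ otherwise. As the portals of $\calJ'$ have lists $S$ and $S'$, every list-respecting assignment of the portals lies in $S\times S'$, so with $k=\alpha$ this is precisely the statement that $\calJ'$ realizes $R$.
\end{proof}
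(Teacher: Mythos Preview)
Your proof is correct and follows essentially the same approach as the paper: define $\alpha=\max_{u\in S}\edcount(\calJ\to H,(u,\ast))$ and iteratively apply \cref{lem:addcost1} to each element of $S$ to raise its row minimum to $\alpha$. Your version is in fact more carefully written than the paper's, spelling out the bookkeeping that earlier applications of \cref{lem:addcost1} do not disturb later rows and that adding a constant to a row preserves its set of minimizers.
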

\begin{proof}
	For $\calJ'$ to realize $R$ we have to ensure that there is some integer $r$ such that, for each $u\in S$, we have $\calJ'(u)=\calJ(u)$ and $\edcount(\calJ' \to H, (u,\ast))=r$. Let $k^* \coloneqq \max_{u\in S} \edcount(\calJ \to H, (u,\ast))$. By applying \cref{lem:addcost1} to some $a\in S$ and $k\coloneqq k^*- \edcount(\calJ \to H, (a,\ast))$ we obtain a gadget $\calJ''$ such that
	\begin{itemize}
		\item $\edcount(\calJ'' \to H, (a,\ast))=k^*$,
		\item for each $u\in S$ we have $\calJ''(u)=\calJ(u)$, and
		\item $\max_{u\in S} \edcount(\calJ'' \to H, (u,\ast))=k^*$.
	\end{itemize} 
	So we can do this iteratively for each $a\in S$ to obtain the sought-for gadget $\calJ'$.
\end{proof}

We will mainly be interested in moves between pairs of vertices and introduce some further shorthand notation for this setting. In the following, we will often write ``pair'' if we mean a $2$-vertex set.

\begin{defn}\label{def:movesforpairs}
	We write $(a,b)\to (c,d)$ if there is a move from $\{a,b\}$ to $\{c,d\}$ that forces both $a\to c$ and $b\to d$. Similarly, we write $\{a,b\}\leadsto \{c,d\}$ if at least one of $(a,b)\to (c,d)$ or $(a,b)\to (d,c)$.
\end{defn}

\begin{lem}\label{lem:movetransitive}
	The shorthand from \cref{def:movesforpairs} is transitive in the sense that if $(a,b)\to (c,d)$ and $(c,d)\to (e,f)$ then $(a,b)\to (e,f)$. Similarly, if $\{a,b\}\leadsto \{c,d\}$ and $\{c,d\}\leadsto \{e,f\}$ then $\{a,b\}\leadsto \{e,f\}$.
\end{lem}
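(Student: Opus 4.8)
The statement to prove is \cref{lem:movetransitive}, the transitivity of the two move relations $(a,b)\to(c,d)$ and $\{a,b\}\leadsto\{c,d\}$. The natural approach is simply \emph{composition of gadgets}: given a move $\calJ_1=(J_1,L_1,(x_1,y_1))$ from $\{a,b\}$ to $\{c,d\}$ that forces $a\to c$ and $b\to d$, and a move $\calJ_2=(J_2,L_2,(x_2,y_2))$ from $\{c,d\}$ to $\{e,f\}$ that forces $c\to e$ and $d\to f$, I would form a new gadget $\calJ$ by taking the disjoint union of $J_1$ and $J_2$ and identifying the vertex $y_1$ with $x_2$ (both have list $\{c,d\}$, so this identification is consistent; if the two lists happen to be written in the opposite order it does not matter since a move only cares about the underlying set). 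The portals of $\calJ$ are $(x_1,y_2)$, with lists $\{a,b\}$ and $\{e,f\}$ respectively, so $\calJ$ is a move from $\{a,b\}$ to $\{e,f\}$.

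The bulk of the argument is then the bookkeeping showing that $\calJ$ forces $a\to e$ and $b\to f$. The key observation is that because $J_1$ and $J_2$ share only the single vertex $w\coloneqq y_1=x_2$, any edge set $F\subseteq E(J)$ splits as $F=F_1\cup F_2$ with $F_i\subseteq E(J_i)$, and a list homomorphism of $(V(J),E(J)\setminus F)$ restricts to list homomorphisms of the two pieces that agree on $w$. Hence for any $u\in\{a,b\}$ and $v\in\{e,f\}$ one has
\[
\edcount(\calJ\to H,(u,v))=\min_{s\in\{c,d\}}\bigl(\edcount(\calJ_1\to H,(u,s))+\edcount(\calJ_2\to H,(s,v))\bigr).
\]
From this identity I would first compute the base cost $\edcount(\calJ)=\edcount(\calJ_1)+\edcount(\calJ_2)$, using that $\calJ_1$ forces $a\to c$ (so $\calJ_1(a)=\{c\}$ and $\edcount(\calJ_1\to H,(a,\ast))=\edcount(\calJ_1\to H,(a,c))$, and similarly $\edcount(\calJ_1\to H,(a,d))>\edcount(\calJ_1\to H,(a,c))$) together with the analogous facts for $\calJ_2$ and for $b,d,f$. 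Then, to see that $\calJ$ forces $a\to e$: evaluating the displayed formula at $(a,e)$ the minimum is attained only at $s=c$ (since $s=d$ already costs strictly more in the $\calJ_1$ term), giving exactly the sum of the two minimum costs; evaluating at $(a,f)$ every choice of $s$ costs strictly more (either $s=d$ is too expensive in $\calJ_1$, or $s=c$ and then $(c,f)$ is too expensive in $\calJ_2$ because $\calJ_2$ forces $c\to e$). So $\calJ(a)=\{e\}$, and symmetrically $\calJ(b)=\{f\}$. This establishes $(a,b)\to(e,f)$.

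For the $\leadsto$ part, note that $\{a,b\}\leadsto\{c,d\}$ means $(a,b)\to(c,d)$ or $(a,b)\to(d,c)$, and $\{c,d\}\leadsto\{e,f\}$ means $(c,d)\to(e,f)$ or $(c,d)\to(f,e)$; in the second hypothesis one also gets, by relabelling, the move $(d,c)\to(f,e)$ or $(d,c)\to(e,f)$ as needed. So in each of the four combinations the first part of the lemma composes the two moves into a move witnessing $(a,b)\to(e,f)$ or $(a,b)\to(f,e)$, i.e.\ $\{a,b\}\leadsto\{e,f\}$.

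I do not expect any serious obstacle here — this is essentially a routine gluing argument, and the only point requiring a little care is the decomposition of an arbitrary solution $F$ of the glued gadget into its two halves and the verification that the minima behave as claimed when one of the intermediate options is strictly suboptimal; the strictness in the definition of ``forces'' is exactly what makes the composed minimum strict as well. One should also note that the forcing property is stated in terms of $\calJ(\cdot)$, which is insensitive to additive normalisation of costs, so there is no need to invoke \cref{lem:moverealizable} or \cref{lem:addcost1} at this stage.
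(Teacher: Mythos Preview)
Your gluing construction and the cost identity
\[
\edcount(\calJ\to H,(u,v))=\min_{s\in\{c,d\}}\bigl(\edcount(\calJ_1\to H,(u,s))+\edcount(\calJ_2\to H,(s,v))\bigr)
\]
are correct, and your treatment of the $\leadsto$ part is fine. The gap is in the step where you conclude that $\calJ$ forces $a\to e$. You argue that at $(a,f)$ ``either $s=d$ is too expensive in $\calJ_1$, or $s=c$ and then $(c,f)$ is too expensive in $\calJ_2$''. But ``too expensive in $\calJ_1$'' only says $\edcount(\calJ_1\to H,(a,d))>\edcount(\calJ_1\to H,(a,c))$; it does \emph{not} say that $\edcount(\calJ_1\to H,(a,d))+\edcount(\calJ_2\to H,(d,f))$ exceeds $\edcount(\calJ_1\to H,(a,c))+\edcount(\calJ_2\to H,(c,e))$. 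Nothing in the definition of forcing relates $\edcount(\calJ_2\to H,(c,\ast))$ to $\edcount(\calJ_2\to H,(d,\ast))$.

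Concretely, take cost profiles $\edcount(\calJ_1\to H,(a,c))=0$, $\edcount(\calJ_1\to H,(a,d))=1$ (so $\calJ_1$ forces $a\to c$), and $\edcount(\calJ_2\to H,(c,e))=10$, $\edcount(\calJ_2\to H,(c,f))=11$, $\edcount(\calJ_2\to H,(d,e))=1$, $\edcount(\calJ_2\to H,(d,f))=0$ (so $\calJ_2$ forces $c\to e$ and $d\to f$). Then $\edcount(\calJ\to H,(a,e))=\min(10,2)=2$ while $\edcount(\calJ\to H,(a,f))=\min(11,1)=1$, so $\calJ$ actually forces $a\to f$.

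This is exactly why the paper first applies \cref{lem:moverealizable} to replace $\calJ_2$ by a gadget $\calJ_2'$ that \emph{realizes} the relation $\{(c,e),(d,f)\}$, i.e.\ with $\edcount(\calJ_2'\to H,(c,e))=\edcount(\calJ_2'\to H,(d,f))=\edcount(\calJ_2')$. With that normalisation in place, both terms in $\min(\edcount(\calJ_1\to H,(a,c))+\edcount(\calJ_2'\to H,(c,f)),\ \edcount(\calJ_1\to H,(a,d))+\edcount(\calJ_2'\to H,(d,f)))$ are strictly larger than $\edcount(\calJ_1\to H,(a,c))+\edcount(\calJ_2')$, and your argument goes through. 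So your final remark that ``there is no need to invoke \cref{lem:moverealizable}'' is precisely where the proof breaks; that normalisation step is essential.
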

\begin{proof}
	We show that $(a,b)\to (c,d)$ and $(c,d)\to (e,f)$ implies $(a,b)\to (e,f)$. The other statement can be shown analogously.
	Let $\calJ_1$ be a move from $\{a,b\}$ to $\{c,d\}$ that forces $a\to c$ and $b\to d$, and let $\calJ_2$ be a move from $\{c,d\}$ to $\{e,f\}$ that forces $c\to e$ and $d\to f$. 
	By \cref{lem:moverealizable}, there are gadgets $\calJ_1'=(J_1,L_1,(x_1,y_1))$ and $\calJ_2'=(J_2,L_2,(x_2,y_2))$ that realize $R_1= \{(u,v) \mid u\in \{a,b\}, v\in \calJ(u)\}$ and $R_2= \{(u,v) \mid u\in \{c,d\}, v\in \calJ(u)\}$, respectively.

	Then let $J^*$ be the graph obtained by taking a copy of $J'_1$ and a copy of $J'_2$ and identifying $y_1$ with $x_2$. Then 
	$\calJ^*=(J^*, L_1\cup L_2, (x_1, y_2))$ is a move from $\{a,b\}$ to $\{e,f\}$. Note that $\calJ^*$ is well-defined since $L_1(y_1)=L_2(x_2)=\{c,d\}$. Moreover, for $u\in \{a,b\}$ and $v\in \{e,f\}$ we have 
	\[
	\edcount(\calJ^* \to H, (u,v))=\min_{w\in \{c,d\}} \bigl(\edcount(\calJ'_1 \to H, (u,w)) + \edcount(\calJ'_2 \to H, (w,v))\bigr).
	\]
	It is straight-forward to verify that $\calJ^*$ forces $a\to e$: 
	\begin{itemize}
		\item $\edcount(\calJ'_1 \to H, (a,c))<\edcount(\calJ'_1 \to H, (a,d))$ as $\calJ'_1(a)=\calJ(a)=\{c\}$, and
		\item  $\edcount(\calJ'_2)=\edcount(\calJ'_2 \to H, (c,e))=\edcount(\calJ'_2 \to H, (d,f))$ as $\calJ'_2$ realizes $R_2$.
	\end{itemize} 
	Analogously, one can see that $\calJ^*$ forces $b\to f$.
\end{proof}

Note that if $(J,L, (x,y))$ is a move that gives $\{a,b\}\leadsto \{c,d\}$ then $(J,L, (y,x))$ gives $\{c,d\}\leadsto \{a,b\}$.
\begin{obs}\label{obs:reverseforcer}
	If $\{a,b\}\leadsto \{c,d\}$ then $\{c,d\}\leadsto \{a,b\}$.
\end{obs}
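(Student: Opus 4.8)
The plan is to take a move witnessing $\{a,b\}\leadsto\{c,d\}$, first \emph{normalise} its costs, and only then swap the roles of its two portals. Unpacking \cref{def:movesforpairs}, the hypothesis $\{a,b\}\leadsto\{c,d\}$ provides — after possibly renaming $c$ and $d$ — a move $\calJ=(J,L,(x,y))$ with $L(x)=\{a,b\}$, $L(y)=\{c,d\}$ that forces $a\to c$ and $b\to d$, i.e.\ $\calJ(a)=\{c\}$ and $\calJ(b)=\{d\}$; the remaining case $(a,b)\to(d,c)$ is entirely symmetric and I would dispatch it with the same argument.

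First I would apply \cref{lem:moverealizable}: since the pair $\{a,b\}$ is incomparable, it yields a move $\calJ'=(J',L,(x,y))$ realising $R\coloneqq\{(u,v)\mid u\in\{a,b\},\ v\in\calJ(u)\}=\{(a,c),(b,d)\}$. By \cref{def:realizing} there is then an integer $k$ with $\edcount(\calJ'\to H,(a,c))=\edcount(\calJ'\to H,(b,d))=k$ while $\edcount(\calJ'\to H,(a,d))>k$ and $\edcount(\calJ'\to H,(b,c))>k$. Next I would reverse: set $\mathcal{K}\coloneqq(J',L,(y,x))$, a move from $\{c,d\}$ to $\{a,b\}$ whose cost satisfies $\edcount(\mathcal{K}\to H,(w,v))=\edcount(\calJ'\to H,(v,w))$ for all $w\in\{c,d\}$ and $v\in\{a,b\}$, directly from \cref{def:moves}. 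Hence $\edcount(\mathcal{K}\to H,\cdot)$ equals $k$ on $(c,a)$ and $(d,b)$ and strictly exceeds $k$ on $(c,b)$ and $(d,a)$, so $\mathcal{K}(c)=\{a\}$ and $\mathcal{K}(d)=\{b\}$; that is, $\mathcal{K}$ forces $c\to a$ and $d\to b$, giving $(c,d)\to(a,b)$ and thus $\{c,d\}\leadsto\{a,b\}$, as required.

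I do not expect a genuine obstacle here; the single point to get right is \emph{why} a bare portal swap of $\calJ$ itself need not already suffice. ``Forcing'' records only which column carries the unique minimum of a given row of the $2\times 2$ cost matrix, and this information is not preserved under transposing an arbitrary matrix. Passing through \cref{lem:moverealizable} (equivalently, a single application of \cref{lem:addcost1} that adds equalising cost to all moves out of $a$ or out of $b$, again using incomparability of $\{a,b\}$) reshapes the matrix so that its two strict row-minima sit exactly at $(a,c)$ and $(b,d)$ with a common value, and this shape \emph{is} invariant under transposition — which is precisely what makes the reversal go through.
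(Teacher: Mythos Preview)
Your proof is correct, and in fact it is more careful than the paper's own justification. The paper simply asserts, in the sentence preceding the observation, that swapping the portals of a move witnessing $\{a,b\}\leadsto\{c,d\}$ immediately yields a move witnessing $\{c,d\}\leadsto\{a,b\}$. As you rightly point out, this bare swap need not work: forcing $a\to c$ and $b\to d$ only says that each row-minimum of the $2\times 2$ cost matrix sits on the diagonal, and a matrix with rows $(0,1)$ and $(100,5)$ has this property while its transpose does not --- after swapping, $d$ would be forced to $a$ rather than to $b$. Your detour through \cref{lem:moverealizable} replaces the cost matrix by one whose diagonal entries both equal some common value $k$ and whose off-diagonal entries strictly exceed $k$; this shape \emph{is} invariant under transposition, so the portal swap then goes through cleanly. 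That is the right fix, and the paper's one-liner is best read as eliding exactly this normalisation step.
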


It turns out that for moves between \emph{incomparable} pairs $\{a,b\}$ and $\{c,d\}$, in order to ensure $(a,b)\to(c,d)$, it suffices to find a move from $\{a,b\}$ to $\{c,d\}$ that forces $a\to c$ and \emph{allows} $b\to d$.\footnote{In previous works~\cite{DBLP:conf/esa/OkrasaPR20,DBLP:conf/soda/FockeMR22} allowing and forcing was also referred to as distinguishing and forcing.}

\begin{lem}\label{lem:distinguishertoforcer}
	Let $\{a,b\}$ and $\{c,d\}$ be incomparable pairs in $H$.
	If there is a move $\calJ$ from $\{a,b\}$ to $\{c,d\}$ that forces $a\to c$ and allows $b\to d$ then $(a,b)\to (c,d)$.
\end{lem}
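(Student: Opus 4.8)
The plan is to start from the given move $\calJ$ from $\{a,b\}$ to $\{c,d\}$ that forces $a\to c$ and allows $b\to d$, and to amplify it into a move that forces both $a\to c$ and $b\to d$. The issue we must overcome is that, although $d\in\calJ(b)$, it could also be that $c\in\calJ(b)$; i.e.\ $\calJ(b)=\{c,d\}$, so $\calJ$ does not force $b\to d$. The idea is to combine $\calJ$ with its own reverse and with cost adjustments (\cref{lem:addcost1}) to break this tie. First I would apply \cref{lem:moverealizable} to replace $\calJ$ by a move (still called $\calJ$) that realizes the relation $R=\{(u,v)\mid u\in\{a,b\},\ v\in\calJ(u)\}$; this is legitimate since $\{a,b\}$ is incomparable, and it lets us speak of a uniform base cost $\alpha=\edcount(\calJ)$ with $\edcount(\calJ\to H,(u,\ast))=\alpha$ for $u\in\{a,b\}$. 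By hypothesis $\calJ(a)=\{c\}$, so $\edcount(\calJ\to H,(a,c))=\alpha<\edcount(\calJ\to H,(a,d))$, and $d\in\calJ(b)$, so $\edcount(\calJ\to H,(b,d))=\alpha$.

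Next I would take the reverse gadget $(J,L,(y,x))$, which by \cref{obs:reverseforcer} (its proof) realizes a move from $\{c,d\}$ to $\{a,b\}$; call it $\calK$. Since $\calJ$ forces $a\to c$, by symmetry of the realized relation $R$, in $\calK$ we have $\edcount(\calK\to H,(d,b))\le \edcount(\calK\to H,(d,a))$ cannot necessarily be concluded directly, so instead I would argue at the level of the joint gadget: form $\calJ^*$ by taking a copy of the realized version of $\calJ$ from $\{a,b\}$ to $\{c,d\}$, a copy of the realized reverse from $\{c,d\}$ to $\{a,b\}$, and identifying the middle portals, exactly as in the proof of \cref{lem:movetransitive}, but additionally use \cref{lem:addcost1} on the $\{c,d\}$-side to add cost $1$ to the vertex $c$ at the identified portal. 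The effect is: routing through $c$ costs one extra deletion, routing through $d$ does not. Because $\calJ$ forces $a\to c$, the $a$-side of the new gadget is still cheapest when it passes through $c$ (even with the $+1$ penalty) provided we scale up the penalty on the reverse side appropriately — here is where I would be careful and instead add a sufficiently large cost $M$ (via \cref{lem:addcost1} iterated) so that the $a\to d\to\cdots$ route is never competitive, while still $b\to d$ stays strictly cheaper than $b\to c$. Concretely, set $M>\edcount(\calJ\to H,(a,d))-\edcount(\calJ\to H,(a,c))$; then on the $a$-input the penalized route through $c$ beats the route through $d$, and on the $b$-input the route through $d$ (cost $\alpha+0$) strictly beats the route through $c$ (cost $\ge\alpha+1$ after a single unit penalty, which we also include). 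This yields $\calJ^*(a)=\{\text{image of }c\}$ and $\calJ^*(b)=\{\text{image of }d\}$, i.e.\ it forces $a\to c$ and $b\to d$.

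I expect the main obstacle to be bookkeeping the costs so that a single construction simultaneously (i) preserves the forcing $a\to c$ and (ii) upgrades the allowing $b\to d$ to a forcing, since naively penalizing $c$ could destroy property (i) and naively penalizing $d$ would do nothing useful. The resolution is to penalize $c$ by a carefully chosen amount using \cref{lem:addcost1}: small enough that the strict inequality $\edcount(\calJ\to H,(a,c))<\edcount(\calJ\to H,(a,d))$ survives the penalty, and (via the realized-relation normalization from \cref{lem:moverealizable}) large enough — in fact just $1$ suffices after normalization, since then the $b$-input sees cost $\alpha+1$ through $c$ versus $\alpha$ through $d$. Once the two costs are separated on both inputs, the conclusion $(a,b)\to(c,d)$ is immediate from \cref{def:movesforpairs}. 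I would also note that the assumption that $\{c,d\}$ is incomparable is what makes \cref{lem:addcost1} applicable on that side; the incomparability of $\{a,b\}$ was already used to normalize $\calJ$ via \cref{lem:moverealizable}.
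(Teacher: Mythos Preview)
Your proposal has a genuine gap. The two constructions you sketch both break down, and in fact for the same underlying reason: you never secure a cost gap of at least $2$ on the $a$-row before penalizing $c$.

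First, the composition idea. If you chain $\calJ$ (from $\{a,b\}$ to $\{c,d\}$) with its reverse $\calK$ (from $\{c,d\}$ to $\{a,b\}$) and identify the middle $\{c,d\}$-portals, the resulting gadget $\calJ^*$ is a move from $\{a,b\}$ to $\{a,b\}$, not to $\{c,d\}$. So whatever $\calJ^*$ forces, its conclusion is about images in $\{a,b\}$; it cannot witness $(a,b)\to(c,d)$ in the sense of \cref{def:movesforpairs}. The sentence ``$\calJ^*(a)=\{\text{image of }c\}$ \ldots\ i.e.\ it forces $a\to c$'' conflates the internal routing vertex with the output portal.

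Second, the direct penalty idea. After normalizing via \cref{lem:moverealizable} you have $\edcount(\calJ\to H,(a,c))=\alpha$, $\edcount(\calJ\to H,(a,d))\ge\alpha+1$, and $\edcount(\calJ\to H,(b,c))=\edcount(\calJ\to H,(b,d))=\alpha$. Normalization equalizes $\edcount(\calJ\to H,(u,\ast))$ across $u\in\{a,b\}$, but it does \emph{not} enlarge the gap $\edcount(\calJ\to H,(a,d))-\edcount(\calJ\to H,(a,c))$, which may be exactly $1$. Adding a $+1$ penalty to $c$ then makes $(a,c)$ and $(a,d)$ tie, so you lose the forcing $a\to c$; and since penalties are integers you cannot add a penalty strictly between $0$ and $1$. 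Your claim that ``just $1$ suffices after normalization'' is therefore false in the boundary case.

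The paper's proof resolves exactly this: it first places two copies of $\calJ$ \emph{in parallel} (identifying $x$ with $x'$ and $y$ with $y'$, after a small modification to avoid parallel edges if $xy\in E(J)$). This doubles every cost, so the $a$-row gap becomes at least $2$ while the $b$-row remains a tie. Now a single $+1$ penalty on $c$ (via \cref{lem:addcost1}, using that $\{c,d\}$ is incomparable) breaks the tie on the $b$-row \emph{and} leaves a strict gap on the $a$-row, giving a move that forces both $a\to c$ and $b\to d$.
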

\begin{proof}
	Let $\calJ=(J,L,(x,y))$. If $\calJ(b)=\{d\}$ then $\calJ$ actually forces $b\to d$, and we are done. So suppose that $\calJ(b)=\{c,d\}$.
		
	Intuitively, we will put two copies of $\calJ$ in parallel to force both $a\to c$ and $b\to d$. However, this is problematic if $J$ contains the edge $xy$ since we cannot introduce multiple edges. A simple modification salvages this situation:
	If $J$ contains the edge $xy$ then let $J'$ be obtained from $J$ by replacing the edge $xy$ by a path $x-x_1-x_2-y$ where $x_1$ and $x_2$ are new vertices. Since $\{a,b\}$ is incomparable there are $a'\in \nh(a)\setminus \nh(b)$ and $b'\in \nh(b)\setminus \nh(a)$. We set $L(x_1)=\{a', b'\}$ and $L(x_2)=\{a,b\}$.
	In order to obtain a list homomorphism from $(J',L)$ to $H$ the new edges never have to be deleted and $x$ is mapped to $a$ if and only if $x_2$ is mapped to $a$ (and $x_1$ is mapped to $a'$). Thus, it is straight-forward to see that $(J',L,(x,y))$ is also a move from $\{a,b\}$ to $\{c,d\}$ that forces $a\to c$ and allows $b\to d$.
	
	So now let us assume without loss of generality that $J$ does not contain the edge $xy$.  In this case let $\calJ'= (J',L',(x',y'))$ be a copy of the gadget $\calJ$. Let $J''$ be the graph obtained from $J$ and $J'$ by identifying $x$ with $x'$ and $y$ with $y'$. Then let $\calJ''=(J'', L\cup L', \{x,y\})$.
	Recall that  $\calJ(a)=\{c\}$ and therefore $\edcount(\calJ \to H, (a,c))\le\edcount(\calJ \to H, (a,d))-1$. This means that $\edcount(\calJ'' \to H, (a,c))\le\edcount(\calJ'' \to H, (a,d))-2$.
	Furthermore, recall that we have $\calJ(b)=\{c,d\}$, i.e., $\edcount(\calJ \to H, (b,c))=\edcount(\calJ \to H, (b,d))$. Thus, we also have $\edcount(\calJ'' \to H, (b,c))=\edcount(\calJ'' \to H, (b,d))$.

	Then we apply \cref{lem:addcost1} (twice) using the fact that $\{a,b\}$ and $\{c,d\}$ are incomparable to add cost $+1$ to starting from $b$ as well as cost $+1$ to finishing on $c$. So, \cref{lem:addcost1} ensures that there is a gadget $\calJ^{*}$ with $\edcount(\calJ^{*} \to H, (a,c))\le\edcount(\calJ^{*} \to H, (a,d))-1$ and $\edcount(\calJ^{*} \to H, (b,c))= \edcount(\calJ^* \to H, (b,d))+1$. This ensures $\calJ^*(a)=\{c\}$ and $\calJ^*(b)=\{d\}$, as required.
\end{proof}

We will repeatedly use the crucial fact that if two incomparable pairs have a common vertex then there is a forced move from one pair to the other. Here is the formal statement.
\begin{lem}\label{lem:adjincomp}
	Let $\{a,b\}$ and $\{a,c\}$ be  incomparable pairs in $H$. Then $\{a,b\}\leadsto\{a,c\}$.
\end{lem}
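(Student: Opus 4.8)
The plan is to construct a move from $\{a,b\}$ to $\{a,c\}$ that forces $a\to a$ and allows $b\to c$; then \cref{lem:distinguishertoforcer} upgrades this to $(a,b)\to(a,c)$ (or $(a,b)\to(c,a)$, which is equally fine for the $\leadsto$ relation), using that both pairs are incomparable by hypothesis. So the whole task reduces to exhibiting one simple binary gadget with the right forcing/allowing behaviour.

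First I would use incomparability of $\{a,b\}$: there is a vertex $b^*\in \nh(b)\setminus\nh(a)$, i.e.\ $b^*$ is adjacent to $b$ but not to $a$. Similarly, incomparability of $\{a,c\}$ gives a vertex $c^*\in\nh(c)\setminus\nh(a)$. The natural candidate gadget is a short path with portals at the two ends, something like $\{a,b\}-\{b^*\}-\{a,c\}$ or a slight variant: the internal vertex has list $\{b^*\}$, which is adjacent to $b$ but not $a$; hence if the left portal $x$ is mapped to $a$, the edge $x$--(internal) must be deleted, while if $x$ is mapped to $b$ that edge survives. One then checks the four combinations $(a,a),(a,c),(b,a),(b,c)$ for the right portal $y$, computes the minimum number of deleted edges in each case, and verifies that mapping $x$ to $a$ forces $y$ to a single value (the one compatible with $b^*$, namely whichever of $\{a,c\}$ is adjacent to $b^*$ — if neither is, one adjusts the gadget), while mapping $x$ to $b$ allows the other value of $y$. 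A clean symmetric choice is a three-internal-vertex path $\{a,b\}-(V(H)\setminus\nh(a))-\{b\}-(V(H)\setminus\nh(\,\cdot\,))-\{a,c\}$-type construction analogous to the splitters in \cref{lem:vd-gadget}; I would tune the internal lists so that the base cost is attained exactly when $x\mapsto a$ pairs with one designated element of $\{a,c\}$ and $x\mapsto b$ pairs with the complementary element.

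After the gadget is built and its $\edcount$ values on the four portal states are tabulated, I would read off: (i) $\calJ$ forces $a\to y_0$ for some $y_0\in\{a,c\}$, and (ii) $\calJ$ allows $b\to y_1$ where $\{y_0,y_1\}=\{a,c\}$. Then \cref{lem:distinguishertoforcer}, applied with the incomparable pairs $\{a,b\}$ and $\{a,c\}$, yields $(a,b)\to(y_0,y_1)$, hence $\{a,b\}\leadsto\{a,c\}$ by \cref{def:movesforpairs}. If instead the gadget turns out to force $a\to a$ directly and force $b\to c$, we are done immediately without invoking \cref{lem:distinguishertoforcer}; the lemma is there precisely to absorb the case where the $b$-branch only allows rather than forces.

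The main obstacle I anticipate is purely a bookkeeping one: guaranteeing that the designated internal vertex actually distinguishes $a$ from $b$ on one side \emph{and} simultaneously interacts with $\{a,c\}$ on the other side in a way that pins down the image of $y$. Because $H$ is completely arbitrary here (no obstruction, no bi-arc assumption is used in this particular lemma — only incomparability of the two pairs), one must be careful that the chosen witnesses $b^*$, $c^*$ exist and that the path's internal lists can be filled so the base cost is uniquely realized. This may require a tiny case split according to whether $b^*$ is adjacent to $a$, to $c$, or to neither, exactly as in the case analysis for the splitter and translator gadgets of \cref{lem:vd-gadget}; none of the cases is hard, but all of them must be checked to conclude that some move giving "force $a\to\cdot$, allow $b\to\cdot$" always exists.
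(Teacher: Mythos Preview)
Your high-level strategy matches the paper's: construct a short path gadget that forces the image of $a$ and allows the image of $b$, then invoke \cref{lem:distinguishertoforcer}. However, the concrete gadget you propose does not work, and the case split you anticipate is not the one that is actually needed. In $\{a,b\}-\{b^*\}-\{a,c\}$ with $b^*\in\nh(b)\setminus\nh(a)$, the left edge is always violated when $x\mapsto a$, so $\edcount(\calJ\to H,(a,a))$ and $\edcount(\calJ\to H,(a,c))$ are both at least $1$ and differ only through whether $b^*c\in E(H)$; nothing here forces $a$ to a unique value. Your suggested case split on ``whether $b^*$ is adjacent to $a$, to $c$, or to neither'' is also off target ($b^*$ is never adjacent to $a$ by construction), and adjacency of $b^*$ to $c$ is not the right hinge.

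The paper's actual case distinction is whether $\nh(a)\setminus(\nh(b)\cup\nh(c))$ is nonempty. If it is, pick $a'$ there and $b'\in\nh(b)\setminus\nh(a)$; then the three-vertex path $\{a,b\}-\{a',b'\}-\{a,c\}$ forces $a\to a$ (via $a'$, cost $0$) and allows $b\to c$, so \cref{lem:distinguishertoforcer} finishes. If $\nh(a)\subseteq\nh(b)\cup\nh(c)$, no such $a'$ exists; instead one takes $a'\in\nh(a)\setminus\nh(b)$ and $a''\in\nh(a)\setminus\nh(c)$ (the case assumption forces $a'\in\nh(c)$ and $a''\in\nh(b)$), together with $b'\in\nh(b)\setminus\nh(a)$ and $c'\in\nh(c)\setminus\nh(a)$, and uses the five-vertex path $\{a,b\}-\{a',b'\}-\{c,b\}-\{c',a''\}-\{c,a\}$. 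The twist you did not anticipate is that in this second case the gadget forces $a\to c$ (not $a\to a$) and allows $b\to a$, both at cost $0$; this still gives $\{a,b\}\leadsto\{a,c\}$ after \cref{lem:distinguishertoforcer}. So your plan is sound, but the missing content is precisely this case split and the two explicit gadgets.
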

\begin{proof}
	Let $b'\in \nh(b)\setminus \nh(a)$ and $c'\in \nh(c)\setminus \nh(a)$.
	\begin{itemize}
		\item If there is $a'\in \nh(a)\setminus (\nh(b)\cup \nh(c))$ then the path gadget $\{a,b\}-\{a',b'\}-\{a,c\}$ forces $a\to a$ and $b\to c$. Note that $b=c$ forms a subcase of this case.
		\item Otherwise we have $\nh(a)\subseteq (\nh(b)\cup \nh(c))$. Then let $a'\in \nh(a)\setminus\nh(b)$ and $a''\in \nh(a)\setminus\nh(c)$ (exist because of incomparability). Note that $a'\in \nh(c)$ and $a''\in \nh(b)$.
		Then $\{a,b\}-\{a,'b'\}-\{c,b\} - \{c',a''\} -\{c,a\}$ forces $a\to c$ (requiring zero edge deletions, whereas $a\to a$ requires at least one) and allows $b\to a$ (also requiring zero edge deletions). Then we can apply \cref{lem:distinguishertoforcer} to obtain a gadget that also forces $b\to a$.
	\end{itemize}\qedhere
\end{proof}

In order to easily apply \cref{lem:adjincomp} we introduce a helpful auxiliary graph.
\begin{defn}[$\aux(H)$]
	Given a graph $H$, let $\aux(H)$ be the graph whose vertices are the incomparable pairs in $H$. Two such sets are adjacent if and only if their intersection is non-empty. 
\end{defn}

\begin{cor}\label{cor:connectedinAux}
	Let $\{a,b\}$ and $\{c,d\}$ be (not necessarily distinct)  incomparable pairs in $H$ that are in the same connected component of $\aux(H)$. Then $\{a,b\}\leadsto \{c,d\}$.
\end{cor}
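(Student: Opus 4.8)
The plan is to chain together the single-step result Lemma~\ref{lem:adjincomp} using the transitivity established in Lemma~\ref{lem:movetransitive}. By assumption, $\{a,b\}$ and $\{c,d\}$ lie in the same connected component of $\aux(H)$, so there is a walk $P_0, P_1, \ldots, P_k$ in $\aux(H)$ with $P_0=\{a,b\}$, $P_k=\{c,d\}$, and $P_iP_{i+1}\in E(\aux(H))$ for every $i\in\{0,\ldots,k-1\}$. Each $P_i$ is, by definition of $\aux(H)$, an incomparable pair, and adjacency in $\aux(H)$ means that consecutive pairs share a common vertex.

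First I would dispose of the degenerate case. If $k=0$, then $\{a,b\}=\{c,d\}$, and $\{a,b\}\leadsto\{a,b\}$ is exactly the $b=c$ subcase of Lemma~\ref{lem:adjincomp} (applied with the pair $\{a,b\}$ in both roles), so there is nothing more to prove. Otherwise, for each $i\in\{0,\ldots,k-1\}$, the pairs $P_i$ and $P_{i+1}$ are incomparable and intersect, so writing $P_i=\{a_i,e\}$ and $P_{i+1}=\{a_{i+1},e\}$ for a common vertex $e$, Lemma~\ref{lem:adjincomp} yields $P_i\leadsto P_{i+1}$.

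Finally I would iterate transitivity: applying Lemma~\ref{lem:movetransitive} to $P_0\leadsto P_1$ and $P_1\leadsto P_2$ gives $P_0\leadsto P_2$, and an induction on $j$ shows $P_0\leadsto P_j$ for all $j\le k$; taking $j=k$ gives $\{a,b\}\leadsto\{c,d\}$, as claimed.

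There is essentially no obstacle here, since all the substantive work has been done in the preceding lemmas: Lemma~\ref{lem:adjincomp} handles the geometric/neighborhood case analysis for a single move between intersecting incomparable pairs (and uses Lemma~\ref{lem:distinguishertoforcer} to upgrade ``allows'' to ``forces''), and Lemma~\ref{lem:movetransitive} packages the gadget-composition argument (via the normalization in Lemma~\ref{lem:moverealizable}) that makes $\leadsto$ transitive. The only mild care needed is to phrase things in terms of walks rather than paths in $\aux(H)$ and to record the $k=0$ case explicitly.
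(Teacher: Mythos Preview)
Your proof is correct and follows essentially the same approach as the paper: apply Lemma~\ref{lem:adjincomp} to each adjacent pair along a walk in $\aux(H)$ and chain the resulting $\leadsto$ relations via Lemma~\ref{lem:movetransitive}. Your version is simply more explicit about the base case $k=0$ and the induction, but there is no substantive difference.
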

\begin{proof}
	For adjacent vertices $P,P'$ in $\aux(H)$ we have $P\leadsto P'$ by \cref{lem:adjincomp}. The statement then follows from iterative use of \cref{lem:adjincomp} together with \cref{lem:movetransitive}.
\end{proof}

At this point let us state the main result of this section.

\lemmovingmain*

We split the proof into two parts depending on whether or not the graph $H$ is a \emph{strong split graph}, that is, a graph whose vertices are covered by a reflexive clique and an irreflexive independent set. Let us state two crucial results that we will prove subsequently in two separate parts.

For a graph $H$ let $\aux^*(H)$ be the induced subgraph of $\aux(H)$ whose vertices are the incomparable pairs in $H$ that contain only reflexive vertices. 
\begin{restatable}{lem}{lemauxHssg}
	\label{lem:auxHssg}
	Let $H$ be an undecomposable strong split graph. Then $\aux^*(H)$ is connected.
\end{restatable}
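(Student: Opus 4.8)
\textbf{Proof plan for \cref{lem:auxHssg}.}
The plan is to analyze the structure of an undecomposable strong split graph $H$ via the decomposition-detection algorithm sketched in the technical overview, and then use the alternating-path certificates it produces to show that all reflexive incomparable pairs lie in one connected component of $\aux^*(H)$. First I would dispense with trivial cases: if $H$ has a universal vertex or an isolated (more generally, a dominated-by-nothing) vertex these can be handled separately, since such vertices either never appear in an incomparable pair or can be removed without affecting undecomposability in the relevant sense; so I may assume $H$ has neither. Next I would run the following procedure on $H$: put every \emph{maximal} (in the domination order) vertex into $B$ and every other vertex into $A$, then repeatedly move an irreflexive vertex of $A$ not adjacent to all of $B$ into $C$, and move a reflexive vertex of $A$ adjacent to some vertex of $C$ into $B$. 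I would verify that if this procedure halts with $A\neq\emptyset$ then $(A,B,C)$ satisfies the three conditions of \cref{def:decomp}: $B$ is a reflexive clique because in a strong split graph all reflexive vertices are pairwise adjacent, and $B$ is fully joined to $A$ by the stopping condition; $C$ is an irreflexive independent set for the same split reason and has no edge to $A$ by the stopping condition; and $A,B\cup C$ are both nonempty (the maximal vertices are never moved back, so $B\neq\emptyset$). Hence by undecomposability the procedure must move \emph{every} vertex of $A$ into $B\cup C$.

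The key consequence I would extract is a \emph{certificate} for every reflexive vertex $a$: since $a$ ends up in $B$, there is a sequence $a=\ell_0, r_1, \ell_1, r_2, \ell_2,\dots, r_k, \ell_k$ where each $\ell_i$ is reflexive, each $r_i$ is irreflexive, $r_{i+1}\in\nh(\ell_i)\setminus\nh(\ell_{i+1})$ (i.e.~$r_{i+1}$ is the irreflexive witness that pushed $\ell_i$ into $B$ because it was adjacent to $C$, where $\ell_{i+1}$ was already placed in $C$ via witness $r_{i+1}$ not adjacent to it), and $\ell_k$ is a maximal reflexive vertex. Picking a shortest such certificate, $r_{i+1}$ and $r_{i+2}$ distinguish $\ell_i$ from $\ell_{i+1}$, so $\ell_i$ and $\ell_{i+1}$ are incomparable; moreover these are reflexive, so $\{\ell_i,\ell_{i+1}\}$ is a vertex of $\aux^*(H)$, and consecutive such pairs share $\ell_{i+1}$, hence are adjacent in $\aux^*(H)$. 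This shows $\{a,b\}$ — for any reflexive $b$ with $\{a,b\}$ incomparable — is connected in $\aux^*(H)$ to some pair $\{\ell_{k-1},\ell_k\}$ of reflexive vertices with $\ell_k$ maximal, and since $\ell_k$ is maximal we can further pass to $\{\ell_k, q\}$ for any maximal reflexive $q$ with $\nh(q)\neq\nh(\ell_k)$, which is incomparable and adjacent to $\{\ell_{k-1},\ell_k\}$. Thus every vertex of $\aux^*(H)$ is connected to a pair consisting of two \emph{maximal} reflexive vertices with distinct neighborhoods.

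It therefore remains to show that any two pairs $\{a',b'\}$ and $\{c',d'\}$ of maximal reflexive vertices (all four maximal, pairs incomparable) lie in the same component of $\aux^*(H)$. I would argue that maximal vertices with distinct neighborhoods are automatically incomparable, so the ``incomparability constraint'' is vacuous among them. If $\nh(a')=\nh(c')$ then $\{a',b'\}$ and $\{a',d'\}$ (note $\{c',d'\}$ is the same vertex as $\{a',d'\}$ up to the shared neighborhood — care is needed here, but $d'$ is incomparable to $a'$ since it is incomparable to $c'$) are adjacent via $a'$, and each is adjacent to $\{c',d'\}$, done. If $\nh(a')\neq\nh(c')$, then $\{a',c'\}$ is an incomparable pair of reflexive vertices and hence a vertex of $\aux^*(H)$, and it is adjacent to both $\{a',b'\}$ (sharing $a'$) and $\{c',d'\}$ (sharing $c'$); the degenerate cases where $a'\in\{c',d'\}$ or the pairs intersect are handled directly. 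Combining the two paragraphs gives that every vertex of $\aux^*(H)$ is connected to every other, i.e.~$\aux^*(H)$ is connected. The main obstacle I anticipate is the correctness proof of the decomposition-detection procedure — specifically checking that the stopping configuration really yields a valid decomposition (the reflexive-clique and full-join conditions require the strong split hypothesis and the placement of maximal vertices in $B$) and that the alternating certificate can be chosen minimal so that consecutive $\ell_i$ are genuinely incomparable; the rest is routine case-checking about maximal vertices and shared endpoints.
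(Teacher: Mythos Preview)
Your plan is essentially identical to the paper's proof: the paper also removes universal and isolated vertices (verifying undecomposability is preserved), proves correctness of exactly the decomposition-detection procedure you describe (\cref{lem:decompalg1}), extracts the same alternating certificate $\ell_0,r_1,\ell_1,\dots,r_k,\ell_k$ and uses minimality to argue $\{\ell_i,\ell_{i+1}\}$ is incomparable (\cref{lem:incompToMaximal}), and then connects any two maximal-vertex pairs by the same two-case argument on whether $\nh(a')=\nh(c')$. The only minor point you glossed over is the incomparability of the final pair $\{\ell_{k-1},\ell_k\}$ (there is no $r_{k+1}$, so you need that $\ell_k$ is maximal while $\ell_{k-1}$ is not, by minimality), and the existence of a maximal $q$ with $\nh(q)\neq\nh(\ell_k)$ requires the no-universal-vertex assumption; both are handled exactly as you would expect.
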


For a graph $H$ we say that a vertex $\{a,b\}$ of $\aux(H)$ is \emph{bad} if $a,b$ are irreflexive non-adjacent vertices such that the sets $\nh(a)\setminus \nh(b)$ and $\nh(b)\setminus \nh(a)$ contain only reflexive vertices. Otherwise, $\{a,b\}$ is \emph{good}. Let $\aux^g(H)$ be the subgraph of $\aux(H)$ induced by its good vertices.
\begin{restatable}{lem}{lemauxHnossg}
	\label{lem:auxHnossg}
	Let $H$ be an undecomposable graph that is not a strong split graph. Then $\aux^g(H)$ is connected.
\end{restatable}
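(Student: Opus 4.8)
The plan is to implement the ``Case~II'' strategy from Section~\ref{sec:overview}: detect decompositions algorithmically, use undecomposability to certify that every vertex is forced into part~$A$ of any candidate decomposition, and then convert these certificates into paths in $\aux^g(H)$.

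\textbf{Detecting decompositions.} Run the following closure process. Initially place in $A$ every reflexive vertex that has a reflexive non-neighbour, and every irreflexive vertex that has an irreflexive neighbour. Then, as long as possible, move into $A$ any irreflexive vertex with a neighbour already in $A$, and any reflexive vertex with a non-neighbour already in $A$. Let $B$ and $C$ be, respectively, the reflexive and the irreflexive vertices that never entered $A$. If the process ends with $A\neq V(H)$, then $(A,B,C)$ is a decomposition in the sense of Definition~\ref{def:decomp}: $B$ is a reflexive clique (a non-edge inside $B$ would have forced an endpoint into $A$); all $A$--$B$ edges are present (a missing one would force the $B$-endpoint into $A$); $C$ is an irreflexive independent set (an edge inside $C$ would force an endpoint into $A$); there is no $A$--$C$ edge (it would force the $C$-endpoint into $A$); and $A\neq\emptyset$, since $A=\emptyset$ would make $V(H)=B\cup C$ a strong split partition, contrary to hypothesis. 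As $H$ is undecomposable, the process must end with $A=V(H)$.

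\textbf{Certificates.} Hence every vertex $v$ has a \emph{certificate}: a sequence $v=v_0,v_1,\dots,v_k$ where $v_k$ lies in the initial set (so it has a ``base partner'' $v_k'$ of the same reflexivity: a reflexive non-neighbour if $v_k$ is reflexive, an irreflexive neighbour if $v_k$ is irreflexive), and for each $i<k$ either $v_i$ is irreflexive with $v_iv_{i+1}\in E(H)$, or $v_i$ is reflexive with $v_iv_{i+1}\notin E(H)$. Call a $2$-vertex incomparable set a \emph{base pair} if it induces an irreflexive edge or a pair of non-adjacent reflexive vertices; such a set is automatically \emph{good} (it is not both-irreflexive-and-non-adjacent), and for the same reason so is every consecutive pair $\{v_i,v_{i+1}\}$ of a certificate, provided it is incomparable. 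Now take a certificate of \emph{minimum length}. I will argue that then each link $\{v_i,v_{i+1}\}$ with $i\ge 1$ is incomparable: the defining relation between $v_i$ and $v_{i+1}$ already forbids that $v_{i+1}$ dominates $v_i$ (reflexive case) or that $v_i$ dominates $v_{i+1}$ (irreflexive case); and if the reverse domination held, one checks that $v_{i+1}$ is then forced to be irreflexive with $\nh(v_{i+1})\subseteq\nh(v_i)$, so that $v_{i-1}$ stands to $v_{i+1}$ in exactly the relation needed to delete $v_i$ from the certificate — contradicting minimality. Consequently
\[
\{v_0,v_1\},\{v_1,v_2\},\dots,\{v_{k-1},v_k\},\{v_k,v_k'\}
\]
is a walk in $\aux^g(H)$ ending at a base pair (consecutive terms share a vertex), once $\{v_0,v_1\}$ is also known to be incomparable.

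\textbf{Assembling the proof, and the main obstacle.} Given an arbitrary good pair $\{a,b\}\in V(\aux^g(H))$, apply the above to $v=a$: either $a$ already lies in the initial set, so $\{a,a'\}$ is a base pair sharing the vertex $a$ with $\{a,b\}$; or the certificate starts with a pair $\{a,v_1\}$ sharing $a$ with $\{a,b\}$. In both cases $\{a,b\}$ is joined in $\aux^g(H)$ to a base pair. It then remains to show that all base pairs lie in one component of $\aux^g(H)$, by a direct case analysis on the two types (adjacent irreflexive versus non-adjacent reflexive): since $H$ is not a strong split graph at least one type occurs, and undecomposability (again via the detection process, or a short ad hoc construction) links any two of them by a path of good incomparable pairs. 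Combining the two reductions shows $\aux^g(H)$ is connected. The technical crux, and the step I expect to be the main obstacle, is the minimality argument in the ``Certificates'' paragraph: each link of a certificate rules out only one direction of domination, so one must exploit the induced reflexivity constraints to excise a ``comparable'' link, and the first link $\{v_0,v_1\}$ needs separate care (when $v_0$ is reflexive, choosing a non-neighbour in $A$ that is not dominated by $v_0$, or else observing that $v_0$ itself sits in a base pair). The remaining book-keeping — that all produced pairs are good, and the base-pair case analysis — is comparatively routine.
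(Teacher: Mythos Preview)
Your high-level strategy --- detect decompositions by closure, extract certificates from undecomposability, convert certificates into paths in $\aux^g(H)$, then link all base pairs --- is exactly the paper's approach. The gap is in the ``Certificates'' paragraph: the claim that every link $\{v_i,v_{i+1}\}$ with $i\ge 1$ is incomparable, via the stated shortening argument, is false. In the irreflexive case ($v_i$ irreflexive, $v_iv_{i+1}\in E(H)$) the reverse domination is $\nh(v_i)\subseteq\nh(v_{i+1})$ and forces $v_{i+1}$ \emph{reflexive}, not irreflexive as you write; and in neither case does the domination let you delete $v_i$. Concretely, take $V(H)=\{v_0,v_1,v_2,v_2'\}$ with loops at $v_0,v_2,v_2'$ and non-loop edges $v_0v_2,\,v_0v_2',\,v_1v_2$. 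This $H$ is undecomposable and not a strong split graph; the unique minimal certificate for $v_0$ is $v_0,v_1,v_2$ (one cannot shorten: $v_0v_2$ is an edge and $v_0$ has no reflexive non-neighbour), yet $\{v_1,v_2\}$ is comparable since $\nh(v_1)=\{v_2\}\subseteq\nh(v_2)$. So neither $\{v_0,v_1\}$ nor $\{v_1,v_2\}$ is a vertex of $\aux^g(H)$, and your proposed walk collapses.

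The paper fixes exactly this point by stepping two at a time: with the certificate written so that reflexivities alternate (which does follow from minimality), one shows $\{\ell_0,\ell_2\}$ is incomparable using $\ell_1\in\nh(\ell_2)\setminus\nh(\ell_0)$ and, by minimality, $\ell_3\in\nh(\ell_0)\setminus\nh(\ell_2)$ (or the base partner $d$ when $k=2$). Then $\{a,b\}$ and $\{\ell_0,\ell_2\}$ are adjacent in $\aux^g(H)$ (both vertices are reflexive, so the pair is good), and one recurses on $\{\ell_0,\ell_2\}$ whose certificate has length $k-2$. The remaining pieces you sketch --- the separate handling of both-irreflexive good pairs and the case analysis linking any two base pairs --- are fine and match the paper; only the ``consecutive links are incomparable'' step needs to be replaced by this two-step induction.
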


Before proving \cref{lem:auxHssg,lem:auxHnossg} let us show how these results constitute the proof of \cref{lem:movingMain}.

\begin{proof}[Proof of \cref{lem:movingMain}]
	First, suppose that $H$ is a strong split graph. In order for $a$ and $b$ to be incomparable they are either both irreflexive or both reflexive. The same holds for $c$ and $d$.
	Suppose $\{a,b\}$ is irreflexive with private neighbors $a'\in \nh(a)\setminus \nh(b)$ and $b'\in \nh(b)\setminus \nh(a)$. Then the path gadget $\{a,b\}-\{a',b'\}$ gives $(a,b)\to (a',b')$. Then, by \cref{lem:movetransitive}, it suffices to show that $\{a',b'\}\leadsto \{c,d\}$. Since $H$ is a strong split graph $\{a',b'\}$ is reflexive.
	Similarly, if $\{c,d\}$ is irreflexive there is a reflexive pair $\{c',d'\}$ with $\{c',d'\}\leadsto \{c,d\}$.
	Then the statement follows from \cref{cor:connectedinAux} and the fact that $\aux^*(H)$ is connected by \cref{lem:auxHssg}.
	
	Second, suppose that $H$ is not a strong split graph. Suppose $\{a,b\}$ is a bad vertex of $\aux(H)$. This means that $a$ is irreflexive and there is a reflexive $a'\in \nh(a)\setminus \nh(b)$.
	Again, the path gadget $\{a,b\}-\{a',b'\}$ gives $(a,b)\to (a',b')$, where $\{a',b'\}$ now is a good vertex in $\aux^g(H)$ and consequently a vertex in $\aux^g(H)$.
	So we have shown that there is a vertex $P$ of $\aux^g(H)$ such that $\{a,b\}\leadsto P$. Similarly, one can show that if $\{c,d\}$ itself is not a vertex of $\aux^g(H)$ then there is a vertex $P'$ of $\aux^g(H)$ with $P'\leadsto \{c,d\}$.	
	Then the statement follows from \cref{cor:connectedinAux} and the fact that $\aux^g(H)$ is connected by \cref{lem:auxHnossg}.
\end{proof}

\paragraph{{\boldmath Proof of \texorpdfstring{\cref{lem:auxHssg}}{Lemma \ref{lem:auxHssg}}: $H$ is a strong split graph}.}

We are interested in utilizing the fact that some graph is undecomposable. To this end, we investigate how to verify that a graph (here, a strong split graph) has this property. 
Recall that the vertices of a strong split graph are covered by a reflexive clique and an irreflexive independent set. Alternatively, this means that a strong split graph has no irreflexive edge and no reflexive non-edge.

\begin{algorithm}
	\caption{Algorithm to determine whether a strong split graph $H$ has a decomposition.}
	\label{algo:certificate1}
	\begin{algorithmic}
		\Require Graph $H=(V,E)$.
		\State Initialize $B$ as the set of maximal vertices in $V$ and $C$ as the empty set.
		\Loop
			 \If{there is an irreflexive vertex $v\in V\setminus B$ that has a non-edge to $B$}
			 	\State add $v$ to $C$.
			 \ElsIf{there is a reflexive vertex $v\in V\setminus B$ that has an edge to $C$}
			 	\State add $v$ to $B$.
			 \Else
			 	\Break
			 \EndIf
		\EndLoop
		\Ensure $B,C$
	\end{algorithmic}
\end{algorithm}

 Recall from \cref{sec:prelims} that a vertex $v$ in $H$ is universal if $\nh(v)=V(H)$, and it is isolated if $\nh(v)=\emptyset$.
 
\begin{lem}\label{lem:decompalg1}
	Let $H$ be a strong split graph without universal vertices and without isolated vertices.
	Then $H$ has a decomposition if and only if the sets $B$ and $C$ returned by \cref{algo:certificate1} satisfy $B\cup C \neq V(H)$.
\end{lem}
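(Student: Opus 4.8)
The plan is to show both directions, with the "only if" direction being the technical heart. For the easy "if" direction, I would argue that when \cref{algo:certificate1} terminates with $B\cup C\neq V(H)$, setting $A\coloneqq V(H)\setminus(B\cup C)$ yields a valid decomposition. Concretely: $B$ is initialized as the maximal vertices, which in a strong split graph without universal vertices form a reflexive clique (any two maximal vertices, if non-adjacent, would be reflexive non-adjacent, contradicting the strong split property; they must be reflexive since an irreflexive vertex in a strong split graph has a neighborhood contained in the closed neighborhood of any reflexive vertex it is adjacent to, hence is not maximal once we exclude isolated/universal vertices). The loop only ever adds reflexive vertices to $B$ (second branch) and irreflexive vertices to $C$ (first branch), so $B$ stays reflexive and $C$ stays irreflexive; since $H$ has no irreflexive edge, $C$ is automatically an independent set. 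The termination condition of the loop gives exactly: every irreflexive vertex of $A$ is fully adjacent to $B$ (else the first branch would fire), and every reflexive vertex of $A$ has no edge to $C$ (else the second branch would fire). Combined with the strong split structure this gives the full set of edges between $A$ and $B$, no edges between $A$ and $C$, and $A\neq\emptyset$, $B\cup C\neq\emptyset$ (note $B\supseteq$ maximal vertices $\neq\emptyset$ since there are no isolated vertices). That is precisely Definition~\ref{def:decomp}.

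For the "only if" direction, suppose $H$ has a decomposition $(A^*,B^*,C^*)$; I must show the algorithm cannot exhaust $V(H)$, i.e. the returned $A=V(H)\setminus(B\cup C)$ is non-empty. The key claim is the invariant: \emph{at every point in the execution, $B\subseteq B^*$ and $C\subseteq C^*$}. If this holds, then since $A^*\neq\emptyset$ we get $B\cup C\subseteq B^*\cup C^*\subsetneq V(H)$, done. To establish the invariant, first I would show the initialization is safe: every maximal vertex lies in $B^*$. Indeed, for $a\in A^*$, $b\in B^*$, $c\in C^*$ we have $\nh(c)\subseteq\nh(a)\subseteq\nh(b)$ (since $B^*$ is fully joined to $A^*$, $C^*$ has no edge to $A^*$, and the reflexive clique/independent set structure), so a vertex in $A^*$ or $C^*$ can only be maximal if its neighborhood equals that of a maximal vertex; but a vertex $v\in A^*$ is strictly dominated by any $b\in B^*$ unless $B^*$ is empty or $\nh(v)=\nh(b)$ — and if $\nh(v)=\nh(b)$ for some maximal $b\in B^*$ we could as well regard $v$ as being "in $B^*$", or more carefully, I would argue that maximality together with $\nh(v)\subseteq\nh(b)$ forces $v$ to already behave like a $B^*$-vertex; the cleanest route is to note $B^*\neq\emptyset$ (as $B^*\cup C^*\neq\emptyset$ and if $B^*=\emptyset$ then $C^*=V(H)\setminus A^*$ is independent while $A^*$ is... this needs the no-reflexive-non-edge property to rule out, so I should handle the $B^*=\emptyset$ case separately, using that then every vertex of $A^*$ is non-adjacent to the independent $C^*$, and $A^*$ itself must be a reflexive clique by the strong split property, giving $A^*\subseteq B$-candidates). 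Then for the loop steps: if $v$ is irreflexive with a non-edge to current $B\subseteq B^*$, then $v$ cannot be in $A^*$ (every vertex of $A^*$ is fully adjacent to all of $B^*$) nor in $B^*$ (reflexive), so $v\in C^*$ — the first branch is safe. If $v$ is reflexive with an edge to current $C\subseteq C^*$, then $v\notin C^*$ (irreflexive) and $v\notin A^*$ (no edge between $A^*$ and $C^*$), so $v\in B^*$ — the second branch is safe. This closes the induction.

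The main obstacle I anticipate is the careful handling of maximal vertices in the initialization step and the degenerate cases ($B^*=\emptyset$, vertices of $A^*$ or $C^*$ that happen to share a neighborhood with a maximal vertex of $B^*$). The subtlety is that the decomposition $(A^*,B^*,C^*)$ need not be unique, so "every maximal vertex is in $B^*$" must be argued from the defining inequalities $\nh(c)\subseteq\nh(a)\subseteq\nh(b)$ together with the strong split constraints (no irreflexive edge, no reflexive non-edge), rather than from any canonical choice; I would phrase it as: a maximal vertex $m$ satisfies $\nh(m)\supseteq\nh(b)$ for all $b\in B^*$, forcing $\nh(m)=\nh(b)$, and then $m$ must itself be reflexive (a reflexive $b$ has $b\in\nh(b)=\nh(m)$, and if $m$ were irreflexive this would require an irreflexive edge $mb$ when $b$ is irreflexive, or... actually $b\in\nh(m)$ with $m$ irreflexive just says $m$ adjacent to $b$; the real point is $m\in\nh(m)=\nh(b)\ni b$ only if $m$ reflexive — wait, $\nh(b)\ni b$ since $b$ reflexive, so $\nh(m)=\nh(b)\ni b$, meaning $m$ adjacent to $b$; and $\nh(m)\ni m$? only if $m$ reflexive). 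I would instead simply observe: since $H$ is a strong split graph with no universal or isolated vertices, every maximal vertex is reflexive and non-universal, and any decomposition must place all reflexive non-isolated-neighborhood-maximal vertices into $B^*$ because placing a reflexive vertex $v$ in $A^*$ is impossible when some $b\in B^*$ has $\nh(b)\subsetneq\nh(v)$ would contradict maximality of... — this is exactly the fiddly part, and I would resolve it by arguing directly that $v\in A^*$ implies $v$ is dominated by every $B^*$ vertex, hence $v$ maximal $\Rightarrow\nh(v)=\nh(b)$ for all such $b$, and then replacing $v$'s membership is harmless, or cleanest: just verify that $B\coloneqq\{$maximal vertices$\}$ already satisfies $B\subseteq B^*$ whenever $B^*\neq\emptyset$, and dispatch $B^*=\emptyset$ by showing $A^*$ is then a non-empty reflexive clique each of whose vertices is maximal, so the algorithm's initial $B$ is non-empty and the returned set still omits something only if... in fact if $B^*=\emptyset$ then $C^*$ independent and $A^*$ fully non-adjacent to $C^*$ and $A^*$ a reflexive clique — but then $(A^*,C^*)$ witnesses that every vertex of $A^*$ is maximal, so $B\supseteq A^*$ at init, the first loop branch moves all of $C^*$'s irreflexive-with-non-edge-to-$B$... hmm, this case may actually force $B\cup C=V(H)$, contradicting that a decomposition exists — so I should double-check whether $B^*=\emptyset$ is even consistent with the lemma's hypotheses, and if it forces a contradiction, simply exclude it. Getting this dichotomy clean is the crux; everything else is a routine induction.
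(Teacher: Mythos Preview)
Your approach matches the paper's: for the ``if'' direction verify that the output $(A,B,C)$ is a decomposition when $A\neq\emptyset$, and for ``only if'' maintain the invariant $B\subseteq B^*$, $C\subseteq C^*$ throughout the run, given a decomposition $(A^*,B^*,C^*)$. Your loop-step argument is correct and identical to the paper's.

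The only part you do not cleanly resolve is the initialization (all maximal vertices lie in $B^*$), which you correctly flag as the crux. The paper's argument is short and dispatches your $B^*=\emptyset$ worry in one stroke. First, if $C^*=\emptyset$ then every vertex of $B^*$ is adjacent to all of $A^*\cup B^*=V(H)$ and hence universal, a contradiction; so $C^*\neq\emptyset$. Since $H$ has no isolated vertices, every $c\in C^*$ has a neighbor, necessarily in $B^*$ (as $C^*$ is independent and there are no $A^*$--$C^*$ edges). Thus $B^*\neq\emptyset$, and moreover any such $b\in B^*$ with a neighbor $c\in C^*$ \emph{strictly} dominates every vertex of $A^*\cup C^*$: for $a\in A^*$ one has $\nh(a)\subseteq A^*\cup B^*\subseteq\nh(b)$ with strict inclusion since $c\in\nh(b)\setminus\nh(a)$; for $c'\in C^*$ one has $\nh(c')\subseteq B^*\subseteq\nh(b)$ with strict inclusion since $A^*\neq\emptyset$ lies in $\nh(b)\setminus\nh(c')$. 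Hence every maximal vertex is in $B^*$, and the degenerate case $B^*=\emptyset$ simply never arises. With this base case your invariant argument goes through verbatim.
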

\begin{proof}
	Suppose $H$ has a decomposition $(A',B',C')$. We show that $(B\cup C)\subseteq (B'\cup C')$ --- then $B\cup C \neq V(H)$ follows from the fact that $B'\cup C' \neq V(H)$ as $A'\neq \emptyset$.
	
	Note that if $C'$ is empty then vertices in $B'$ are universal vertices in $H$, a contradiction. Thus, from the fact that $B'\cup C'$ is non-empty it follows that $C'$ is non-empty. Furthermore, there are no isolated vertices in $H$ and so every vertex in $C'$ has at least one neighbor (and neighbors of vertices in $C'$ have to be in $B'$). Thus, there are vertices in $B'$ that \emph{strictly} dominate all vertices in $A$ and $C$ and consequently all maximal vertices of $H$ are contained in $B'$.
	Now assume that at the beginning of some iteration of \cref{algo:certificate1} we have $B\subseteq B'$ and $C\subseteq C'$. Any irreflexive vertex $v\in V$ has to be in one of $A'$ or $C'$. If $v$ has a non-edge to a vertex in $B\subseteq B'$ then it has to be in $C'$. Similarly, any reflexive vertex $v\in V$ has to be in one of $A'$ or $B'$, and if $v$ has an edge to a vertex in $C\subseteq C'$ then it has to be in $B'$. Thus, at the end of this iteration of \cref{algo:certificate1} we still have $B\subseteq B'$ and $C\subseteq C'$. Inductively, we obtain that $B\subseteq B'$ and $C\subseteq C'$, as required.
	
	For the other direction, suppose that \cref{algo:certificate1} returns $B,C$ with $B\cup C\neq V(H)$. From the initialization of $B$ it follows that $B$ is non-empty. Let $A=V(H)\setminus(B\cup C)$. Then $A$ is non-empty as $B\cup C\neq V(H)$. We claim that $(A,B,C)$ is a decomposition. As $H$ is a strong split graph, all maximal vertices are reflexive. So it is clear that \cref{algo:certificate1} places only reflexive vertices in $B$, and only irreflexive vertices in $C$. Furthermore, all reflexive vertices in $H$ form a clique and there are no irreflexive edges, which means that the vertices in $B$ form a reflexive clique and the vertices in $C$ form an irreflexive independent set.
	Suppose that a vertex $v\in A$ has a non-edge to a vertex in $B$. Then $v$ would need to be irreflexive, but then \cref{algo:certificate1} would have placed it in $C$, a contradiction. Similarly, if $v\in A$ has an edge to a vertex in $C$. Then $v$ would need to be reflexive, but then \cref{algo:certificate1} would have placed it in $B$.
\end{proof}

Suppose that $H$ is free of universal and isolated vertices and does not have a decomposition. According to \cref{lem:decompalg1}, \cref{algo:certificate1} will place all vertices of $H$ in one of $B$ or $C$. Essentially, this gives a set of certificates (one for each vertex in $H$) that certify that $H$ is not decomposable. 
It turns out that such a certificate yields a path in $\aux(H)$ that can be used to move from any two incomparable vertices to two maximal incomparable vertices. 

\begin{lem}\label{lem:incompToMaximal}
	Let $H$ be an undecomposable strong split graph without universal or isolated vertices. For every reflexive  incomparable pair $\{a,b\}$ 
	in $H$ there is an incomparable pair of maximal vertices $\{c,d\}$ such that $\{a,b\}$ and $\{c,d\}$ are in the same connected component of $\aux^*(H)$. 
\end{lem}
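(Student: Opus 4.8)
The plan is to use Algorithm~\ref{algo:certificate1} as a source of certificates. By \cref{lem:decompalg1}, since $H$ is undecomposable (and has no universal or isolated vertices), the algorithm eventually places every vertex of $H$ into $B \cup C$. In particular, for the reflexive vertex $a$ of the incomparable pair $\{a,b\}$, there is a finite sequence of iterations of the \textbf{ElsIf}-branch that places $a$ into $B$ (reflexive vertices only ever enter $B$, never $C$). Unrolling this sequence backwards gives an \emph{alternating path} certificate: a sequence $a = \ell_0, r_1, \ell_1, r_2, \ell_2, \dots, r_k, \ell_k$ where each $\ell_i$ is reflexive, each $r_i$ is irreflexive, $\ell_i r_{i+1} \in E(H)$ (this is the edge to $C$ that triggered moving $\ell_i$ into $B$), $\ell_i r_i \notin E(H)$ (this is the non-edge to $B$ that triggered moving $r_i$ into $C$), and $\ell_k$ is a maximal vertex (every element of the initial $B$ is maximal). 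I would take such a certificate of \emph{minimal total length}.

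**Next I would argue** that minimality forces consecutive $\ell_i$'s to be incomparable. Indeed, $r_{i+1}$ is adjacent to $\ell_i$ but not to $\ell_{i+1}$ (since $\ell_{i+1} r_{i+1}\notin E(H)$ is the non-edge that moved $r_{i+1}$ into $C$), while $r_{i+2}$ is adjacent to $\ell_{i+1}$ but not to $\ell_i$; hence neither of $\ell_i,\ell_{i+1}$ dominates the other. (For the endpoints one argues similarly using $r_1$ and the maximality of $\ell_k$; if $\ell_i=\ell_{i+1}$ the certificate could be shortened, contradicting minimality, which also handles degenerate coincidences.) Since $\{\ell_i,\ell_{i+1}\}$ and $\{\ell_{i+1},\ell_{i+2}\}$ share the vertex $\ell_{i+1}$, consecutive pairs along the certificate are adjacent in $\aux(H)$, and all of them are reflexive pairs, hence vertices of $\aux^*(H)$. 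This shows $\{a,b\} = \{\ell_0,b\}$ is connected in $\aux^*(H)$ to $\{\ell_0,\ell_1\}$ (these share $\ell_0$; I need $\{\ell_0,\ell_1\}$ incomparable, which follows as above since $r_1$ distinguishes them) and thence to $\{\ell_{k-1},\ell_k\}$, which contains the maximal vertex $\ell_k$.

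**Finally**, to reach a pair of \emph{two} maximal incomparable vertices, note $\ell_k$ is maximal. Pick any maximal vertex $q$ with $\nh(q)\neq\nh(\ell_k)$ — such a $q$ exists because if every maximal vertex had neighborhood $\nh(\ell_k)$ then, since every vertex is dominated by some maximal vertex, $\ell_k$ would dominate all of $H$, making it universal (contradiction), or there is a unique maximal neighborhood and then $H$ has at least two distinct maximal vertices only if... — more simply, if $H$ has a unique maximal neighborhood, I would instead just take $\{c,d\}=\{\ell_k,b'\}$ for a suitable $b'$, but the cleanest route is: since $i(H)\ge 2$ (there exist incomparable pairs, e.g.\ $\{a,b\}$), and maximal vertices dominate everything, there must be two maximal vertices with distinct neighborhoods, so such $q$ exists. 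Then $\{\ell_k,q\}$ is incomparable and reflexive, it shares $\ell_k$ with $\{\ell_{k-1},\ell_k\}$, so it is adjacent to it in $\aux^*(H)$, completing the path from $\{a,b\}$ to the pair $\{c,d\}:=\{\ell_k,q\}$ of maximal incomparable vertices.

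**The main obstacle** I anticipate is the bookkeeping around minimality and degeneracies: verifying carefully that a minimal-length certificate really does yield incomparability at every consecutive triple (including the two endpoints), that one can always choose the certificate to end at a maximal vertex rather than an arbitrary initial-$B$ vertex, and handling the edge case where the alternating path has length $k=0$ (i.e.\ $a$ itself is already maximal), in which case $\{a,b\}$ directly plays the role via $\{a,q\}$. None of these steps is deep, but stating them cleanly without hand-waving requires some care with the exact semantics of the \textbf{ElsIf}-trigger conditions in Algorithm~\ref{algo:certificate1}.
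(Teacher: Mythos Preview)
Your proposal is correct and follows essentially the same approach as the paper: extract from \cref{algo:certificate1} an alternating certificate $a=\ell_0,r_1,\ell_1,\dots,r_k,\ell_k$ with $\ell_k$ maximal, use minimality to show each $\{\ell_i,\ell_{i+1}\}$ is incomparable (the paper handles the last pair via maximality of $\ell_k$ versus non-maximality of $\ell_{k-1}$, just as you sketch), and then hop to a second maximal vertex $q$ with $\nh(q)\neq\nh(\ell_k)$, whose existence the paper justifies by noting that otherwise $\ell_k$ would be universal. Your hesitation about the existence of $q$ is resolved exactly this way: if every maximal vertex had neighborhood $\nh(\ell_k)$, then every vertex (being dominated by some maximal vertex and having a neighbor since there are no isolated vertices) would lie in $\nh(\ell_k)$, making $\ell_k$ universal.
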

\begin{proof}
	Since $\{a,b\}$ is reflexive and $H$ is undecomposable, by \cref{lem:decompalg1}, $a$ is in the set $B$ returned by \cref{algo:certificate1}.
	Now consider the reason why $a$ was placed into $B$: it is either maximal, or it is adjacent to some irreflexive vertex $r_1$ that was placed in $C$, $r_1$ in turn was placed into $C$ because it is non-adjacent to some reflexive vertex in $B$, etc. An illustration is given in \cref{fig:witness_ss}. Thus, from the definition of \cref{algo:certificate1} it follows that there is a sequence of vertices $\ell_0, r_1, \ell1, \ldots, r_k, \ell_k$ for some $k\ge 0$ such that
	\begin{enumerate}
		\item $\ell_0=a$ and $\ell_k$ is maximal,
		\item for each $i\in \{0, \dots, k\}$, $\ell_i$ is reflexive and $r_i$ is irreflexive, \label{item:cert1altervertices}
		\item for each $i\in\{0, \dots, k-1\}$, $\ell_i r_{i+1}$ is an edge, \label{item:cert1edge}
		\item for each $i\in \{1, \dots, k\}$, there is no edge between $\ell_i$ and $r_{i}$. \label{item:cert1noedge}
	\end{enumerate}
	Without loss of generality, let this be a shortest sequence with such properties. Then we set $c\coloneqq \ell_k$ (so, if $k=0$ then $a=c$), and $d$ is some maximal vertex with $\nh(d)\neq \nh(c)$. Note that such a vertex $d$ has to exist otherwise $c$ is a universal vertex. 
	
	Any two maximal vertices with different neighborhoods are incomparable, which means that $\{c,d\}$ is a vertex in $\aux^*(H)$. It remains show that $\{a,b\}$ and $\{c,d\}$ are in the same connected component of $\aux^*(H)$.
	
	The fact that there is a path from $\{a,b\}$ to $\{c,d\}$ in $\aux^*(H)$ is trivial if $k=0$, i.e., if $a=c$. So suppose that $k\ge 1$. We will show that for each $i\in \{0, \ldots, k-1\}$ the pair $s_i\coloneqq\{\ell_i, \ell_{i+1}\}$ is incomparable, which implies that for $s_0\coloneqq \{a,b\}$ and $s_k\coloneqq\{c,d\}$, the vertices $s_0,s_1,\ldots, s_{k}$ form a path in $\aux(H)$. According to \cref{item:cert1altervertices} all of these pairs $\{\ell_i, \ell_{i+1}\}$ are reflexive and hence $s_0,s_1,\ldots, s_{k}$ is a path in $\aux^*(H)$.
	
	If $i\in \{0, \ldots, k-2\}$ then $r_{i+1}\in \nh(\ell_i)\setminus \nh(\ell_{i+1})$ by \cref{item:cert1edge,item:cert1noedge}. Also, since $\ell_i$ is not adjacent to $r_{i+2}$ by minimality of the chosen sequence, we have $r_{i+2}\in \nh(\ell_{i+1})\setminus \nh(\ell_{i})$. This shows that $\{\ell_i, \ell_{i+1}\}$ is incomparable. We also have $r_{k}\in \nh(\ell_{k-1})\setminus \nh(\ell_{{k}})$ by \cref{item:cert1edge,item:cert1noedge}. Furthermore, as $c=\ell_k$ is maximal whereas $\ell_{k-1}$ is not maximal according to the minimality of the chosen sequence, there is also some vertex $v\in \nh(\ell_{k})\setminus \nh(\ell_{k-1})$, which means that $\{\ell_{k-1}, \ell_{k}\}$ is also incomparable.
\end{proof}

\lemauxHssg*
\begin{proof}
	Let $\{a,b\}$ and $\{c,d\}$ be (not necessarily disjoint) reflexive incomparable pairs in $H$.
	Let $U$ and $I$ be the set of universal vertices and the set of isolated vertices in $H$, respectively.
	
	Let $H'$ be the graph $H$ from which we remove all vertices in $U\cup I$. Since $\{a,b\}$ is incomparable there are  $a'\in \nh(a)\setminus \nh(b)$ and $b'\in \nh(b)\setminus\nh(a)$, where the neighborhoods are with respect to edges in $H$. Similarly, there are $c'\in \nh(c)\setminus \nh(d)$ and $d'\in \nh(d)\setminus\nh(c)$. This shows that $a,b,c,d$ are vertices in $H'$, and it also shows that $\{a,b\}$ and $\{c,d\}$ are incomparable in $H'$ as well. 
	Furthermore, suppose there is a decomposition $(A,B,C)$ of $H'$. Then $(A,B\cup U, C\cup I)$ is a decomposition of $H$. Thus, $H'$ has to be undecomposable as well.
	
	So $H'$ fulfills the requirements of \cref{lem:incompToMaximal} and we obtain not necessarily disjoint incomparable pairs of maximal vertices $\{x,y\}$ and $\{w,z\}$ such that $\{a,b\}$ and $\{x,y\}$ are in the same connected component of $\aux^*(H')$, and $\{c,d\}$ and $\{w,z\}$ are also in the same connected component of $\aux^*(H')$. We will show that $\{x,y\}$ and $\{w,z\}$ are in the same connected component of $\aux^*(H')$.
	
	If $x$ and $w$ have distinct neighborhoods in $H$ (in particular, this means that they are distinct vertices) then, as they are maximal vertices, they are incomparable. Then $\{x,w\}$ is a vertex in $\aux^*(H')$ and $\{x,y\}-\{x,w\}-\{w,z\}$ forms a path in $\aux^*(H')$.
	If $x$ and $w$ have identical neighborhoods in $H$ then the fact that $\{x,y\}$ is incomparable implies that $\{w,y\}$ is incomparable as well. Consequently, $\{x,y\}-\{w,y\}-\{w,z\}$ forms a path in $\aux^*(H')$.
	
	Summarizing, we have shown that $\{a,b\}$ and $\{c,d\}$ are in the same connected component of $\aux^*(H')$, which means that they are also in the same connected component of $\aux^*(H)$.
\end{proof}

\begin{figure}[h]
	\centering
	\includegraphics[scale=1,page=1]{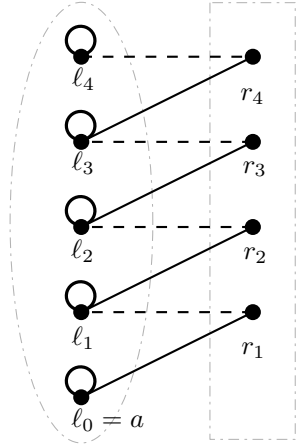}	  	
	\caption{An alternating path certifying that $a$ is moved to $B$ where $H$ is a strong split graph. Vertex $\ell_4$ is maximal.}
	\label{fig:witness_ss}
\end{figure}

\paragraph{{\boldmath Proof of \cref{lem:auxHnossg}: $H$ is not a strong split graph}.}

Note that every graph that is not a strong split graph contains an irreflexive edge or a reflexive non-edge.

\begin{algorithm}[h]
	\caption{Algorithm to determine whether a graph $H$ that contains an irreflexive edge or a reflexive non-edge has a decomposition.}
	\label{algo:certificate2}
	\begin{algorithmic}
		\Require Graph $H=(V,E)$.
		\State Initialize $A$ as the set of all vertices that are part of an irreflexive edge or reflexive non-edge in $H$.
		\Loop
		\If{there is an irreflexive vertex $v\in V\setminus A$ that has an edge to $A$}
		\State add $v$ to $A$.
		\ElsIf{there is a reflexive vertex $v\in V\setminus A$ that has a non-edge to $A$}
		\State add $v$ to $A$.
		\Else
		\Break
		\EndIf
		\EndLoop
		\Ensure $A$
	\end{algorithmic}
\end{algorithm}

\begin{lem}\label{lem:decompalg2}
	Let $H$ be a graph that is not a strong split graph. Then $H$ has a decomposition if and only if the set $A$ returned by \cref{algo:certificate2} satisfies $A\neq V(H)$.
\end{lem}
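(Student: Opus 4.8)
The plan is to follow exactly the template of the proof of \cref{lem:decompalg1}, adapting it to \cref{algo:certificate2}. Throughout, let $(A',B',C')$ denote an arbitrary decomposition of $H$ in the sense of \cref{def:decomp}: so $B'$ is a reflexive clique with a full set of edges to $A'$, $C'$ is an irreflexive independent set with no edges to $A'$, and $A'\neq\emptyset$, $B'\cup C'\neq\emptyset$ (in particular $A'\neq V(H)$). The core claim I would establish is the invariant that the set $A$ maintained by \cref{algo:certificate2} is always contained in $A'$.

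For the forward direction, suppose $H$ has a decomposition $(A',B',C')$; I would prove $A\subseteq A'$ by induction along the run of the algorithm, and then conclude $A\subseteq A'\subsetneq V(H)$. In the initialization, take an irreflexive edge $uv$. Neither endpoint is reflexive, so neither lies in $B'$; if, say, $u$ lay in $C'$, then $v$ would be outside $A'$ (as $C'$ sends no edges to $A'$) and outside $B'$ (irreflexive), hence in $C'$, contradicting that $C'$ is independent; so $u,v\in A'$. Dually, for a reflexive non-edge $uv$: neither endpoint lies in $C'$; if $u$ lay in $B'$, then $v$ would be outside $A'$ (as $B'$ is fully joined to $A'$) and outside $C'$ (reflexive), hence in $B'$, contradicting that $B'$ is a clique; so $u,v\in A'$. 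For the loop step, assume $A\subseteq A'$ before an iteration. If an irreflexive $v\notin A$ has an edge to some $a\in A\subseteq A'$, then $v\notin B'$ (reflexive) and $v\notin C'$ (no edges to $A'$), so $v\in A'$. If a reflexive $v\notin A$ has a non-edge to some $a\in A\subseteq A'$, then $v\notin C'$ (irreflexive) and $v\notin B'$ (fully joined to $A'$), so $v\in A'$. Hence the invariant is preserved, and at termination $A\subseteq A'\neq V(H)$.

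For the converse, suppose \cref{algo:certificate2} terminates with $A\neq V(H)$. I would put $B\coloneqq\{v\in V(H)\setminus A : v\text{ reflexive}\}$ and $C\coloneqq\{v\in V(H)\setminus A : v\text{ irreflexive}\}$, so that $(A,B,C)$ is a partition of $V(H)$, and verify the three conditions of \cref{def:decomp}. First, $A\neq\emptyset$: since $H$ is not a strong split graph, it contains an irreflexive edge or a reflexive non-edge, so the initialized $A$ is non-empty; and $B\cup C=V(H)\setminus A\neq\emptyset$ by hypothesis. Second, $B$ is a reflexive clique — a non-edge inside $B$ would be a reflexive non-edge, so both endpoints would already have been put in $A$ at initialization — and, dually, $C$ is an irreflexive independent set. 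Third, there is a full set of edges between $A$ and $B$: a non-edge between some $a\in A$ and a reflexive $b\notin A$ would make $b$ eligible for the second branch of the loop, contradicting termination; likewise, an edge between some $a\in A$ and an irreflexive $c\notin A$ would make $c$ eligible for the first branch, again contradicting termination. Hence $(A,B,C)$ is a decomposition of $H$.

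I do not anticipate a genuine difficulty here; the only step that needs a moment of care is the base case of the forward direction, where ruling out that an endpoint of an irreflexive edge sits in $C'$ (resp.\ that an endpoint of a reflexive non-edge sits in $B'$) uses the \emph{cross} conditions of a decomposition (no edges between $A'$ and $C'$, full join between $A'$ and $B'$), not merely the clique/independence conditions on $B'$ and $C'$. Everything else is a mechanical unwinding of \cref{def:decomp}, strictly parallel to the proof of \cref{lem:decompalg1}.
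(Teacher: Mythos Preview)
Your proof is correct and follows essentially the same approach as the paper's own proof; in fact, you spell out the base case of the forward direction (why both endpoints of an irreflexive edge or reflexive non-edge must lie in $A'$) in more detail than the paper, which simply declares the inductive containment $A\subseteq A'$ to be straightforward. The backward direction matches the paper's argument point by point.
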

\begin{proof}
	Let $(A',B',C')$ be a decomposition of $H$. It is straight-forward to see by an inductive argument over the iterations of \cref{algo:certificate2} that $A\subseteq A'$. Then $A\neq V(H)$ follows from the fact that $B'\cup C'$ is non-empty and consequently $A'\neq V(H)$.
	
	In the other direction, let $A$ be the set returned by \cref{algo:certificate2}. We define $B$ and $C$ as the reflexive and irreflexive vertices in $V(H)\setminus A$, respectively. We show that $(A,B,C)$ is a decomposition.
	Note that $A\neq \emptyset$ since $H$ is not a strong split graph and therefore contains at least one irreflexive edge or reflexive non-edge. As $A\neq V(H)$ we have $B\cup C \neq \emptyset$. As every reflexive non-edge was initially placed in $A$ it follows that $B$ forms a reflexive clique. Analogously, $C$ is an irreflexive independent set. Suppose there is a non-edge between a (reflexive) vertex $b$ in $B$ and a vertex in $A$, then \cref{algo:certificate2} would have placed $b$ into $A$, a contradiction. Similarly, a vertex in $C$ with an edge to a vetrex in $A$ would have been placed into $A$.	
\end{proof}

\begin{figure}[t]
	\begin{subfigure}[b]{0.45\textwidth}
		\centering
		\includegraphics[scale=0.8,page=1]{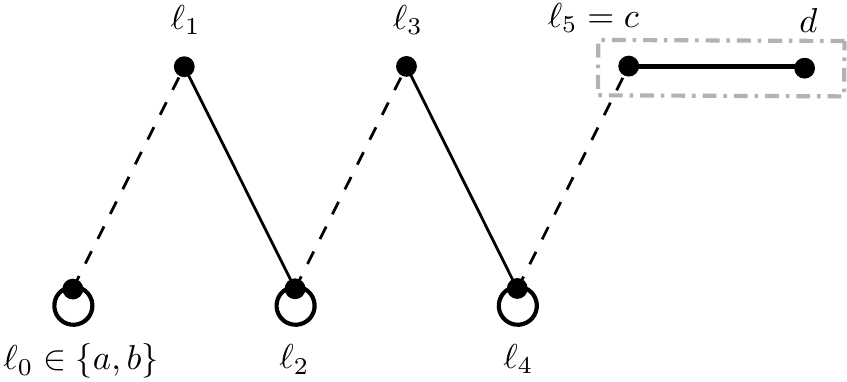}
		\caption{The certificate ends with an irreflexive edge.}
		\label{fig:witness_noss_irref_edge}
	\end{subfigure}
	\hfill
	\begin{subfigure}[b]{0.45\textwidth}
		\centering
		\includegraphics[scale=0.8,page=2]{figures/witness_noss}
		\caption{The certificate ends with a reflexive non-edge.}
		\label{fig:witness_noss_ref_nonedge}
	\end{subfigure}
	\caption{Two possible certificates for $a$ being in $A$ where $H$ is not a strong split graph.}
\end{figure}

\begin{lem}\label{lem:incompToTarget}
	Let $H$ be an undecomposable graph that is not a strong split graph. For every incomparable pair $\{a,b\}$ 
	in $H$ that is a vertex in $\aux^g(H)$ there is a pair $\{c,d\}$ in the same connected component of $\aux^g(H)$ such that $c,d$ form either an irreflexive edge or a reflexive non-edge in $H$. 
\end{lem}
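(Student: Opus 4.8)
The plan is to follow the blueprint of the proof of \cref{lem:incompToMaximal}, now driven by \cref{algo:certificate2} and \cref{lem:decompalg2} instead of their strong-split analogues. Since $H$ is undecomposable, \cref{lem:decompalg2} tells us that \cref{algo:certificate2} places \emph{every} vertex into $A$, so each vertex admits a \emph{certificate}: a shortest sequence $v_0,v_1,\dots,v_m$ together with a vertex $v_{m+1}$ such that $v_0$ is the vertex in question, each $v_i$ with $i<m$ was added to $A$ because of its relation to $v_{i+1}$ (an edge if $v_i$ is irreflexive, a non-edge if $v_i$ is reflexive), and $\{v_m,v_{m+1}\}$ is the irreflexive edge or reflexive non-edge that caused $v_m$ to be placed in $A$ at initialization. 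A handful of structural observations underpin everything else. The vertices $v_0,\dots,v_{m-1}$ are not part of any irreflexive edge or reflexive non-edge, so an irreflexive such $v_i$ has only reflexive neighbours, a reflexive such $v_i$ is adjacent to every reflexive vertex, and the parities of $v_0,\dots,v_m$ alternate. Minimality of the certificate gives, for $i<m$ and $j\ge i+2$, that $v_iv_j\notin E(H)$ when $v_i$ is irreflexive and $v_iv_j\in E(H)$ when $v_i$ is reflexive. Finally: a mixed pair of two non-initialization vertices is comparable (the irreflexive one is dominated by the reflexive one), an irreflexive pair of two non-initialization vertices is bad, and the endpoints of an irreflexive edge (resp.\ reflexive non-edge) are always incomparable, and such a pair is always good.

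I would first dispose of the easy cases. If one of $a,b$, say $a$, is part of an irreflexive edge $\{a,a^\ast\}$ or reflexive non-edge $\{a,a^\ast\}$, then $\{a,a^\ast\}$ is the desired target pair: it is incomparable and good, and it shares the vertex $a$ with the good pair $\{a,b\}$, hence is adjacent to it in $\aux^g(H)$, so they lie in the same component. Using the structural facts above, the only remaining case is that $\{a,b\}$ is a \emph{reflexive} pair in which neither endpoint lies in a reflexive non-edge: a good incomparable pair that is mixed or irreflexive must contain a vertex placed at initialization (otherwise it is comparable, resp.\ bad). So from now on $v_0=a$ is reflexive, is not placed at initialization, and therefore admits a certificate whose vertices of even index, $v_0,v_2,v_4,\dots$, are exactly its reflexive vertices.

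The heart of the argument is then to build a path in $\aux^g(H)$ from $\{a,b\}$ to $\{v_m,v_{m+1}\}$. Since mixed pairs of non-initialization vertices are comparable and irreflexive pairs of non-initialization vertices are bad, the path cannot step along consecutive certificate vertices; instead it walks along the reflexive vertices, using the pairs $\{v_{2t},v_{2t+2}\}$, each of which I would check to be incomparable (hence good, being reflexive) with witnesses $v_{2t+1}\in\nh(v_{2t+2})\setminus\nh(v_{2t})$ and $v_{2t+3}\in\nh(v_{2t})\setminus\nh(v_{2t+2})$ — the second membership being exactly where the minimality constraint $v_{2t}v_{2t+3}\in E(H)$ is used. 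This chain starts at $\{v_0,v_2\}$, which shares $v_0=a$ with $\{a,b\}$, and consecutive pairs along it share a vertex, so all of these are edges of $\aux^g(H)$. The other end splits into the two cases of \cref{fig:witness_noss_irref_edge,fig:witness_noss_ref_nonedge}: if $\{v_m,v_{m+1}\}$ is a reflexive non-edge then $m$ is even and the chain already ends at $\{v_{m-2},v_m\}$, which shares $v_m$ with the target; if $\{v_m,v_{m+1}\}$ is an irreflexive edge then $m$ is odd, the chain ends at $\{v_{m-3},v_{m-1}\}$, and one bridges to the target through the mixed pair $\{v_{m-1},v_{m+1}\}$, which is good (mixed) and incomparable (witnesses $v_{m+1}$ and $v_m$), and which shares $v_{m-1}$ with $\{v_{m-3},v_{m-1}\}$ and $v_{m+1}$ with $\{v_m,v_{m+1}\}$. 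The small values $m\le 3$ I would handle directly (they are the degenerate versions of the same pictures).

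I expect the main obstacle to be exactly this last piece of bookkeeping: identifying which pairs genuinely lie in $\aux^g(H)$ requires the preliminary structural facts (mixed non-initialization pairs comparable, irreflexive non-initialization pairs bad, target pairs incomparable and good), and verifying incomparability of every pair along the chain forces one to invoke the full strength of the certificate's minimality in each case and to be careful about when a certificate vertex is a ``loop-added'' vertex versus an initialization vertex. Once these facts are pinned down, assembling the path and reducing to the easy cases is routine.
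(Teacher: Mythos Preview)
Your proposal is correct and follows essentially the same approach as the paper: use the certificate produced by \cref{algo:certificate2} (via \cref{lem:decompalg2}) and walk through $\aux^g(H)$ along the reflexive certificate vertices. The paper presents this as an induction reducing the certificate length by~$2$ at each step (with base cases $k=0,1,2$), whereas you unroll it into an explicit chain $\{v_{2t},v_{2t+2}\}$ with separate endgame bridging; both are the same argument.

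One small caution on your bridging step in the irreflexive-edge case: your stated witnesses ``$v_{m+1}$ and $v_m$'' for the incomparability of $\{v_{m-1},v_{m+1}\}$ implicitly assume $v_{m-1}v_{m+1}\in E(H)$, which is not guaranteed by minimality (a certificate $v_0,\dots,v_{m-1},v_{m+1}$ has the same length). The pair \emph{is} incomparable regardless --- if $v_{m-1}v_{m+1}\notin E(H)$ use $v_{m-1}$ itself (reflexive) as the witness on the $v_{m-1}$ side, and if $v_{m-1}v_{m+1}\in E(H)$ use $v_{m+1}$ (irreflexive) or $v_{m-2}$ --- so this is only a wording issue, not a gap. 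The paper handles the analogous dichotomy in its $k=1$ base case by branching on whether $a$ and $d$ are adjacent.
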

\begin{proof}

The proof structure is similar to that of \cref{lem:incompToMaximal} but a number of modifications are necessary to handle the different certificate structure given by \cref{algo:certificate2}.
First suppose that both $a$ and $b$ are irreflexive. If $ab$ is an edge then we are done by setting $c=a$ and $d=b$. If $a$ and $b$ are not adjacent then, using the fact that $\{a,b\}$ is a vertex in $\aux^g(H)$ and therefore is a \emph{good} vertex of $\aux(H)$, it follows that one of $\nh(a)\setminus \nh(b)$ or $\nh(b)\setminus \nh(a)$ contain an irreflexive vertex. W.l.o.g.~say $a'\in \nh(a)\setminus \nh(b)$ is irreflexive. 
Then $\{a,a'\}$ is an incomparable set, where $aa'$ is an irreflexive edge. Hence $\{a,b\}$ and $\{a,a'\}$ are adjacent vertices in $\aux^g(H)$ and we are done by setting $c=a$ and $d=a'$.

Second, suppose that one of $a$ or $b$, w.l.o.g.~$a$, is reflexive.
Since $H$ is undecomposable, by \cref{lem:decompalg2}, $a$ is in the set $A$ returned by \cref{algo:certificate2}.
Now consider the reason why $a$ (or analogously $b$) was placed into $A$: Then $a$ is either part of a reflexive non-edge, or it is non-adjacent to some irreflexive vertex $\ell_1$ that was placed into $A$. We can then go on to consider why $\ell_1$ is in $A$. Either it is part of an irreflexive edge or it is adjacent to some reflexive vertex that was placed into $A$, etc. An illustration for the case in which the certificate for $a$ being in $A$ leads to an irreflexive edge or reflexive non-edge is given in \cref{fig:witness_noss_irref_edge,fig:witness_noss_ref_nonedge}, respectively.

Thus, from the definition of \cref{algo:certificate2} it follows that there is a sequence of vertices $\boldsymbol{{\ell}}=\ell_0, \ell_1, \ldots, \ell_k$ for some $k\ge 0$ such that
\begin{enumerate}
	\item $\ell_0\in \{a,b\}$ is reflexive,
	\item if $\ell_k$ is irreflexive then $\ell_k$ is part of an irreflexive edge, otherwise $\ell_k$ is part of a reflexive non-edge,\label{item:cert2end}
	\item $\ell_0\ell_1, \ell_1 \ell_2, \ldots, \ell_{k-1} \ell_k$ forms a sequence alternating between non-edges and edges, starting with the non-edge $\ell_0\ell_1$.\label{item:cert2alteredges}
\end{enumerate}

Without loss of generality, let $\boldsymbol{\ell}$ be a shortest sequence with such properties, and let $\ell_0=a$ (by renaming if necessary). Then we say that $\boldsymbol{\ell}$ is a certificate for $\{a,b\}$. The minimality of the sequence ensures another property, namely that
\begin{enumerate}
	\setcounter{enumi}{3}
	\item $\boldsymbol{\ell}$ is alternating between reflexive and irreflexive vertices. \label{item:cert2altervertices}
\end{enumerate}

Then we set $c\coloneqq \ell_k$ (so, if $k=0$ then $a=c$). By \cref{item:cert2end}, if $\ell_k=c$ is irreflexive then it is part of an irreflexive edge $\{c,d\}$, otherwise it is part of a reflexive non-edge $\{c,d\}$. In either case $\{c,d\}$ is an incomparable pair and a vertex in $\aux^g(H)$.
We will use induction on the length $k$ of the sequence $\boldsymbol{\ell}$ to show 
that $\{a,b\}$ and $\{c,d\}$ are in the same connected component of $\aux^g(H)$, as required.

\begin{description}
	\item[$k=0$] then $a=c$, which means that $\{a,b\}$ and $\{c,d\}$ are adjacent in $\aux^g(H)$.
	\item[$k=1$] Then $c=\ell_1$ is irreflexive and consequently $\{c,d\}$ forms an irreflexive edge. Since $\ell_0\ell_1$ is a non-edge $a$ and $c$ are not adjacent. Now we consider two cases. If $a$ and $d$ are adjacent then $\{a,d\}$ is an incomparable set (with private neighbors $a$ and $c$, respectively) and consequently, $\{a,b\} - \{a,d\} - \{c,d\}$ forms a path in $\aux^g(H)$. Otherwise, if $a$ and $d$ are not adjacent then $\{a,c\}$ is an incomparable set and $\{a,b\} - \{a,c\} - \{c,d\}$ is a path in $\aux^g(H)$.
	\item[$k=2$] Then $c=\ell_2$ is reflexive and consequently $\{c,d\}$ forms a reflexive non-edge. Then $a$ is adjacent to $d$ by minimality of the sequence $\boldsymbol{\ell}$. Consequently, $a$ and $c$ have private neighbors $d$ and $\ell_1$, respectively, and therefore $\{a,c\}$ is incomparable and again $\{a,b\} - \{a,c\} - \{c,d\}$ is a path in $\aux^g(H)$.
	\item[$k\ge 3$] Note that $a=\ell_0$ is not adjacent to $\ell_1$ (by \cref{item:cert2alteredges}) but consequently is adjacent to $\ell_3$ by the minimality of $\boldsymbol{\ell}$. From \cref{item:cert2alteredges} it also follows that $\ell_2$ is adjacent to $\ell_1$ but not to $\ell_3$. Thus, $\{a,\ell_2\}$ is incomparable and consequently $\{a,b\}$ and $\{a,\ell_2\}$ are adjacent in $\aux^g(H)$. Now note that a certificate for $\{a,\ell_2\}$ has length $k-2$, and therefore, by the induction hypothesis, we can assume that $\{a,\ell_2\}$ is in the same connected component of $\aux^g(H)$ as $\{c,d\}$, which implies that $\{a,b\}$ is as well.
\end{description}

\end{proof}

\lemauxHnossg*
\begin{proof}
	Let $\{a,b\}$ and $\{c,d\}$ be good vertices of $\aux(H)$.
	
	To shorten notation we say that a pair $\{x,y\}$ is a \emph{target} if $xy$ is either an irreflexive edge or reflexive non-edge.
	The graph $H$ fulfills the requirements of \cref{lem:incompToTarget} and we obtain not necessarily disjoint targets $\{x,y\}$ and $\{w,z\}$ such that $\{a,b\}$ and $\{x,y\}$ are in the same connected component of $\aux^g(H)$, and $\{c,d\}$ and $\{w,z\}$ are also in the same connected component of $\aux^g(H)$. We will show that $\{x,y\}$ and $\{w,z\}$ are in the same connected component of $\aux^g(H)$ as well.
	
	First note that if the intersection of $\{x,y\}$ and $\{w,z\}$ is non-empty then these two sets are adjacent in $\aux^g(H)$. Otherwise, if $x,y,w,z$ are distinct vertices we show that one pair in $\{x,y\}\times \{w,z\}$ is a vertex of $\aux^g(H)$ in order to obtain a path from $\{x,y\}$ to $\{w,z\}$ in $\aux^g(H)$. We consider three cases depending on the form of the targets.
	\begin{itemize}
		\item If $\{x,y\}$ and $\{w,z\}$ form reflexive non-edges then the vertices of each non-edge between $\{x,y\}$ and $\{w,z\}$ form an incomparable set, and thereby a vertex of $\aux^g(H)$ (as all vertices are reflexive). If all edges between the two sets are present then for instance $\{x,w\}$ is an incomparable set (with private neighbors $z$ and $y$, respectively).
		\item If $\{x,y\}$ and $\{w,z\}$ form irreflexive edges then, symmetrically to the previous case, the vertices of each edge between $\{x,y\}$ and $\{w,z\}$ form an incomparable set, and thereby a vertex of $\aux^g(H)$ (as these irreflexive vertices have irreflexive private neighbors). If no edges between the two sets are present then $\{x,w\}$ is an incomparable set (with private neighbors $y$ and $z$, respectively).
		\item Otherwise suppose that $\{x,y\}$ is a reflexive non-edge and $\{w,z\}$ is an irreflexive edge. (The other case is symmetric.) We make another case distinction.
		\begin{itemize}
			\item Suppose $x$ ($y$) has no edges  to $\{w,z\}$ then $\{x,w\}$ ($\{y,w\}$) is incomparable.
			\item Suppose $x$ ($y$) has exactly one edge to $\{w,z\}$, say $xw$ ($yw$), then $\{x,w\}$  ($\{y,w\}$) is incomparable.
			\item If both $x$ and $y$ have a complete set of edges to $\{w,z\}$ then $\{x,w\}$ is incomparable.
		\end{itemize}
		In each of these three cases the obtained incomparable set is a vertex of $\aux^g(H)$ as $x$ and $y$ are reflexive.
	\end{itemize}
\end{proof}

\subsubsection{Proof of \texorpdfstring{\cref{lem:indicator}}{Lemma \ref{lem:indicator}}: Constructing indicators}\label{sec:indicator}

Our goal is to prove \cref{lem:indicator}. This result relates an incomparable set $S$ of size $|S|\ge 2$ to a $2$-vertex set $\{a,b\}$. We start by showing that we can apply the results from \cref{sec:2vertexmoves} also in this setting.

\begin{lem}\label{lem:movefromS}
	Let $S$ be an incomparable set with $a,b\in S$. If $(a,b)\to (c,d)$ then there is a move from $S$ to $\{c,d\}$ that forces $a\to c$ and $b\to d$. 
\end{lem}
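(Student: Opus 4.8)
The plan is to reuse the binary gadget that witnesses $(a,b)\to(c,d)$ and merely enlarge the list of its first portal. So I would start from a move $\calJ=(J,L,(x,y))$ from $\{a,b\}$ to $\{c,d\}$ that forces $a\to c$ and $b\to d$ — this exists by hypothesis, and by definition it has $L(x)=\{a,b\}$, $L(y)=\{c,d\}$, $\calJ(a)=\{c\}$, and $\calJ(b)=\{d\}$. Then I define $\calJ^{*}=(J,L^{*},(x,y))$ where $L^{*}(x)=S$ and $L^{*}$ coincides with $L$ on every other vertex of $J$. Since $L^{*}(x)=S$ and $L^{*}(y)=\{c,d\}$, the gadget $\calJ^{*}$ is, by definition, a move from $S$ to $\{c,d\}$, so the only thing left to verify is that it still forces $a\to c$ and $b\to d$.

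The single substantive step is to observe that the edge-deletion counts are unchanged on the relevant boundary conditions: for $v\in\{a,b\}$ and $w\in\{c,d\}$, I claim $\edcount(\calJ^{*}\to H,(v,w))=\edcount(\calJ\to H,(v,w))$. Indeed, a witness for either quantity is a set $F\subseteq E(J)$ together with a list homomorphism $\phi$ from $(V(J),E(J)\setminus F)$ to $H$ respecting the list assignment with $\phi(x)=v$ and $\phi(y)=w$; the two list assignments $L$ and $L^{*}$ differ only at $x$, and the constraint there ($\phi(x)$ must lie in the list) is satisfied in both readings precisely because $v\in\{a,b\}\subseteq S$. Hence the families of witnesses coincide and so do their minimum sizes. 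Taking the minimum over $w\in\{c,d\}$ termwise gives $\edcount(\calJ^{*}\to H,(v,\ast))=\edcount(\calJ\to H,(v,\ast))$, and therefore $\calJ^{*}(v)=\calJ(v)$ for $v\in\{a,b\}$. In particular $\calJ^{*}(a)=\{c\}$ and $\calJ^{*}(b)=\{d\}$, which is exactly the assertion that $\calJ^{*}$ forces $a\to c$ and $b\to d$.

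I do not foresee a genuine obstacle here. Note that the incomparability of $S$ is not actually used in this argument — it is carried along only because $S$ is incomparable wherever the lemma is applied, e.g.\ in \cref{lem:indicator}. The one point requiring a moment's care is that enlarging a portal list can only ever decrease edge-deletion counts, never increase them — but it cannot even decrease them here, because on the boundary conditions we care about, $x$ is pinned to a vertex that already lay in the old list. If desired, I would also remark in passing that $\calJ^{*}$ still has all lists non-empty, so $\calJ^{*}(e)$ is well-defined (and non-empty) for every $e\in S$ — one can always delete all edges of $J$ — although this is not needed for the statement.
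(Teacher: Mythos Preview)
Your proof is correct and follows exactly the same approach as the paper: take the witnessing gadget for $(a,b)\to(c,d)$ and simply replace the list at the first portal by $S$, observing that $\calJ^{*}(a)=\calJ(a)$ and $\calJ^{*}(b)=\calJ(b)$ since the edge-deletion counts on those boundary conditions are unchanged. The paper states this in one line (``it is straight-forward to see\ldots''), whereas you have spelled out the verification in more detail; your remark that the incomparability of $S$ is not actually used is also accurate.
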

\begin{proof}
	Let $(J, L, (x,y))$ be an $(a,b)\to (c,d)$-move. Let $L'$ be the list assignment that is identical to $L$ with the exception that $L'(x)=S$,  whereas $L(x)=\{a,b\}$. It is straight-forward to see that $(J, L', (x,y))$ is a move from $S$ to $\{c,d\}$ that still forces $a\to c$ and $b\to d$ since $\calJ(a)=\calJ'(a)$ and $\calJ(b)=\calJ'(b)$.
\end{proof}

Let us recall the definition of an \emph{indicator} of a set $S$ of vertices from a graph $H$ over a $2$-vertex set $\{a,b\}$ from $H$.
It is a relation $I\subseteq S\times \{a,b\}^{|S|(|S|-1)}$ such that
\begin{myitemize}
	\item $I(u)$ is non-empty for each $u\in S$, and
	\item $I(u)$ and $I(v)$ are disjoint for distinct $u,v\in S$.
\end{myitemize}

We now have all the tools at hand to prove the main result of this section, which we restate for convenience.

\lemindicator*
\begin{proof}
	Let $x,y$ be distinct vertices from $S$ and let $a,b$ be distinct vertices from an obstruction $O$ in $H$. Note that $\{x,y\}$ is incomparable since $S$ is an incomparable set. The fact that $\{a,b\}$ is incomparable can be easily checked with the definition of obstruction (\cref{def:obstruction}).
	From \cref{lem:movingMain} together with \cref{lem:movefromS} it follows that there is a move $\calJ_{x,y}$ from $S$ to $\{a,b\}$ that forces $x$ to one of $a$ or $b$, and forces $y$ to the other. This means that $\calJ_{x,y}(x)$ and $\calJ_{x,y}(y)$ are disjoint.

	According to \cref{lem:moverealizable}, there is a gadget $\calJ'_{x,y}$ that realizes $R_{x,y}\coloneqq \{(u,v) \mid u\in S, v\in \calJ_{x,y}(u)\}$.
	Crucially, we have for each $u\in S$ that $R_{x,y}(u)=\calJ_{x,y}(u)$ is non-empty, and that $R_{x,y}(x)=\calJ_{x,y}(x)$ and $R_{x,y}(y)=\calJ_{x,y}(y)$ are disjoint. We will use these properties later.
	
	For each distinct $x,y\in S$ let $p^1_{x,y}$ be the first and let $p^2_{x,y}$ be the second distinguished vertex of the binary gadget $\calJ'_{x,y}$.
	We define a gadget $\calI$ by identifying all vertices in $\{p^1_{x,y} \mid x,y\in S\}$ to a single vertex $p$. The distinguished vertices of $\calI$ are then $p$ together with the set $\{p^2_{x,y} \mid x,y\in S\}$ for a total of $1+|S|(|S|-1)$ distinguished vertices.
	
	We claim that $\calI$ realizes an indicator of $S$ over $\{a,b\}$, as required.
	Let $k=|S|(|S|-1)$ and let $f$ be some fixed bijection from $[k]$ to the set of distinct pairs $x,y\in S$.
	Then $\boldp=(p, p^2_{f(1)}, \ldots, p^2_{f(k)})$ are the distinguished vertices of $\calI$.
	Let $I$ be the set of tuples of the form $(u, \boldd(u))$ where $u\in S$ and $\boldd(u)=(d_1(u), \ldots, d_{k}(u))$ such that, for each $i\in [k]$, $d_i(u)\in R_{x,y}(u)$ where $f(i)=x,y$.
	As for each such pair $x,y$ the gadget $\calJ'_{x,y}$ realizes $R_{x,y}$, we have $\edcount(\calJ'_{x,y}\to H, (u,d_i(u)))=\edcount(\calJ'_{x,y})$ for each $u\in S$.
	Thus, the required edge deletions $\edcount(\calI \to H, (u,\boldd(u)))$ do not depend on $u$. Moreover, for $\boldz\in S\times \{a,b\}^k$, the number of required edge deletions $\edcount(\calI \to H,\boldz)$ is equal to the minimum number of required edge deletions $\edcount(\calI)$ if and only if $\boldz=(u, \boldd(u))$ for some $u\in S$.
	This shows that $\calI$ realizes $I$.
	
	It remains to show that $I$ is an indicator. We use the two crucial properties that we have verified previously: For each $u\in S$ and each distinct $x,y\in S$, $R_{x,y}(u)=\calJ_{x,y}(u)$ is non-empty and therefore there exists a tuple of the form $(u, \boldd(u))$, and hence $I(u)$ is non-empty. For distinct $u,v\in S$, each tuple in $I(u)$ is of the form $\boldd(u)$ and each tuple in $I(v)$ is of the form $\boldd(v)$, which means that for the index $i$ with $f(i)=u,v$ we have $d_i(u)\in R_{u,v}(u)$ and $d_i(v)\in R_{u,v}(v)$. Therefore, $\boldd(u)$ and $\boldd(v)$ are distinct as $R_{u,v}(u)$ and $R_{u,v}(v)$ are disjoint.	
\end{proof}

\newpage

\newpage

\section{An outlook regarding homomorphism deletion without lists}
\label{sec:without-list}
Recall that in the paper we studied vertex/edge deletion variants of the \emph{list} homomorphism problem.
Similarly we can define variants of the non-list problem \Hom($H$).

By \textsc{HomVD}($H$) (resp. \textsc{HomED}($H$)) we denote the problem which takes as an input a graph $G$ 
and asks to find a smallest set of vertices (resp. edges) of $G$ whose deletion modifies $G$ into a graph that admits a homomorphism to $H$.
The following problems are a natural analogue of our results.

\begin{prob}
For each graph $H$, find the best possible constant $c_{\text{vd}}(H)$,
such that for every constant $\sigma,\delta$, the \textsc{HomVD}($H$) problem can be solved in time $c_{\text{vd}}(H)^p \cdot n^{\bigO(1)}$
on $n$-vertex graphs given along with a \core{\sigma}{\delta} of size $p$.
\end{prob}
\begin{prob}
For each graph $H$, find the best possible constant $c_{\text{ed}}(H)$,
such that for every constant $\sigma,\delta$, the \textsc{HomED}($H$) problem can be solved in time $c_{\text{ed}}(H)^p \cdot n^{\bigO(1)}$
on $n$-vertex graphs given along with a \core{\sigma}{\delta} of size $p$.
\end{prob}

Let us first discuss some easy cases.
Note that if $H$ contains a reflexive vertex $z$, then both \textsc{HomVD}($H$) and \textsc{HomED}($H$) are trivial,
as it is enough to map every vertex of $G$ to $z$ and there is no need to delete anything.
Thus then $c_{\text{vd}}(H) = c_{\text{ed}}(H)=1$.

If $H$ is an edgeless graph, then \textsc{HomVD}($H$) is equivalent to finding a smallest set of vertices whose removal destroys all edges, i.e., to solving \textsc{Vertex Cover} (i.e., \coloringVD{1}).
On the other hand, \textsc{HomED}($H$) in this case is trivial, as we need to remove all edges from $G$.
Thus in this case $c_{\text{vd}}(H) = 2$ (by \cref{thm:vd-coloring-intro}) and $c_{\text{ed}}(H)=1$.

More generally, if every component of $H$ is an irreflexive clique, then \textsc{HomVD}($H$) (resp. \textsc{HomED}($H$)) are equivalent to \coloringVD{q} (resp. \coloringED{q}), where $q$ is the number of vertices in the largest component of $H$.
Thus, by \cref{thm:vd-coloring-intro,} and \cref{thm:ed-coloring-intro} we have $c_{\text{vd}}(H) = q+1$ and $c_{\text{ed}}(H)=q$.

So now consider the case that $H$ is irreflexive, bipartite, and contains at least one edge.
Then our task boils down to making $G$ bipartite by deleting as few vertices/edges as possible.
Thus in this case  \textsc{HomVD}($H$) is equivalent to \textsc{Odd Cycle Transversal} and  \textsc{HomED}($H$) is equivalent to \textsc{Max Cut}.
By \cref{thm:vd-coloring-intro,thm:ed-coloring-intro} we have $c_{\text{vd}}(H) = 3$ and $c_{\text{ed}}(H)=2$.

Finally, consider the case that $H$ is irreflexive, non-bipartite, and contains some component other than a complete graph. Recall that in this case \Hom($H$) is \NP-hard~\cite{DBLP:journals/jct/HellN90}, and thus so are \textsc{HomVD}($H$) and \textsc{HomED}($H$). Similarly to the case of \Hom($H$)~\cite{DBLP:journals/siamcomp/OkrasaR21}, we can assume that $H$ is connected and is a \emph{core}, i.e., it does not admit a homomorphism to its proper subgraph.
However, in the case of \Hom($H$) there is one more algorithmic idea that can be exploited.
The \emph{direct product} of graphs $H_1$ and $H_2$ is a graph $H_1 \times H_2$ with vertex set $V(H_1) \times V(H_2)$ in which
vertices $(x,y)$ and $(x',y')$ are adjacent if and only if $xx' \in E(H_1)$ and $yy' \in E(H_2)$.
It is straightforward to observe that a graph admits a homomorphism to $H_1 \times H_2$ if and only if 
it admits a homomorphism to $H_1$ and a homomorphism to $H_2$, see also~\cite{DBLP:journals/siamcomp/OkrasaR21}.

However, it is not clear how to combine this observation with deleting vertices or edges.
In particular, solving \textsc{HomVD}($H$) (resp. \textsc{HomED}($H$)) in the case that $H = H_1 \times  H_2$
can be equivalently stated as the problem of finding a smallest possible set $X$ of vertices (resp. edges) in a graph $G$
such that $G-X$ (resp. $G \setminus X$)  admits a homomorphism to $H_1$ and to $H_2$. In other words, we are solving  \textsc{HomVD}($H$) (resp. \textsc{HomED}($H$)) \emph{simultaneously} for two target graphs.

\addcontentsline{toc}{section}{References}
\bibliography{main}

\end{document}